\theoremstyle{definition}
\newtheorem{defn}[theorem]{Definition}
\newcommand{\Fomega}{\mathtt{F}_\omega}
\newcommand{\Vars}{\mathcal{V}}
\newcommand{\Rules}{\mathcal{R}}
\newcommand{\Iterms}{\mathcal{I}}
\newcommand{\ITypes}{\mathcal{Y}}
\newcommand{\arrkind}{\Rightarrow}
\newcommand{\arrtype}{\rightarrow}
\newcommand{\quant}[2]{\forall #1.#2}
\newcommand{\abstraction}[2]{\backslash #1.#2}
\newcommand{\app}[2]{#1 \cdot #2}
\newcommand{\tapp}[2]{#1 * #2}
\newcommand{\subst}[2]{#1:=#2}
\newcommand{\abs}[2]{\lambda #1.#2}
\newcommand{\tabs}[2]{\Lambda #1.#2}
\newcommand{\pair}[2]{\langle #1,#2 \rangle}
\newcommand{\expair}[2]{[#1,#2]}
\newcommand{\arrW}{\leadsto}
\newcommand{\arr}[1]{\longrightarrow_{#1}}
\newcommand{\red}{\longrightarrow}
\newcommand{\arrrbeta}{\arrW_\beta^*}
\newcommand{\nat}{\mathtt{nat}}
\newcommand{\flatten}{\mathtt{flatten}}
\newcommand{\lift}{\mathtt{lift}}
\newcommand{\typeinterpret}[1]{\llbracket #1 \rrbracket}
\newcommand{\interpret}[1]{\llbracket #1 \rrbracket}
\newcommand{\refsec}[1]{Section~\ref{sec:#1}}
\newcommand{\FTV}{\mathrm{FTV}}
\newcommand{\FV}{\mathrm{FV}}
\newcommand{\Tc}{\mathcal{T}}
\newcommand{\Vc}{\mathcal{V}}
\newcommand{\cl}{\mathcal{C}}
\newcommand{\nf}{\mathrm{nf}}
\newcommand{\da}{\mathord{\downarrow}}
\newcommand{\SN}{\mathrm{SN}}
\newcommand{\Cb}{\mathbb{C}}
\newcommand{\Nbb}{\mathbb{N}}
\newcommand{\val}[3]{\ensuremath{\llbracket#1\rrbracket_{#2}^{#3}}}
\newcommand{\gteq}[3]{\ensuremath{\ge_{#1}^{#2,#3}}}
\newcommand{\Typemap}{\mathcal{T\!M}}
\newcommand{\Termmap}{\mathcal{J}}
\newcommand{\succinterpret}{\succ^{\Termmap}}
\newcommand{\succeqinterpret}{\succeq^{\Termmap}}
\newcommand{\List}{\mathtt{List}}
\newcommand{\Pair}{\mathtt{Pair}}
\newcommand{\nil}{\mathtt{nil}}
\newcommand{\cons}{\mathtt{cons}}
\newcommand{\xlet}[4]{\mathtt{let}_{#1}\,#2\,\mathtt{be}\,[#3]\,\mathtt{in}\,#4}
\newcommand{\proj}{\mathtt{pr}}
\newcommand{\onlypaper}[1]{}
\newcommand{\onlyarxiv}[1]{#1}
\title{Polymorphic Higher-order Termination}
\author{{\L}ukasz Czajka}{Faculty of Informatics, TU Dortmund, Germany \and \url{http://www.mimuw.edu.pl/~lukaszcz/} }{lukaszcz@mimuw.edu.pl}{https://orcid.org/0000-0001-8083-4280}{}
\author{Cynthia Kop}{Institute of Computer Science, Radboud University Nijmegen, Netherlands \and \url{https://www.cs.ru.nl/~cynthiakop/}}{c.kop@cs.ru.nl}{https://orcid.org/0000-0002-6337-2544}{}
\authorrunning{\L. Czajka and C. Kop}
\keywords{termination, polymorphism, higher-order rewriting, permutative conversions}
\begin{document}

\maketitle

\begin{abstract}
  We generalise the termination method of higher-order polynomial
  interpretations to a setting with impredicative
  polymorphism. Instead of using weakly monotonic functionals, we
  interpret terms in a suitable extension of System~$\Fomega$. This
  enables a direct interpretation of rewrite rules which make
  essential use of impredicative polymorphism.  In addition, our
  generalisation eases the applicability of the method in the
  non-polymorphic setting by allowing for the encoding of inductive data
  types. As an illustration of the potential of our method, we prove
  termination of a substantial fragment of full intuitionistic
  second-order propositional logic with permutative conversions.
\end{abstract}

\section{Introduction}

Termination of higher-order term rewriting
systems~\cite[Chapter~11]{Terese2003} has been an active area of
research for several decades.
One powerful method, introduced by v.d. Pol \cite{Pol1993,pol:96},
interprets terms into \emph{weakly monotonic algebras}.  In later work
\cite{FuhsKop2012,Kop2012}, these algebra interpretations are specialised
into \emph{higher-order polynomial interpretations}, a generalisation of
the popular -- and highly automatable -- technique of polynomial
interpretations for first-order term rewriting.

The methods of weakly monotonic algebras and polynomial
interpretation are both limited to \emph{monomorphic} systems.
In this paper, we will further generalise polynomial
interpretations to a higher-order formalism with full impredicative
polymorphism.
%
%
This goes beyond shallow (rank-1, weak) polymorphism,
where type quantifiers are effectively allowed only at the top of a
type: it would be relatively easy to extend the methods to a system
with shallow polymorphism since shallowly polymorphic rules can be seen as defining an
infinite set of monomorphic rules.
While shallow polymorphism often suffices
in functional programming practice, there do exist interesting
examples of rewrite systems which require higher-rank impredicative
polymorphism.

For instance, in recent extensions of Haskell one may define a type of
heterogeneous lists.
\[
\begin{array}{ll}
  \List : * &
  \mathtt{foldl}_\sigma(f,a,\nil) \red a \\
  \mathtt{nil} : \List &
  \mathtt{foldl}_\sigma(f,a,\cons_\tau(x,l)) \red \mathtt{foldl}_\sigma(f,f \tau a x,l) \\
  \mathtt{cons} : \forall \alpha . \alpha \arrtype \List \arrtype \List \quad\quad \\
  \multicolumn{2}{l}{\mathtt{foldl} : \forall \beta . (\forall \alpha . \beta \arrtype \alpha \arrtype \beta) \arrtype \beta \arrtype \List \arrtype \beta}
\end{array}
\]
The above states that $\List$ is a type ($*$), gives the types of its
two constructors $\nil$ and $\cons$, and defines the corresponding
fold-left function~$\mathtt{foldl}$. Each element of a heterogeneous
list may have a different type. In practice, one would
  constrain the type variable~$\alpha$ 
  with a type class to
  guarantee the existence of some operations on list elements.  The
function argument of~$\mathtt{foldl}$ receives the element together
with its type. The $\forall$-quantifier binds type variables: a term
of type $\forall \alpha . \tau$ takes a type~$\rho$ as an argument and
the result is a term of type~$\tau[\subst{\alpha}{\rho}]$.

Impredicativity of polymorphism means that the type itself may be
substituted for its own type variable, e.g., if $\mathtt{f} : \forall
\alpha . \tau$ then $f (\forall \alpha . \tau) :
\tau[\subst{\alpha}{\forall\alpha.\tau}]$. Negative occurrences of
impredicative type quantifiers prevent a translation
into an infinite set of simply typed rules by instantiating the type
variables. The above example is not directly reducible to shallow
polymorphism as used in the~ML programming language.

\medskip\noindent\textbf{Related work.} The term rewriting literature
has various examples of higher-order term rewriting systems with some
forms of polymorphism.  To start, there are several studies that
consider shallow polymorphic rewriting
(e.g., \cite{ham:18,jou:rub:07,wah:04}), where (as in ML-like
languages) systems like $\mathtt{foldl}$ above cannot be handled.
Other works consider extensions of the $\lambda\Pi$-calculus
\cite{cou:dow:07,dow:17} or the calculus of constructions
\cite{bla:05,wal:03} with rewriting rules; only the latter
includes full impredicative polymorphism.  The termination techniques
presented for these systems are mostly syntactic (e.g., a recursive
path ordering \cite{jou:rub:07,wal:03}, or general schema
\cite{bla:05}), as opposed to our more semantic method based on
interpretations.
An exception is \cite{dow:17}, which defines
interpretations into $\Pi$-algebras; this technique bears some
similarity to ours, although the methodologies are
quite different.  A categorical definition for a general polymorphic
rewriting framework is presented in \cite{fio:ham:13}, but no
termination methods are considered for it.

\medskip\noindent\textbf{Our approach.}
The technique we develop in this paper operates on \emph{Polymorphic
Functional Systems (PFSs)}, a form of higher-order term rewriting systems
with full impredicative polymorphism (Section \ref{sec_systems}), that
various systems of interest can be encoded into (including the example
of heterogeneous fold above). Then, our methodology follows a standard
procedure:
\begin{itemize}
\item we define a well-ordered set $(\Iterms,\succ,\succeq)$
  (Section \ref{sec:World});
\item we provide a general methodology to map each PFS term $s$ to a
  natural number $\interpret{s}$, parameterised by a core interpretation
  for each function symbol (Section \ref{sec_reduction_pairs});
\item we present a number of lemmas to make it easy to prove that
  $s \succ t$ or $s \succeq t$ whenever $s$ reduces to $t$
  (Section \ref{sec_rule_removal}).
\end{itemize}
Due to the additional complications of full polymorphism, we have
elected to only generalise higher-order polynomial interpretations,
and not v.d. Pol's weakly monotonic algebras.  That is, terms of base
type are always interpreted to natural numbers and all functions are
interpreted to combinations of addition and multiplication.

We will use the system of heterogeneous fold above as a running
example to demonstrate our method.  However, termination of this
system can be shown in other ways (e.g., an enco\-ding in
System~$\mathtt{F}$). Hence, we will also study a more complex example
in Section~\ref{sec:examples}: termination of a substantial fragment
of~IPC2, i.e., full intuitionistic second-order propositional logic
with permutative conversions. Permutative
conversions~\cite[Chapter~6]{TroelstraSchwichtenberg1996} are used in
proof theory to obtain ``good'' normal forms of natural deduction
proofs, which satisfy e.g.~the subformula property. Termination proofs
for systems with permutative conversions are notoriously tedious and
difficult, with some incorrect claims in the literature and no uniform
methodology. It is our goal to make such termination proofs
substantially easier in the future.

\onlypaper{Complete proofs for the results in this paper are available
in an online appendix.~\cite{versionwithappendix}.}%
\onlyarxiv{This is a pre-publication copy of a paper at FSCD 2019.  In
particular, it contains an appendix with complete proofs for the results
in this paper.}

\section{Preliminaries}\label{sec_preliminaries}

In this section we recall the definition of System~$\Fomega$ (see
e.g.~\cite[Section~11.7]{SorensenUrzyczyn2006}), which will form a
basis both of our interpretations and of a general syntactic framework
for the investigated systems. In comparison to System~$\mathrm{F}$,
System~$\Fomega$ includes type constructors which results in a more
uniform treatment. We assume familiarity with core notions of lambda
calculi such as substitution and $\alpha$-conversion.

\begin{defn}\label{def_types}
  \emph{Kinds} are defined inductively: $*$ is a kind, and if
  $\kappa_1,\kappa_2$ are kinds then so is $\kappa_1 \arrkind
  \kappa_2$. We assume an infinite set~$\Vc_\kappa$ of \emph{type
    constructor variables} of each kind~$\kappa$. Variables of
  kind~$*$ are \emph{type variables}. We assume a fixed
  set~$\Sigma^T_\kappa$ of \emph{type constructor symbols} paired with a
  kind~$\kappa$, denoted $c : \kappa$.
  We define the set~$\Tc_\kappa$ of \emph{type constructors} of
  kind~$\kappa$ by the following grammar.
  Type constructors of kind~$*$ are \emph{types}.
  \[
  \begin{array}{rcl}
    \Tc_{*} &::=& \Vc_{*}
    \mid \Sigma^T_{*} \mid
    \Tc_{\kappa\arrkind *}\Tc_{\kappa} \mid \forall\Vc_\kappa\Tc_* \mid \Tc_*\arrtype\Tc_* \\
    \Tc_{\kappa_1\arrkind\kappa_2} &::=& \Vc_{\kappa_1\arrkind\kappa_2}
    \mid \Sigma^T_{\kappa_1\arrkind\kappa_2} \mid
    \Tc_{\kappa\arrkind(\kappa_1\arrkind\kappa_2)}\Tc_{\kappa} \mid \lambda \Vc_{\kappa_1} \Tc_{\kappa_2}
  \end{array}
  \]

  We use the standard notations $\forall \alpha . \tau$ and $\lambda
  \alpha . \tau$. When $\alpha$ is of kind $\kappa$ then we use the
  notation $\forall \alpha : \kappa . \tau$. If not indicated
  otherwise, we assume~$\alpha$ to be a type variable. We treat type
  constructors up to $\alpha$-conversion.

  \begin{example}
  If $\Sigma^T_{*} = \{ \List \}$ and $\Sigma^T_{* \arrkind * \arrkind
  *} = \{ \Pair \}$, types are for instance $\List$ and
  $\forall \alpha.\Pair\,\alpha\,\List$.  The expression
  $\Pair\,\List$ is a type constructor, but not a type.  If
  $\Sigma^T_{(* \arrkind *) \arrkind *} = \{ \exists \}$ and
  $\sigma \in \Tc_{* \arrkind *}$, then both
  $\exists(\sigma)$ and $\exists (\lambda \alpha.\sigma\alpha)$ are
  types.
  \end{example}

  The compatible closure of the rule $(\lambda\alpha.\varphi)\psi \to
  \varphi[\alpha := \psi]$ defines $\beta$-reduction on type
  constructors. As type constructors are (essentially) simply-typed
  lambda-terms, their $\beta$-reduction terminates
  and is confluent; hence every type constructor~$\tau$ has a unique
  $\beta$-normal form~$\nf_\beta(\tau)$. A \emph{type atom} is a type
  in $\beta$-normal form which is neither an arrow $\tau_1\arrtype\tau_2$
  nor a quantification $\forall\alpha.\tau$.

  We define $\FTV(\varphi)$ -- the set of free type constructor
  variables of the type constructor~$\varphi$ -- in an obvious way by
  induction on~$\varphi$. A type constructor~$\varphi$ is
  \emph{closed} if $\FTV(\varphi) = \emptyset$.

  We assume a fixed type symbol~$\chi_* \in
  \Sigma^T_*$. For $\kappa=\kappa_1\arrkind\kappa_2$ we define
  $\chi_\kappa = \lambda \alpha:\kappa_1 . \chi_{\kappa_2}$.
\end{defn}

\begin{defn}\label{def_preterms}
  We assume given an infinite set $\Vars$ of variables, each paired
  with a type, denoted $x : \tau$. We assume given a fixed set
  $\Sigma$ of \emph{function symbols}, each paired with a closed type,
  denoted $\mathtt{f} : \tau$. Every variable~$x$ and every function
  symbol $\mathtt{f}$ occurs only with one type declaration.

  The set of \emph{preterms} consists of all expressions~$s$ such that
  $s : \sigma$ can be inferred for some type $\sigma$ by the following
  clauses:
  \begin{itemize}
  \item $x : \sigma$ for $(x : \sigma) \in \Vars$.
  \item $\mathtt{f} : \sigma$ for all
    $(\mathtt{f} : \sigma) \in \Sigma$.
  \item $\abs{x:\sigma}{s} : \sigma \arrtype \tau$ if
    $(x : \sigma) \in \Vars$ and $s : \tau$.
  \item $(\tabs{\alpha:\kappa}{s}) : (\quant{\alpha:\kappa}{\sigma})$ if
    $s : \sigma$ and $\alpha$ does not occur free in the type of a
    free variable of~$s$.
  \item $\app{s}{t} : \tau$ if $s : \sigma \arrtype \tau$ and
    $t : \sigma$
  \item $\tapp{s}{\tau} : \sigma[\subst{\alpha}{\tau}]$ if
    $s : \quant{\alpha:\kappa}{\sigma}$ and~$\tau$ is a type
    constructor of kind~$\kappa$,
  \item $s : \tau$ if $s : \tau'$ and $\tau =_\beta \tau'$.
  \end{itemize}
  The set of free variables of a preterm~$t$, denoted $\FV(t)$, is
  defined in the expected way. Analogously, we define the
  set~$\FTV(t)$ of type constructor variables occurring free in~$t$.
  If $\alpha$ is a type then we use the notation $\tabs{\alpha}{t}$.
  We denote an occurrence of a variable~$x$ of type~$\tau$
  by~$x^\tau$,
  e.g.~$\lambda x : \tau\arrtype\sigma
  . x^{\tau\arrtype\sigma}y^\tau$. When clear or irrelevant, we omit
  the type annotations, denoting the above term by~$\lambda x . x
  y$. Type substitution is defined in the expected way except that it
  needs to change the types of variables. Formally, a type
  substitution changes the types associated to variables in~$\Vars$. We
  define the equivalence relation~$\equiv$ by: $s \equiv t$ iff $s$
  and $t$ are identical modulo $\beta$-conversion in types.
\end{defn}

Note that we present terms in orthodox Church-style, i.e.,
instead of using contexts each variable has a globally fixed type
associated to it.

\begin{lemma}
  If $s : \tau$ and $s \equiv t$ then $t : \tau$.
\end{lemma}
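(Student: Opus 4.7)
\medskip\noindent\textbf{Proof plan.} I would prove this by induction on the derivation of $s : \tau$. The guiding idea is that $\equiv$ only permits replacing type annotations occurring inside $s$ (the annotations on variables, on abstractions, and on type-application arguments) by $\beta$-convertible ones, and the final clause of Definition~\ref{def_preterms} (the conversion rule $s : \tau$ if $s : \tau'$ and $\tau =_\beta \tau'$) is powerful enough to absorb any $\beta$-discrepancy in the resulting type. So the structural cases should reduce to a routine reconstruction of the typing derivation, plus one application of the conversion rule at the end whenever needed.

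\medskip\noindent For the \emph{conversion} rule itself, the induction hypothesis immediately gives $t : \tau'$, and another application of the conversion rule yields $t : \tau$. For the \emph{variable} base case, if $s = x^\sigma$ then $t = x^{\sigma'}$ with $\sigma =_\beta \sigma'$; using that variables in $\Vars$ are identified up to $\beta$-conversion of their attached type (which is what makes $\equiv$ well-defined on preterms in the first place), we get $t : \sigma'$ and then $t : \tau$ by conversion. For \emph{function symbols}, $\equiv$ is trivial because their types are fixed globally. For \emph{abstractions} $\lambda x{:}\sigma . s_1$ and \emph{type abstractions} $\Lambda \alpha{:}\kappa . s_1$, apply the IH to the body and re-apply the corresponding typing rule. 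For \emph{term applications} $s_1 \cdot s_2$, apply the IH to both subterms.

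\medskip\noindent The only case with any real content is \emph{type application}, $s = s_1 * \varphi$ with derived type $\sigma[\alpha := \varphi]$; here $t = t_1 * \varphi'$ with $s_1 \equiv t_1$ and $\varphi =_\beta \varphi'$. By IH, $t_1 : \forall \alpha{:}\kappa.\sigma$, so $t : \sigma[\alpha := \varphi']$. Since $\beta$-reduction on type constructors is substitutive, $\sigma[\alpha := \varphi'] =_\beta \sigma[\alpha := \varphi]$, and the conversion rule closes the case.

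\medskip\noindent The main obstacle, if any, is bookkeeping: one has to be comfortable with the convention (implicit in the definition of $\equiv$ and of type substitution acting on $\Vars$) that variables $x^\sigma$ and $x^{\sigma'}$ with $\sigma =_\beta \sigma'$ are the ``same'' variable, so that replacing a type annotation inside a preterm by a $\beta$-convertible one still yields a preterm. Once this is granted, the induction is entirely routine, and nothing beyond the substitutivity of $\beta$-reduction on type constructors and the conversion rule is needed.
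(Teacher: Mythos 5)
Your proof is correct and follows essentially the same route as the paper, which simply performs an induction on $s$ (equivalently, on the typing derivation) and absorbs any $\beta$-discrepancy in the resulting type via the conversion clause. The case analysis you give, including the type-application case using substitutivity of $=_\beta$, is exactly the content that the paper's one-line proof leaves implicit.
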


\begin{proof}
  Induction on~$s$.
\end{proof}

\begin{defn}\label{def_terms}
  The set of \emph{terms} is the set of the equivalence classes
  of~$\equiv$.
\end{defn}

Because $\beta$-reduction on types is confluent and terminating, every
term has a canonical preterm representative -- the one with all types
occurring in it $\beta$-normalised.
We define $\FTV(t)$
as the value of~$\FTV$ on the canonical representative of~$t$.
We say that $t$ is \emph{closed} if both $\FTV(t) = \emptyset$
and $\FV(t) = \emptyset$.
Because typing and term formation operations (abstraction,
application, \ldots) are invariant under~$\equiv$, we may denote terms
by their (canonical) representatives and informally treat them
interchangeably.

We will often abuse notation to omit $\cdot$ and $*$. Thus, $s t$ can
refer to both $\app{s}{t}$ and $\tapp{s}{t}$. This is not ambiguous
due to typing. When writing $\sigma[\subst{\alpha}{\tau}]$ we
implicitly assume that $\alpha$ and $\tau$ have the same
kind. Analogously with $t[\subst{x}{s}]$.

\begin{lemma}[Substitution lemma]\label{lem:substitution}
  \begin{enumerate}
  \item If $s : \tau$ and $x : \sigma$ and $t : \sigma$ then
    $s[\subst{x}{t}] : \tau$.
  \item If $t : \sigma$ then
    $t[\subst{\alpha}{\tau}] : \sigma[\subst{\alpha}{\tau}]$.
  \end{enumerate}
\end{lemma}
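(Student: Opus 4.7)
The plan is to prove both parts by induction on the structure of the preterm representative, after first reducing to the case of canonical $\beta$-normalised types so that the last typing clause (closure under $\tau =_\beta \tau'$) can be handled uniformly. Since $\beta$-reduction on type constructors is confluent and commutes with both term and type substitution, any typing derivation ending in the conversion rule can be absorbed into the equivalence~$\equiv$, and it then suffices to verify the other typing clauses on canonical representatives.

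For part~(1), start with the base cases. If $s = x$, the uniqueness of type declarations forces $\tau = \sigma$, so $s[\subst{x}{t}] = t$ already has type $\sigma$; the cases $s = y \neq x$ and $s$ a function symbol are trivial. In the abstraction cases $\abs{y:\rho}{s'}$ and $\tabs{\alpha:\kappa}{s'}$ one first $\alpha$-converts so that the bound (type) variable is distinct from $x$, does not occur in $t$, and does not occur in the types of free variables of $t$; the inductive hypothesis on $s'$ then yields the desired typing. The application and type application cases are immediate from the IH, since substitution commutes with the respective constructors.

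For part~(2), the key observation is that a type substitution $[\subst{\alpha}{\tau}]$ acts uniformly on the types of all variables, so an occurrence $y^\rho$ automatically becomes $y^{\rho[\subst{\alpha}{\tau}]}$, which handles the base case. Abstraction cases proceed as in~(1), $\alpha$-converting the bound type variable so that it differs from~$\alpha$ and is not free in~$\tau$. The only mildly non-trivial case is type application $\tapp{s}{\rho}$ with $s : \quant{\beta:\kappa}{\sigma'}$, where one invokes the standard substitution lemma for type constructors, namely $(\sigma'[\subst{\beta}{\rho}])[\subst{\alpha}{\tau}] = (\sigma'[\subst{\alpha}{\tau}])[\subst{\beta}{\rho[\subst{\alpha}{\tau}]}]$, proved by a routine separate induction on~$\sigma'$.

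The hardest point is conceptual rather than combinatorial: one must reconcile the orthodox Church-style convention, in which each variable has a globally fixed type and a type substitution silently rewrites all these declarations, with the inductive typing rules, which rely on declared types matching the types appearing in the term. Once the invariant ``the type attached to every variable occurrence agrees with its current declaration in~$\Vars$'' is maintained through substitution, and bound variables are kept fresh via $\alpha$-conversion, both inductions reduce to bookkeeping.
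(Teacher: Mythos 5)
Your proposal is correct and amounts to the same routine argument the paper uses: the paper's proof is simply ``induction on the typing derivation,'' and your structural induction on the canonical representative, with conversion steps absorbed into $\equiv$ and the bound-(type-)variable freshness and type-constructor substitution-commutation points spelled out, is just that proof written in detail.
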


\begin{proof}
  Induction on the typing derivation.
\end{proof}

\begin{lemma}[Generation lemma]\label{lem:generation}
  If $t : \sigma$ then there is a type~$\sigma'$ such that
  $\sigma' =_\beta \sigma$ and $\FTV(\sigma') \subseteq \FTV(t)$ and
  one of the following holds.
  \begin{itemize}
  \item $t \equiv x$ is a variable with $(x : \sigma') \in \Vars$.
  \item $t \equiv \mathtt{f}$ is a function symbol with
    $\mathtt{f} : \sigma'$ in $\Sigma$.
  \item $t \equiv \abs{x:\tau_1}{s}$ and
    $\sigma'=\tau_1\arrtype\tau_2$ and $s : \tau_2$.
  \item $t \equiv \tabs{\alpha:\kappa}{s}$ and
    $\sigma' = \quant{\alpha:\kappa}{\tau}$ and $s : \tau$ and
    $\alpha$ does not occur free in the type of a free variable
    of~$s$.
  \item $t \equiv \app{t_1}{t_2}$ and $t_1 : \tau \arrtype \sigma'$
    and $t_2 : \tau$ and $\FTV(\tau) \subseteq \FTV(t)$.
  \item $t \equiv \tapp{s}{\tau}$ and
    $\sigma' = \rho[\subst{\alpha}{\tau}]$ and
    $s : \quant{(\alpha:\kappa)}{\rho}$ and~$\tau$ is a type
    constructor of kind~$\kappa$.
  \end{itemize}
\end{lemma}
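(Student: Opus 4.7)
My approach is by induction on a typing derivation of $t:\sigma$. For each of the six structural clauses of Definition~\ref{def_preterms} (variable, function symbol, abstraction, type abstraction, application, type application), I read $\sigma'$ directly off the premises of the last rule applied: the type produced by that rule has exactly the shape named in the corresponding bullet of the lemma. The $\FTV$-containment $\FTV(\sigma')\subseteq\FTV(t)$ is then verified by inspecting how $\FTV$ propagates through each term constructor, using in particular that $\FTV(x^\tau)\supseteq\FTV(\tau)$, that function symbols are assumed to have closed types, and that the abstraction and type abstraction clauses collect the $\FTV$ of the annotation and of the body (minus the bound type variable in the latter case).

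The main obstacle is the seventh, conversion clause: $s:\tau$ is derivable from $s:\tau'$ whenever $\tau=_\beta\tau'$. This clause does not alter the preterm, so the statement must be phrased modulo $\equiv$, and a naive induction would hand back a $\sigma'$ only $=_\beta$-equivalent to the intermediate $\tau'$, not to the intended $\sigma$. I handle it uniformly: when the last step is a conversion, I apply the IH to $t:\tau'$ to obtain some $\sigma''$ witnessing the conclusion for $\tau'$, and then reuse $\sigma''$ as the witness for $\sigma$, invoking transitivity of $=_\beta$. The shape condition on $t$ and the $\FTV$-bound carry over verbatim since neither $t$ nor $\FTV(t)$ is affected by the conversion step.

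The only mildly delicate remaining point is the application case, where the conclusion demands a single $\tau$ such that $t_1:\tau\arrtype\sigma'$, $t_2:\tau$, and $\FTV(\tau)\subseteq\FTV(t)$. The $\tau$ supplied directly by the rule may contain spurious free type variables introduced by earlier conversions inside the derivation of $t_2:\tau$. I repair this by applying the IH to $t_2:\tau$ itself, obtaining $\tau^\ast$ with $\tau^\ast=_\beta\tau$ and $\FTV(\tau^\ast)\subseteq\FTV(t_2)\subseteq\FTV(t)$; the conversion clause then promotes $t_1:\tau\arrtype\sigma'$ to $t_1:\tau^\ast\arrtype\sigma'$ and $t_2:\tau$ to $t_2:\tau^\ast$, so $\tau^\ast$ witnesses the conclusion. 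An analogous remark handles the $\FTV$ side condition in the type application case, after which the remaining verifications are routine propagations of $\FTV$ through the preterm constructors.
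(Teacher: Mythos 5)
Your overall skeleton (analysing the typing derivation, absorbing the conversion rule by transitivity of $=_\beta$) matches the paper's, and your idea of cleaning up types by invoking the induction hypothesis on subderivations is workable; but you have misplaced where the real difficulty lies. The containment $\FTV(\sigma')\subseteq\FTV(t)$ does \emph{not} follow by ``routine propagation'' of $\FTV$ through the term constructors: the type produced by the last structural rule may contain free type variables occurring nowhere in $t$, because conversion steps inside the \emph{sub}derivations can introduce them. Concretely, from $0:\nat$ one converts to $0:(\lambda\beta.\nat)\alpha$ and then derives $\abs{x:\nat}{0} : \nat\arrtype(\lambda\beta.\nat)\alpha$; similarly $t_1:\nat\arrtype\nat$ converts to $t_1:\nat\arrtype(\lambda\beta.\nat)\alpha$, whence $\app{t_1}{t_2} : (\lambda\beta.\nat)\alpha$. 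In both examples the spurious $\alpha$ sits in the \emph{result} type $\sigma'$ delivered by the rule, not in the argument type $\tau$ that you repair, so your argument as written returns a witness violating $\FTV(\sigma')\subseteq\FTV(t)$ in the abstraction, type-abstraction and application cases. This containment is precisely the content of the paper's proof, which uses a different, uniform device: for each $\alpha\in\FTV(\sigma')\setminus\FTV(t)$ of kind $\kappa$, the substitution lemma gives $t:\sigma'[\subst{\alpha}{\chi_\kappa}]$ (since $\alpha\notin\FTV(t)$), so every spurious variable can simply be eliminated, in all cases at once, including the side condition $\FTV(\tau)\subseteq\FTV(t)$ in the application bullet.

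Your IH-based repair can be pushed through instead, but then it must be applied to \emph{every} premise whose type feeds into $\sigma'$: the body in both abstraction cases, $t_1$ as well as $t_2$ in the application case, and $s$ in the type-application case. There is also a wrinkle you leave unaddressed: the induction hypothesis applied to $t_1$ (or to $s$) yields a type shaped after $t_1$'s own head constructor, which is only $=_\beta$-equal to an arrow (respectively a $\forall$-type); to re-fire the application (type-application) rule you should pass to its $\beta$-normal form, which does have the form $\nf_\beta(\tau)\arrtype\nf_\beta(\sigma_0)$ (respectively $\forall\alpha.\nf_\beta(\rho)$) and introduces no new free type variables. With these additions your route works and is a legitimate, if heavier, alternative to the paper's $\chi_\kappa$-trick; note finally that your claim $\FTV(x^\tau)\supseteq\FTV(\tau)$ is delicate under the paper's convention that $\FTV$ of a term is computed on the representative with $\beta$-normalised types.
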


\begin{proof}
  By analysing the derivation $t : \sigma$. To ensure
  $\FTV(\sigma') \subseteq \FTV(t)$, note that if
  $\alpha \notin \FTV(t)$ is of kind~$\kappa$ and~$t : \sigma'$, then
  $t : \sigma'[\subst{\alpha}{\chi_\kappa}]$ by the substitution lemma
  (thus we can eliminate~$\alpha$).
\end{proof}

\section{Polymorphic Functional Systems}\label{sec_systems}

In this section, we present a form of higher-order term rewriting
systems based on $\Fomega$: \emph{Polymorphic Functional Systems
  (PFSs)}. Systems of interest, such as logic systems like~ICP2 and
higher-order TRSs with shallow or full polymorphism can be encoded
into PFSs, and then proved terminating with the technique we will
develop in Sections \ref{sec:World}--\ref{sec_rule_removal}.

\begin{defn}\label{def_pafs_types_terms}
  \emph{Kinds}, \emph{type constructors} and \emph{types} are defined
  like in Definition~\ref{def_types}, parameterised by a fixed
  set~$\Sigma^T = \bigcup_{\kappa}\Sigma^T_\kappa$ of type constructor
  symbols.

  Let~$\Sigma$ be a set of function symbols such
  that for $\mathtt{f} : \sigma \in \Sigma$:
    \[
    \sigma = \forall (\alpha_1 : \kappa_1) \ldots \forall (\alpha_n : \kappa_n)
    . \sigma_1 \arrtype \ldots \arrtype \sigma_k \arrtype \tau
    \quad\quad (\text{with}\ \tau\ \text{a type atom})
    \]
    We define \emph{PFS terms} as in Definition~\ref{def_terms} (based
    on Definition~\ref{def_preterms}), parameterised by~$\Sigma$, with
    the restriction that for any subterm $\app{s}{u}$ of a term~$t$,
    we have $s = \mathtt{f} \rho_1 \ldots \rho_n u_1 \ldots u_m$
    where:
    \[
    \mathtt{f} : \forall (\alpha_1 : \kappa_1) \ldots
    \forall (\alpha_n : \kappa_n) . \sigma_1 \arrtype \ldots \arrtype
    \sigma_k \arrtype \tau
    \quad\quad (\text{with}\ \tau\ \text{a type atom and}\ k > m)
    \]
\end{defn}

This definition does not allow for a variable or
abstraction to occur at the head of an application, nor can we have
terms of the form $s \cdot t * \tau \cdot q$ (although terms of the
form $s \cdot t * \tau$, or $x * \tau$ with $x$ a variable,
\emph{are} allowed to occur).  To stress this restriction, we will
use the notation
$\mathtt{f}_{\rho_1,\ldots,\rho_n}(s_1,\ldots,s_m)$ as an alternative
way to denote
$\mathtt{f} \rho_1 \ldots \rho_n s_1 \ldots s_m$ when
$
  \mathtt{f} : \forall (\alpha_1 : \kappa_1) \ldots
  \forall (\alpha_n : \kappa_n) . \sigma_1 \arrtype \ldots \arrtype
  \sigma_k \arrtype \tau
$
is a function symbol in~$\Sigma$ with~$\tau$ a type atom and $m \leq k$.
This allows us to represent terms in a ``functional'' way, where
application does not explicitly occur (only implicitly in the
construction of $\mathtt{f}_{\rho_1,\ldots,\rho_n}(s_1,\ldots,s_m)$).

The following result follows easily by induction on term
structure:

\begin{lemma}
  If $t,s$ are PFS terms then so is $t[\subst{x}{s}]$.
\end{lemma}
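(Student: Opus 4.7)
The plan is to proceed by induction on the structure of a canonical representative of~$t$, using the generation lemma (Lemma~\ref{lem:generation}) to analyse the possible shapes. Most cases are essentially routine: if $t \equiv x$ then $t[\subst{x}{s}] = s$ is PFS by assumption; if $t$ is another variable or a function symbol, substitution acts as the identity; and for $t \equiv \abs{y:\sigma}{t'}$, $t \equiv \tabs{\alpha:\kappa}{t'}$, or $t \equiv \tapp{t'}{\tau}$, we apply the IH to~$t'$ and observe that neither abstractions nor type applications contribute any new term-application subterms, so the PFS restriction is preserved.

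The substantive case is $t \equiv \app{t_1}{t_2}$. Here the PFS restriction of Definition~\ref{def_pafs_types_terms} forces $t_1 = \mathtt{f}\rho_1\ldots\rho_n u_1\ldots u_m$ for some function symbol $\mathtt{f} : \forall(\alpha_1:\kappa_1)\ldots\forall(\alpha_n:\kappa_n).\sigma_1\arrtype\ldots\arrtype\sigma_k\arrtype\tau$ with $\tau$ a type atom and $m < k$. The head $\mathtt{f}$ is a function symbol rather than the variable~$x$, so substitution leaves it intact and produces
\[
t_1[\subst{x}{s}] = \mathtt{f}\rho_1\ldots\rho_n u_1[\subst{x}{s}]\ldots u_m[\subst{x}{s}],
\]
still of the required form. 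Applying the IH to each $u_i$ and to $t_2$ yields PFS terms, so $t[\subst{x}{s}]$ reads $\mathtt{f}_{\rho_1,\ldots,\rho_n}(u_1[\subst{x}{s}],\ldots,u_m[\subst{x}{s}],t_2[\subst{x}{s}])$, and the bound $m+1 \leq k$ comes for free from $m < k$.

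The one point requiring care is that substitution cannot create a fresh, problematic application head out of thin air: this would require $s$ to end up at the left of a term application in $t[\subst{x}{s}]$, i.e., $x$ to have occupied the head of a term application in $t$. But the PFS restriction on $t$ already outlaws variables (and abstractions, and anything not of the form $\mathtt{f}\bar{\rho}\,\bar{u}$) from that position, so $x$ can appear in $t$ only as an argument inside some $\mathtt{f}_{\bar{\rho}}(\bar{u})$-block, possibly wrapped in one or more type applications $\tapp{x}{\tau_1}\ldots\tau_j$. Substitution replaces such an occurrence by $s$, respectively $\tapp{s}{\tau_1}\ldots\tau_j$, which remains PFS because type applications impose no constraint on their operand. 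Hence the induction goes through cleanly, and the only real content of the proof is tracking the application-head condition through the substitution.
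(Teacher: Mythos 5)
Your proof is correct and follows the same route the paper intends: the paper dismisses this lemma as an easy induction on term structure, and your argument is exactly that induction carried out in detail, with the decisive observation made explicit that the PFS restriction forbids $x$ (even under type applications) from heading a term application, so substitution can never create a violating application head. No gaps; the typing side is covered by the earlier substitution lemma (Lemma~\ref{lem:substitution}), which you could cite in passing.
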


PFS terms will be rewritten through a reduction relation
$\arr{\Rules}$ based on a (usually infinite) set of rewrite rules. To
define this relation, we need two additional notions.

\begin{defn}\label{def_replacement}
  A \emph{replacement} is a function $\delta = \gamma \circ \omega$
  satisfying:
  \begin{enumerate}
  \item $\omega$ is a type constructor substitution,
  \item $\gamma$ is a term substitution such that
    $\gamma(\omega(x)) : \omega(\tau)$ for every
    $(x : \tau) \in \Vars$.
  \end{enumerate}

  For~$\tau$ a type constructor, we use $\delta(\tau)$ to denote
  $\omega(\tau)$. We use the notation
  $\delta[\subst{x}{t}] = \gamma[\subst{x}{t}] \circ \omega$. Note
  that if $t : \tau$ then $\delta(t) : \delta(\tau)$.
\end{defn}

\begin{defn}\label{def:context}
  A \emph{$\sigma$-context}~$C_\sigma$ is a PFS term with a fresh function
  symbol $\Box_\sigma \notin \Sigma$ of type~$\sigma$ occurring
  exactly once. By~$C_\sigma[t]$ we denote a PFS term obtained
  from~$C_\sigma$ by substituting~$t$ for~$\Box_\sigma$. We drop the
  $\sigma$ subscripts when clear or irrelevant.
\end{defn}

Now, the rewrite rules are simply a set of term pairs, whose
monotonic closure generates the rewrite relation.

\begin{defn}\label{def_rules}
  A set $\Rules$ of term pairs $(\ell,r)$ is a set of \emph{rewrite
    rules} if: (a) $\FV(r) \subseteq \FV(\ell)$; (b) $\ell$ and $r$
  have the same type; and (c) if $(\ell,r) \in \Rules$
  then $(\delta(\ell),\delta(r)) \in \Rules$ for any
  replacement~$\delta$.  The reduction relation $\arr{\Rules}$
  on PFS terms is defined by:

\begin{center}
  $t \arr{\Rules}
  s$ iff $t = C[\ell]$ and $s = C[r]$ for some $(\ell,r)\in\Rules$ and
  context~$C$.
\end{center}
\end{defn}

\begin{defn}\label{def_pafs}
  A \emph{Polymorphic Functional System (PFS)} is a triple
  $(\Sigma^T,\Sigma,\Rules)$ where~$\Sigma^T$ is a set of type
  constructor symbols, $\Sigma$ a set of function symbols (restricted
  as in Def.~\ref{def_pafs_types_terms}), and $\Rules$ is a set
  of rules as in Definition~\ref{def_rules}. A term of a
  PFS~$A$ is referred to as an $A$-term.
\end{defn}

While PFS-terms are a restriction from the general terms
of system $\Fomega$, the reduction relation allows us to actually encode,
e.g., system~$\mathtt{F}$ as a PFS: we can do so by including the symbol
${@} : \forall\alpha\forall\beta . (\alpha \arrtype \beta) \arrtype \alpha
\arrtype \beta$ in $\Sigma$ and adding all rules of the form
$@_{\sigma,\tau}(\abs{x}{s},t) \red s[x:=t]$.
Similarly, $\beta$-reduction of type abstraction can be modelled
by including a symbol
$\mathtt{A} : \forall \alpha : * \arrkind * . \forall \beta . (\forall
\gamma.\alpha \gamma) \arrtype \alpha \beta$ and rules
$\mathtt{A}_{\abs{\gamma}{\sigma},\tau}(\tabs{\gamma}{s}) \red s[\gamma:=\tau]$.%
  \footnote{The use of a type constructor variable $\alpha$ of kind
  $* \arrkind *$ makes it possible to do type substitution as part of
  a rule.
  An application $s * \tau$ with $s : \quant{\gamma}{\sigma}$ is
  encoded as $\mathtt{A}_{\abs{\gamma}{\sigma},\tau}(s)$, so
  $\alpha$ is substituted with $\abs{\gamma}{\tau}$.
  This is well-typed because $(\abs{\gamma}{\sigma})\gamma =_\beta
  \sigma$ and $(\abs{\gamma}{\sigma})\tau =_\beta \sigma[\gamma:=\tau]$.
  }
We can also use rules
$(\tabs{\alpha}{s})*\tau \red s[\alpha:=\tau]$ without the extra
symbol, but to apply our method it may be
convenient
to use the extra symbol, as
it creates more liberty in choosing an interpretation.

\begin{example}[Fold on heterogenous lists]\label{ex_fold_pafs}
  The example from the introduction may be represented as a PFS with
  one type symbol $\mathtt{List} : *$, the following function symbols:
  \[
  \begin{array}{rcl}
    @ & : & \forall \alpha \forall \beta . (\alpha \arrtype \beta) \arrtype \alpha \arrtype \beta \\
    \mathtt{A} & : & \forall \alpha : * \arrkind * . \forall \beta .
    (\forall \gamma .\alpha \gamma) \arrtype \alpha \beta \\
    \mathtt{nil} & : & \List \\
    \mathtt{cons} & : & \forall \alpha . \alpha \arrtype \List \arrtype \List \\
    \mathtt{foldl} & : & \forall \beta . (\forall \alpha . \beta \arrtype \alpha \arrtype \beta) \arrtype \beta \arrtype \List \arrtype \beta
  \end{array}
  \]
  and the following rules (which formally represents an
  infinite set of rules: one rule for each choice of types $\sigma,
  \tau$ and PFS terms $s$, $t$, etc.):
  \[
  \begin{array}{rcl}
    @_{\sigma,\tau}(\abs{x:\sigma}{s},t) & \red & s[x:=t] \\
    \mathtt{A}_{\abs{\alpha}{\sigma},\tau}(\tabs{\alpha}{s}) & \red &
    s[\alpha:=\tau] \\
    \mathtt{foldl}_\sigma(f,s,\nil) & \red & s \\
    \mathtt{foldl}_\sigma(f,s,\cons_\tau(h,t)) & \red & \mathtt{foldl}_\sigma(f,@_{\tau,\sigma}(@_{\sigma,\tau
    \arrtype\sigma}(\mathtt{A}_{\abs{\alpha}{\sigma\arrtype\alpha\arrtype\sigma},\tau}(f),s),h),t)
  \end{array}
  \]
\end{example}

\section{A well-ordered set of interpretation terms}\label{sec:World}

In polynomial interpretations of first-order term
rewriting~\cite[Chapter 6.2]{Terese2003}, each term $s$ is mapped to a
natural number $\interpret{s}$, such that $\interpret{s} > \interpret{t}$
whenever $s \arr{\Rules} t$.  In higher-order rewriting, this is not
practical; instead, following \cite{pol:96}, terms are mapped to weakly
monotonic functionals according to their type (i.e., terms with a $0$-order
type are mapped to natural numbers, terms with a $1$-order type to weakly
monotonic functions over natural numbers, terms with a $2$-order type to
weakly monotonic functionals taking weakly monotonic functions as
arguments, and so on).  In this paper, to account for full polymorphism,
we will interpret PFS terms to a set $\Iterms$ of \emph{interpretation terms}
in a specific extension of System~$\Fomega$.  This set is defined in Section
\ref{subsec:I}; we provide a well-founded partial ordering $\succ$ on
$\Iterms$ in Section \ref{subsec:succ}.

Although our world of interpretation terms is quite different
from the weakly monotonic functionals of \cite{pol:96}, there are many
similarities.  Most pertinently, every interpretation term $\abs{x}{s}$
essentially defines a weakly monotonic function from $\Iterms$ to
$\Iterms$.  This, and the use of both addition and multiplication in
the definition of $\Iterms$, makes it possible to lift higher-order
polynomial interpretations \cite{FuhsKop2012} to our setting.  We
prove weak monotonicity in Section \ref{subsec:weakmono}.

\subsection{Interpretation terms}\label{subsec:I}

\begin{defn}\label{def_iterms}
  The set~$\ITypes$ of \emph{interpretation types} is the set of types
  as in Definition~\ref{def_types} with $\Sigma^T = \{ \nat : * \}$,
  i.e., there is a single type constant~$\nat$. Then $\chi_* = \nat$.

  The set~$\Iterms$ of \emph{interpretation terms} is the set of terms
  from Definition~\ref{def_terms} (see also
  Definition~\ref{def_preterms}) where as types we take the
  interpretation types and for the set~$\Sigma$ of function symbols we
  take $\Sigma = \{ n : \nat \mid n \in \Nbb \} \cup \Sigma_f$, where
  $ \Sigma_f = \{ \oplus : \forall \alpha . \alpha \arrtype \alpha
  \arrtype \alpha, \otimes : \forall \alpha . \alpha \arrtype \alpha
  \arrtype \alpha, \flatten : \forall \alpha . \alpha \arrtype \nat,
  \lift : \forall \alpha . \nat \arrtype \alpha \} $.
\end{defn}

For easier presentation, we write $\oplus_\tau$, $\otimes_\tau$, etc.,
instead of $\tapp{\oplus}{\tau}$, $\tapp{\otimes}{\tau}$, etc. We will
also use $\oplus$ and $\otimes$ in \emph{infix, left-associative}
notation, and omit the type denotation where it is clear from
context. Thus, $s \oplus t \oplus u$ should be read as
$\oplus_\sigma\,(\oplus_\sigma\,s\,t)\,u$ if $s$ has type
$\sigma$. Thus, our interpretation terms include natural
  numbers with the operations of addition and multiplication. It would
  not cause any fundamental problems to add more monotonic operations, e.g.,
  exponentiation, but we refrain from doing so for the sake of
  simplicity.

\paragraph*{Normalising interpretation terms}

The set $\Iterms$ of interpretation terms can be reduced through
a relation $\arrW$, that we will define below.  This relation will
be a powerful aid in defining the partial ordering $\succ$ in Section
\ref{subsec:succ}.

\begin{defn}
  We define the relation $\arrW$ on interpretation terms as the
  smallest relation on~$\Iterms$ for which the following properties
  are satisfied:
  \begin{enumerate}
  \item\label{arrW:mono:abs}
    if $s \arrW t$ then both $\abs{x}{s} \arrW \abs{x}{t}$ and
    $\tabs{\alpha}{s} \arrW \tabs{\alpha}{t}$
  \item\label{arrW:mono:right}
    if $s \arrW t$ then $\app{u}{s} \arrW \app{u}{t}$
  \item\label{arrW:mono:left}
    if $s \arrW t$ then both $\app{s}{u} \arrW \app{t}{u}$ and
    $\tapp{s}{\sigma} \arrW \tapp{t}{\sigma}$
  \item\label{arrW:beta:abs} $\app{(\abs{x:\sigma}{s})}{t} \arrW
    s[\subst{x}{t}]$
    and
    $\tapp{(\tabs{\alpha}{s})}{\sigma}
    \arrW s[\subst{\alpha}{\sigma}]$
    ($\beta$-reduction)
  \item\label{arrW:plus:base}
    $\app{\app{\oplus_{\nat}}{n}}{m} \arrW n+m$
    and
    $\app{\app{\otimes_{\nat}}{n}}{m}
    \arrW n \times m$
  \item\label{arrW:circ:arrow} $\app{\app{\circ_{\sigma \arrtype
        \tau}}{s}}{t} \arrW
    \abs{x:\sigma}{\app{\app{\circ_\tau}{(\app{s}{x})}}{(\app{t}{x})}}$
    for $\circ \in \{ \oplus, \otimes \}$
  \item\label{arrW:circ:forall}
    $\app{\app{\circ_{\quant{\alpha}{\sigma}}}{s}}{t} \arrW
    \tabs{\alpha}{\app{\app{\circ_\sigma}{(\tapp{s}{\alpha})}}{(
        \tapp{t}{\alpha})}}$ for $\circ \in \{ \oplus, \otimes \}$
  \item $\app{\flatten_\nat}{s} \arrW s$
  \item $\app{\flatten_{\sigma \arrtype \tau}}{s} \arrW
    \app{\flatten_\tau}{(\app{s}{(\app{\lift_\sigma}{0})})}$
  \item $\app{\flatten_{\quant{\alpha:\kappa}{\sigma}}}{s} \arrW
    \app{\flatten_{\sigma[\subst{\alpha}{\chi_\kappa}]}}{(\tapp{s}{\chi_\kappa})}$
  \item $\app{\lift_\nat}{s} \arrW s$
  \item $\app{\lift_{\sigma \arrtype \tau}}{s} \arrW
    \abs{x:\sigma}{\app{\lift_{\tau}}{s}}$
  \item $\app{\lift_{\quant{\alpha}{\sigma}}}{s} \arrW
    \tabs{\alpha}{\app{\lift_{\sigma}}{s}}$
  \end{enumerate}
  Recall Definition~\ref{def_terms} and Definition~\ref{def_iterms} of
  the set of interpretation terms~$\Iterms$ as the set of the
  equivalence classes of~$\equiv$. So, for instance,
  $\lift_\nat$ above denotes
  the equivalence class of all preterms $\lift_\sigma$ with
  $\sigma =_\beta \nat$. Hence, the above rules are invariant
  under~$\equiv$ (by confluence of $\beta$-reduction on types), and
  they correctly define a relation on interpretation terms. We say
  that $s$ is a \emph{redex} if $s$ reduces by one of the rules 4--13.
  A \emph{final interpretation term} is an interpretation term $s \in
  \Iterms$ such that (a) $s$ is closed, and (b) $s$ is in normal form
  with respect to $\arrW$.  We let $\Iterms^f$ be the set of all final
  interpretation terms. By~$\Iterms_\tau$ ($\Iterms^f_\tau$) we denote
  the set of all (final) interpretation terms of interpretation
  type~$\tau$.
\end{defn}

An important difference with System~$\Fomega$ and related ones is that
the rules for $\oplus_\tau$, $\otimes_\tau$, $\flatten_\tau$ and
$\lift_\tau$ depend on the type~$\tau$. In particular, type
substitution in terms may create redexes. For instance, if $\alpha$ is
a type variable then $\oplus_\alpha t_1 t_2$ is not a redex, but
$\oplus_{\sigma\arrtype\tau} t_1 t_2$ is. This makes the question of
termination subtle. Indeed, System~$\Fomega$ is extremely sensitive to
modifications which are not of a logical nature. For instance, adding
a constant $\mathtt{J} : \forall \alpha \beta . \alpha \arrtype \beta$
with a reduction rule
$\mathtt{J} \tau \tau \leadsto \lambda x : \tau . x$ makes the system
non-terminating~\cite{Girard1971}. This rule breaks parametricity by
making it possible to compare two arbitrary types. Our rules do not allow such a
definition. Moreover, the natural number constants cannot be
  distinguished ``inside'' the system. In other words, we could
  replace all natural number constants with 0 and this would not
  change the reduction behaviour of terms. So for the purposes of
  termination, the type $\nat$ is essentially a singleton. This
  implies that, while we have polymorphic functions between an
  arbitrary type $\alpha$ and $\nat$ which are not constant when seen
  ``from outside'' the system, they are constant for the purposes of
  reduction ``inside'' the system (as they would have to be in a
  parametric $\Fomega$-like system). Intuitively, these properties of
  our system ensure that it stays ``close enough'' to $\Fomega$ so
  that the standard termination proof still generalises.

Now we state some properties of~$\arrW$, including strong
normalisation. Because of space limitations, most (complete)
proofs are delegated to 
\onlypaper{\cite[Appendix~A.1]{versionwithappendix}}%
\onlyarxiv{Appendix~\ref{app_proofs_SN}}.

\begin{lemma}[Subject reduction]
  If $t : \tau$ and $t \arrW t'$ then $t' : \tau$.
\end{lemma}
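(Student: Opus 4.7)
The plan is to prove this by induction on the derivation of $t \arrW t'$, which breaks into thirteen cases corresponding to the thirteen defining clauses of $\arrW$. Throughout, the two main tools will be the generation lemma (to extract typing information about subterms of $t$) and the substitution lemma (to handle the $\beta$-style reductions that push substitutions through).

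For the three congruence clauses (1--3), I would apply the generation lemma to $t$ to isolate the subterm being reduced, invoke the induction hypothesis to preserve its type, and then reassemble the surrounding context. For instance, if $t = \app{s}{u}$ reduces via clause (3) to $\app{s'}{u}$ because $s \arrW s'$, the generation lemma gives $s : \tau_1 \arrtype \tau$ with $u : \tau_1$, the IH yields $s' : \tau_1 \arrtype \tau$, and an application forms $\app{s'}{u} : \tau$; the $\tapp{}{}$ and abstraction cases are analogous, noting that the abstraction case needs the side condition on free type variables which is preserved because $\arrW$ does not introduce new free type variables. Clause (4) is standard $\beta$: generation gives us the matching types on the redex, and the substitution lemma (both term and type versions from Lemma~\ref{lem:substitution}) finishes.

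For the arithmetic clauses (5), both $n+m$ and $n \times m$ are natural number constants of type $\nat$, matching the type of the redex. For clauses (6--7), I would check that $\abs{x:\sigma}{\app{\app{\circ_\tau}{(\app{s}{x})}}{(\app{t}{x})}}$ has type $\sigma \arrtype \tau$ when $s,t : \sigma \arrtype \tau$, and similarly for the $\forall$-case, using that $\circ : \forall \alpha.\alpha \arrtype \alpha \arrtype \alpha$ instantiates appropriately. The $\flatten$ clauses (8--10) are checked by observing that each right-hand side has type $\nat$; in clause (10) the substitution $\sigma[\subst{\alpha}{\chi_\kappa}]$ is exactly the type of $\tapp{s}{\chi_\kappa}$ by the typing rule for type application, and $\flatten$ applied to it yields $\nat$. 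The $\lift$ clauses (11--13) are dual: each right-hand side has type $\tau$ (or $\sigma \arrtype \tau$, or $\quant{\alpha}{\sigma}$) as required.

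The main subtlety is that because terms are equivalence classes under $\equiv$, the type annotations on $\oplus_\tau, \otimes_\tau, \flatten_\tau, \lift_\tau$ are determined only up to $\beta$-conversion of types; the final clause of Definition~\ref{def_preterms} (allowing $s : \tau$ whenever $s : \tau'$ with $\tau =_\beta \tau'$) handles this uniformly, since each rule's right-hand side has a type $\beta$-equal to the left-hand side's type. I expect the only mildly delicate point to be clauses (9), (10), and (12), (13), where the shape of $\tau$ determines the rule fired and one must be careful that the type at which $\flatten$ or $\lift$ is re-applied on the right-hand side is indeed its stated subtype; this is a direct syntactic check once the generation lemma has peeled off the outer application.
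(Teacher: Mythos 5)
Your proof is correct and follows essentially the same route as the paper, which proves subject reduction by induction on the definition of $t \arrW t'$ using the substitution lemma and the generation lemma; your write-up simply spells out the case analysis that the paper leaves implicit.
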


\begin{proof}
  By induction on the definition of $t \arrW t'$, using
  Lemmas \ref{lem:substitution} and \ref{lem:generation}.
\end{proof}

\begin{theorem}\label{thm_sn}
  If $t : \sigma$ then $t$ is terminating with respect to $\arrW$.
\end{theorem}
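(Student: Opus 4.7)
The plan is to adapt Girard's reducibility candidates method for System~$\Fomega$ to this extended calculus. Call an interpretation term \emph{neutral} if it is neither $\abs{x}{s}$ nor $\tabs{\alpha}{s}$, and define a \emph{reducibility candidate} at kind~$*$ as a set~$R$ of $\arrW$-strongly normalising terms that is closed under $\arrW$ and satisfies the usual saturation condition: any neutral term whose one-step reducts all lie in~$R$ itself lies in~$R$. Candidates at higher kinds $\kappa_1 \arrkind \kappa_2$ are functions from candidates at $\kappa_1$ to candidates at $\kappa_2$. I would interpret types relative to a valuation~$\xi$ of their free type constructor variables in the standard way, setting $\interpret{\quant{\alpha:\kappa}{\tau}}_\xi = \bigcap_R \interpret{\tau}_{\xi[\alpha := R]}$ with~$R$ ranging over candidates at kind~$\kappa$.

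First I would prove the adequacy lemma: for every interpretation term $t : \tau$, every valuation~$\xi$, and every term substitution~$\gamma$ sending each $(x : \sigma) \in \FV(t)$ to an element of $\interpret{\sigma}_\xi$, the term~$\gamma(t)$ belongs to $\interpret{\tau}_\xi$. The cases for variables, abstractions, type abstractions, applications, and type applications proceed exactly as in the standard $\Fomega$ reducibility proof, using clause~4 of the definition of~$\arrW$ and the backward-closure consequence of saturation.

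The main obstacle is showing that the constants $\oplus$, $\otimes$, $\flatten$, $\lift$ and the numerals lie in the interpretations of their declared types, since the type-dependent rules 5--13 force genuine extra work compared to pure~$\Fomega$. Numerals are $\arrW$-normal and neutral, so saturation gives them. For~$\oplus$ it suffices to prove that for every candidate~$R$ at kind~$*$, every type~$\tau$ and all $s, t \in R$, the term $\app{\app{\oplus_\tau}{s}}{t}$ lies in~$R$. I would argue by an outer induction on the $\beta$-normal form of~$\tau$, combined with an inner induction on a measure bounding the residual reductions of~$s$ and~$t$. When $\tau$ is a type atom different from~$\nat$, the term is neutral, its only reducts come from reducing $s$ or $t$, and saturation applies. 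When $\tau$ is an arrow or quantifier, clauses 6--7 rewrite $\app{\app{\oplus_\tau}{s}}{t}$ to an abstraction whose body contains $\oplus_\rho$ at a structurally smaller type~$\rho$; the outer induction, the function/intersection interpretations, and backward closure then place it in~$R$.

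The delicate sub-case is $\tau = \nat$: it requires $n + m \in R$ whenever the numerals $n, m$ lie in~$R$, which is not implied by the bare candidate axioms. I would handle it by restricting the quantification in the interpretation of~$\forall$ to \emph{admissible} candidates, i.e.\ those containing all numerals and closed under the arithmetic operations; this restriction is harmless because every type interpretation $\interpret{\sigma}_\xi$ can be shown admissible by a simultaneous induction with adequacy. The constants $\otimes$, $\flatten$ and $\lift$ are treated in the same style, with the type-recursion for $\flatten_\tau$ and $\lift_\tau$ well-founded because $\chi_\kappa$ is atomic, so clauses 10 and~13 substitute a structurally smaller type. Once adequacy is established, every well-typed $t : \sigma$ belongs to $\interpret{\sigma}$ (taking~$\gamma$ to map each variable to itself, which lies in any candidate by saturation on neutral normal forms), so~$t$ is $\arrW$-strongly normalising, which is the claim.
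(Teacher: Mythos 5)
Your proposal is correct in outline and is essentially the paper's own proof: a Tait--Girard candidates argument for $\Fomega$ in which the candidates are strengthened so that the type-indexed constants can be proved computable --- your ``admissible'' candidates are exactly the paper's candidates, whose definition builds in closure under $\oplus_\tau,\otimes_\tau$, the clause that $\lift_\tau t$ lies in the candidate for every $\SN$ term $t:\nat$, and the clause that $\flatten_\tau t \in \SN$ for every candidate member $t$, and the type interpretations (with $\forall$ interpreted by quantifying over such candidates) are shown to satisfy these conditions by induction on the type. The only points to tighten are that admissibility must explicitly include the $\lift$ and $\flatten$ clauses (closure under the arithmetic operations plus numerals does not yield them, e.g.\ $\flatten_{\sigma\arrtype\tau}\,t \arrW \flatten_\tau(t\,(\lift_\sigma 0))$ already needs both), and that the paper uses a stronger notion of neutrality (no applicative extension of the term is a redex), which keeps partially applied constants such as $\oplus_\rho s$ or $\lift_\rho$ out of the saturation clause and so avoids extra head-redex cases in the proof that arrow and quantifier interpretations are candidates.
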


\begin{proof}
  By an adaptation of the Tait-Girard computability method. The proof
  is an adaptation of chapters~6 and~14 from the
  book~\cite{Girard1989}, and chapters~10 and~11 from the
  book~\cite{SorensenUrzyczyn2006}.
  Details are available in 
  \onlypaper{\cite[Appendix A.1]{versionwithappendix}}%
  \onlyarxiv{Appendix~\ref{app_proofs_SN}}.
\end{proof}

\begin{lemma}\label{lem_unique_final}
  Every term $s \in \Iterms$ has a unique normal form~$s\da$. If~$s$
  is closed then so is~$s\da$.
\end{lemma}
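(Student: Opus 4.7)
The plan is to derive existence of a normal form from Theorem~\ref{thm_sn}, uniqueness by proving local confluence of $\arrW$ and then invoking Newman's lemma, and preservation of closedness by a direct rule-by-rule inspection.

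Existence is immediate: Theorem~\ref{thm_sn} states that $\arrW$ is terminating on $\Iterms$, so every $s \in \Iterms$ reaches at least one $\arrW$-normal form. For uniqueness, I would prove that $\arrW$ is locally confluent, i.e., whenever $s \arrW t_1$ and $s \arrW t_2$, some $u$ exists with $t_1 \arrW^* u$ and $t_2 \arrW^* u$. The proof proceeds by case analysis on the two contracted redexes. When they occur in disjoint subterms, the diamond closes by contracting each redex in the opposite reduct. When one lies strictly inside an argument of the other (e.g.\ a $\beta$-redex $(\abs{x}{s})t$ with a reduction inside $s$ or $t$), the diamond closes via the substitution lemma, possibly with the inner redex being duplicated or erased and thus requiring several steps on one side. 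When the two contracted positions coincide, one observes that the rules 4--13 have pairwise disjoint left-hand sides: rule 4 matches only $\beta$-redexes whose head is a $\lambda$ or $\Lambda$, while rules 5--13 are head rules whose head symbol together with the shape of the type annotation ($\nat$, $\sigma \arrtype \tau$, or $\quant{\alpha}{\sigma}$) uniquely identifies which rule applies. Hence at most one rule ever fires at a given position, and there are no genuine critical pairs. Combining local confluence with Theorem~\ref{thm_sn}, Newman's lemma gives confluence of $\arrW$ and hence a unique normal form $s\da$.

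For the closedness claim, a rule-by-rule inspection of clauses 4--13 shows that $s \arrW t$ implies $\FV(t) \subseteq \FV(s)$ and $\FTV(t) \subseteq \FTV(s)$. The two $\beta$-clauses are handled by the substitution lemma. The $\flatten$- and $\lift$-clauses at a $\forall$-type introduce at worst the type constant $\chi_\kappa$, which is closed by definition; the remaining clauses only rearrange existing subterms. An induction on the length of a reduction sequence from $s$ to $s\da$ then yields that $s\da$ is closed whenever $s$ is.

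The step I expect to require the most care is local confluence, specifically the interaction between the two $\beta$-rules and the type-dependent rules 5--13. As the paper notes, type substitution may convert an inert term such as $\oplus_\alpha t_1 t_2$ into a redex $\oplus_{\sigma \arrtype \tau} t_1 t_2$; this is not a critical pair in the classical sense, since the new redex only appears \emph{after} the $\beta$-step, but it does mean that after the type reduction one may need several further $\arrW$-steps to close the diamond. Making the case analysis explicit enough to handle such post-substitution redexes uniformly with the usual $\beta$-confluence argument, while keeping the disjointness observation for rules 5--13 visible, is where the bookkeeping will be most delicate.
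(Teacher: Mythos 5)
Your proposal follows essentially the same route as the paper: the paper's proof is simply the observation that $\arrW$ is locally confluent (checked by case analysis, with no genuine critical pairs) and terminating by Theorem~\ref{thm_sn}, so Newman's lemma gives confluence and hence unique normal forms, with closedness of $s\da$ being immediate since no rule introduces free (type) variables. Your more detailed case analysis, including the remark that type substitution only \emph{creates} new redexes rather than producing overlaps, correctly fills in what the paper leaves as ``one easily checks.''
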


\begin{proof}
  One easily
  checks that~$\arrW$ is locally confluent. Since the
  relation is
  terminating by Theorem~\ref{thm_sn}, it is confluent by Newman's
  lemma.
\end{proof}

\begin{lemma}\label{lem_final_nat}
  The only final interpretation terms of type $\nat$ are the natural
  numbers.
\end{lemma}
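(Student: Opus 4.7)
The plan is to show, by induction on the size of~$s$, that every $s \in \Iterms^f_\nat$ is a natural number constant. By the generation lemma applied to the canonical representative (whose canonical type is~$\nat$), $s$ has one of the listed shapes. Since~$\nat$ is a type atom while the canonical types of~$\abs{x}{t}$ and $\tabs{\alpha}{t}$ are arrows or $\forall$-types, the abstraction cases are ruled out. The variable case is ruled out by closedness. If~$s$ is a function symbol, then among the symbols in $\{n:\nat\mid n\in\Nbb\}\cup\Sigma_f$ only the natural number constants have type~$\nat$, so $s = n$ as desired.

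The non-trivial case is when~$s$ is an application. Spine-decompose $s = h\,a_1\cdots a_k$ with $k\geq 1$, where~$h$ is not itself an application and each~$a_i$ is a term or type argument. Closedness of~$s$ implies~$h$ is not a variable; normality with respect to~$\arrW$ implies $h\,a_1$ is not a $\beta$-redex (rule~4), so~$h$ is not an abstraction. Hence~$h$ is a function symbol, and natural-number constants are excluded because they have type~$\nat$ and cannot be applied. Thus $h \in \{\oplus,\otimes,\flatten,\lift\}$. For $h = \oplus$ (and symmetrically~$\otimes$), the typing $\oplus:\forall\alpha.\alpha\arrtype\alpha\arrtype\alpha$ together with the requirement that the overall result type be the atom~$\nat$ forces $\vec{a} = \tau,t_1,t_2$ with~$\tau$ of canonical form~$\nat$: shorter spines leave an arrow type, while a non-atomic~$\tau$ or a longer spine forces~$\tau$ to be an arrow or a $\forall$, in which case $\oplus_\tau\,t_1\,t_2$ is already a redex by rule~6 or~7. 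With $\tau = \nat$, the induction hypothesis applied to the smaller closed normal subterms $t_1,t_2 \in \Iterms^f_\nat$ yields $t_1,t_2 \in \Nbb$, so~$s$ is a redex by rule~5, contradicting normality. The cases $h = \flatten$ and $h = \lift$ are analogous: any spine making~$s$ typable with result~$\nat$ immediately exposes a redex by rule~8, 9, or~10 (respectively~11, 12, or~13) according to whether the canonical form of the type argument~$\tau$ is~$\nat$, an arrow, or a $\forall$-type.

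The main technical obstacle is this application case, where one must carefully track how the result type of a partial application evolves as successive arguments are consumed, and verify that every route to the atomic result type~$\nat$ necessarily passes through a $\arrW$-redex. This is made tractable by a trichotomy on closed $\beta$-normal types of kind~$*$: since $\Sigma^T = \{\nat:*\}$ contains no further constants, every such type is either~$\nat$, an arrow, or a~$\forall$-type, and the reduction rules for $\oplus,\otimes,\flatten,\lift$ were designed so that exactly one of them fires in each of these three cases, which is precisely what closes the case analysis.
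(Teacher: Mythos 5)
Your proof is correct and follows essentially the same route as the paper's: induction on the term, with closedness and the trichotomy of closed $\beta$-normal interpretation types ($\nat$, arrow, or $\forall$) forcing any non-numeral head symbol $\oplus,\otimes,\flatten,\lift$ to expose a redex (via rules 5--13), contradicting normality. The spine decomposition and explicit appeal to the generation lemma are just a more detailed rendering of the paper's observation that such a term must have the form $\mathtt{f}_\sigma\,t_1 \ldots t_n$.
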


\begin{example}\label{ex:arrWreduce}
Let $s \in \Iterms_{\nat \arrtype \nat}$ and $t \in
\Iterms_\nat$. Then we can reduce
$(s \oplus \lift_{\nat \arrtype \nat}(1)) \cdot t \arrW
(\abs{x}{s x \oplus \lift_{\nat \arrtype \nat}(1)x}) \cdot t \arrW
s t \oplus \lift_{\nat \arrtype \nat}(1)t \arrW
s t \oplus (\abs{y}{\lift_{\nat}(1)})t \arrW
s t \oplus \lift_\nat(1) \arrW
s t \oplus 1$. If $s$ and $t$ are variables, this term is in normal
form.
\end{example}

\subsection{The ordering pair $(\succeq,\succ)$}\label{subsec:succ}

With these ingredients, we are ready to define the well-founded
partial ordering $\succ$ on $\Iterms$.  In fact, we will do more: rather
than a single partial ordering, we will define an \emph{ordering pair}: a
pair of a quasi-ordering $\succeq$ and a compatible well-founded ordering
$\succ$. The quasi-ordering $\succeq$ often makes it easier to prove
$s \succ t$, since it suffices to show that $s \succeq s' \succ t'
\succeq t$ for some interpretation terms $s',t'$.  Having $\succeq$ will
also allow us to use rule removal (Theorem \ref{thm:ruleremove}).

\begin{defn}\label{def:succ}
  Let $R \in \{ \succ^0,\succeq^0 \}$. For \emph{closed}~$s,t\in\Iterms_\sigma$
  and closed~$\sigma$ in $\beta$-normal form, the relation
  $s\ R_{\sigma}\ t$ is defined coinductively by the following rules.
  \[
  \begin{array}{ccc}
    \infer={s\ R_\nat\ t}{s\da\ R\ t\da \text{ in }\mathbb{N}} \quad&\quad
    \infer={s\ R_{\sigma\arrtype\tau}\ t}{\app{s}{q}\ R_{\tau}\ \app{t}{q} \text{ for all } q \in \Iterms^f_\sigma} &
    \infer={s\ R_{\forall(\alpha:\kappa).\sigma}\ t}{\tapp{s}{\tau}\ R_{\nf_\beta(\sigma[\subst{\alpha}{\tau}])}\ \tapp{t}{\tau} \text{ for all closed } \tau \in \Tc_{\kappa}}
  \end{array}
  \]
  We define $s \approx_\sigma^0 t$ if both $s \succeq_\sigma^0 t$ and
  $t \succeq_\sigma^0 s$.  We drop the type subscripts when clear or
  irrelevant.
\end{defn}

Note that in the case for~$\nat$ the terms~$s\da$, $t\da$ are natural
numbers by Lemma~\ref{lem_final_nat} ($s\da,t\da$ are closed and in
normal form, so they are final interpretation terms).

Intuitively, the above definition means that e.g. $s \succ^0 t$ iff
there exists a possibly infinite derivation tree using the above
rules. In such a derivation tree all leaves must witness $s\da > t\da$
in natural numbers. However, this also allows for infinite branches,
which solves the problem of repeating types due to impredicative
polymorphism. If e.g.~$s \succ_{\forall \alpha . \alpha}^0 t$ then
$\tapp{s}{\forall\alpha.\alpha} \succ_{\forall \alpha . \alpha}^0
\tapp{t}{\forall\alpha.\alpha}$, which forces an infinite branch in
the derivation tree. According to our definition, any infinite branch
may essentially be ignored.

Formally, the above coinductive definition of e.g.~$\succ_\sigma^0$
may be interpreted as defining the largest relation such that if $s
\succ_\sigma^0 t$ then:
\begin{itemize}
\item $\sigma = \nat$ and $s\da > t\da$ in $\mathbb{N}$, or
\item $\sigma = \tau_1\arrtype\tau_2$ and
  $\app{s}{q} \succ_{\tau_2}^0 \app{t}{q}$ for all
  $q \in \Iterms^f_{\tau_1}$, or
\item $\sigma = \forall(\alpha:\kappa).\rho$ and
  $\tapp{s}{\tau} \succ_{\nf_\beta(\rho[\subst{\alpha}{\tau}])}^0
  \tapp{t}{\tau}$ for all closed $\tau \in \Tc_{\kappa}$.
\end{itemize}
For more background on coinduction see
e.g.~\cite{KozenSilva2017,Sangiorgi2012,JacobsRutten2011}. In this
paper we use a few simple coinductive proofs to establish the basic
properties of~$\succ$ and~$\succeq$. Later, we just use these
properties and the details of the definition do not matter.

\begin{defn}\label{def_closure}
  A \emph{closure}~$\cl = \gamma \circ \omega$ is a
  replacement such that $\omega(\alpha)$ is closed for each
  type constructor variable~$\alpha$, and $\gamma(x)$ is closed for
  each term variable~$x$.
  For arbitrary types~$\sigma$ and arbitrary terms $s,t \in \Iterms$
  we define $s \succ_\sigma t$ if for every closure~$\cl$ we can
  obtain $\cl(s) \succ_{\nf_\beta(\cl(\sigma))}^c \cl(t)$
  coinductively with the above rules. The relations $\succeq_\sigma$
  and $\approx_\sigma$ are defined analogously.
\end{defn}

Note that for closed $s,t$ and closed~$\sigma$ in $\beta$-normal form,
$s \succ_\sigma t$ iff $s \succ_\sigma^0 t$ (and analogously
for~$\succeq,\approx$). In this case we shall often omit the
superscript~$0$.

The definition of~$\succ$ and~$\succeq$ may be reformulated as
follows.

\begin{lemma}\label{lem_succ_explicit}
  $t \succeq s$ if and only if for every closure~$\cl$ and every
  sequence $u_1,\ldots,u_n$ of closed terms and closed type
  constructors such that $\cl(t) u_1 \ldots u_n : \nat$ we have
  $(\cl(t) u_1 \ldots u_n)\da \ge (\cl(s) u_1 \ldots u_n)\da$ in
  natural numbers. An analogous result holds with $\succ$ or $\approx$
  instead of~$\succeq$.
\end{lemma}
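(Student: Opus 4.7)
The plan is to prove the $\succeq$ case and note that the same argument works uniformly for $\succ$ and $\approx$. First, I would use Definition~\ref{def_closure} to reduce to the case of closed interpretation terms of closed $\beta$-normal type, where $\succeq = \succeq^0$: since $t \succeq s$ means exactly $\cl(t) \succeq^c_{\nf_\beta(\cl(\sigma))} \cl(s)$ for every closure $\cl$, and $\cl(t), \cl(s)$ are closed, it suffices to prove the closed version
\[
a \succeq^0_\rho b \iff \text{for every closed } u_1,\ldots,u_n \text{ with } a u_1 \cdots u_n : \nat,\ (a u_1 \cdots u_n)\da \ge (b u_1 \cdots u_n)\da
\]
for closed $a,b : \rho$ and closed $\rho$ in $\beta$-normal form, and then quantify over closures $\cl$ on the outside.

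For the ($\Rightarrow$) direction, I would argue by induction on $n$. For $n = 0$ the type $\rho$ is $\nat$, and the coinductive clause at $\nat$ yields $a\da \ge b\da$ directly. For $n \ge 1$ the generation lemma forces $\rho$ to be either $\tau_1 \arrtype \tau_2$ or $\forall \alpha{:}\kappa.\tau$ according to whether $u_1$ is a term or a type constructor. In the type case, unfolding the coinductive clause gives $a * u_1 \succeq^0_{\nf_\beta(\tau[\subst{\alpha}{u_1}])} b * u_1$, and the induction hypothesis applied with arguments $u_2,\ldots,u_n$ closes the case. In the term case, the clause only provides $a \cdot q \succeq^0_{\tau_2} b \cdot q$ for $q \in \Iterms^f_{\tau_1}$; I would take $q := u_1\da$, which is closed and final, and invoke confluence (Lemma~\ref{lem_unique_final}) to conclude that $(a u_1 u_2 \cdots u_n)\da = (a u_1\da u_2 \cdots u_n)\da$ and similarly for $b$, so the induction hypothesis on $(a \cdot u_1\da, b \cdot u_1\da)$ with the remaining arguments gives the desired inequality.

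For the ($\Leftarrow$) direction I would proceed by coinduction. Define, for each closed $\beta$-normal type $\rho$, a relation $R_\rho$ on pairs of closed terms of type $\rho$ by: $a\,R_\rho\,b$ iff $(a u_1 \cdots u_n)\da \ge (b u_1 \cdots u_n)\da$ for every closed $u_1,\ldots,u_n$ with $a u_1 \cdots u_n : \nat$. I would then check that $R$ satisfies the defining clauses of $\succeq^0$. At $\nat$ this is immediate by taking $n = 0$. At $\tau_1 \arrtype \tau_2$, given $q \in \Iterms^f_{\tau_1}$ (hence closed), any closed $v_1,\ldots,v_m$ with $a q v_1 \cdots v_m : \nat$ can be rewritten as the sequence $q, v_1,\ldots,v_m$ in the assumption on $a,b$, so $(a \cdot q)\,R_{\tau_2}\,(b \cdot q)$. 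The quantified case is analogous with closed type constructors. By the coinduction principle $R_\rho \subseteq \succeq^0_\rho$, and the hypothesis of the lemma gives $\cl(t)\,R_{\nf_\beta(\cl(\sigma))}\,\cl(s)$ for every closure $\cl$, yielding $t \succeq s$.

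The main subtlety is the mismatch between the coinductive clause for function types (which quantifies over final interpretation terms $q \in \Iterms^f_{\tau_1}$) and the lemma's conclusion (which quantifies over closed $u_i$). This is resolved in one direction by normalising $u_1 \arrW^* u_1\da$ and using confluence, and in the other direction by observing that final terms are a fortiori closed, so they are admissible as arguments in the hypothesis. Finally, the proof for $\succ$ replaces $\ge$ by $>$ throughout and the proof for $\approx$ by equality; both variants flow through the same induction and coinduction arguments without change.
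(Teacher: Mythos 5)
Your proposal is correct and follows essentially the same route as the paper, which likewise proves the left-to-right direction by induction on $n$ and the converse by coinduction. Your explicit treatment of the mismatch between final interpretation terms $q \in \Iterms^f_{\tau_1}$ and arbitrary closed arguments $u_1$ (normalising $u_1$ to $u_1\da$ and invoking Lemma~\ref{lem_unique_final}) is exactly the detail the paper's terse proof leaves implicit, and it is handled correctly.
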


\begin{proof}
  The direction from left to right follows by induction on~$n$; the
  other by coinduction.
\end{proof}

In what follows, all proofs by coinduction could be reformulated to
instead use the lemma above. However, this would arguably make the
proofs less perspicuous. Moreover, a coinductive definition is better
suited for a formalisation -- the coinductive proofs here could be
written in Coq almost verbatim.

Our next task is to show that $\succeq$ and $\succ$ have the
desired properties of an ordering pair; e.g., transitivity and
compatibility. We first state a simple lemma that will be used
implicitly.

\begin{lemma}
  If $\tau \in \ITypes$ is closed and $\beta$-normal, then
  $\tau = \nat$ or $\tau = \tau_1\arrtype\tau_2$ or
  $\tau = \forall\alpha\sigma$.
\end{lemma}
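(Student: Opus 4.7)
The plan is to perform case analysis on the last clause used to generate $\tau$ by the grammar of $\Tc_*$ in Definition~\ref{def_types}. A type has one of five forms: (i) a type variable in $\Vc_*$, (ii) a type constant in $\Sigma^T_*$, (iii) an application $\varphi\,\psi$, (iv) a universal $\forall \alpha . \sigma$, or (v) an arrow $\tau_1 \arrtype \tau_2$. Cases (iv) and (v) match the conclusion directly. Case (i) is impossible because $\tau$ is closed, so $\FTV(\tau) = \emptyset$. In case (ii), since $\ITypes$ is built with $\Sigma^T = \{\nat : *\}$, the only type constant of kind $*$ is $\nat$, so $\tau = \nat$.

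The main obstacle is to rule out case (iii). Suppose $\tau = \varphi\,\psi$ with $\varphi \in \Tc_{\kappa \arrkind *}$ for some kind $\kappa$. Since $\tau$ is closed, $\varphi$ is closed; and since $\tau$ is $\beta$-normal, $\varphi$ is $\beta$-normal and cannot be a $\lambda$-abstraction (otherwise $\tau$ would be a $\beta$-redex). Inspecting the grammar of $\Tc_{\kappa \arrkind *}$, the only remaining possibilities for $\varphi$ are a variable, a constant, or a further application.

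To finish, I would prove the auxiliary claim, by induction on the structure of a closed $\beta$-normal type constructor of non-$*$ kind whose outermost constructor is not a $\lambda$-abstraction, that its \emph{head} (obtained by repeatedly peeling off the left argument of any application) is either a type variable or an element of some $\Sigma^T_{\kappa'}$ with $\kappa'$ a higher kind. Closedness rules out a variable head, and since $\ITypes$ contains no type constant other than $\nat : *$, no such head can exist. This contradiction eliminates case (iii) and completes the proof.
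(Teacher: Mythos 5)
Your proof is correct, and the paper in fact states this lemma without any proof, treating it as an immediate consequence of the grammar of interpretation types; your case analysis is exactly the routine argument being left implicit. The one genuinely non-trivial point -- eliminating the application case by observing that the head of a closed, $\beta$-normal application spine would have to be a variable (impossible by closedness), a $\lambda$ (impossible by normality), or a type constructor symbol of higher kind (impossible since $\Sigma^T = \{\nat : *\}$) -- is handled correctly by your auxiliary claim.
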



\begin{lemma}\label{lem_well_founded}
  $\succ$ is well-founded.
\end{lemma}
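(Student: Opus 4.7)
The plan is to argue by contradiction: suppose there is an infinite descending chain $s_1 \succ s_2 \succ s_3 \succ \ldots$ in $\Iterms_\sigma$ for some type $\sigma$, and derive an infinite strictly descending chain of natural numbers. The proof proceeds in two main reductions: first to the case of closed terms of closed type, and then to the type $\nat$, after which well-foundedness of $\mathbb{N}$ gives the contradiction.

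For the first reduction, I would choose a single closure $\cl = \gamma \circ \omega$ that closes every $s_i$ at once (the union of their free variables is countable): let $\omega$ send every type constructor variable of kind~$\kappa$ to $\chi_\kappa$, and let $\gamma$ send every term variable of type $\rho$ to $\lift_{\omega(\rho)}(0)$, which is a closed interpretation term of the required type. Setting $\tau := \nf_\beta(\cl(\sigma))$, Definition~\ref{def_closure} unfolded with this $\cl$ gives an infinite chain $\cl(s_1) \succ^0_\tau \cl(s_2) \succ^0_\tau \ldots$ of closed interpretation terms of a common closed $\beta$-normal type.

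For the second reduction, I would produce a sequence of closed arguments $u_1, \ldots, u_n$ (each either a closed term or a closed type constructor) such that $\cl(s_i)\, u_1 \cdots u_n$ has type $\nat$ for every~$i$. One can do this by recursion on $\tau$: take $u_1 = \lift_{\tau_1}(0)$ if $\tau = \tau_1 \arrtype \tau_2$ and $u_1 = \chi_\kappa$ if $\tau = \forall \alpha : \kappa.\,\rho$, then recurse on the resulting closed $\beta$-normal type. A cleaner alternative is to observe that $\flatten_\tau(\cl(s_i))$ strongly normalises by Theorem~\ref{thm_sn}, its normal form is a natural number by Lemma~\ref{lem_final_nat}, and the $\flatten$-reduction itself already supplies exactly such a canonical sequence $u_1, \ldots, u_n$ depending only on $\tau$ (using precisely $\lift_{\tau_1}(0)$ and $\chi_\kappa$ in the corresponding rules). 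Once the sequence is fixed, the strict form of Lemma~\ref{lem_succ_explicit}, applied with the identity closure since everything is already closed, yields $(\cl(s_i)\, u_1 \cdots u_n)\da > (\cl(s_{i+1})\, u_1 \cdots u_n)\da$ in $\mathbb{N}$ for every $i$, contradicting well-foundedness of $\mathbb{N}$.

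The main subtlety is in the second step. Because polymorphism is impredicative, substituting $\chi_\kappa$ for a type variable can produce a syntactically larger type, so the termination of the naive argument-finding recursion is not obvious by a simple structural induction on $\tau$. This is exactly the reason I would prefer the route via $\flatten$: the required termination is already packaged into strong normalisation of~$\arrW$ established in Theorem~\ref{thm_sn}, so no separate well-foundedness argument on types is needed, and the whole proof becomes a short reduction to well-foundedness of $\mathbb{N}$.
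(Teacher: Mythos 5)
Your proposal is correct, but its core differs from the paper's own argument. The paper proves the closed case by induction on the size of the $\beta$-normal type $\tau$ (the number of occurrences of $\forall$ and $\arrtype$): an infinite descent at $\forall(\alpha:\kappa).\tau$ is turned, by applying $\chi_\kappa$, into one at $\nf_\beta(\tau[\subst{\alpha}{\chi_\kappa}])$, and the impredicativity worry you raise is dispatched directly by observing that, since $\tau$ is $\beta$-normal, every redex created by the substitution has the form $\chi_\kappa u$, so the resulting type is strictly smaller in this measure. You instead fix a single argument vector $u_1,\ldots,u_n$ reaching $\nat$ --- extracted from the head $\flatten$-reduction, whose termination you obtain from Theorem~\ref{thm_sn} --- and then use the strict ($\succ$) form of Lemma~\ref{lem_succ_explicit}, in its easy left-to-right direction, to convert the entire $\succ$-chain into an infinite strictly decreasing sequence in $\Nbb$ at that one vector. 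Both routes are sound: yours buys a short, uniform reduction to well-foundedness of $\Nbb$, at the price of leaning on Theorem~\ref{thm_sn} and Lemma~\ref{lem_succ_explicit}; the paper's stays at the level of the coinductive definition and settles the impredicativity obstacle with a two-line measure argument rather than via strong normalisation, incidentally yielding the slightly more structural fact that descent already fails at each fixed $\beta$-normal type. Two small remarks: the ``identity closure'' is not a closure in the sense of Definition~\ref{def_closure}, but since all terms in play are closed, every closure acts as the identity on them, so your appeal to Lemma~\ref{lem_succ_explicit} goes through; and your first, recursive construction of the argument vector does in fact terminate --- for exactly the reason recorded in the paper's proof, namely that $\nf_\beta(\rho[\subst{\alpha}{\chi_\kappa}])$ has strictly fewer $\forall$/$\arrtype$ occurrences than $\forall(\alpha:\kappa).\rho$ --- so that route is not blocked, merely not free.
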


\begin{proof}
  It suffices to show this for closed terms and closed types in
  $\beta$-normal form, because any infinite sequence $t_1 \succ_\tau
  t_2 \succ_\tau t_3 \succ_\tau \ldots$ induces an infinite sequence
  $\cl(t_1) \succ_{\nf_\beta(\cl(\tau))} \cl(t_2)
  \succ_{\nf_\beta(\cl(\tau))} \cl(t_3) \succ_{\nf_\beta(\cl(\tau))}
  \ldots$ for any closure~$\cl$. By induction on the size of a
  $\beta$-normal type~$\tau$ (with size measured as the number of
  occurrences of~$\forall$ and~$\arrtype$) one proves that there does
  not exist an infinite sequence $t_1 \succ_\tau t_2 \succ_\tau t_3
  \succ_\tau \ldots$ For instance, if $\alpha$ has kind~$\kappa$ and
  $t_1 \succ_{\forall\alpha\tau} t_2 \succ_{\forall\alpha\tau} t_3
  \succ_{\forall\alpha\tau} \ldots$ then $\tapp{t_1}{\chi_\kappa}
  \succ_{\tau'} \tapp{t_2}{\chi_\kappa} \succ_{\tau'}
  \tapp{t_3}{\chi_\kappa} \succ_{\tau'} \ldots$, where
  $\tau'=\nf_\beta(\tau[\subst{\alpha}{\chi_\kappa}])$. Because $\tau$
  is in $\beta$-normal form, all redexes in
  $\tau[\subst{\alpha}{\chi_\kappa}]$ are created by the substitution
  and must have the form $\chi_\kappa u$. Hence, by the definition
  of~$\chi_\kappa$ (see Definition~\ref{def_types}) the
  type~$\tau'$ is smaller than~$\tau$. This
  contradicts the inductive hypothesis.
\end{proof}

\begin{lemma}\label{lem_transitive}
  Both $\succ$ and $\succeq$ are transitive.
\end{lemma}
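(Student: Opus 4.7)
The plan is to establish transitivity by a coinductive argument in the closed, $\beta$-normal-type setting, and then lift to general terms and types through Definition~\ref{def_closure}. Observe first that if $s \succ_\sigma t$ and $t \succ_\sigma u$, then by the definition of $\succ_\sigma$ via closures, for every closure $\cl$ we have $\cl(s) \succ^0_{\nf_\beta(\cl(\sigma))} \cl(t)$ and $\cl(t) \succ^0_{\nf_\beta(\cl(\sigma))} \cl(u)$, where both $\cl(s), \cl(t), \cl(u)$ are closed and $\nf_\beta(\cl(\sigma))$ is closed and $\beta$-normal. Hence it suffices to prove transitivity of $\succ^0_\tau$ on closed interpretation terms for closed $\beta$-normal $\tau$; the general statement then follows by applying this pointwise at each closure.

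For the closed case I would invoke the coinduction principle directly. Define, for each closed $\beta$-normal type $\tau$, the relation
\[
R_\tau = \{(s,u) \mid s,u \text{ closed in } \Iterms_\tau,\ \exists t \text{ closed in } \Iterms_\tau.\ s \succ^0_\tau t \text{ and } t \succ^0_\tau u\}.
\]
Since $\succ^0$ is by definition the greatest family of relations closed under the unfolding clause of Definition~\ref{def:succ}, it suffices to verify that the family $R$ is itself closed under that clause, which splits into the three shapes of $\tau$. If $\tau = \nat$, then by Lemma~\ref{lem_final_nat} the terms $s\da, t\da, u\da$ are natural numbers, and $s\da > t\da > u\da$ yields $s\da > u\da$ in $\mathbb{N}$. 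If $\tau = \tau_1 \arrtype \tau_2$, then for every $q \in \Iterms^f_{\tau_1}$ the hypotheses give $\app{s}{q} \succ^0_{\tau_2} \app{t}{q}$ and $\app{t}{q} \succ^0_{\tau_2} \app{u}{q}$, so $(\app{s}{q}, \app{u}{q}) \in R_{\tau_2}$ with witness $\app{t}{q}$. If $\tau = \forall(\alpha:\kappa).\rho$, the analogous argument with type applications places $(\tapp{s}{\tau'}, \tapp{u}{\tau'})$ in $R_{\nf_\beta(\rho[\subst{\alpha}{\tau'}])}$ for every closed $\tau' \in \Tc_\kappa$. The proof for $\succeq$ is identical, replacing $>$ by $\ge$ in the $\nat$ case.

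I do not expect a serious obstacle here: the composition witness $t$ propagates transparently through each unfolding step, so the coinduction requires no clever choice. The only minor point to get right is that the witness $t$ lies in the \emph{same} (closed, $\beta$-normal) type as $s$ and $u$ at every unfolding, which is guaranteed by subject reduction for $\arrW$ and the typing rules. An alternative shorter route would use Lemma~\ref{lem_succ_explicit} to chain the pointwise numerical inequalities $(\cl(s) u_1 \ldots u_n)\da \ge (\cl(t) u_1 \ldots u_n)\da \ge (\cl(u) u_1 \ldots u_n)\da$ at each head-normalising sequence of arguments, but the coinductive presentation is more in line with the methodology the paper emphasises and is therefore preferable here.
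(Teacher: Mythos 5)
Your proposal is correct and follows essentially the same route as the paper: reduce to closed terms and closed $\beta$-normal types via closures, then argue coinductively over the three type shapes ($\nat$, arrow, $\forall$), with the $\succeq$ case handled analogously. Your explicit formulation of the coinduction as showing that the composed relation $R_\tau$ is closed under the unfolding clause of Definition~\ref{def:succ} is just a more formal rendering of the paper's use of the ``coinductive hypothesis'', so there is nothing substantive to add.
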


\begin{proof}
  We show this for~$\succ$, the proof for~$\succeq$ being
  analogous. Again, it suffices to prove this for closed
  terms and closed types in $\beta$-normal form. We proceed by
  coinduction.

  If $t_1 \succ_\nat t_2 \succ_\nat t_3$ then $t_1\da > t_2\da >
  t_3\da$, so $t_1\da > t_3\da$. Thus $t_1 \succ_\nat t_3$.

  If $t_1 \succ_{\sigma\arrtype\tau}t_2\succ_{\sigma\arrtype\tau}t_3$
  then $\app{t_1}{q}\succ_{\tau}\app{t_2}{q}\succ_\tau\app{t_3}{q}$
  for $q \in \Iterms^f_\sigma$. Hence
  $\app{t_1}{q}\succ_\tau\app{t_3}{q}$ for $q \in \Iterms^f_\sigma$ by
  the coinductive hypothesis. Thus $t_1\succ_{\sigma\arrtype\tau}
  t_3$.

  If $t_1
  \succ_{\forall(\alpha:\kappa)\sigma}t_2\succ_{\forall(\alpha:\kappa)\sigma}t_3$
  then
  $\tapp{t_1}{\tau}\succ_{\sigma'}\tapp{t_2}{\tau}\succ_{\sigma'}\tapp{t_3}{\tau}$
  for any closed~$\tau$ of kind~$\kappa$, where
  $\sigma' = \nf_\beta(\sigma[\subst{\alpha}{\tau}])$. By the
  coinductive hypothesis
  $\tapp{t_1}{\tau}\succ_{\sigma'}\tapp{t_3}{\tau}$; thus
  $t_1\succ_{\forall\alpha\sigma}t_3$.
\end{proof}

\begin{lemma}\label{lem_reflexive}
  $\succeq$ is reflexive.
\end{lemma}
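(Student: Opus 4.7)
The plan is to mirror the structure of the preceding lemmas (transitivity, well-foundedness), reducing first to closed terms and closed $\beta$-normal types, and then proceeding by coinduction on the type.

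First, I would reduce to the closed case. For arbitrary $t$ and $\sigma$, reflexivity $t \succeq_\sigma t$ requires $\cl(t) \succeq^0_{\nf_\beta(\cl(\sigma))} \cl(t)$ for every closure $\cl$. So it suffices to prove $s \succeq^0_\tau s$ for every closed $s \in \Iterms_\tau$ and every closed $\beta$-normal $\tau \in \ITypes$. By the lemma stating that every closed $\beta$-normal $\tau$ is $\nat$, an arrow, or a quantification, there are three cases to handle.

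Next, I would perform a coinductive argument. In the base case $\tau = \nat$, Lemma~\ref{lem_unique_final} gives a unique normal form $s\da$, which by Lemma~\ref{lem_final_nat} is a natural number, and $s\da \geq s\da$ holds trivially in $\Nbb$. In the arrow case $\tau = \sigma_1 \arrtype \sigma_2$, I need $\app{s}{q} \succeq^0_{\sigma_2} \app{s}{q}$ for every $q \in \Iterms^f_{\sigma_1}$, which is exactly the coinductive hypothesis applied to the closed term $\app{s}{q}$ of type $\sigma_2$ (closed and $\beta$-normal as a subtype of $\tau$). The quantifier case $\tau = \forall(\alpha:\kappa).\rho$ is analogous: for every closed type constructor $u$ of kind $\kappa$, I need $\tapp{s}{u} \succeq^0_{\nf_\beta(\rho[\subst{\alpha}{u}])} \tapp{s}{u}$, which again follows coinductively, noting that $\tapp{s}{u}$ is closed and its type is $\beta$-normalised.

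I do not expect any serious obstacle. The only subtle point is making sure that the coinductive invocation is well-formed: in each case the premises of the defining rule for $\succeq^0$ involve terms that are still closed and types that are either already $\beta$-normal or are explicitly $\beta$-normalised in the definition, so the coinductive hypothesis applies directly. This is the same pattern used in Lemma~\ref{lem_transitive}, so the shape of the argument is essentially fixed by that precedent.
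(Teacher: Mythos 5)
Your proposal is correct and follows exactly the paper's proof: a coinductive argument establishing $s \succeq^0_\tau s$ for closed $s$ and closed $\beta$-normal $\tau$ (case-splitting on $\tau$ being $\nat$, an arrow, or a quantification), with the general case then immediate from the definition of $\succeq$ via closures. The paper merely states this in one line, so your writeup is just a more explicit rendering of the same argument.
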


\begin{proof}
  By coinduction one shows that $\succeq_\sigma$ is reflexive on
  closed terms for closed $\beta$-normal~$\sigma$. The case
  of~$\succeq$ is then immediate from definitions.
\end{proof}

\begin{lemma}\label{lem:compatibility}
  The relations~$\succeq$ and~$\succ$ are compatible, i.e., $\succ
  \cdot \succeq\ \subseteq\ \succ$ and $\succeq \cdot
  \succ\ \subseteq\ \succ$.
\end{lemma}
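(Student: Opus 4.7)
The plan is to adapt the coinductive argument from the proof of Lemma~\ref{lem_transitive}, exploiting the fact that at the base type $\nat$ the analogous compatibility already holds in $\mathbb{N}$, namely $>\cdot\ge\,\subseteq\,>$ and $\ge\cdot>\,\subseteq\,>$. I will prove $\succ\cdot\succeq\,\subseteq\,\succ$ in detail; the other inclusion is completely symmetric.

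First, I would reduce to closed terms and closed $\beta$-normal types. Suppose $s \succ_\sigma t$ and $t \succeq_\sigma u$. For any closure $\cl$, Definition~\ref{def_closure} gives $\cl(s) \succ^0_{\nf_\beta(\cl(\sigma))} \cl(t)$ and $\cl(t) \succeq^0_{\nf_\beta(\cl(\sigma))} \cl(u)$. To conclude $s \succ_\sigma u$ it then suffices to prove, for \emph{every} closed $\beta$-normal type $\tau$ and all closed $s',t',u' \in \Iterms_\tau$, that $s' \succ^0_\tau t'$ and $t' \succeq^0_\tau u'$ imply $s' \succ^0_\tau u'$.

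Second, I would establish this implication by coinduction, i.e.~by showing that the relation $R_\tau = \{(s',u') \mid \exists t'.\ s' \succ^0_\tau t' \text{ and } t' \succeq^0_\tau u'\}$ satisfies the defining rules of $\succ^0_\tau$. There are three cases on the shape of the closed $\beta$-normal type $\tau$. If $\tau = \nat$, then $s'\!\da > t'\!\da$ and $t'\!\da \ge u'\!\da$ in $\mathbb{N}$, so $s'\!\da > u'\!\da$ and hence $s' \succ^0_\nat u'$. If $\tau = \tau_1 \arrtype \tau_2$, then for every $q \in \Iterms^f_{\tau_1}$ we have $\app{s'}{q} \succ^0_{\tau_2} \app{t'}{q}$ and $\app{t'}{q} \succeq^0_{\tau_2} \app{u'}{q}$, so $(\app{s'}{q},\app{u'}{q}) \in R_{\tau_2}$, which is exactly what the coinductive rule for the arrow case requires. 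If $\tau = \forall(\alpha:\kappa).\rho$, then for every closed type constructor $\rho'$ of kind $\kappa$ we have $\tapp{s'}{\rho'} \succ^0_{\tau'} \tapp{t'}{\rho'}$ and $\tapp{t'}{\rho'} \succeq^0_{\tau'} \tapp{u'}{\rho'}$ with $\tau' = \nf_\beta(\rho[\subst{\alpha}{\rho'}])$, so $(\tapp{s'}{\rho'},\tapp{u'}{\rho'}) \in R_{\tau'}$, again matching the coinductive rule.

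There is no real obstacle here: the structure of the proof is identical to that of Lemma~\ref{lem_transitive}, and the only new observation needed is the mixed compatibility of $>$ and $\ge$ on $\mathbb{N}$ used at the $\nat$-case. The symmetric inclusion $\succeq\cdot\succ\,\subseteq\,\succ$ is obtained by repeating the same argument with the roles of $\succ$ and $\succeq$ swapped in the witness relation.
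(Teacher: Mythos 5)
Your proof is correct and follows essentially the same route as the paper, which proves compatibility ``by coinduction, analogous to the transitivity proof'' (Lemma~\ref{lem_transitive}): reduce to closed terms and closed $\beta$-normal types via closures, then show coinductively that the composed relation satisfies the defining rules of $\succ^0$, with the only new ingredient being the mixed compatibility of $>$ and $\ge$ on $\mathbb{N}$ in the $\nat$ case. Nothing is missing.
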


\begin{proof}
  By coinduction, analogous to the transitivity proof.
\end{proof}

\begin{lemma}\label{lem_succ_to_succeq}
  If $t \succ s$ then $t \succeq s$.
\end{lemma}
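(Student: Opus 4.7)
The plan is to proceed by coinduction, mirroring the style of the proofs of Lemmas~\ref{lem_transitive} and~\ref{lem_reflexive}. By Definition~\ref{def_closure}, it suffices to show the implication on closed terms with closed $\beta$-normal types, i.e., to prove that $\succ^0_\sigma \subseteq \succeq^0_\sigma$ for each such~$\sigma$: if this holds, then for arbitrary~$t,s$ and arbitrary~$\sigma$, any closure~$\cl$ witnessing $\cl(t) \succ^0_{\nf_\beta(\cl(\sigma))} \cl(s)$ also witnesses $\cl(t) \succeq^0_{\nf_\beta(\cl(\sigma))} \cl(s)$.

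For the coinductive argument, I would take as coinduction hypothesis that $\succ^0 \subseteq \succeq^0$ on all closed $\beta$-normal types, and verify the three clauses of the coinductive definition of $\succeq^0$:
\begin{itemize}
\item If $\sigma = \nat$ and $t \succ^0_\nat s$, then $t\da > s\da$ in $\mathbb{N}$, hence $t\da \ge s\da$, so $t \succeq^0_\nat s$.
\item If $\sigma = \sigma_1 \arrtype \sigma_2$ and $t \succ^0_\sigma s$, then $\app{t}{q} \succ^0_{\sigma_2} \app{s}{q}$ for every $q \in \Iterms^f_{\sigma_1}$; the coinductive hypothesis gives $\app{t}{q} \succeq^0_{\sigma_2} \app{s}{q}$, which is the required witness for $t \succeq^0_\sigma s$.
\item If $\sigma = \forall(\alpha{:}\kappa).\rho$, the argument is entirely analogous: instantiate at an arbitrary closed $\tau \in \Tc_\kappa$ and invoke the coinductive hypothesis at the strictly smaller (in the sense of Lemma~\ref{lem_well_founded}) type $\nf_\beta(\rho[\subst{\alpha}{\tau}])$.
\end{itemize}

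There is no real obstacle here; the point of the lemma is that the coinductive definitions of $\succ^0$ and $\succeq^0$ have identical structure away from the $\nat$-case, and at the $\nat$-case the implication $>\,\Rightarrow\,\ge$ on $\mathbb{N}$ is immediate. The only thing worth highlighting is that the argument must genuinely be coinductive rather than inductive on types, since impredicative polymorphism may force an infinite derivation tree (as remarked after Definition~\ref{def:succ}); but this is exactly what the coinductive formulation handles transparently, and the proof is written almost identically to Lemma~\ref{lem_transitive}.
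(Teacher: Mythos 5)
Your proof is correct and is essentially the paper's own argument: the paper proves this lemma simply ``by coinduction,'' and your case analysis on $\nat$, arrow, and quantified types, after reducing to closed terms and closed $\beta$-normal types via closures, is exactly that argument spelled out. One small remark: the parenthetical appeal to the type being ``strictly smaller (in the sense of Lemma~\ref{lem_well_founded})'' in the $\forall$-case is unnecessary and in general false (the instantiated type may be as large or larger under impredicative instantiation), but this does not matter since, as you yourself note, the coinductive hypothesis requires no decrease.
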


\begin{proof}
  By coinduction.
\end{proof}

\begin{lemma}\label{lem_leadsto_to_approx}
  If $t \arrW s$ then $t \approx s$.
\end{lemma}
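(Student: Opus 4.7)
My approach is to reduce the statement to an observation about normal forms via Lemma~\ref{lem_succ_explicit}. To establish $t \approx s$ it suffices to show that for every closure~$\cl$ and every sequence $u_1, \ldots, u_n$ of closed terms and types such that $\cl(t)\,u_1 \cdots u_n : \nat$, the two terms $\cl(t)\,u_1 \cdots u_n$ and $\cl(s)\,u_1 \cdots u_n$ share the same $\arrW$-normal form (which by Lemma~\ref{lem_final_nat} is a natural number). Both $\succeq$-directions in Lemma~\ref{lem_succ_explicit} then follow immediately, giving $t \succeq s$ and $s \succeq t$.

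The main auxiliary step is a substitution lemma for $\arrW$: if $t \arrW s$ and $\delta$ is a replacement, then $\delta(t) \arrW \delta(s)$. I would prove this by induction on the derivation of $t \arrW s$. Rules 1--3 (monotonicity) are immediate from the inductive hypothesis. Rule 4 ($\beta$-reduction, at both the term and type level) follows from the standard commutation of substitution with $\beta$-reduction, combined with the fact that we work modulo $\equiv$, so any residual type-level $\beta$-redexes produced by $\delta$ are harmless. For rules 5--13 the argument is purely schematic: each left-hand side is a pattern like $\oplus_\sigma s\,t$, $\flatten_\sigma s$, or $\lift_\sigma s$ whose applicability depends only on the top-level shape of~$\sigma$ (namely $\nat$, $\sigma_1 \arrtype \sigma_2$, or $\forall \alpha . \sigma'$), and that shape is preserved when we replace $\sigma$ by $\delta(\sigma)$; the right-hand side is handled by substituting identically.

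With the substitution lemma in hand, the proof concludes quickly. From $t \arrW s$ we obtain $\cl(t) \arrW \cl(s)$, and then repeatedly applying the monotonicity rules (2) and (3) at the head gives $\cl(t)\,u_1 \cdots u_n \arrW^{*} \cl(s)\,u_1 \cdots u_n$. By Theorem~\ref{thm_sn} together with Lemma~\ref{lem_unique_final}, both sides reduce to the same unique normal form, finishing the argument via Lemma~\ref{lem_succ_explicit}.

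The main obstacle, though shallow, is the case analysis in the substitution lemma: one must be slightly careful because type substitution can \emph{create} new redexes (for instance, $\oplus_\alpha\,n\,m$ is not a redex, but becomes one after substituting $\nat$ for $\alpha$). This is not actually a problem, since the lemma only claims that an existing reduction survives substitution; still, it is the one spot where the mixing of $\arrW$ with type substitution needs to be checked carefully rather than handwaved.
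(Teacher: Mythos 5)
Your proof is correct and follows essentially the same route as the paper: the paper's own argument is exactly the reduction to Lemma~\ref{lem_succ_explicit} together with the observation that $t \arrW s$ implies $\cl(t) \arrW \cl(s)$ for every closure~$\cl$, which is the stability fact you spell out in detail. The extra care you take with the type-dependent rules 5--13 and with confluence (Lemma~\ref{lem_unique_final}) is a sound elaboration of what the paper leaves implicit.
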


\begin{proof}
  Follows from Lemma~\ref{lem_succ_explicit}, noting that $t \arrW s$
  implies $\cl(t) \arrW \cl(s)$ for all closures~$\cl$.
\end{proof}

\begin{lemma}\label{lem_succ_red}
  Assume $t \succ s$ (resp.~$t \succeq s$). If $t \leadsto t'$ or
  $t' \leadsto t$ then $t' \succ s$ (resp.~$t' \succeq s$). If
  $s \leadsto s'$ or $s' \leadsto s$ then $t \succ s'$
  (resp.~$t \succeq s'$).
\end{lemma}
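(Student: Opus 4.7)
The plan is to reduce this lemma to three already-established facts: Lemma \ref{lem_leadsto_to_approx} (which turns a $\leadsto$-step into a $\approx$), Lemma \ref{lem:compatibility} (compatibility of $\succ$ with $\succeq$), and Lemma \ref{lem_transitive} (transitivity of $\succeq$). The key observation is that $\approx$ is defined symmetrically, so $u \approx v$ gives both $u \succeq v$ and $v \succeq u$; this means a single $\leadsto$-step in \emph{either} direction between $u$ and $v$ yields $u \succeq v$ and $v \succeq u$. With that in hand, all four sub-cases collapse to a one-line application of compatibility or transitivity.

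Concretely, first I would handle the $\succ$-case. Suppose $t \succ s$. If $t \leadsto t'$ or $t' \leadsto t$, then Lemma \ref{lem_leadsto_to_approx} gives $t \approx t'$, hence $t' \succeq t$; combining $t' \succeq t \succ s$ using Lemma \ref{lem:compatibility} yields $t' \succ s$. Symmetrically, if $s \leadsto s'$ or $s' \leadsto s$, Lemma \ref{lem_leadsto_to_approx} gives $s \approx s'$, hence $s \succeq s'$; then $t \succ s \succeq s'$ together with Lemma \ref{lem:compatibility} gives $t \succ s'$.

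The $\succeq$-case is handled in exactly the same way, but using Lemma \ref{lem_transitive} instead of compatibility: $t' \succeq t \succeq s$ (respectively $t \succeq s \succeq s'$) yields the desired inequality.

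I do not foresee a real obstacle here; the only conceptual point worth flagging is the symmetry of $\approx$, which is what allows both directions of $\leadsto$ to be treated uniformly. Everything else is a routine chaining of the quasi-order and the strict order via the compatibility and transitivity properties already proved for $(\succ,\succeq)$.
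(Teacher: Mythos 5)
Your proposal is correct and follows exactly the route the paper takes: its proof is precisely ``Follows from Lemma~\ref{lem_leadsto_to_approx}, transitivity and compatibility,'' and your write-up just spells out the same chaining (using the symmetry built into $\approx$ to handle both directions of $\leadsto$). Nothing is missing.
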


\begin{proof}
  Follows from Lemma~\ref{lem_leadsto_to_approx}, transitivity and
  compatibility.
\end{proof}

\begin{corollary}\label{cor_succ_da}
  For $R \in \{\succ,\succeq,\approx\}$: $s\ R\ t$ if and only if
  $s\downarrow\ R\ t\downarrow$.
\end{corollary}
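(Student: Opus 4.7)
My plan is to derive the corollary as an essentially immediate consequence of Lemma~\ref{lem_succ_red}, applied repeatedly along the reduction sequence from $s$ to $s\da$ and from $t$ to $t\da$. Since $\arrW$ is terminating and confluent (Theorem~\ref{thm_sn} and Lemma~\ref{lem_unique_final}), there exist finite reduction sequences $s = s_0 \arrW s_1 \arrW \cdots \arrW s_n = s\da$ and $t = t_0 \arrW t_1 \arrW \cdots \arrW t_m = t\da$. I will handle the cases $R \in \{\succ, \succeq\}$ directly and obtain the $\approx$ case as a consequence of the $\succeq$ case.

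First I would prove the forward direction: assume $s\ R\ t$. By induction on $n$, using Lemma~\ref{lem_succ_red} at each step, I replace $s$ by $s_1$, then $s_2$, and so on, obtaining $s\da\ R\ t$. Another induction on $m$, again invoking Lemma~\ref{lem_succ_red} but now applied to the right-hand side, yields $s\da\ R\ t\da$. For the reverse direction I proceed symmetrically: assuming $s\da\ R\ t\da$, Lemma~\ref{lem_succ_red} lets me ``undo'' reductions (note that it is stated for both $t \leadsto t'$ and $t' \leadsto t$), giving first $s\ R\ t\da$ and then $s\ R\ t$.

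For $R = \approx$, recall that $s \approx t$ is defined as $s \succeq t \wedge t \succeq s$. Applying the $\succeq$ case twice (once in each direction) then gives $s \approx t$ iff $s\da \approx t\da$, so no separate argument is needed.

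I do not foresee any real obstacle here: Lemma~\ref{lem_succ_red} already encapsulates the crucial content, namely that one-step $\arrW$-reduction on either side preserves $\succ$ and $\succeq$ in both directions. The corollary is merely the iterated, two-sided version of this lemma, with normal forms chosen as the destinations. The only point requiring a moment of care is recalling that Lemma~\ref{lem_succ_red} is bidirectional with respect to $\leadsto$, which is what makes the ``if'' direction of the biconditional immediate rather than requiring any additional argument.
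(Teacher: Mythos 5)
Your proof is correct and matches the paper's intent: the corollary is stated without proof precisely because it is the iterated, two-sided application of Lemma~\ref{lem_succ_red} along the finite reduction sequences to the (unique, by Theorem~\ref{thm_sn} and Lemma~\ref{lem_unique_final}) normal forms, with the $\approx$ case obtained from the $\succeq$ case in both directions exactly as you do.
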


\begin{example}\label{ex:plus1}
We can prove that $x \oplus \lift_{\nat \arrtype \nat}(1)
\succ x$: by
definition, this holds if $s \oplus \lift_{\nat \arrtype \nat}(1) \succ
s$ for all closed $s$, so if $(s \oplus \lift_{\nat \arrtype \nat}(1))u
\succ s u$ for all closed $s,u$.
Following Example \ref{ex:arrWreduce} and Lemma \ref{lem_succ_red},
this holds if $s u \oplus 1 \succ s u$.  By definition, this is the
case if $(s u \oplus 1)\downarrow > (s u)\downarrow$ in the natural numbers,
which clearly holds for any $s,u$.
\end{example}

\subsection{Weak monotonicity}\label{subsec:weakmono}

We will now show that $s \succeq s'$ implies $t[\subst{x}{s}] \succeq
t[\subst{x}{s'}]$ (weak monotonicity).
For this purpose, we prove a few lemmas, many of
which also apply to~$\succ$, stating the preservation of~$\succeq$
under term formation operations. We will need these results in the next section.

\begin{lemma}\label{lem_app_succ}
  For $R \in \{\succeq,\succ\}$: if $t\:R\:s$ then $t u\:R\:s u$ with
  $u$ a term or type constructor.
\end{lemma}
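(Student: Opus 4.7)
The plan is to derive the result as a direct consequence of Lemma~\ref{lem_succ_explicit}, which characterises $t \: R \: s$ in terms of numerical inequalities between normal forms at type~$\nat$ after applying an arbitrary closure and an arbitrary trailing sequence of closed terms and closed type constructors. Using this characterisation avoids the need for any fresh coinductive argument and handles the term-application and type-application cases uniformly.

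Concretely, the hypothesis $t \: R \: s$ unfolds, by (the analogue of) Lemma~\ref{lem_succ_explicit} for the chosen $R \in \{\succeq,\succ\}$, to the statement that for every closure $\cl$ and every finite sequence $v_1,\ldots,v_m$ of closed terms and closed type constructors with $\cl(t)\,v_1\cdots v_m : \nat$, the appropriate inequality holds between $(\cl(t)\,v_1\cdots v_m)\da$ and $(\cl(s)\,v_1\cdots v_m)\da$ in $\mathbb{N}$. By the same lemma applied in the other direction, to conclude $tu \: R \: su$ it suffices to show that for every closure $\cl'$ and every sequence $w_1,\ldots,w_n$ with $\cl'(tu)\,w_1\cdots w_n : \nat$, the analogous inequality holds between $(\cl'(tu)\,w_1\cdots w_n)\da$ and $(\cl'(su)\,w_1\cdots w_n)\da$.

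Given such $\cl'$ and $w_1,\ldots,w_n$, I would observe that $\cl'(tu) = \cl'(t)\,\cl'(u)$ (whether $u$ is a term, in which case this is ordinary term application, or a type constructor, in which case it is type application), and that $\cl'(u)$ is closed because every free term variable of $u$ is sent by $\gamma$ to a closed term and every free type constructor variable of $u$ is sent by $\omega$ to a closed type constructor. Hence the sequence $\cl'(u), w_1, \ldots, w_n$ is an admissible sequence of closed terms and closed type constructors for Lemma~\ref{lem_succ_explicit}, and instantiating the reformulated hypothesis at the closure $\cl'$ with this extended sequence yields exactly the required inequality.

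The ``obstacle'' here is really only bookkeeping: one must check that $\cl'(u)$ is closed, that $\cl'$ commutes with both flavours of application, and that the typing condition $\cl'(t)\,\cl'(u)\,w_1 \cdots w_n : \nat$ is precisely the condition needed to invoke the hypothesis. Once this is done, the result is immediate, and exactly the same argument handles $\succeq$, $\succ$ (and, in passing, $\approx$).
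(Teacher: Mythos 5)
Your proof is correct, and the paper itself blesses this route: right after Lemma~\ref{lem_succ_explicit} it remarks that all the coinductive arguments could instead be run through that reformulation, which is exactly what you do. The paper's own proof of Lemma~\ref{lem_app_succ} is just ``follows from definitions'', i.e.\ directly from the coinductive clauses of Definition~\ref{def:succ} together with the fact that a closure commutes with both kinds of application. The two routes differ in one small technical point worth noticing: the coinductive clause at arrow type quantifies only over \emph{final} interpretation terms $q \in \Iterms^f_\sigma$ (closed \emph{and} $\arrW$-normal), so the ``from definitions'' argument for a term argument $u$ must first pass from $\cl(u)$ to its normal form $\cl(u)\da$ and then come back using invariance of the relations under $\arrW$ (Lemma~\ref{lem_succ_red}); the type-constructor case needs no such step since the $\forall$-clause already quantifies over all closed constructors. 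Your use of Lemma~\ref{lem_succ_explicit} sidesteps this entirely, because that lemma quantifies over arbitrary \emph{closed} terms in the trailing sequence, so appending $\cl'(u)$ to the sequence is immediately admissible; the normalisation bookkeeping has simply been absorbed into the proof of Lemma~\ref{lem_succ_explicit} itself. So the two proofs are of essentially equal weight, and your checks (closedness of $\cl'(u)$, commutation of $\cl'$ with $\cdot$ and $*$, and the typing side condition) are indeed the only obligations.
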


\begin{proof}
  Follows from definitions.
\end{proof}

\begin{lemma}\label{lem:liftgreater}
  For $R \in \{\succeq,\succ\}$: if $n\:R\:m$ then
  $\lift_\sigma n\:R\:\lift_\sigma m$ for all types $\sigma$.
\end{lemma}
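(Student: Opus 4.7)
The plan is to use coinduction on the definition of $R^0_\tau$ after reducing, via the closure mechanism of Definition~\ref{def_closure}, to the case where both terms and the type are closed and $\tau$ is in $\beta$-normal form.

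\textbf{Reduction to the closed case.} By Definition~\ref{def_closure}, it suffices to show that for every closure $\cl$, $\lift_{\cl(\sigma)}\cl(n)\, R^0_{\nf_\beta(\cl(\sigma))}\, \lift_{\cl(\sigma)}\cl(m)$. Because $n, m : \nat$, the hypothesis $n\, R\, m$ specialises under $\cl$ to $\cl(n)\, R^0_\nat\, \cl(m)$. So it is enough to prove the following closed claim: for any closed $p, q \in \Iterms_\nat$ with $p\, R^0_\nat\, q$ and any closed $\beta$-normal interpretation type $\tau$, we have $\lift_\tau p\, R^0_\tau\, \lift_\tau q$.

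\textbf{Coinductive step.} As a coinductive candidate I would take $\mathcal{S}_\tau$ to consist of all pairs $(s, t)$ with $s \arrW^* \lift_\tau p$ and $t \arrW^* \lift_\tau q$ for some closed $p, q : \nat$ satisfying $p\, R^0_\nat\, q$. Verifying $\mathcal{S} \subseteq R^0$ amounts to checking that $\mathcal{S}$ matches the clauses of Definition~\ref{def:succ}, case-splitting on the shape of $\tau$. If $\tau = \nat$, rule~11 of $\arrW$ gives $\lift_\nat p \arrW p$, so $s\da = p\da$ and likewise $t\da = q\da$; the required natural-number comparison is then the hypothesis. If $\tau = \tau_1 \arrtype \tau_2$, for any $u \in \Iterms^f_{\tau_1}$, rule~12 plus a $\beta$-step yield $\lift_\tau p\, u \arrW (\abs{x:\tau_1}{\lift_{\tau_2} p})\, u \arrW \lift_{\tau_2} p$, valid because $p$ closed implies $x \notin \FV(\lift_{\tau_2} p)$; the same holds for $q$, so $(s u, t u) \in \mathcal{S}_{\tau_2}$, which is the coinductive premise. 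The $\forall$ case is analogous using rule~13: for any closed $\tau'$ of kind $\kappa$, $\lift_\tau p\, \tau' \arrW \lift_{\rho[\alpha:=\tau']} p$, and the resulting pair lies in $\mathcal{S}_{\nf_\beta(\rho[\alpha:=\tau'])}$.

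\textbf{Main obstacle.} The principal subtlety is in the $\forall$ case: $\nf_\beta(\rho[\alpha:=\tau'])$ can be strictly larger than $\tau$ when $\alpha$ has higher kind, so a naive size induction on $\tau$ would fail. Coinduction on the $R^0$ definition avoids this entirely, as it imposes no well-founded measure on types and thus absorbs impredicative substitutions for free. A minor but essential technical point in the arrow case is that closedness of $p$ ensures the bound variable introduced by rule~12 does not occur in the body, so the subsequent $\beta$-step cleanly eliminates $u$.
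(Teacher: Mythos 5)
Your proposal is correct and takes essentially the same route as the paper: after the same reduction via closures to closed $\sigma$ in $\beta$-normal form, the paper also argues by coinduction on the type-directed definition of $R^0$, its coinductive statement ($\lift(n)\,u_1\ldots u_k\ R\ \lift(m)\,u_1\ldots u_k$ for closed $u_1,\ldots,u_k$) playing exactly the role of your candidate $\mathcal{S}$ of pairs that $\arrW^*$-reduce to $\lift_\tau p$ and $\lift_\tau q$. The case analysis on $\tau$ (using rules 11--13 of $\arrW$ and unique normal forms at $\nat$) is the same in both arguments.
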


\begin{proof}
  Without loss of generality we may assume $\sigma$ closed and in
  $\beta$-normal form. By coinduction we show
  $\lift(n) u_1 \ldots u_k \succeq \lift(m) u_1 \ldots u_k$ for closed
  $u_1,\ldots,u_k$. First note that
  $(\lift\,t) u_1 \ldots u_k \leadsto^* \lift(t)$ (with a different
  type subscript in~$\lift$ on the right side, omitted for
  conciseness). If $\sigma = \nat$ then
  $(\lift(n) u_1 \ldots u_k)\da = n \ge m = (\lift(m) u_1 \ldots
  u_k)\da$. If $\sigma = \tau_1\arrtype\tau_2$ then by the coinductive
  hypothesis
  $\lift(n) u_1 \ldots u_k q \succeq_{\tau_2} \lift(m) u_1 \ldots u_k
  q$ for any $q \in \Iterms^f_{\tau_1}$, so
  $\lift(n) u_1 \ldots u_k \succeq_{\sigma} \lift(m) u_1 \ldots u_k$
  by definition. If $\sigma = \forall(\alpha:\kappa)\tau$ then by the
  coinductive hypothesis
  $\lift(n) u_1 \ldots u_k \xi \succeq_{\sigma'} \lift(m) u_1 \ldots
  u_k \xi$ for any closed $\xi \in \Tc_\kappa$, where
  $\sigma' = \tau[\subst{\alpha}{\xi}]$. Hence
  $\lift(n) u_1 \ldots u_k \succeq_{\sigma} \lift(m) u_1 \ldots u_k$
  by definition.
\end{proof}

\begin{lemma}\label{lem_flatten_succ}
  For $R \in \{\succeq,\succ\}$: if $t\:R_\sigma\:s$ then
  $\flatten_\sigma t\:R_\nat\: \flatten_\sigma s$ for all types
  $\sigma$.
\end{lemma}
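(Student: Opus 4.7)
The plan is to mimic the structure of Lemma~\ref{lem:liftgreater}, but organise the proof as an induction on the size of the $\beta$-normal form of~$\sigma$ rather than as a coinduction, since the output type is always~$\nat$ (so there is no recursive coinductive structure to exploit on that side). First I would reduce to the closed, $\beta$-normal case: a straightforward application of the definitions shows that applying an arbitrary closure~$\cl$ to the goal $\flatten_\sigma t\:R_\nat\:\flatten_\sigma s$ yields $\flatten_{\nf_\beta(\cl(\sigma))}(\cl(t))\:R^c_\nat\:\flatten_{\nf_\beta(\cl(\sigma))}(\cl(s))$, and the hypothesis $t\:R_\sigma\:s$ gives $\cl(t)\:R^c_{\nf_\beta(\cl(\sigma))}\:\cl(s)$. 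Hence it suffices to prove: for all closed $t',s'$ and closed $\beta$-normal~$\sigma'$, if $t'\:R^c_{\sigma'}\:s'$ then $\flatten_{\sigma'}\,t'\:R^c_\nat\:\flatten_{\sigma'}\,s'$.

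Then I would induct on the size of~$\sigma'$ (counting occurrences of~$\forall$ and~$\arrtype$), splitting on the three possible shapes. For $\sigma' = \nat$, rule~8 gives $\flatten_\nat\,t' \arrW t'$ and $\flatten_\nat\,s' \arrW s'$, so Lemma~\ref{lem_succ_red} transports the hypothesis directly. For $\sigma' = \sigma_1\arrtype\sigma_2$, rule~9 gives $\flatten_{\sigma_1\arrtype\sigma_2}\,t' \arrW \flatten_{\sigma_2}(t'\,(\lift_{\sigma_1}\,0))$, and I would set $q_0 := (\lift_{\sigma_1}\,0)\da$, which lies in $\Iterms^f_{\sigma_1}$ since $\sigma_1$ is closed; the hypothesis instantiated at $q_0$ yields $t'\,q_0\:R^c_{\sigma_2}\:s'\,q_0$, and the inductive hypothesis applied to the smaller type~$\sigma_2$ yields $\flatten_{\sigma_2}(t'\,q_0)\:R^c_\nat\:\flatten_{\sigma_2}(s'\,q_0)$. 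Lemma~\ref{lem_succ_red} together with the $\arrW^*$-reductions $\flatten_{\sigma_1\arrtype\sigma_2}\,t' \arrW^* \flatten_{\sigma_2}(t'\,q_0)$ (and analogously for~$s'$) then closes the case.

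The $\forall$-case is the only step with a subtlety and is what I expect to be the main obstacle. Here rule~10 gives $\flatten_{\forall(\alpha:\kappa).\tau}\,t' \arrW \flatten_{\tau[\alpha:=\chi_\kappa]}\,(t'\,\chi_\kappa)$. Taking $\xi := \chi_\kappa$ in the hypothesis yields $t'\,\chi_\kappa\:R^c_{\nf_\beta(\tau[\alpha:=\chi_\kappa])}\:s'\,\chi_\kappa$. To invoke the inductive hypothesis I need that $\nf_\beta(\tau[\alpha:=\chi_\kappa])$ is strictly smaller than $\forall(\alpha:\kappa).\tau$; this is exactly the size argument already used inside the proof of Lemma~\ref{lem_well_founded}, namely that since $\tau$ is $\beta$-normal, any redex in $\tau[\alpha:=\chi_\kappa]$ must have the form $\chi_\kappa\,u$, which unfolds to a smaller $\chi_{\kappa'}$ by the definition of~$\chi_\kappa$. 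With that observation in hand, the inductive hypothesis gives $\flatten_{\nf_\beta(\tau[\alpha:=\chi_\kappa])}(t'\,\chi_\kappa)\:R^c_\nat\:\flatten_{\nf_\beta(\tau[\alpha:=\chi_\kappa])}(s'\,\chi_\kappa)$, and one last application of Lemma~\ref{lem_succ_red} (together with the fact that interpretation terms are equivalence classes modulo~$\equiv$, so $\flatten_{\tau[\alpha:=\chi_\kappa]}$ and $\flatten_{\nf_\beta(\tau[\alpha:=\chi_\kappa])}$ denote the same term) finishes the proof.
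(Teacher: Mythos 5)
Your proof is correct and takes essentially the same route as the paper: reduce to closed, $\beta$-normal $\sigma$ and then argue by induction on $\sigma$, using Lemma~\ref{lem_succ_red} to transport the relation across the $\flatten$-reduction steps. The paper's one-line proof leaves implicit exactly the details you spell out, namely that the induction must be on the size measure (with the $\chi_\kappa$ argument from Lemma~\ref{lem_well_founded} handling the $\forall$ case) and that the arrow case is instantiated at the final term $(\lift_{\sigma_1}0)\da$.
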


\begin{proof}
  Without loss of generality we may assume~$\sigma$ is closed and in
  $\beta$-normal form. Using Lemma~\ref{lem_succ_red}, the lemma
  follows by induction on~$\sigma$.
\end{proof}

\begin{lemma}\label{lem_abs_succ}
  For $R \in \{\succeq,\succ\}$: if $t\:R\:s$ then
  $\abs{x}{t}\:R\:\abs{x}{s}$ and
  $\tabs{\alpha}{t}\:R\:\tabs{\alpha}{s}$.
\end{lemma}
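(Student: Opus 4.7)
The plan is to reduce the statement to the hypothesis by unfolding the closure-based definition (Definition~\ref{def_closure}) and exploiting that one-step $\arrW$-reduction preserves $R$ via Lemma~\ref{lem_succ_red}. I handle $R = \succeq$; the case $R = \succ$ is identical.

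Fix any closure $\cl$. By $\alpha$-conversion I may assume $x \notin \dom(\cl)$, so $\cl(\abs{x:\sigma}{t}) \equiv \abs{x:\cl(\sigma)}{\cl(t)}$ and similarly for $s$. Let $\sigma \arrtype \tau$ be the type of the abstraction, and write $\sigma' \arrtype \tau' = \nf_\beta(\cl(\sigma\arrtype\tau))$. By the coinductive definition of $\succeq^0_{\sigma'\arrtype\tau'}$, it suffices to show $\app{(\abs{x}{\cl(t)})}{q} \succeq^0_{\tau'} \app{(\abs{x}{\cl(s)})}{q}$ for every $q \in \Iterms^f_{\sigma'}$. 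Each of these $\beta$-reduces under $\arrW$ to $\cl(t)[\subst{x}{q}]$ and $\cl(s)[\subst{x}{q}]$ respectively, so by Lemma~\ref{lem_succ_red} it suffices to prove $\cl(t)[\subst{x}{q}] \succeq^0_{\tau'} \cl(s)[\subst{x}{q}]$. Now set $\cl' = \cl[\subst{x}{q}]$; this is still a closure, since $q$ is closed and $q$ has type $\sigma' =_\beta \cl(\sigma)$. Then $\cl'(t) = \cl(t)[\subst{x}{q}]$, $\cl'(s) = \cl(s)[\subst{x}{q}]$, and $\nf_\beta(\cl'(\tau)) = \tau'$ since $\cl'$ agrees with $\cl$ on $\FTV(\tau)$ (because $x$ is a term variable). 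Hence the required relation is exactly what the hypothesis $t \succeq s$ delivers when instantiated with $\cl'$.

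The case $\tabs{\alpha}{t} \succeq \tabs{\alpha}{s}$ is perfectly analogous, using the $\forall$-clause of the coinductive definition together with the type-level $\beta$-reduction $\tapp{(\tabs{\alpha}{u})}{\xi} \arrW u[\subst{\alpha}{\xi}]$: assume $\alpha \notin \dom(\cl)$, pick any closed $\xi \in \Tc_\kappa$ of the right kind, reduce $\tapp{(\tabs{\alpha}{\cl(t)})}{\xi}$ and $\tapp{(\tabs{\alpha}{\cl(s)})}{\xi}$ via $\arrW$, and observe that $\cl[\subst{\alpha}{\xi}]$ is again a closure to which the hypothesis applies at type $\nf_\beta(\cl[\subst{\alpha}{\xi}](\tau)) = \nf_\beta(\cl(\tau)[\subst{\alpha}{\xi}])$, matching exactly the type prescribed by the $\forall$-clause.

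The only subtle point is the type bookkeeping: one must check that at each coinductive step the type at which $R^0$ is tested really is $\nf_\beta(\cl'(\tau))$ for the extended closure $\cl'$. This is immediate in both cases as noted above, and otherwise the proof is a direct unfolding of definitions — no new machinery is needed, and no heavier induction or coinduction is required beyond what is already packaged in Lemma~\ref{lem_succ_red}.
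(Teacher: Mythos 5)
Your proposal is correct and follows essentially the same route as the paper's own proof: fix a closure $\cl$, take an arbitrary $q \in \Iterms^f$, extend to the closure $\cl[\subst{x}{q}]$ to invoke the hypothesis, and transfer the result back across the $\beta$-step $\app{(\abs{x}{\cl(t)})}{q} \arrW \cl(t)[\subst{x}{q}]$ via Lemma~\ref{lem_succ_red}, with the type-abstraction case handled analogously. The only difference is presentational: you spell out the $\nf_\beta$ bookkeeping and the $\Lambda$-case explicitly, which the paper leaves implicit.
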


\begin{proof}
  Assume $t \succeq_\tau s$ and $x : \sigma$. Let~$\cl$
  be a closure. We need to show
  $\cl(\abs{x}{t}) \succeq_{\cl(\sigma\arrtype\tau)}
  \cl(\abs{x}{s})$. Let $u \in \Iterms^f_{\cl(\sigma)}$. Then
  $\cl' = \cl[\subst{x}{u}]$ is a closure and
  $\cl'(t) \succeq_{\cl(\tau)} \cl'(s)$. Hence
  $\cl(t)[\subst{x}{u}] \succeq_{\cl(\tau)} \cl(s)[\subst{x}{u}]$. By
  Lemma~\ref{lem_succ_red} this implies
  $\cl(\abs{x}{t}) u \succeq_{\cl(\tau)} \cl(\abs{x}{s}) u$. Therefore
  $\cl(\abs{x}{t}) \succeq_{\cl(\sigma\arrtype\tau)}
  \cl(\abs{x}{s})$. The proof for $\succ$ is analogous.
\end{proof}

\begin{lemma}\label{lem:plustimesmonotonic}
  Let $s,t,u$ be terms of type $\sigma$.
  \begin{enumerate}
  \item If $s \succeq t$ then $s \oplus_\sigma u \succeq t
    \oplus_\sigma u$, $u \oplus_\sigma s \succeq u \oplus_\sigma t$,
    $s \otimes_\sigma u \succeq t \otimes_\sigma u$, and $u
    \otimes_\sigma s \succeq u \otimes_\sigma t$.
  \item If $s \succ t$ then $s \oplus_\sigma u \succ t \oplus_\sigma
    u$ and $u \oplus_\sigma s \succ u \oplus_\sigma t$. Moreover, if
    additionally $u \succeq \lift_\sigma(1)$ then also $s
    \otimes_\sigma u \succ t \otimes_\sigma u$ and $u \otimes_\sigma s
    \succ u \otimes_\sigma t$.
  \end{enumerate}
\end{lemma}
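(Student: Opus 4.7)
The plan is to reduce every conclusion to elementary arithmetic in $\mathbb{N}$ via Lemma~\ref{lem_succ_explicit}. Fix an arbitrary closure $\cl$ and a sequence $v_1,\ldots,v_n$ of closed terms and closed type constructors such that $\cl(s \oplus_\sigma u)\, v_1 \ldots v_n : \nat$; by subject reduction and Lemma~\ref{lem_final_nat}, the normal forms $n_s := (\cl(s)\, v_1 \ldots v_n)\da$, $n_t := (\cl(t)\, v_1 \ldots v_n)\da$, and $n_u := (\cl(u)\, v_1 \ldots v_n)\da$ are all natural numbers.

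The key technical step is the reduction claim
\[
\cl(s \oplus_\sigma u)\, v_1 \ldots v_n \;\arrW^*\; n_s + n_u,
\]
and analogously $\cl(s \otimes_\sigma u)\, v_1 \ldots v_n \arrW^* n_s \times n_u$. I would prove this by induction on $n$, which equals the number of outer $\arrtype$ and $\forall$ in the $\beta$-normal form of $\cl(\sigma)$. At each step, rule~6 (resp.~7) in the definition of $\arrW$ pushes $\oplus_{\rho}$ past one type layer: applying the reduct to $v_i$ followed by a $\beta$-step turns $(\cl(s) \oplus_\rho \cl(u))\, v_i$ into $(\cl(s)\, v_i) \oplus_{\rho'} (\cl(u)\, v_i)$ with $\rho'$ the residual (in the $\forall$ case, $\rho' = \nf_\beta(\rho[\alpha := v_i])$). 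When no arguments remain, rule~5 collapses $n_s \oplus_\nat n_u$ to $n_s + n_u$. Combining this with Corollary~\ref{cor_succ_da}, the $\downarrow$-normal form of $\cl(s \oplus_\sigma u)\, v_1 \ldots v_n$ is simply $n_s + n_u$ (respectively $n_s \times n_u$).

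Given the reduction claim, each clause becomes an elementary monotonicity fact. For part~(1), Lemma~\ref{lem_succ_explicit} applied to $s \succeq t$ yields $n_s \ge n_t$, whence $n_s + n_u \ge n_t + n_u$ and $n_s \cdot n_u \ge n_t \cdot n_u$. For part~(2), $s \succ t$ gives $n_s > n_t$, which transfers to addition unconditionally. For the multiplicative strict case, $n_s \cdot n_u > n_t \cdot n_u$ requires $n_u \ge 1$; this is precisely what $u \succeq \lift_\sigma(1)$ provides, since a short separate induction on the $\beta$-normal form of $\cl(\sigma)$ using rules 11--13 shows $\lift_\sigma(1)\, v_1 \ldots v_n \arrW^* 1$, so Lemma~\ref{lem_succ_explicit} forces $n_u \ge 1$. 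The symmetric conclusions ($u \oplus s$, $u \otimes s$) are identical by commutativity of $+$ and $\times$ in $\mathbb{N}$.

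The main obstacle is the bookkeeping in the reduction claim, particularly in the $\forall$ case where substituting a closed $\xi$ for $\alpha$ alters the subscript to $\nf_\beta(\rho[\alpha := \xi])$; but since the induction is driven by the decreasing number of applications needed to reach $\nat$ (not by the type itself), no real difficulty arises, and the computation is routine.
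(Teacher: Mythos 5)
Your proof is correct, and it is the reformulation that the paper explicitly anticipates right after Lemma~\ref{lem_succ_explicit} (``all proofs by coinduction could be reformulated to instead use the lemma above''). The paper's own proof mirrors Lemma~\ref{lem:liftgreater}: it strengthens the statement to arbitrary closed argument vectors $\vec{w}$ (e.g.\ if $s\vec{w} \succ t\vec{w}$ and $u\vec{w} \succeq \lift(1)\vec{w}$ then $(s \otimes u)\vec{w} \succ (t \otimes u)\vec{w}$) and proceeds by coinduction with a case split on the $\beta$-normal type at each layer. You instead invoke Lemma~\ref{lem_succ_explicit} in both directions, which packages that coinduction once and for all, and then the only work left is exactly the computation both proofs share: pushing $\oplus/\otimes$ through the arguments via rules 6--7 and $\beta$, collapsing at $\nat$ via rule 5, and observing $\lift_\sigma(1)\vec{v} \arrW^{*} 1$ via rules 11--13 so that $u \succeq \lift_\sigma(1)$ forces $n_u \geq 1$; after that it is arithmetic in $\mathbb{N}$. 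What your route buys is that each clause becomes a one-line arithmetic fact with no explicit coinductive bookkeeping; what the paper's route buys is staying uniform with the neighbouring lemmas and avoiding the normal-form computation (the paper argues the coinductive form is more perspicuous and closer to a mechanisation). One small citation nit: to conclude that $(\cl(s \oplus u)\vec{v})\da = n_s + n_u$ from your reduction claim you want uniqueness of normal forms (Lemma~\ref{lem_unique_final}), not Corollary~\ref{cor_succ_da}; this is cosmetic and does not affect the argument.
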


\begin{proof}
  It suffices to prove this for closed $s,t,u$ and closed $\sigma$ in
  $\beta$-normal form. The proof is similar to the proof of
  Lemma~\ref{lem:liftgreater}. For instance, we show by coinduction
  that for closed $w_1,\ldots,w_n$ (denoted $\vec{w}$): if
  $s \vec{w} \succ t \vec{w}$ and $u \vec{w} \succeq \lift(1) \vec{w}$
  then $(s \otimes u) \vec{w} \succ (t \otimes u) \vec{w}$.
\end{proof}

The following lemma depends on the lemmas above. The full proof may be
found in
\onlypaper{\cite[Appendix~A.2]{versionwithappendix}}%
\onlyarxiv{Appendix~\ref{sec_weakly_monotone_proof}}.
The proof is actually quite
complex, and uses a method similar to Girard's method of candidates
for the termination proof.

\begin{lemma}[Weak monotonicity]\label{lem_succeq_subst}
  If $s \succeq s'$ then $t[\subst{x}{s}] \succeq t[\subst{x}{s'}]$.
\end{lemma}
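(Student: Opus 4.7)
The plan is to proceed by structural induction on $t$ (working with the canonical $\beta$-normal representative of its type annotations). The easy cases are the variable and constant cases: if $t \equiv x$ then $t[\subst{x}{s}] = s \succeq s' = t[\subst{x}{s'}]$ by hypothesis; if $t$ is any other variable, a function symbol, or a numeric constant, then the substitution is inert and we conclude by Lemma~\ref{lem_reflexive}. The abstraction cases $t \equiv \abs{y}{u}$ and $t \equiv \tabs{\alpha}{u}$ reduce, via the induction hypothesis applied to $u$, to Lemma~\ref{lem_abs_succ}, which lifts $\succeq$ through abstraction and type abstraction.

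The genuinely hard case is application $t \equiv u\,v$ (or $t \equiv \tapp{u}{\tau}$). By IH we have $u[\subst{x}{s}] \succeq u[\subst{x}{s'}]$ and $v[\subst{x}{s}] \succeq v[\subst{x}{s'}]$. Lemma~\ref{lem_app_succ} at once gives $\app{u[\subst{x}{s}]}{v[\subst{x}{s}]} \succeq \app{u[\subst{x}{s'}]}{v[\subst{x}{s}]}$, so by transitivity (Lemma~\ref{lem_transitive}) it would suffice to prove $\app{u[\subst{x}{s'}]}{v[\subst{x}{s}]} \succeq \app{u[\subst{x}{s'}]}{v[\subst{x}{s'}]}$. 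This is weak monotonicity of the fixed term $u[\subst{x}{s'}]$ \emph{in its argument}, and the coinductive definition of $\succeq$ does not supply it: it propagates comparisons on the \emph{left} of application but not on the right. Moreover $u[\subst{x}{s'}]$ is not a subterm of $t$, so no straightforward structural induction reaches it.

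To break this circle I would imitate Girard's reducibility method and define a type-indexed family of \emph{candidate} sets $\cl_\sigma \subseteq \Iterms_\sigma$ consisting of those terms that behave monotonically in all arguments (applied to $\succeq$-comparable closed terms and type constructors, they yield $\succeq$-comparable results at $\nat$). The crucial closure properties -- that $\cl_\sigma$ is preserved by $\arrW$-reduction (Corollary~\ref{cor_succ_da}), by abstraction (Lemma~\ref{lem_abs_succ}), and that the built-in constants $\oplus$, $\otimes$, $\flatten$, $\lift$ and the numeric literals lie in the appropriate $\cl_\sigma$ -- are exactly what Lemmas~\ref{lem:liftgreater},~\ref{lem_flatten_succ}, and~\ref{lem:plustimesmonotonic} were set up to give. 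A simultaneous induction on term structure (in the style of the Tait--Girard proof of Theorem~\ref{thm_sn}) then shows that every interpretation term inhabits $\cl_\sigma$ for its type, and in particular so does $u[\subst{x}{s'}]$.

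The main obstacle, and the reason the proof is ``actually quite complex'', is precisely setting up the candidate predicate so that it is both strong enough to imply the missing right-hand monotonicity and weak enough to be preserved through all term formation rules (including the reduction-creating instantiations of $\oplus_\tau$, $\otimes_\tau$, $\flatten_\tau$, $\lift_\tau$ at composite types $\tau$). Once the candidate construction is in place, the original statement follows by chaining the two monotonicity halves via Lemma~\ref{lem_app_succ} and transitivity, and analogously handling the type-application subcase by a coinductive descent on the target type.
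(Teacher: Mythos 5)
Your diagnosis of the crux is exactly right, and your overall strategy---recover the missing monotonicity in the \emph{argument} position of application via a Girard-style candidates argument---is the same route the paper takes in Appendix~\ref{sec_weakly_monotone_proof}. One structural remark: the outer induction on $t$ is not needed. The paper's main lemma (Lemma~\ref{lem_typable_wm_computable}) is already the ``binary'' statement that pointwise related substitutions $\delta_1,\delta_2$ yield related instances $\delta_1(t),\delta_2(t)$, so weak monotonicity is an immediate instance once the auxiliary relation is shown to coincide with $\succeq$ on closed typable terms (Lemma~\ref{lem_gteq_to_succeq}, Corollary~\ref{cor_gteq_succeq}); your application-case chaining would likewise be subsumed.

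The genuine gap is in the candidate predicate you sketch. You take $\cl_\sigma$ to be the set of terms that send \emph{$\succeq$-comparable} closed arguments to comparable results. With that direct definition the inductive verification already fails at the constants $\oplus$ and $\otimes$: to handle, say, $\oplus_{\nat\arrtype\nat}\,s_1\,t_1\,b_1$ versus $\oplus_{\nat\arrtype\nat}\,s_2\,t_2\,b_2$ with $s_1\succeq s_2$, $t_1\succeq t_2$, $b_1\succeq b_2$, one must compare $s_2\,b_1$ with $s_2\,b_2$, i.e.\ one needs the \emph{argument} $s_2$---an arbitrary closed term about which only $s_1\succeq s_2$ is known---to be itself argument-monotone; that is the very property being established and is not available inside the induction (and no induction hypothesis reaches arbitrary arguments). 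The repair is to let the arguments range over candidate members, related by the candidate's own comparison relation; but then $\cl_\sigma$ is no longer directly definable and becomes a recursion on types, which impredicativity blocks at $\forall$. This is precisely why the paper's candidates are abstract pairs of a set and a comparison relation $\ge^X$ satisfying closure axioms (Definition~\ref{def_wm_candidate}), why types are interpreted relative to valuations assigning arbitrary candidates to type variables (Definition~\ref{def_wm_valuation}, Lemma~\ref{lem_val_wm_computable}, with the constants handled as in Lemma~\ref{lem_wm_circ}), and why a final collapse ${\geq_\sigma}={\succeq_\sigma}$ on closed typable terms is needed before the lemma in its stated form follows. You correctly flag ``setting up the candidate predicate'' as the main obstacle, but the specific predicate you propose is the one that does not push through, and the valuation-plus-collapse machinery that replaces it is the actual content of the paper's proof rather than a routine imitation of Theorem~\ref{thm_sn}.
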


\begin{corollary}\label{cor_app_wm}
  If $s \succeq s'$ then $t s \succeq t s'$.
\end{corollary}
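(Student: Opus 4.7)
The plan is to derive this as an immediate instance of the weak monotonicity lemma (Lemma~\ref{lem_succeq_subst}) by choosing a suitable ``test term'' that exposes the argument position of~$s$.

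Let $s,s'$ have type $\sigma$, so that $t$ has some type of the form $\sigma \arrtype \tau$ (the case where the application is a type application is analogous, using the second clause of the substitution lemma). First I would pick a fresh variable $x : \sigma$ with $x \notin \FV(t)$, and consider the auxiliary interpretation term $u := t \cdot x$. Because $x$ does not occur free in $t$, substitution yields $u[\subst{x}{s}] = t \cdot s$ and $u[\subst{x}{s'}] = t \cdot s'$.

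Now I apply Lemma~\ref{lem_succeq_subst} to $u$ with the hypothesis $s \succeq s'$, which gives $u[\subst{x}{s}] \succeq u[\subst{x}{s'}]$. Rewriting both sides with the equalities above yields $t \cdot s \succeq t \cdot s'$, which is the desired conclusion.

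The only mildly subtle point is the freshness convention for $x$, but since the relation $\succeq$ is defined on terms (equivalence classes modulo $\equiv$) and we can always $\alpha$-rename to pick $x$ outside $\FV(t)$, there is no obstacle. No further argument is needed: the real work was already done in the proof of Lemma~\ref{lem_succeq_subst}.
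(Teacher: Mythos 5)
Your proof is correct and is exactly the derivation the paper intends: the corollary is stated as an immediate consequence of Lemma~\ref{lem_succeq_subst}, obtained by applying weak monotonicity to the test term $t\cdot x$ with $x$ a fresh variable not in $\FV(t)$. (Your aside about a type-application case is superfluous: since $s \succeq s'$ forces $s,s'$ to be interpretation terms, only ordinary term application can occur here.)
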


\section{A reduction pair for PFS terms}\label{sec_reduction_pairs}

Recall that our goal is to prove termination of reduction in
a PFS.  To
do so, in this section we will define a systematic way to generate
\emph{reduction pairs}. We fix a~PFS~$A$, and define:

\begin{defn}
  A binary relation~$R$ on $A$-terms is \emph{monotonic} if $R(s, t)$
  implies $R(C[s], C[t])$ for every context~$C$ (we assume $s,t$ have
  the same type~$\sigma$).

  A \emph{reduction pair} is a pair~$(\succeq^A,\succ^A)$ of a
  quasi-order~$\succeq^A$ on $A$-terms and a well-founded
  ordering~$\succ^A$ on $A$-terms such that:
  (a)
  $\succeq^A$ and~$\succ^A$ are compatible, i.e., ${\succ^A}
    \cdot {\succeq^A} \subseteq {\succ^A}$ and ${\succeq^A} \cdot
          {\succ^A} \subseteq {\succ^A}$,
  and (b)
  $\succeq^A$ and~$\succ^A$ are both monotonic.
\end{defn}

If we can generate such a pair with $\ell \succ^A r$ for each
rule $(\ell,r) \in \Rules$, then we easily see that the PFS $A$ is
terminating.  (If we merely have $\ell \succ^A r$ for \emph{some}
rules and $\ell \succeq^A r$ for the rest, we can still progress
with the termination proof, as we will discuss in Section
\ref{sec_rule_removal}.)
To generate this pair, we will define the notion of an
\emph{interpretation} from the set of $A$-terms to the set $\Iterms$ of
interpretation terms, and thus lift the ordering pair $(\succeq,\succ)$
to $A$.
In the next section, we will show how this reduction pair can be used
in practice to prove termination of PFSs.

One of the core ingredients of our interpretation function is a
mapping to translate types:

\begin{defn}
  A \emph{type constructor mapping} is a function $\Typemap$ which
  maps each type constructor symbol to a closed interpretation type
  constructor of the same kind. A fixed type constructor mapping
  $\Typemap$ is extended inductively to a function from type
  constructors to closed interpretation type constructors in the
  expected way. We denote the extended \emph{interpretation (type)
    mapping} by~$\typeinterpret{\sigma}$. Thus,
  e.g.~$\typeinterpret{\quant{\alpha}{\sigma}} =
  \quant{\alpha}{\typeinterpret{\sigma}}$ and $\typeinterpret{\sigma
    \arrtype \tau} = \typeinterpret{\sigma} \arrtype
  \typeinterpret{\tau}$.
\end{defn}

\begin{lemma}\label{lem:substitutioninterpret:types}
  $\typeinterpret{\sigma}[\alpha:=\typeinterpret{\tau}] =
  \typeinterpret{\sigma[\alpha:=\tau]}$
\end{lemma}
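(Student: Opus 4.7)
The plan is to proceed by a straightforward structural induction on the type constructor $\sigma$. Since the interpretation mapping $\typeinterpret{\cdot}$ commutes with every syntactic constructor of types (it is defined inductively and fixes the shape of application, arrow, $\forall$, and $\lambda$), each case will reduce to invoking the inductive hypothesis on the immediate subterms, plus the base-case observations about variables and symbols.

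First I would dispatch the base cases. If $\sigma = \alpha$, both sides compute to $\typeinterpret{\tau}$. If $\sigma = \beta$ is a variable distinct from $\alpha$, both sides are just $\beta$. If $\sigma$ is a type constructor symbol $c$, then $\sigma[\alpha := \tau] = c$, so $\typeinterpret{\sigma[\alpha:=\tau]} = \Typemap(c)$; on the other side, $\typeinterpret{\sigma} = \Typemap(c)$, and because $\Typemap(c)$ is by definition a \emph{closed} interpretation type constructor, the substitution $[\alpha := \typeinterpret{\tau}]$ has no effect on it. The application, arrow, and $\forall$/$\lambda$ inductive cases then follow by unfolding $\typeinterpret{\cdot}$, applying the inductive hypothesis to the subterms, and folding back.

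The only real care is needed in the binder cases, $\sigma = \forall \beta . \sigma_1$ and $\sigma = \lambda \beta . \sigma_1$. Here, as is standard, I would invoke $\alpha$-conversion to assume without loss of generality that $\beta \neq \alpha$ and that $\beta \notin \FTV(\tau)$; combined with the fact that $\typeinterpret{\tau}$ contains no free occurrences of $\beta$ (because the only symbols introduced by $\typeinterpret{\cdot}$ are $\Typemap(c)$'s, all of which are closed, so $\FTV(\typeinterpret{\tau}) \subseteq \FTV(\tau)$), the substitution $[\alpha := \typeinterpret{\tau}]$ commutes with the binder $\lambda \beta$ (or $\forall \beta$). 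Then the inductive hypothesis applied to $\sigma_1$ finishes the case.

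The main (very minor) obstacle is this capture-avoidance argument: one needs the little side lemma that $\FTV(\typeinterpret{\rho}) \subseteq \FTV(\rho)$ for every type constructor $\rho$, which itself is a trivial induction using the closedness of each $\Typemap(c)$. Once this is in hand, the induction goes through essentially mechanically, and no subtlety from impredicativity or from $\beta$-conversion in types is required, since we are working on the syntactic term representatives of type constructors.
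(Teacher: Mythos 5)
Your proof is correct and is precisely the induction on $\sigma$ that the paper itself invokes (the paper's proof is just ``Induction on~$\sigma$''), with the base case for symbols resting, as it must, on the closedness of each $\Typemap(c)$ and the binder cases on the easy observation that $\FTV(\typeinterpret{\rho}) \subseteq \FTV(\rho)$.
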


\begin{proof}
  Induction on~$\sigma$.
\end{proof}

Similarly, we employ a \emph{symbol mapping} as the key
ingredient to interpret PFS terms.

\begin{defn}
  Given a fixed type constructor mapping~$\Typemap$, a \emph{symbol
    mapping} is a function $\Termmap$ which assigns to each function
  symbol $\mathtt{f} : \rho$ a closed interpretation term
  $\Termmap(\mathtt{f})$ of type~$\typeinterpret{\rho}$. For a fixed
  symbol mapping $\Termmap$, we define the \emph{interpretation
    mapping} $\interpret{s}$ inductively:
  \[
    \begin{array}{rclcrclcrcl}
      \interpret{x} & = & x &\quad&
      \interpret{\tabs{\alpha}{s}} & = & \tabs{\alpha}{\interpret{s}} &\quad&
      \interpret{\app{t_1}{t_2}} &=& \app{\interpret{t_1}}{\interpret{t_2}} \\
      \interpret{\mathtt{f}} &=& \Termmap(\mathtt{f}) & \quad &
      \interpret{\abs{x:\sigma}{s}} & = & \abs{x:\typeinterpret{\sigma}}{
                                          \interpret{s}} & \quad &
      \interpret{\tapp{t}{\tau}} &=& \tapp{\interpret{t}}{\typeinterpret{\tau}} \\
    \end{array}
  \]
\end{defn}

Note that $\typeinterpret{\sigma},\typeinterpret{\tau}$ above depend
on~$\Typemap$. Essentially, $\interpret{\cdot}$ substitutes
$\Typemap(\mathtt{c})$ for type constructor symbols $\mathtt{c}$, and
$\Termmap(\mathtt{f})$ for function symbols $\mathtt{f}$, thus mapping
$A$-terms to interpretation terms.  This translation preserves typing:

\begin{lemma}
  If $s : \sigma$ then $\interpret{s} : \typeinterpret{\sigma}$.
\end{lemma}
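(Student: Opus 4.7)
The plan is to prove the lemma by induction on the typing derivation $s : \sigma$, following the clauses of Definition~\ref{def_preterms}. Each case mirrors the clause-by-clause definition of $\interpret{\cdot}$ and reduces to checking that $\typeinterpret{\cdot}$ commutes with the relevant type-forming operation.

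For the base cases: when $s = x$, the interpretation is the corresponding variable of type $\typeinterpret{\sigma}$ in $\Iterms$, and when $s = \mathtt{f}$ the required typing $\Termmap(\mathtt{f}) : \typeinterpret{\sigma}$ is immediate from the definition of a symbol mapping. For the two abstraction cases, I would apply the induction hypothesis to the body and then use the fact that $\typeinterpret{\sigma_1 \arrtype \sigma_2} = \typeinterpret{\sigma_1} \arrtype \typeinterpret{\sigma_2}$ and $\typeinterpret{\quant{\alpha:\kappa}{\sigma'}} = \quant{\alpha:\kappa}{\typeinterpret{\sigma'}}$ (as given explicitly in the definition of $\Typemap$'s extension). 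For the term-application case $s = \app{t_1}{t_2}$, the induction hypothesis gives $\interpret{t_1} : \typeinterpret{\sigma'} \arrtype \typeinterpret{\sigma}$ and $\interpret{t_2} : \typeinterpret{\sigma'}$, so the result follows by the typing rule for application in $\Iterms$.

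The main non-routine case is the type-application clause $s = \tapp{t}{\tau}$ where $t : \quant{\alpha:\kappa}{\sigma'}$, so that $s : \sigma'[\alpha:=\tau]$. Here the induction hypothesis yields $\interpret{t} : \quant{\alpha:\kappa}{\typeinterpret{\sigma'}}$, and the typing rule for type application in $\Iterms$ gives $\tapp{\interpret{t}}{\typeinterpret{\tau}} : \typeinterpret{\sigma'}[\alpha:=\typeinterpret{\tau}]$. To finish, I invoke Lemma~\ref{lem:substitutioninterpret:types} to identify this with $\typeinterpret{\sigma'[\alpha:=\tau]}$. This is the only step that genuinely needs a previously-proved lemma.

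Finally, in the $\beta$-conversion clause, where $s : \sigma$ is derived from $s : \sigma'$ with $\sigma' =_\beta \sigma$, I observe that $\typeinterpret{\sigma'} =_\beta \typeinterpret{\sigma}$ (a straightforward induction on one-step type $\beta$-reduction, using that $\typeinterpret{\cdot}$ acts as a homomorphism on the type constructors and commutes with substitution by Lemma~\ref{lem:substitutioninterpret:types}), so the $\beta$-conversion typing rule of $\Iterms$ applies to the induction hypothesis. Overall there is no real obstacle; the only subtlety is keeping the bookkeeping between the two worlds aligned, which is precisely what Lemma~\ref{lem:substitutioninterpret:types} is there for.
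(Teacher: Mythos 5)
Your proposal is correct and matches the paper's argument, which is exactly an induction on the structure of $s$ (equivalently, on the typing derivation) using Lemma~\ref{lem:substitutioninterpret:types} for the type-application case; your additional handling of the $\beta$-conversion clause is just the routine detail the paper leaves implicit. No substantive difference.
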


\begin{proof}
  By induction on the form of $s$, using
  Lemma~\ref{lem:substitutioninterpret:types}.
\end{proof}

\begin{lemma}\label{lem:substitutioninterpret}
  For all $s,t,x,\alpha,\tau$:
    $\interpret{s}[\alpha:=\typeinterpret{\tau}] =
    \interpret{s[\alpha:=\tau]}$
  and
    $\interpret{s}[x:=\interpret{t}] = \interpret{s[x:=t]}$.
\end{lemma}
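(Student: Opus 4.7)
The plan is to prove both equalities by structural induction on $s$. The key observation is that $\interpret{\cdot}$ is defined clause-by-clause as a homomorphism from the term constructors of PFSs to the corresponding constructors on interpretation terms, so both forms of substitution distribute through $\interpret{\cdot}$ in the same fashion on each side of each equation.

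For part~(1), the base cases are immediate: when $s$ is a variable $y$, both sides equal $y$, since type substitution on interpretation terms only alters type annotations while $\interpret{y}=y$ in either case; when $s = \mathtt{f}$, by definition $\interpret{\mathtt{f}} = \Termmap(\mathtt{f})$ is a closed interpretation term, so substitution has no effect, while $\mathtt{f}[\alpha:=\tau] = \mathtt{f}$. For the inductive step, the cases $\app{t_1}{t_2}$ and $\tabs{\beta}{s'}$ (choosing $\beta$ fresh for $\alpha$ and for the free variables of $\tau$) are handled by unfolding the definition of $\interpret{\cdot}$ and applying the inductive hypothesis. The nontrivial cases are $\abs{y:\sigma}{s'}$ and $\tapp{t'}{\rho}$, where the type annotation $\sigma$ (respectively the type argument $\rho$) needs to be pushed through the substitution; here I invoke Lemma~\ref{lem:substitutioninterpret:types} to rewrite $\typeinterpret{\sigma}[\alpha:=\typeinterpret{\tau}]$ as $\typeinterpret{\sigma[\alpha:=\tau]}$, after which the equation closes by the inductive hypothesis on the body.

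Part~(2) follows the same template. The variable case splits on whether $s = x$: if so, both sides equal $\interpret{t}$; otherwise both sides equal $s$. The function-symbol case is again trivial since $\Termmap(\mathtt{f})$ is closed. The remaining constructors are handled inductively after choosing bound variables to be fresh for $x$ and for $\FV(t) \cup \FTV(t)$ by $\alpha$-conversion, so that the substitution commutes with the binder; no appeal to Lemma~\ref{lem:substitutioninterpret:types} is needed here, since the types that appear in abstractions and type applications are untouched by a term substitution.

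The only genuine subtlety is that PFS terms and interpretation terms are equivalence classes under $\equiv$, so one must check that each rewriting step respects this equivalence; this is immediate because $\typeinterpret{\cdot}$ is compatible with $\beta$-conversion on types (used in combination with Lemma~\ref{lem:substitutioninterpret:types}) and because the defining clauses of $\interpret{\cdot}$ are invariant under $\equiv$. No step of the induction is harder than routine bookkeeping; the main obstacle, if any, is purely notational, namely managing the dual role of type substitutions, which simultaneously act on type constructors and on the type annotations of term variables.
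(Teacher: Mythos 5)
Your proof is correct and follows essentially the same route as the paper, which proves this lemma by induction on $s$, relying on Lemma~\ref{lem:substitutioninterpret:types} exactly where you do (the $\abs{y:\sigma}{s'}$ and $\tapp{t'}{\rho}$ cases) and on closedness of $\Termmap(\mathtt{f})$ for the symbol case. The extra care you take with freshness of binders and invariance under $\equiv$ is just the routine bookkeeping the paper leaves implicit.
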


\begin{proof}
  Induction on~$s$.
\end{proof}

\begin{defn}
  For a fixed type constructor mapping $\Typemap$ and symbol mapping
  $\Termmap$, the \emph{interpretation pair}
  $(\succeqinterpret,\succinterpret)$ is defined as follows: $s
  \succeqinterpret t$ if $\interpret{s} \succeq \interpret{t}$, and $s
  \succinterpret t$ if $\interpret{s} \succ \interpret{t}$.
\end{defn}

\begin{remark}
The polymorphic lambda-calculus has a much greater expressive power
than the simply-typed lambda-calculus. Inductive data types
may be encoded, along with their constructors and recursors with
appropriate derived reduction rules. This makes
our
interpretation method easier to apply, even in the non-polymorphic
setting, thanks to more
sophisticated ``programming'' in the interpretations.
The reader is advised to consult e.g.~\cite[Chapter~11]{Girard1989}
for more background and explanations. We demonstrate the idea
by presenting an encoding for the recursive type $\List$ and its
fold-left function (see also
Ex.~\ref{ex_fold_interpretation}).
\end{remark}

\begin{example}\label{ex:notyetmono}
Towards a termination proof of Example~\ref{ex_fold_pafs},
we set $\Typemap(\List) = \forall \beta. (\forall \alpha.
\beta \arrtype \alpha \arrtype \beta) \arrtype \beta
\arrtype \beta$ and
$\Termmap(\nil) = \tabs{\beta}{\abs{f:\quant{\alpha}{\beta \arrtype
\alpha \arrtype \beta}}{\abs{x:\beta}{x}}}$.  If we additionally choose
$\Termmap(\mathtt{foldl}) = \tabs{\beta}{\abs{f}{\abs{x}{\abs{l}{l
\beta f x}}} \oplus \lift_\beta(1)}$, we have
$\interpret{\mathtt{foldl}_{\sigma}(f,s,\nil)} = (\tabs{\beta}{
  \abs{f}{\abs{x}{\abs{l}{l \beta f x}}} \oplus \lift_\beta(1)})
  \typeinterpret{\sigma} f s (\tabs{\beta}{\abs{f}{\abs{x}{x}}})
  \leadsto^* s \oplus \lift_{\interpret{\sigma}}(1)$ by
  $\beta$-reduction steps.
An extension of the proof from Example~\ref{ex:plus1} shows that
this term $\succ \interpret{s}$.
\end{example}

It is easy to see that $\succeqinterpret$ and
$\succinterpret$ have desirable properties such as transitivity,
reflexivity (for $\succeqinterpret$) and well-foundedness (for
$\succinterpret$).  However, $\succinterpret$ is not necessarily
monotonic.  Using the interpretation from Example~\ref{ex:notyetmono},
$\interpret{\mathtt{foldl}_{\sigma}(\abs{x}{s},t,\nil)} =
\interpret{\mathtt{fold}_{\sigma}(\abs{x}{w},t,\nil)}$ regardless of
$s$ and $w$, so a reduction in $s$ would not cause a decrease in
$\succinterpret$.  To obtain a reduction pair, we must impose certain
conditions on $\Termmap$; in particular, we will require that
$\Termmap$ is \emph{safe}.

\begin{defn}\label{def_safe}
  If $s_1 \succ s_2$ implies $t[\subst{x}{s_1}] \succ
  t[\subst{x}{s_2}]$, then the interpretation term~$t$ is \emph{safe
    for~$x$}. A symbol mapping~$\Termmap$ is \emph{safe} if for all
  $
  \mathtt{f} : \forall (\alpha_1 : \kappa_1) \ldots \forall (\alpha_n
  : \kappa_n) . \sigma_1 \arrtype \ldots \arrtype \sigma_k \arrtype
  \tau
$
  with~$\tau$ a type atom we have: $\Termmap(\mathtt{f}) =
  \tabs{\alpha_1 \dots \alpha_n}{\abs{x_1 \dots x_k}{t}}$ with $t$
  safe for each~$x_i$.
\end{defn}

\begin{lemma}\label{lem_safe}
  \begin{enumerate}
  \item $x u_1 \ldots u_m$ is safe for~$x$.
  \item If $t$ is safe for~$x$ then so are~$\lift(t)$
    and~$\flatten(t)$.
  \item If $s_1$ is safe for~$x$ or $s_2$ is safe for~$x$ then $s_1
    \oplus s_2$ is safe for~$x$.
  \item If either
      (a)
      $s_1$ is safe for~$x$ and $s_2 \succeq \lift(1)$, or
      (b)
      $s_2$ is safe for~$x$ and $s_1 \succeq \lift(1)$,
    then $s_1 \otimes s_2$ is safe for~$x$.
  \item If~$t$ is safe for~$x$ then so is~$\tabs{\alpha}{t}$
    and~$\abs{y}{t}$ ($y \ne x$).
  \end{enumerate}
\end{lemma}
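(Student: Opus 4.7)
My overall plan is to prove each clause by unfolding the definition of safety and then chaining the monotonicity and substitution lemmas established in Section~\ref{subsec:weakmono}. In each case, I fix arbitrary interpretation terms $a \succ b$ (so also $a \succeq b$ by Lemma~\ref{lem_succ_to_succeq}) and aim to show that substituting $a$ and $b$ for $x$ in the given term yields a strict inequality. I use $a,b$ instead of $s_1,s_2$ to avoid a name clash with the $s_1,s_2$ appearing in the statements of clauses~(3) and~(4). By $\alpha$-conversion I assume all bound variables in the terms are distinct from $x$ and not free in $a, b$, so substitution for $x$ commutes with $\lambda$, $\Lambda$, and application.

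The most delicate clause is~(1). Writing $v_i = u_i[\subst{x}{a}]$ and $w_i = u_i[\subst{x}{b}]$, the goal is $a\,v_1 \ldots v_m \succ b\,w_1 \ldots w_m$. I argue in two stages. First, iterated application of Lemma~\ref{lem_app_succ} to $a \succ b$ yields the strict step $a\,v_1 \ldots v_m \succ b\,v_1 \ldots v_m$ (changing only the head). Second, weak monotonicity (Lemma~\ref{lem_succeq_subst}) applied to $a \succeq b$ gives $v_i \succeq w_i$ for each $i$; combining Corollary~\ref{cor_app_wm} (argument monotonicity of $\succeq$) with Lemma~\ref{lem_app_succ} (head monotonicity of $\succeq$), I rewrite the arguments one position at a time to obtain $b\,v_1 \ldots v_m \succeq b\,w_1 \ldots w_m$. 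Compatibility (Lemma~\ref{lem:compatibility}) chains these into the required $\succ$. The main obstacle here is purely bookkeeping: keeping careful track of which of $\succ, \succeq$ holds at each link and invoking compatibility at the right place.

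Clauses~(2) and~(5) are essentially one-liners. From safety of~$t$ for~$x$ I obtain $t[\subst{x}{a}] \succ t[\subst{x}{b}]$; applying Lemma~\ref{lem:liftgreater}, Lemma~\ref{lem_flatten_succ}, or Lemma~\ref{lem_abs_succ} (the latter to both $\abs{y}{t}$ with $y \ne x$ and $\tabs{\alpha}{t}$) then gives the strict inequality at the level of $\lift_\sigma\,t$, $\flatten_\sigma\,t$, or the abstractions, and by the freshness convention this matches the substitution acting on these terms.

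For clauses~(3) and~(4), assume without loss of generality that $s_1$ is the safe side; the other case is symmetric. Safety of $s_1$ gives $s_1[\subst{x}{a}] \succ s_1[\subst{x}{b}]$, while weak monotonicity (Lemma~\ref{lem_succeq_subst}) gives $s_2[\subst{x}{a}] \succeq s_2[\subst{x}{b}]$. Lemma~\ref{lem:plustimesmonotonic}(2) applied to the strict step on the $s_1$-side yields a strict inequality for the full $\oplus$- (resp.~$\otimes$-) expression; in the $\otimes$ case, the required side condition $s_2[\subst{x}{a}] \succeq \lift(1)$ (and similarly with $b$) follows from the hypothesis $s_2 \succeq \lift(1)$ by a direct inspection of the closure-based definition of $\succeq$, since $\lift(1)$ is closed so every closure maps it to itself. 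Meanwhile, Lemma~\ref{lem:plustimesmonotonic}(1) propagates the $\succeq$ on the $s_2$-side through the combined expression; compatibility chains the two into the desired strict inequality.
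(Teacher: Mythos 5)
Your proposal is correct and follows essentially the same route as the paper: for clause~(1) you reproduce the paper's argument (head step via Lemma~\ref{lem_app_succ}, then $u_i[\subst{x}{a}] \succeq u_i[\subst{x}{b}]$ via Lemmas~\ref{lem_succ_to_succeq} and~\ref{lem_succeq_subst}, argument-wise rewriting with Corollary~\ref{cor_app_wm}, and compatibility to chain), and clauses~(2)--(5) are dispatched with exactly the monotonicity lemmas (\ref{lem:liftgreater}, \ref{lem_flatten_succ}, \ref{lem_abs_succ}, \ref{lem:plustimesmonotonic}) the paper's one-line proof alludes to. Your extra observation that $s_2 \succeq \lift(1)$ is stable under substituting for $x$ (via the closure-based definition) is a correct filling-in of a detail the paper leaves implicit.
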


\begin{proof}
  Each point follows from one of the lemmas proven before,
  Lemma~\ref{lem_succ_to_succeq}, Lemma~\ref{lem_succeq_subst},
  Lemma~\ref{lem:compatibility} and the transitivity of~$\succeq$. For
  instance, for the first, assume $s_1 \succ s_2$ and let
  $u_i^j=u_i[\subst{x}{s_j}]$. Then $(x u_1 \ldots
  u_m)[\subst{x}{s_1}] = s_1 u_1^1 \ldots u_m^1$. By
  Lemma~\ref{lem_app_succ} we have $s_1 u_1^1 \ldots u_m^1 \succ s_2
  u_1^1 \ldots u_m^1$. By Lemma~\ref{lem_succ_to_succeq} and
  Lemma~\ref{lem_succeq_subst} we have $u_i^1 \succeq u_i^2$. By
  Corollary~\ref{cor_app_wm} and the transitivity of~$\succeq$ we
  obtain $s_2 u_1^1 \ldots u_m^1 \succeq s_2 u_1^2 \ldots u_m^2$. By
  Lemma~\ref{lem:compatibility} finally $(x u_1 \ldots
  u_m)[\subst{x}{s_1}] = s_1 u_1^1 \ldots u_m^1 \succ s_2 u_1^2 \ldots
  u_m^2 = (x u_1 \ldots u_m)[\subst{x}{s_2}]$.
\end{proof}

\begin{lemma}\label{lem_succinterpret_monotonic}
  If~$\Termmap$ is safe then~$\succinterpret$ is monotonic.
\end{lemma}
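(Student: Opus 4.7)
The plan is to convert monotonicity into a safety statement that can be tackled by induction on context structure, using the machinery already developed. I would treat the hole $\Box_\sigma$ of a context $C$ as a fresh variable $y_\Box$ of interpretation type $\typeinterpret{\sigma}$ (equivalently, extend $\Termmap$ by $\Termmap(\Box_\sigma) = y_\Box$), so that $\interpret{C}$ becomes an interpretation term with $y_\Box$ free and $\interpret{C[s]} = \interpret{C}[y_\Box := \interpret{s}]$ by Lemma~\ref{lem:substitutioninterpret}. Monotonicity of $\succinterpret$ then reduces to showing that $\interpret{C}$ is safe for $y_\Box$ for every PFS context~$C$.

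I would prove this by induction on the structure of $C$. The base case $C = \Box$ gives $\interpret{C} = y_\Box$, which is safe for $y_\Box$ by Lemma~\ref{lem_safe}(1) with $m = 0$. The abstraction and type-abstraction cases reduce via Lemma~\ref{lem_safe}(5) to the induction hypothesis. For a type-application context $C = \tapp{C'}{\tau}$, the type $\typeinterpret{\tau}$ contains no free term variables, so safety of $\interpret{C'}$ for $y_\Box$ (inductive hypothesis) combined with Lemma~\ref{lem_app_succ} shows that $\interpret{C'}\,\typeinterpret{\tau}$ is also safe for $y_\Box$. When $\Box$ occupies the head position of a function-symbol application, $\interpret{C}$ has the shape $y_\Box$ applied to types and interpretations in which $y_\Box$ does not occur, so Lemma~\ref{lem_safe}(1) applies directly.

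The real work is in the case $C = \mathtt{f}_{\rho_1,\ldots,\rho_n}(u_1,\ldots,u_{i-1},C',u_{i+1},\ldots,u_m)$ with $\mathtt{f} \neq \Box$. Writing $\Termmap(\mathtt{f}) = \tabs{\alpha_1\ldots\alpha_n}{\abs{x_1\ldots x_k}{t}}$, safety of $\Termmap$ gives that $t$ is safe for $x_i$. The key trick is to wrap $t$ in abstractions over every bound variable \emph{except} $x_i$, yielding a term that remains safe for $x_i$ by repeated application of Lemma~\ref{lem_safe}(5). Applying this wrapper in turn to the types $\typeinterpret{\rho_l}$ and to the interpretations $\interpret{u_j}$ for $j \neq i$ preserves safety for $x_i$ via Lemma~\ref{lem_app_succ}, since none of these arguments contain $x_i$ free. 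The resulting term $\arrW$-reduces to $\abs{x_{m+1}\ldots x_k}{t''}$, where $t''$ is $t$ with all substitutions except the one at $x_i$ already performed, and Lemma~\ref{lem_succ_red} transports safety for $x_i$ through this $\arrW$-reduction. Feeding the $\succ$-related pair $\interpret{C'}[y_\Box := \interpret{s_1}]$, $\interpret{C'}[y_\Box := \interpret{s_2}]$ supplied by the induction hypothesis into the $x_i$ position, and then undoing the $\arrW$-reductions again via Lemma~\ref{lem_succ_red}, yields $\interpret{C[s_1]} \succ \interpret{C[s_2]}$.

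The main obstacle is that safety is phrased for a single variable position, whereas the interpretation of $C$ mixes type substitutions, multiple term substitutions, and $\beta$-contractions in a single step. The wrapper construction sidesteps the need for a separate ``safety is preserved under substitution of other variables'' lemma by keeping the $x_i$-position outermost and letting all other substitutions happen through $\arrW$-reduction of a safety-preserving wrapper; the role of Lemma~\ref{lem_succ_red} throughout is to let us freely move $\arrW$-reductions across $\succ$, which is precisely what makes this strategy go through.
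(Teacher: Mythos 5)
Your proof has the same skeleton as the paper's: induction on the context, with the only non-trivial case being a hole in an argument position of a function symbol, resolved by safety of the body of $\Termmap(\mathtt{f})$ together with Lemma~\ref{lem_succ_red}. The differences are in packaging and in the key case. The paper proves $\interpret{C[s_1]} \succ \interpret{C[s_2]}$ directly; in the key case it $\arrW$-reduces $\interpret{C[s_i]}$ to $u'[\subst{x_k}{\interpret{C'[s_i]}}]$, where $u'$ is the body with the types and the other term arguments already substituted, and then applies safety of $u$ to the \emph{instance} $u'$ -- a step that implicitly uses that $\succ$ (hence safety) is stable under substituting the remaining variables, which holds because $\succ$ quantifies over all closures. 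Your wrapper construction handles this point differently and more explicitly: you keep the $x_i$-position outermost, let the other substitutions happen by $\arrW$-reducing a term built only from abstractions (Lemma~\ref{lem_safe}(5)) and applications to $x_i$-free arguments (safety preserved via Lemma~\ref{lem_app_succ}), and transport safety along $\arrW$ via Lemma~\ref{lem_succ_red}. This trade relies instead on stability of $\arrW$ under substitution, so that $W[\subst{x_i}{a}] \arrW^* W'[\subst{x_i}{a}]$ -- true, and used in the appendix, but worth stating. Both routes are sound, and yours also treats a hole in an arbitrary argument position rather than the paper's ``without loss of generality $m=k-1$''.

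One caveat concerns your initial reduction. The identity $\interpret{C[s]} = \interpret{C}[\subst{y_\Box}{\interpret{s}}]$ obtained from Lemma~\ref{lem:substitutioninterpret} is only valid if filling the hole is capture-avoiding. For the rewrite relation, and hence for monotonicity as used in Theorem~\ref{thm:ruleremove}, contexts must be allowed to capture free variables of the plugged term (otherwise a rule instance mentioning a bound variable could never be reduced under its binder), and then for $C = \abs{x}{\Box}$ with $x \in \FV(s)$ the capture-avoiding substitution $\interpret{C}[\subst{y_\Box}{\interpret{s}}]$ renames the binder and differs from $\interpret{C[s]}$, so safety of $\interpret{C}$ for $y_\Box$ does not literally give what is needed. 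The damage is local: your induction already contains all the required ingredients, since the abstraction case rests on Lemma~\ref{lem_safe}(5), i.e.\ on Lemma~\ref{lem_abs_succ}, which holds for arbitrary open terms. So either run the induction directly on the statement ``$\interpret{s_1} \succ \interpret{s_2}$ implies $\interpret{C[s_1]} \succ \interpret{C[s_2]}$'' (as the paper does), or formulate the hole-variable statement with literal replacement rather than capture-avoiding substitution; with that adjustment your argument goes through.
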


\begin{proof}
  Assume $s_1 \succinterpret s_2$. By induction on
  a context~$C$ we show $C[s_1] \succinterpret C[s_2]$. If $C=\Box$ then
  this is obvious. If $C = \abs{x}{C'}$ or $C = \tabs{\alpha}{C'}$
  then $C'[s_1] \succinterpret C'[s_2]$ by the inductive hypothesis,
  and thus $C[s_1] \succinterpret C[s_2]$ follows from
  Lemma~\ref{lem_abs_succ} and definitions. If $C = C' t$ then
  $C'[s_1] \succinterpret C'[s_2]$ by the inductive hypothesis,
  so $C[s_1] \succinterpret C[s_2]$ follows from definitions.

  Finally, assume $C = \app{t}{C'}$. Then $t = \mathtt{f} \rho_1
  \ldots \rho_n t_1 \ldots t_m$ where
  $
  \mathtt{f} : \forall (\alpha_1 : \kappa_1) \ldots \forall (\alpha_n
  : \kappa_n) . \sigma_1 \arrtype \ldots \arrtype \sigma_k \arrtype
  \tau
  $
  with~$\tau$ a type atom, $m < k$, and $\Termmap(\mathtt{f}) =
  \tabs{\alpha_1 \dots \alpha_n}{\abs{x_1 \dots x_k}{u}}$ with $u$
  safe for each~$x_i$. Without loss of generality assume $m=k-1$. Then
  $\interpret{C[s_i]} \leadsto u'[\subst{x_k}{\interpret{C'[s_i]}}]$
  where
  $u'=u[\subst{\alpha_1}{\typeinterpret{\rho_1}}]\ldots[\subst{\alpha_n}{\typeinterpret{\rho_n}}][\subst{x_1}{\interpret{t_1}}]\ldots[\subst{x_{k-1}}{\interpret{t_{k-1}}}]$. By
  the inductive hypothesis $\interpret{C'[s_1]} \succ
  \interpret{C'[s_2]}$. Hence $u'[\subst{x_k}{\interpret{C'[s_1]}}]
  \succ u'[\subst{x_k}{\interpret{C'[s_2]}}]$, because~$u$ is safe
  for~$x_k$. Thus $\interpret{C[s_1]} \succ \interpret{C[s_2]}$ by
  Lemma~\ref{lem_succ_red}.
\end{proof}

\begin{theorem}\label{thm_reduction_pair}
  If~$\Termmap$ is safe then $(\succeqinterpret,\succinterpret)$ is a
  reduction pair.
\end{theorem}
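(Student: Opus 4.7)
The plan is to verify the four defining properties of a reduction pair in turn, namely that $\succeqinterpret$ is a quasi-order, that $\succinterpret$ is a well-founded strict order, that the two relations are compatible, and that both are monotonic. All but the last of these lift almost mechanically from the analogous properties of $(\succeq,\succ)$ on $\Iterms$ established in Section~\ref{subsec:succ}.

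For the ``structural'' properties I would simply transfer through the interpretation mapping. Reflexivity of $\succeqinterpret$ reduces to reflexivity of $\succeq$ on $\interpret{s}$ (Lemma~\ref{lem_reflexive}); transitivity of both $\succeqinterpret$ and $\succinterpret$ follows pointwise from Lemma~\ref{lem_transitive}; well-foundedness of $\succinterpret$ follows from Lemma~\ref{lem_well_founded}, since any infinite descending chain $s_1 \succinterpret s_2 \succinterpret \ldots$ of $A$-terms would yield the infinite descending chain $\interpret{s_1} \succ \interpret{s_2} \succ \ldots$ in $\Iterms$; and compatibility follows pointwise from Lemma~\ref{lem:compatibility}.

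Monotonicity splits into two parts. Monotonicity of $\succinterpret$ is precisely Lemma~\ref{lem_succinterpret_monotonic}, and this is the \emph{only} place where the safeness hypothesis on $\Termmap$ is used: safeness provides the strict analogue of weak monotonicity for function-symbol contexts $C = t \cdot C'$, which is not otherwise available since Lemma~\ref{lem_succeq_subst} only yields $\succeq$. For monotonicity of $\succeqinterpret$ no safeness is required. I would treat the hole $\Box_\sigma$ as a fresh interpretation variable of type $\typeinterpret{\sigma}$, so that by the substitution lemma (Lemma~\ref{lem:substitutioninterpret}) we have $\interpret{C[s_i]} = \interpret{C}[\Box := \interpret{s_i}]$. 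Weak monotonicity (Lemma~\ref{lem_succeq_subst}) then immediately gives $\interpret{C}[\Box := \interpret{s_1}] \succeq \interpret{C}[\Box := \interpret{s_2}]$ from $\interpret{s_1} \succeq \interpret{s_2}$, which is exactly $C[s_1] \succeqinterpret C[s_2]$.

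The only genuinely subtle ingredient is the strict-monotonicity clause, and the hard work there has already been done in Lemma~\ref{lem_succinterpret_monotonic} via the safeness calculus of Lemma~\ref{lem_safe}. Assembling the pieces, the remainder of the argument is routine bookkeeping through the definition of $(\succeqinterpret,\succinterpret)$.
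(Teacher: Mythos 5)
Your proposal is correct and follows essentially the same route as the paper: the paper's proof likewise obtains the quasi-order and well-foundedness from Lemmas~\ref{lem_transitive}, \ref{lem_reflexive} and~\ref{lem_well_founded}, compatibility from Lemma~\ref{lem:compatibility}, monotonicity of $\succeqinterpret$ from weak monotonicity (Lemma~\ref{lem_succeq_subst}), and monotonicity of $\succinterpret$ from Lemma~\ref{lem_succinterpret_monotonic}, with safety of $\Termmap$ entering only in that last step. Your explicit device of reading the hole $\Box_\sigma$ as a fresh variable on the interpretation side (a mild adaptation of Lemma~\ref{lem:substitutioninterpret}, since $\Box_\sigma$ is formally a function symbol) just spells out what the paper leaves implicit.
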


\begin{proof}
  By Lemmas~\ref{lem_transitive} and~\ref{lem_reflexive},
  $\succeqinterpret$ is a
  quasi-order. Lemmas~\ref{lem_well_founded}
  and~\ref{lem_transitive} imply that~$\succinterpret$ is a
  well-founded ordering. Compatibility follows from
  Lemma~\ref{lem:compatibility}. Monotonicity of~$\succeqinterpret$
  follows from Lemma~\ref{lem_succeq_subst}. Monotonicity
  of~$\succinterpret$ follows from
  Lemma~\ref{lem_succinterpret_monotonic}.
\end{proof}

\begin{example}\label{ex_fold_interpretation}
  The following is a safe interpretation for the PFS from
  Example~\ref{ex_fold_pafs}:
  \[
  \begin{array}{rcll}
    \Typemap(\List) & = & \multicolumn{2}{l}{
      \quant{\beta}{(\quant{\alpha}{\beta \arrtype
      \alpha \arrtype \beta}) \arrtype \beta \arrtype \beta}}\\
  \Termmap(\mathtt{@}) & = & \Lambda \alpha.\Lambda\beta.
    \lambda f.\lambda x. &
    f \cdot x \oplus \lift_\beta(\flatten_\alpha(x)) \\
  \Termmap(\mathtt{A}) & = & \Lambda \alpha.\Lambda \beta.\lambda x. &
    x * \beta \\
  \Termmap(\mathtt{nil}) & = & & \Lambda \beta.\lambda f.\abs{x}{x} \\
  \Termmap(\mathtt{cons}) & = & \Lambda \alpha.\lambda h.\lambda t. &
    \Lambda \beta.\lambda f.\lambda x.
    t \beta f (f \alpha x h \oplus \lift_\beta(\flatten_\beta(x)\
    \oplus \\
    & & & \phantom{ABCDEFGHIJKLMNOP,} \flatten_\alpha(h)))\ \oplus\  \\
    & & & \phantom{ABCDE\ }
    \lift_\beta(\flatten_\beta(f\alpha x h) \oplus
    \flatten_\alpha(h) \oplus 1) \\
  \Termmap(\mathtt{foldl}) & = & \Lambda \beta.\lambda f. \lambda x.
    \lambda l. & l \beta f x \oplus \lift_\beta(\flatten_{\forall \alpha.
    \beta \arrtype \alpha \arrtype \beta}(f)\ \oplus \\
    & & & \phantom{ABCDEFG\ \ }
    \flatten_\beta(x) \oplus 1) \\
  \end{array}
  \]
Note that $\Termmap(\mathtt{cons})$ is \emph{not} required to be safe
for $x$, since $x$ is not an argument of $\mathtt{cons}$:
following its declaration, $\mathtt{cons}$ takes one type and two terms
as arguments. The variable $x$ is only part of the \emph{interpretation}.
Note also that the current interpretation is a mostly straightforward
extension of Example~\ref{ex:notyetmono}: we retain the same \emph{core}
interpretations (which, intuitively, encode $\mathtt{@}$ and
$\mathtt{A}$ as forms of application and encode a list as the function
that executes a fold over the list's contents), but we add a clause
$\oplus \lift(\flatten(x))$ for each argument $x$ that the initial
interpretation is not safe for.  The only further change is that, in
$\Termmap(\mathtt{cons})$, the part between brackets has to be extended.
This was necessitated by the change to $\Termmap(\mathtt{foldl})$, in
order for the rules to still be oriented (as we will do in
Example \ref{ex_fold_final}).
\end{example}

\section{Proving termination with rule removal}\label{sec_rule_removal}

A PFS $A$ is certainly terminating if its reduction relation
$\arr{\Rules}$ is contained in a well-founded relation, which holds if
$\ell \succinterpret r$ for all its rules $(\ell,r)$.  However,
sometimes it is cumbersome to find an interpretation that orients all
rules strictly. To illustrate, the interpretation of Example
\ref{ex_fold_interpretation} gives $\ell \succinterpret r$ for two of
the rules and $\ell \succeqinterpret r$ for the others (as we will see
in Example \ref{ex_fold_final}). In such cases, proof progress is still
achieved through \emph{rule removal}.

\begin{theorem}\label{thm:ruleremove}
  Let $\Rules = \Rules_1 \cup \Rules_2$, and suppose that
  $\Rules_1\subseteq{\succ^\Rules}$ and
  $\Rules_2\subseteq{\succeq^\Rules}$ for a reduction pair
  $(\succeq^\Rules,\succ^\Rules)$. Then $\arr{\Rules}$ is terminating
  if and only if $\arr{\Rules_2}$ is
  (so certainly if $\Rules_2 = \emptyset$).
\end{theorem}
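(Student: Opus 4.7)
The plan is to prove both directions of the biconditional. The ``only if'' direction (termination of $\arr{\Rules}$ implies termination of $\arr{\Rules_2}$) is immediate, since every $\Rules_2$-step is in particular an $\Rules$-step, so $\arr{\Rules_2} \subseteq \arr{\Rules}$, and any infinite $\arr{\Rules_2}$-reduction would also witness non-termination of $\arr{\Rules}$.

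For the nontrivial ``if'' direction, I would argue by contradiction. Assume $\arr{\Rules_2}$ terminates and yet there is an infinite reduction $t_1 \arr{\Rules} t_2 \arr{\Rules} \ldots$. Each step factors as $t_i = C_i[\ell_i] \arr{\Rules} C_i[r_i] = t_{i+1}$ for some $(\ell_i, r_i) \in \Rules$ (already closed under replacement by condition (c) of Definition~\ref{def_rules}). Hence $\ell_i \succ^\Rules r_i$ whenever the rule is in $\Rules_1$ and $\ell_i \succeq^\Rules r_i$ whenever it is in $\Rules_2$. Applying monotonicity of $\succ^\Rules$ (resp.\ $\succeq^\Rules$) to the context $C_i$ yields $t_i \succ^\Rules t_{i+1}$ or $t_i \succeq^\Rules t_{i+1}$ accordingly.

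Now I would split into two cases on the number of $\Rules_1$-steps in the sequence. If only finitely many steps use rules from $\Rules_1$, then beyond some index $N$ every step is a $\Rules_2$-step, so $t_N \arr{\Rules_2} t_{N+1} \arr{\Rules_2} \ldots$ contradicts the assumed termination of $\arr{\Rules_2}$. Otherwise there is an infinite sequence of indices $i_1 < i_2 < \ldots$ at which $\Rules_1$-steps occur; between $i_k$ and $i_{k+1}$ one has a chain $t_{i_k} \succ^\Rules t_{i_k+1} \succeq^\Rules \ldots \succeq^\Rules t_{i_{k+1}}$, which by compatibility of the reduction pair collapses to $t_{i_k} \succ^\Rules t_{i_{k+1}}$ (and if $i_{k+1} = i_k + 1$ the collapse is trivial). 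This produces an infinite $\succ^\Rules$-descending chain, contradicting well-foundedness of $\succ^\Rules$.

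The argument is essentially textbook, so no step is genuinely hard; the only things to be careful about are (i) that monotonicity of the reduction pair applies at each rewrite step, which follows because rules are closed under replacement and $\ell_i, r_i$ share a type, and (ii) the bookkeeping when chaining $\succ^\Rules$-steps across $\succeq^\Rules$-segments, where compatibility is applied iteratively to merge the intermediate $\succeq^\Rules$-steps into a single $\succ^\Rules$-step.
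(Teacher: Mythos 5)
Your proof is correct and follows essentially the same route as the paper: monotonicity lifts the orientation of rules to whole rewrite steps, and then well-foundedness of $\succ^\Rules$ together with compatibility (plus transitivity, which you replace by iterating compatibility) shows any infinite $\arr{\Rules}$-sequence contains only finitely many $\arr{\Rules_1}$-steps, so its tail is an $\arr{\Rules_2}$-sequence. The paper's proof is just a terser statement of exactly this argument.
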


\begin{proof}
  Monotonicity of~$\succeq^\Rules$ and~$\succ^\Rules$ implies that
  ${\arr{\Rules_1}}\subseteq{\succ^\Rules}$ and
  ${\arr{\Rules_2}}\subseteq{\succeq^\Rules}$.

  By well-foundedness of $\succ^\Rules$, compatibility
  of~$\succeq^\Rules$ and~$\succ^\Rules$, and transitivity
  of~$\succeq^\Rules$, every infinite $\arr{\Rules}$ sequence can
  contain only finitely many $\arr{\Rules_1}$ steps.
\end{proof}

The above theorem gives rise to the following \emph{rule removal}
algorithm:
\begin{enumerate}
\item While $\Rules$ is non-empty:
  \begin{enumerate}
  \item Construct a reduction pair $(\succeq^\Rules,\succ^\Rules)$
    such that all rules in $\Rules$ are oriented by $\succeq^\Rules$ or
    $\succ^\Rules$, and at least one of them is oriented using
    $\succ^\Rules$.
  \item Remove all rules ordered by $\succ^\Rules$ from $\Rules$.
  \end{enumerate}
\end{enumerate}
If this algorithm succeeds, we have proven termination.

\medskip
To use this algorithm with the pair $(\succeqinterpret,\succinterpret)$
from Section~\ref{sec_reduction_pairs}, we should identify an
interpretation $(\Typemap,\Termmap)$
such that (a) $\Termmap$ is safe, (b) all rules can be oriented with
$\succeqinterpret$ or $\succinterpret$, and (c) at least one rule is
oriented with $\succinterpret$.
The first requirement guarantees that
$(\succeqinterpret,\succinterpret)$ is a reduction pair (by
Theorem~\ref{thm_reduction_pair}). Lemma~\ref{lem_safe} provides some
sufficient safety criteria. The second and third
requirements have to be verified for each individual rule.

\begin{example}\label{ex_fold_intermediate}
  We continue with our example of fold on heterogeneous lists. We prove
  termination by rule removal, using the symbol mapping from
  Example~\ref{ex_fold_interpretation}.
  We will show:
  \[
  \begin{array}{rcl}
    @_{\sigma,\tau}(\abs{x:\sigma}{s},t) & \succeqinterpret & s[x:=t] \\
    \mathtt{A}_{\abs{\alpha}{\sigma},\tau}(\tabs{\alpha}{s}) &
      \succeqinterpret & s[\alpha:=\tau] \\
    \mathtt{foldl}_\sigma(f,s,\nil) & \succinterpret & s \\
    \mathtt{foldl}_\sigma(f,s,\cons_\tau(h,t)) & \succinterpret &
    \mathtt{foldl}_\sigma(f,@_{\tau,\sigma}(@_{\sigma,\tau
    \arrtype\sigma}(
      \mathtt{A}_{\abs{\alpha}{\sigma\arrtype\alpha\arrtype\sigma},
      \tau}(f),s),h),t) \\
  \end{array}
  \]
Consider the first inequality; by definition it holds if
$\interpret{@_{\sigma,\tau}(\abs{x:\sigma}{s},t)} \succeq
\interpret{s[x:=t]}$.
Since $\interpret{@_{\sigma,\tau}(\abs{x:
\sigma}{s},t)} \arrW^* \interpret{s}[x:=\interpret{t}] \oplus
\lift_{\typeinterpret{\tau}}(\flatten_{\typeinterpret{\sigma}}(
\interpret{t}))$, and $\interpret{s}[x:=\interpret{t}] =
\interpret{s[x:=t]}$ (by Lemma~\ref{lem:substitutioninterpret}),
it suffices by Lemma~\ref{lem_leadsto_to_approx} if
$\interpret{s[x:=t]} \oplus \lift_{\typeinterpret{\tau}}(\flatten_{\typeinterpret{\sigma}}(
\interpret{t})) \succeq \interpret{s[x:=t]}$.
This is an instance of the general rule $u \oplus w \succeq u$ that
we will obtain below.
\end{example}

To prove inequalities $s \succ t$ and $s \succeq t$, we will often
use that $\succ$ and $\succeq$ are transitive and compatible with each
other (Lem.~\ref{lem_transitive} and~\ref{lem:compatibility}), that
$\arrW\:\subseteq\:\approx$ (Lem.~\ref{lem_leadsto_to_approx}),
that $\succeq$ is monotonic (Lem.~\ref{lem_succeq_subst}),
that both $\succ$ and $\succeq$ are monotonic over $\lift$ and $\flatten$
(Lem.~\ref{lem:liftgreater} and \ref{lem_flatten_succ}) and that
interpretations respect substitution
(Lem.~\ref{lem:substitutioninterpret}). We will also use
Lemma \ref{lem:plustimesmonotonic} which states (among other things)
that $s \succ t$ implies $s \oplus u \succ t \oplus u$.
In addition, we can use the
calculation rules below. The proofs may be found in
\onlypaper{\cite[Appendix~A.3]{versionwithappendix}}%
\onlyarxiv{Appendix~\ref{app_rule_removal}}.

\begin{lemma}\label{lem:approxproperties}
For all types $\sigma$ and all terms $s,t,u$ of type $\sigma$, we
have:
\begin{enumerate}
\item\label{lem:approx:symmetry} $s \oplus_\sigma t \approx t
  \oplus_\sigma s$ and $s \otimes_\sigma t \approx t \otimes_\sigma
  s$;
\item\label{lem:approx:assoc} $s \oplus_\sigma (t \oplus_\sigma u)
  \approx (s \oplus_\sigma t) \oplus_\sigma u$ and $s \otimes_\sigma
  (t \otimes_\sigma u) \approx (s \otimes_\sigma t) \otimes_\sigma u$;
\item\label{lem:approx:distribution} $s \otimes_\sigma (t
  \oplus_\sigma u) \approx (s \otimes_\sigma t) \oplus_\sigma (s
  \otimes_\sigma u)$;
\item\label{lem:approx:neutral} $(\lift_\sigma 0) \oplus_\sigma s
  \approx s$ and $(\lift_\sigma 1) \otimes_\sigma s \approx s$.
\end{enumerate}
\end{lemma}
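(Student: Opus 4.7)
The plan is to apply Lemma~\ref{lem_succ_explicit}, which reduces each claim of the form $P \approx Q$ to showing that for every closure~$\cl$ and every sequence~$\vec{u}$ of closed terms and closed type constructors with $\cl(P)\,\vec{u} : \nat$, we have $(\cl(P)\,\vec{u})\da = (\cl(Q)\,\vec{u})\da$ in~$\Nbb$. Without loss of generality we may thus work with closed terms and closed $\beta$-normal types throughout.

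The workhorse is a small auxiliary observation, proved by induction on the closed $\beta$-normal type~$\sigma$: for closed $a,b \in \Iterms_\sigma$ and any sequence $\vec{u}$ of closed arguments making the whole expression of type~$\nat$, one has $(a \oplus_\sigma b)\,\vec{u} \arrW^* (a\,\vec{u}) \oplus_\nat (b\,\vec{u})$, and analogously for~$\otimes$; moreover $(\lift_\sigma n)\,\vec{u} \arrW^* n$. This is immediate from rules~6--13 of~$\arrW$: at an arrow type the next value argument triggers rule~6 followed by a $\beta$-reduction, pushing the operation one level inward; at a quantifier type a type argument triggers rule~7 and then a type-$\beta$; the $\lift$ case is handled by rules~11--13 together with rule~8 for the base case. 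At type~$\nat$ there is nothing more to do.

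Granted this auxiliary reduction, each of the four parts reduces to a standard identity on~$\Nbb$. For part~\ref{lem:approx:symmetry}, both $(\cl(s \oplus_\sigma t)\,\vec{u})\da$ and $(\cl(t \oplus_\sigma s)\,\vec{u})\da$ unfold to $(\cl(s)\,\vec{u})\da + (\cl(t)\,\vec{u})\da$ modulo commutativity of~$+$ in~$\Nbb$; the $\otimes$-case is analogous. For part~\ref{lem:approx:assoc}, iterating the auxiliary lemma yields a sum (resp.\ product) of three naturals on each side, and we conclude by associativity. Part~\ref{lem:approx:distribution} follows by the same unfolding plus distributivity of~$\times$ over~$+$ on~$\Nbb$. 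For part~\ref{lem:approx:neutral}, the $\lift$ statement reduces $(\lift_\sigma 0)\,\vec{u}$ to~$0$ and $(\lift_\sigma 1)\,\vec{u}$ to~$1$, after which $0 + n = n$ and $1 \times n = n$ finish the argument.

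The proof is essentially bookkeeping; the only mild obstacle is matching up the occurrences of $s,t,u$ on both sides after all outer arguments $\vec{u}$ have been threaded through the $\oplus$- and $\otimes$-rewrites, notably in the distributivity case. This is straightforward because the relevant rewrites duplicate $\vec{u}$ symmetrically onto both operands, so the same arguments appear applied to each of $s,t,u$ after normalising, and the remaining identity is purely arithmetic.
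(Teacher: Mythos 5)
Your strategy is essentially the reformulation that the paper itself anticipates: where the paper argues by coinduction (mirroring its proof of Lemma~\ref{lem:plusparts}, showing for closed terms and closed $\beta$-normal $\sigma$ that $(s \oplus t)\,w_1 \ldots w_n \succeq (t \oplus s)\,w_1 \ldots w_n$ and conversely), you invoke Lemma~\ref{lem_succ_explicit} and reduce each claim to an equality of $\da$-normal forms in $\Nbb$ after applying a full vector of closed arguments. The paper remarks right after Lemma~\ref{lem_succ_explicit} that all its coinductive proofs can be recast this way, so the computational content is the same in both versions: unfold the $\oplus/\otimes/\lift$ rules under the arguments, use Lemma~\ref{lem_final_nat} and uniqueness of normal forms (Lemma~\ref{lem_unique_final}) to land in $\Nbb$, and finish with commutativity, associativity, distributivity and the neutrality of $0$ and $1$. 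Your packaging via an explicit unfolding lemma is a perfectly legitimate alternative and arguably more elementary to check, at the cost of re-proving the argument-threading that the coinductive formulation gets for free.

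One step as stated would fail, though it is easily repaired: the auxiliary unfolding claim cannot be proved ``by induction on the closed $\beta$-normal type~$\sigma$''. Because polymorphism is impredicative, the type does not decrease at a quantifier step: from $(a \oplus_{\forall\alpha.\tau} b)\,\rho \arrW^* (a\,\rho) \oplus_{\nf_\beta(\tau[\subst{\alpha}{\rho}])} (b\,\rho)$ the type $\nf_\beta(\tau[\subst{\alpha}{\rho}])$ can be as large as, or larger than, $\forall\alpha.\tau$ (take $\tau = \alpha$ and $\rho = \forall\alpha.\alpha$); this is exactly the phenomenon that motivates the paper's coinductive definitions. The claim itself is fine: each unfolding step consumes exactly one entry of $\vec{u}$, so induct on the length of $\vec{u}$ (equivalently, on the number of remaining arguments) rather than on $\sigma$. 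With that change of measure the rest of your argument goes through as written.
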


\begin{lemma}\label{lem_lift_approx}
  \begin{enumerate}
  \item\label{lem_lift_approx:plussplit}
    $\lift_\sigma(n+m) \approx_\sigma (\lift_\sigma n)
    \oplus_\sigma (\lift_\sigma m)$;
  \item $\lift_\sigma(n m) \approx_\sigma (\lift_\sigma n)
    \otimes_\sigma (\lift_\sigma m)$;
  \item $\flatten_\sigma(\lift_\sigma(n)) \approx n$.
  \end{enumerate}
\end{lemma}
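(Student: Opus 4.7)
The plan is to reduce everything to closed terms and closed $\beta$-normal types (as justified in the proof of Lemma~\ref{lem_well_founded}), and then proceed by induction on the type $\sigma$, using the calculation rules of $\arrW$ together with Lemma~\ref{lem_leadsto_to_approx} (so $\arrW\subseteq\,\approx$), transitivity, and the monotonicity lemmas (Lemmas~\ref{lem_abs_succ},~\ref{lem_flatten_succ},~\ref{lem:plustimesmonotonic}). The three parts are uniform in structure, so I will describe part~(1) in detail and indicate how the others go.

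For part~(1), in the base case $\sigma=\nat$ we have $\lift_\nat(n+m) \arrW n+m$ and $\lift_\nat(n)\oplus_\nat\lift_\nat(m) \arrW n\oplus_\nat m \arrW n+m$. So both sides $\arrW$-reduce to the same final interpretation term, and Lemma~\ref{lem_leadsto_to_approx} (together with transitivity and symmetry of $\approx$) finishes the case. For $\sigma=\tau_1\arrtype\tau_2$, I first rewrite the right-hand side using the $\arrW$ rule for $\oplus_{\sigma\arrtype\tau}$ and for $\lift_{\sigma\arrtype\tau}$ to obtain
\[
\lift_\sigma(n)\oplus_\sigma \lift_\sigma(m) \;\arrW^*\; \abs{x}{\lift_{\tau_2}(n)\oplus_{\tau_2}\lift_{\tau_2}(m)},
\]
while $\lift_\sigma(n+m) \arrW \abs{x}{\lift_{\tau_2}(n+m)}$. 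The induction hypothesis at $\tau_2$ gives $\lift_{\tau_2}(n+m) \approx \lift_{\tau_2}(n)\oplus_{\tau_2}\lift_{\tau_2}(m)$, and Lemma~\ref{lem_abs_succ} lifts this under the abstraction; combining with Lemma~\ref{lem_leadsto_to_approx} closes the case. The $\forall$-case is analogous: the reductions
\[
\lift_{\forall\alpha.\tau}(n)\oplus_{\forall\alpha.\tau}\lift_{\forall\alpha.\tau}(m) \;\arrW^*\; \tabs{\alpha}{\lift_\tau(n)\oplus_\tau \lift_\tau(m)}
\]
and $\lift_{\forall\alpha.\tau}(n+m)\arrW \tabs{\alpha}{\lift_\tau(n+m)}$ reduce the problem to the induction hypothesis at $\tau$, lifted under a type abstraction again by Lemma~\ref{lem_abs_succ}.

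Part~(2) proceeds in exactly the same way, using $n\otimes_\nat m \arrW nm$ at the base and the corresponding $\arrW$ rules for $\otimes_{\sigma\arrtype\tau}$ and $\otimes_{\forall\alpha.\sigma}$ in the inductive cases. For part~(3), I again induct on $\sigma$. In the base case $\flatten_\nat(\lift_\nat(n)) \arrW \lift_\nat(n) \arrW n$. For $\sigma=\tau_1\arrtype\tau_2$,
\[
\flatten_\sigma(\lift_\sigma(n)) \;\arrW\; \flatten_{\tau_2}\bigl(\lift_\sigma(n)\cdot\lift_{\tau_1}(0)\bigr)\;\arrW^*\;\flatten_{\tau_2}(\lift_{\tau_2}(n))
\]
using the $\arrW$ rule $\lift_\sigma(n)\arrW \abs{x}{\lift_{\tau_2}(n)}$ followed by a $\beta$-step, and the induction hypothesis gives $\flatten_{\tau_2}(\lift_{\tau_2}(n))\approx n$. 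The $\forall$-case is entirely similar, using the reduction rule that instantiates the quantifier with $\chi_\kappa$.

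The main obstacle will be keeping the $\arrW$-rewriting bookkeeping clean in the $\forall$-case of parts~(1) and~(2), because the $\arrW$ rules for $\oplus_{\forall\alpha.\sigma}$ and $\lift_{\forall\alpha.\sigma}$ both produce type abstractions that must be aligned with the subscript changes; but this is purely computational and no new idea is required beyond subject reduction and the monotonicity under $\tabs{\alpha}{-}$ from Lemma~\ref{lem_abs_succ}.
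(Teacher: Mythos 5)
Your base case, the arrow case, and part~(3) are fine and essentially follow the paper. The genuine gap is the $\forall$-case of parts~(1) and~(2), which you dismiss as ``purely computational''. After reducing both sides to $\tabs{\alpha}{\lift_\tau(n+m)}$ and $\tabs{\alpha}{\lift_\tau(n)\oplus_\tau\lift_\tau(m)}$, you invoke ``the induction hypothesis at $\tau$''. But $\tau$ contains $\alpha$ free, so it is not one of the closed $\beta$-normal types your induction ranges over; and if you widen the induction to open types, the statement $\lift_\tau(n+m)\approx_\tau\lift_\tau(n)\oplus_\tau\lift_\tau(m)$ is, by Definition~\ref{def_closure}, the assertion that the lemma holds at $\nf_\beta(\cl(\tau))$ for \emph{every} closure $\cl$ — in the case $\tau=\alpha$ this is literally the full lemma for an arbitrary closed type, e.g.\ for $\cl(\alpha)=\forall\alpha.\alpha$, which is not smaller than $\sigma$ in any well-founded measure. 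This is precisely the ``repeating types due to impredicative polymorphism'' problem the paper flags after Definition~\ref{def:succ}; no structural induction on the type (nor induction on the size measure of Lemma~\ref{lem_well_founded}) gets past it, so pushing the problem under the binder via Lemma~\ref{lem_abs_succ} does not close the case — a new idea \emph{is} required there.

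The paper's own route for (1) and (2) avoids quantifier bodies altogether: it argues that the two sides have the same behaviour after closing and applying to arbitrary closed arguments, and concludes via Corollary~\ref{cor_succ_da} (equivalently Lemma~\ref{lem_succ_explicit}) and reflexivity of $\approx$. Concretely, the pattern used in the adjacent Lemmas~\ref{lem:liftgreater}, \ref{lem:plusparts} and \ref{lem:approxproperties} is a \emph{coinductive} proof of the strengthened statement about $(\lift_\sigma(n+m))\,u_1\ldots u_k$ versus $(\lift_\sigma(n)\oplus_\sigma\lift_\sigma(m))\,u_1\ldots u_k$ for all closed $u_1,\ldots,u_k$, using that $(\lift_\sigma t)\,u_1\ldots u_k\leadsto^*\lift(t)$ and that at type $\nat$ both sides normalise to $n+m$; the $\forall$-case is then discharged by the coinductive hypothesis rather than by an induction hypothesis at the open body. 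Your part~(3) survives because the $\flatten$ rule instantiates the quantifier only with the fixed type constructor $\chi_\kappa$, so the measure from Lemma~\ref{lem_well_founded} does decrease there — but for (1) and (2) you should replace the structural induction by the coinductive (or Lemma~\ref{lem_succ_explicit}-based) argument.
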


\begin{lemma}\label{lem:plusparts}
For all types $\sigma$, terms $s,t$ of type $\sigma$ and natural
numbers $n > 0$:
\begin{enumerate}
\item\label{lem:plusparts:removefromsucceq}
  $s \oplus_{\sigma} t \succeq s$ and $s \oplus_{\sigma} t \succeq
  t$;
\item $s \oplus_{\sigma} (\lift_{\sigma} n) \succ s$ and
  $(\lift_{\sigma} n) \oplus_{\sigma} t \succ t$.
\end{enumerate}
\end{lemma}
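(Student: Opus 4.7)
The plan is to apply Lemma~\ref{lem_succ_explicit}, which reduces both claims to checking numerical inequalities on normal forms in $\Nbb$: for each desired inequality $p \succeq q$ or $p \succ q$, it suffices to verify that for every closure $\cl$ and every sequence of closed arguments $u_1,\ldots,u_k$ making $\cl(p) u_1 \ldots u_k$ of type $\nat$, the corresponding $\geq$ or $>$ holds between $(\cl(p) u_1 \ldots u_k)\da$ and $(\cl(q) u_1 \ldots u_k)\da$.

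For part~1 I would fix such a $\cl$ and $u_1,\ldots,u_k$, set $\sigma' = \nf_\beta(\cl(\sigma))$, and repeatedly apply rules 6--7 of the definition of $\arrW$ to distribute $\oplus_{\sigma'}$ over the sequence of term and type applications. Since the result type is $\nat$, after $k$ steps this yields
\[
\cl(s \oplus_\sigma t) u_1 \ldots u_k \arrW^* (\cl(s) u_1 \ldots u_k) \oplus_\nat (\cl(t) u_1 \ldots u_k),
\]
which normalises to the sum $(\cl(s) u_1 \ldots u_k)\da + (\cl(t) u_1 \ldots u_k)\da$ by rule~5. This value dominates each summand, so Lemma~\ref{lem_succ_explicit} yields both $s \oplus t \succeq s$ and $s \oplus t \succeq t$.

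For part~2 the same distribution, combined with rules 11--13 (which push $\lift$ through arrows and quantifiers down to $\lift_\nat(n) \arrW n$), gives $\cl(s \oplus_\sigma \lift_\sigma(n)) u_1 \ldots u_k \arrW^* (\cl(s) u_1 \ldots u_k) \oplus_\nat n$, normalising to $(\cl(s) u_1 \ldots u_k)\da + n$, which strictly exceeds $(\cl(s) u_1 \ldots u_k)\da$ since $n > 0$. The symmetric claim $(\lift_\sigma n) \oplus_\sigma t \succ t$ follows by exactly the same calculation.

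The only technical obligation is a routine induction on $k$ verifying that, because $\cl(p) u_1 \ldots u_k$ has type $\nat$, the $\beta$-normal type $\sigma'$ unfolds as exactly $k$ alternating $\arrtype$- and $\forall$-layers above $\nat$, so each $u_i$ triggers the appropriate distributive rule for $\oplus$ and $\lift$. I do not foresee any substantive obstacle; the calculation is essentially the pattern already demonstrated in Examples~\ref{ex:arrWreduce} and~\ref{ex:plus1}.
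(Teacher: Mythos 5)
Your proof is correct. The computational core is exactly the paper's: reduce $(s \oplus t)\,u_1 \ldots u_k$ via the distribution rules to $(s\,u_1 \ldots u_k) \oplus_\nat (t\,u_1 \ldots u_k)$ (respectively to $\cdots \oplus_\nat n$ using the $\lift$ rules) and compare the resulting natural numbers. The only difference is the scaffolding: the paper argues directly by coinduction on the type of $\sigma$, matching the coinductive definition of $\succeq$ and $\succ$ (case $\nat$ gives the numeric comparison, the arrow and $\forall$ cases feed one more argument to the coinductive hypothesis), whereas you invoke Lemma~\ref{lem_succ_explicit} to flatten the whole obligation to inequalities between $\arrW$-normal forms in $\Nbb$. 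The paper explicitly notes that every one of its coinductive proofs could be reformulated this way, so your route is the anticipated alternative; it trades the coinductive bookkeeping for the (routine, and correctly flagged by you) check that the closed $\beta$-normal type unfolds into exactly the layers matched by the arguments $u_1,\ldots,u_k$, plus an implicit appeal to strong normalisation, confluence and Lemma~\ref{lem_final_nat} to know that the subterms normalise to numerals before rule~5 applies. Either formulation is acceptable; the coinductive one avoids reasoning about normal forms of subterms, while yours keeps the argument entirely first-order over $\Nbb$.
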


Note that these calculation rules immediately give the
inequality $x \oplus \lift_{nat \arrtype \nat}(1) \succ x$ from
Example~\ref{ex:plus1}, and also that $\lift_\sigma(n) \succ
\lift_\sigma(m)$ whenever $n > m$.  By
Lemmas~\ref{lem:plustimesmonotonic} and~\ref{lem:plusparts} we
can use \emph{absolute positiveness}: the property that
(a) $s \succeq t$ if we can write $s \approx s_1 \oplus \dots \oplus
s_n$ and $t \approx t_1 \oplus \dots \oplus t_k$ with $k \leq n$ and
$s_i \succeq t_i$ for all $i \leq k$, and (b) if moreover $s_1 \succ
t_1$ then $s \succ t$. This property is typically very
useful to dispense the obligations obtained in a termination proof with
polynomial interpretations.

\begin{example}\label{ex_fold_final}
We now have the tools to finish the example of
heterogeneous lists (still using the interpretation from
Example~\ref{ex_fold_interpretation}).  The proof obligation from Example
\ref{ex_fold_intermediate}, that
$\interpret{@_{\sigma,\tau}(\abs{x:\sigma}{s},t)} \succeq
\interpret{s[x:=t]}$, is completed by
Lemma \ref{lem:plusparts}(\ref{lem:plusparts:removefromsucceq}).
We have $\interpret{\mathtt{A}_{\abs{\alpha}{\sigma},
\tau}(\tabs{\alpha}{s})} \approx \interpret{\tabs{\alpha}{s}} *
\typeinterpret{\tau} \approx \interpret{s[\alpha:=\tau]}$ by Lemma
\ref{lem:substitutioninterpret}, and
$\interpret{\mathtt{foldl}_\sigma(f,s,\nil)} =
\interpret{\nil}*\typeinterpret{\sigma} \cdot \interpret{f} \cdot
\interpret{s} \oplus \lift_{\typeinterpret{\sigma}}(\langle
\text{something}\rangle\oplus 1) \approx \interpret{s} \oplus
\lift_{\typeinterpret{\sigma}}(\langle\text{something}\rangle\oplus 1)
\succ \interpret{s}$ by Lemmas
\ref{lem_lift_approx}(\ref{lem_lift_approx:plussplit}) and
\ref{lem:plusparts}(\ref{lem:plusparts:removefromsucceq}).
For the last rule note that (using only Lemmas
\ref{lem_leadsto_to_approx} and
\ref{lem_lift_approx}(\ref{lem_lift_approx:plussplit})):
\[
\begin{array}{l}
\interpret{\mathtt{foldl}_\sigma(f,s,\cons_\tau(h,t))} \approx \\
\interpret{\cons_\tau(h,t))} * \typeinterpret{\sigma} \cdot \interpret{f}
\cdot \interpret{s} \oplus \lift_{\typeinterpret{\sigma}}(
\flatten(\interpret{f}) \oplus \flatten(\interpret{s}) \oplus 1) \approx \\
(\ \interpret{t} * \typeinterpret{\sigma} \cdot \interpret{f} \cdot
(\interpret{f} * \typeinterpret{\tau} \cdot \interpret{s}
\cdot \interpret{h} \oplus
\lift_{\typeinterpret{\sigma}}(\flatten(\interpret{s}) \oplus
\flatten(\interpret{h})))\ \oplus \\
\phantom{AB}
\lift_{\typeinterpret{\sigma}}(\flatten(\interpret{f} *
\typeinterpret{\tau} \cdot \interpret{s} \cdot \interpret{h}) \oplus
\flatten(\interpret{h}) \oplus 1)\ )\ \oplus \\
\phantom{A}
 \lift_{\typeinterpret{\sigma}}(\flatten(\interpret{f}) \oplus
 \flatten(\interpret{s}) \oplus 1) \approx \\
\interpret{t} * \typeinterpret{\sigma} \cdot \interpret{f} \cdot
(\ \interpret{f} * \typeinterpret{\tau} \cdot \interpret{s}
\cdot \interpret{h} \oplus
\lift_{\typeinterpret{\sigma}}(\flatten(\interpret{s}) \oplus
\flatten(\interpret{h}))\ )\ \oplus \\
\phantom{A}
\lift_{\typeinterpret{\sigma}}(\flatten(\interpret{f} * \typeinterpret{
\tau} \cdot \interpret{s} \cdot \interpret{h}) \oplus
\flatten(\interpret{h}) \oplus
\flatten(\interpret{f}) \oplus\flatten(\interpret{s}) \oplus 2) \\
\end{array}
\]
On the right-hand side of the inequality, noting that
$\lift_{\sigma \arrtype \tau}(u) \cdot w \arrW^*
\lift_{\tau}(u)$, we have:
\[
\begin{array}{l}
\interpret{\mathtt{foldl}_\sigma(f,@_{\tau,\sigma}(@_{\sigma,\tau
    \arrtype\sigma}(
      \mathtt{A}_{\abs{\alpha}{\sigma\arrtype\alpha\arrtype\sigma},
      \tau}(f),s),h),t)} \approx \\
\Termmap(\mathtt{foldl})_\sigma(\interpret{f},\
  \interpret{f} * \typeinterpret{\tau} \cdot \interpret{s} \cdot
  \interpret{h} \oplus \lift_{\typeinterpret{\sigma}}(\flatten(\interpret{s})
  \oplus \flatten(\interpret{h})),\ \interpret{t}) \approx \\
\interpret{t} * \typeinterpret{\sigma} \cdot \interpret{f} \cdot
  (\ \interpret{f} * \typeinterpret{\tau} \cdot \interpret{s} \cdot
  \interpret{h} \oplus \lift_{\typeinterpret{\sigma}}(\flatten(\interpret{s})
  \oplus \flatten(\interpret{h}))\ )\ \oplus \\
\phantom{A}
  \lift_{\typeinterpret{\sigma}}(\flatten(\interpret{f}) \oplus
  \flatten(\interpret{f} * \typeinterpret{\tau} \cdot \interpret{s} \cdot
  \interpret{h}\ \oplus \\
  \phantom{ABCDEFGHIJKLMNOPQRSt}
  \lift_{\typeinterpret{\sigma}}(\flatten(\interpret{s})
  \oplus \flatten(\interpret{h}))) \oplus 1) \approx \\
\interpret{t} * \typeinterpret{\sigma} \cdot \interpret{f} \cdot
  (\ \interpret{f} * \typeinterpret{\tau} \cdot \interpret{s} \cdot
  \interpret{h} \oplus \lift_{\typeinterpret{\sigma}}(\flatten(\interpret{s})
  \oplus \flatten(\interpret{h}))\ )\ \oplus \\
\phantom{A}
  \lift_{\typeinterpret{\sigma}}(\flatten(\interpret{f}) \oplus
  \flatten(\interpret{f} * \typeinterpret{\tau} \cdot \interpret{s} \cdot
  \interpret{h}) \oplus \flatten(\interpret{s}) \oplus \flatten(\interpret{h}) \oplus 1)
\end{array}
\]
Now the right-hand side is the left-hand side $\oplus\ \lift(1)$.
Clearly, the rule is oriented with $\succ$.  Thus, we may remove the
last two rules, and continue the rule removal algorithm with only the
first two, which together define $\beta$-reduction.  This is trivial,
for instance with an interpretation
$\Termmap(@) = \Lambda \alpha.\Lambda \beta.\lambda f.\lambda x.
(f \cdot x) \oplus \lift_\beta(\flatten_\alpha(x) \oplus 1)$ and
$\Termmap(\mathtt{A}) = \Lambda \alpha.\Lambda \beta.\lambda x.
x * \beta \oplus \lift_{\alpha\beta}(1)$.
\end{example}

\section{A larger example}\label{sec:examples}

System~$\mathtt{F}$ is System~$\Fomega$ where no higher kinds are
allowed, i.e., there are no type constructors except types. By the
Curry-Howard isomorphism~$\mathtt{F}$ corresponds to the
universal-implicational fragment of intuitionistic second-order
propositional logic, with the types corresponding to formulas and
terms to natural deduction proofs. The remaining connectives may be
encoded in~$\mathtt{F}$, but the permutative conversion rules do not
hold~\cite{Girard1989}.

In this section we show
termination of the system IPC2
(see~\cite{SorensenUrzyczyn2010}) of intuitionistic second-order
propositional logic with all connectives and permutative conversions,
minus a few of the permutative conversion rules for the existential
quantifier. The paper~\cite{SorensenUrzyczyn2010} depends on
termination of~IPC2, citing a proof from~\cite{Wojdyga2008}, which,
however, later turned out to be incorrect. Termination of
Curry-style~IPC2 without~$\bot$ as primitive was shown
in~\cite{Tatsuta2007}. To our knowledge, termination of the full
system~IPC2 remains an open problem, strictly speaking.

\begin{remark}
Our method builds on the work of van de Pol and Schwichtenberg, who used
higher-order polynomial interpretations to prove termination of a
fragment of intuitionistic first-order logic with permutative
conversions~\cite{PolSchwichtenberg1995}, in the hope of providing a
more perspicuous proof of this well-known result. Notably, they did
not treat disjunction, as we will do. More fundamentally, their method
cannot handle impredicative polymorphism necessary for second-order
logic.
\end{remark}

The system IPC2 can be seen as a PFS with type constructors:
\[
\begin{array}{c}
\Sigma^T_\kappa = \{\quad
  \bot : *,\quad
  \mathtt{or} : * \arrkind * \arrkind *,\quad
  \mathtt{and} : * \arrkind * \arrkind *,\quad
  \exists : (* \arrkind *) \arrkind *
  \}
\end{array}
\]

We have the following function symbols:
\[
\begin{array}{rclcrcl}
@ & : & \forall \alpha \forall \beta . (\alpha \arrtype \beta) \arrtype \alpha \arrtype \beta &
\quad &
\epsilon & : & \forall \alpha . \bot \arrtype \alpha \\

\mathtt{tapp} & : & \forall \alpha : * \arrkind * . \forall \beta .
  (\forall \beta [\alpha \beta]) \arrtype \alpha \beta &
\quad &
\proj^1 & : & \forall \alpha \forall \beta . \mathtt{and}\, \alpha\, \beta \arrtype \alpha \\

\mathtt{pair} & : & \forall \alpha \forall \beta . \alpha \arrtype \beta \arrtype
  \mathtt{and}\, \alpha\, \beta &
\quad &
\proj^2 & : & \forall \alpha \forall \beta . \mathtt{and}\, \alpha\, \beta \arrtype \beta \\

\mathtt{case} & : & \forall \alpha \forall \beta \forall \gamma . \mathtt{or}\, \alpha\, \beta \arrtype
  (\alpha \arrtype \gamma) \arrtype (\beta \arrtype \gamma) \arrtype \gamma &
\quad &
\mathtt{in}^1 & : & \forall \alpha \forall \beta . \alpha \arrtype
  \mathtt{or}\, \alpha\, \beta \\

\mathtt{let} & : & \forall \alpha : * \arrkind * . \forall \beta .
  (\exists (\alpha)) \arrtype
  (\forall \gamma . \alpha \gamma \arrtype \beta) \arrtype \beta &
\quad &
\mathtt{in}^2 & : & \forall \alpha \forall \beta . \beta \arrtype
  \mathtt{or}\, \alpha\, \beta \\

\mathtt{ext} & : & \forall \alpha : * \arrkind * . \forall \beta . \alpha \beta \arrtype
  \exists (\alpha)
\end{array}
\]

The types represent formulas in intuitionistic second-order
propositional logic, and the terms represent proofs.  For example, a
term $\mathtt{case}_{\sigma,\tau,\rho}\ s\ u\ v$ is a proof term of
the formula $\rho$, built from a proof $s$ of
$\mathtt{or}\ \sigma\ \tau$, a proof $u$ that $\sigma$ implies $\rho$
and a proof $v$ that $\tau$ implies $\rho$.  Proof terms can be
simplified using 28 reduction rules, including the following (the full
set of rules is available in
\onlypaper{\cite[Appendix~B]{versionwithappendix}}%
\onlyarxiv{Appendix~\ref{app_ineqs}}):
\[
\begin{array}{rclrcl}
@_{\sigma,\tau}(\abs{x}{s},t) & \red & s[x:=t] &
\mathtt{case}_{\sigma,\tau,\rho}(\mathtt{in}^1_{\sigma,\tau}(u),
  \abs{x}{s},\abs{y}{t}) & \red & s[x:=u] \\

\mathtt{tapp}_{\abs{\alpha}{\sigma},\tau}(\tabs{\alpha}{s}) & \red &
  s[\alpha:=\tau] &
\mathtt{case}_{\sigma,\tau,\rho}(\mathtt{in}^2_{\sigma,\tau}(u),
  \abs{x}{s},\abs{y}{t}) & \red & t[x:=u] \\

\proj^1_{\sigma,\tau}(\mathtt{pair}_{\sigma,\tau}(s,t)) & \red & s &
\mathtt{let}_{\varphi,\rho}(\mathtt{ext}_{\varphi,\tau}(s),\tabs{\alpha}{\abs{x}{t}}) & \red & t[\alpha:=\tau][x:=s] \\

\proj^2_{\sigma,\tau}(\mathtt{pair}_{\sigma,\tau}(s,t)) & \red & t \\
\end{array}
\]
\[
\begin{array}{ll}
& @_{\sigma,\tau}(\epsilon_{\sigma \arrtype \tau}(s),t) \red
  \epsilon_\tau(s) \\
& \mathtt{case}_{\sigma,\tau,\rho}(\epsilon_{\mathtt{or}\,\sigma\,\tau}(
  u),\abs{x}{s},\abs{y}{t}) \red \epsilon_\rho(u) \\
& \epsilon_\rho(\mathtt{case}_{\sigma,\tau,\bot}(u,\abs{x}{s},
  \abs{y}{t})) \red
  \mathtt{case}_{\sigma,\tau,\rho}(u,\abs{x}{\epsilon_\rho(s)},
  \abs{y}{\epsilon_\rho(t)}) \\
& \proj^2_{\rho,\pi}(\mathtt{case}_{\sigma,\tau,\mathtt{and}\,\rho,\pi}(u,
  \abs{x:\sigma}{s},\abs{y:\tau}{t}))\red
  \mathtt{case}_{\sigma,\tau,\pi}(u,\abs{x:\sigma}{\proj^2_{\rho,\pi}(s)},
  \abs{y:\tau}{\proj^2_{\rho,\pi}(t)}) \\
& \mathtt{case}_{\rho,\pi,\xi}(\mathtt{case}_{\sigma,\tau,\mathtt{or}\,
  \rho\,\pi}(u,\abs{x}{s},\abs{y}{t}),\abs{z}{v},
  \abs{a}{w}) \red \\
& \phantom{AB}
  \mathtt{case}_{\sigma,\tau,\xi}(u,\abs{x}{\mathtt{case}_{
  \rho,\pi,\xi}(s,\abs{z}{v},\abs{a}{w})},
  \abs{y}{\mathtt{case}_{\rho,\pi,\xi}(t,\abs{z}{v},
  \abs{a}{w})}) \\
& \mathtt{let}_{\varphi,\rho}(
  \mathtt{case}_{\sigma,\tau,\exists\varphi}(
  u,\abs{x}{s},\abs{y}{t}),v) \red
  \mathtt{case}_{\sigma,\tau,\rho}(u,
  \abs{x}{\mathtt{let}_{\varphi,\rho}(s,v)},
  \abs{y}{\mathtt{let}_{\varphi,\rho}(t,v)}) \\
\hspace{-10pt}
(*) & \mathtt{let}_{\psi,\rho}(\mathtt{let}_{\varphi,\exists\psi}(s,
  \tabs{\alpha}{\abs{x:\varphi\alpha}{t}}),u) \red
  \mathtt{let}_{\varphi,\rho}(s,\tabs{\alpha}{\abs{x:
  \varphi\alpha}{\mathtt{let}_{\psi,\rho}(t,u)}}) \\
\end{array}
\]

To define an interpretation for~IPC2, we will use the
standard encoding of
product and existential types (see~\cite[Chapter~11]{Girard1989} for
more details).
  \[
  \begin{array}{rclcrcl}
    \sigma \times \tau &=& \forall p . (\sigma \arrtype \tau \arrtype p) \arrtype p & \quad &
    \pi^1_{\sigma,\tau}(t) &=& t \sigma (\abs{x:\sigma}{\abs{y:\tau}{x}}) \\
    \pair{t_1}{t_2}_{\sigma,\tau} &=& \tabs{p}{\abs{x:\sigma\arrtype\tau\arrtype p}{x t_1 t_2}} & &
    \pi^2_{\sigma,\tau}(t) &=& t \tau (\abs{x:\sigma}{\abs{y:\tau}{y}}) \\
    \Sigma \alpha . \sigma &=& \forall p . (\forall \alpha . \sigma \arrtype p) \arrtype p & \quad &
    \phantom{ABCD}
    \expair{\tau}{t}_{\Sigma\alpha.\sigma} &=& \tabs{p}{\abs{x:\forall\alpha.\sigma\arrtype p}{x \tau t}} \\
    & &
    \multicolumn{3}{r}{
    \xlet{\rho}{t}{\alpha,x:\sigma}{s}} &=& t \rho (\tabs{\alpha}{\abs{x:\sigma}{s}}) \\
  \end{array}
  \]

We do not currently have an algorithmic method to find a
suitable interpretation.  Instead, we used the following manual process.
We start by noting the minimal requirements given by the first set of
rules (e.g., that $\proj^1_{\sigma,\tau}(\mathtt{pair}_{\sigma,\tau}(s,
t)) \succeq s$); to orient these inequalities, it would be good to for
instance have $\interpret{\mathtt{pair}_{\sigma,\tau}(s,t)} \succeq
\pair{\interpret{s}}{\interpret{t}}_{\typeinterpret{\sigma},
\typeinterpret{\tau}}$ and $\interpret{\proj^i_{\sigma,\tau}(s)} =
\pi^i_{\typeinterpret{\sigma},\typeinterpret{\tau}}(\interpret{s})$.
To make the interpretation safe, we additionally include clauses
$\lift(\flatten(x))$ for any unsafe arguments $x$; to make the rules
\emph{strictly} oriented, we include clauses $\lift(1)$.  
Unfortunately, this approach does not suffice to orient the rules
where some terms are duplicated, such as the second- and third-last
rules.  To handle these rules, we \emph{multiply} the first argument
of several symbols with the second (and possibly third).  Some further
tweaking gives the following safe interpretation, which orients most
of the rules:
\[
\begin{array}{rclcrcl}
\Typemap(\bot) & = & \nat
& \quad &
\Typemap(\mathtt{and}) & = & \lambda\alpha_1\lambda\alpha_2 . \alpha_1\times\alpha_2 \\

\Typemap(\exists) & = & \lambda(\alpha : * \arrkind *) . \Sigma \gamma . \alpha \gamma
& \quad &
\Typemap(\mathtt{or}) & = & \lambda\alpha_1\lambda\alpha_2 . \alpha_1\times\alpha_2 \\
\end{array}
\]
\[
\begin{array}{rcll}
\Termmap(\epsilon) & = & \Lambda \alpha:* . \lambda x:\nat. &
  \mathtt{lift}_\alpha(2 \otimes x \oplus 1) \\
\Termmap(@) & = & \Lambda\alpha\Lambda\beta\lambda x: \alpha \arrtype \beta . \lambda y :
  \alpha . \quad & \lift_\beta(2) \otimes (x \cdot y) \oplus
  \lift_\beta(\flatten_\alpha(y)\ \oplus \\
  & & & \phantom{AB}\flatten_{\alpha \arrtype \beta}(x) \otimes
  \flatten_\beta(y) \oplus 1) \\
\Termmap(\mathtt{tapp}) & = & \Lambda \alpha : * \arrkind * . \Lambda \beta . \lambda x : \quant{\gamma}{\alpha\gamma} . \quad &
  \lift_{\alpha\beta}(2) \otimes
  (x * \beta) \oplus \lift_{\alpha\beta}(1) \\
\Termmap(\mathtt{ext}) & = & \Lambda \alpha : * \arrkind * . \Lambda \beta : * . \lambda x:\alpha\beta . &
  \expair{\beta}{x} \oplus \lift_{\Sigma\gamma.\beta\gamma}(
  \flatten_{\alpha\gamma}(x)) \\
\Termmap(\mathtt{pair}) & = & \Lambda \alpha \Lambda \beta \lambda x :
  \alpha, y : \beta.\quad & \pair{x}{y} \oplus \lift_{
  \alpha \times \beta}(\flatten_\alpha(x) \oplus \flatten_{\beta}(y)) \\
\Termmap(\proj^1) & = & \Lambda \alpha \Lambda \beta \lambda x :
  \alpha \times \beta . \quad
  & \lift_\alpha(2) \otimes \pi^1(x) \oplus \lift_{\alpha}(1) \\
\Termmap(\proj^2) & = & \Lambda \alpha \Lambda \beta \lambda x :
  \alpha\times\beta.\quad
  & \lift_\beta(2) \otimes \pi^2(x) \oplus \lift_{\beta}(1) \\
\Termmap(\mathtt{in}^1) & = & \Lambda \alpha \Lambda \beta
  \lambda x : \alpha.\quad & \pair{x}{\lift_\beta(1)}
  \oplus \lift_{\alpha
  \times \beta}(\flatten_{\alpha}(x)) \\
\Termmap(\mathtt{in}^2) & = & \Lambda \alpha \Lambda \beta
  \lambda x : \beta.\quad & \pair{\lift_\alpha(1)}{x}
  \oplus \lift_{\alpha \times \beta}(\flatten_{\beta}(x)) \\
\end{array}
\]
\[
\begin{array}{rcl}
\Termmap(\mathtt{let}) & = & \Lambda \alpha : * \arrkind * . \Lambda \beta : * . \lambda x : \Sigma \xi . \alpha\xi,
  y : \quant{\xi}{\alpha\xi \arrtype \beta}. \\
  & & \lift_\beta(1) \oplus \lift_\beta(2) \otimes
    (\xlet{\beta}{x}{\xi,z}{y\xi z})\ \oplus \\
  & & \lift_\beta(\flatten_{\Sigma\gamma.\alpha\gamma}(x) \oplus 1)
    \otimes (y * \nat \cdot \lift_{\alpha\nat}(0)) \\
\Termmap(\mathtt{case}) & = & \Lambda \alpha,\beta,\xi . \lambda x :
  \alpha \times \beta, y : (\alpha \arrtype \xi), z : (\beta \arrtype
  \xi). \\
  & & \quad
  \lift_\xi(2) \oplus
  \lift_\xi(3 \otimes \flatten_{\alpha \times \beta}(x)) \oplus \\
  & & \quad\phantom{ABCDE}
  \lift_\xi(\flatten_{\alpha \times \beta}(x) \oplus 1)
    \otimes (y \cdot \pi^1(x) \oplus z \cdot \pi^2(x)) \\
\end{array}
\]

Above, $\otimes$ binds stronger than~$\oplus$. The derivations to
orient rules with these interpretations are also given in
\onlypaper{\cite[Appendix~B]{versionwithappendix}}%
\onlyarxiv{Appendix~\ref{app_ineqs}}.

The only rules that are not oriented with this interpretation -- not
with~$\succeq$ either -- are the ones of the form
$f(\mathtt{let}(s,t), \dots) \red \mathtt{let}(s,f(t,\dots))$, like
the rule marked (*) above. Nonetheless, this is already a significant
step towards a systematic, extensible methodology of termination proofs
for IPC2 and similar systems of higher-order logic. Verifying the
orientations is still tedious, but our method raises hope for at least
partial automation, as was done with polynomial interpretations for
non-polymorphic higher-order rewriting~\cite{FuhsKop2012}.

\section{Conclusions and future work}

We introduced a powerful and systematic methodology to prove
termination of higher-order rewriting with full impredicative
polymorphism. To use the method one just needs to invent safe
interpretations and verify the orientation of the rules with the
calculation rules.

As the method is tedious to apply manually for larger systems, a
natural direction for future work is to look into automation: both for
automatic verification that a given interpretation suffices and --
building on existing termination provers for first- and higher-order
term rewriting -- for automatically finding a suitable interpretation.

In addition, it would be worth exploring improvements of the method
that would allow us to handle the remaining rules of IPC2, or
extending other techniques for higher-order termination such as
orderings (see, e.g., \cite{jou:rub:07}) or dependency pairs
(e.g.,~\cite{kop:raa:12,suz:kus:bla:11}).

\addcontentsline{toc}{section}{References}
\bibliography{references}

\clearpage
\appendix

\section{Complete proofs}\label{app_proofs}

\subsection{Strong Normalisation of~$\arrW$}\label{app_proofs_SN}

By~$\SN$ we denote the set of all interpretation terms terminating
w.r.t.~$\arrW$. We will use $\abstraction{a}{s}$ for either
$\abs{a}{s}$ or $\tabs{a}{s}$, depending on typing.

For $t \in \SN$ by~$\nu(t)$ we denote the length of the longest
reduction starting at~$t$. The following lemma is obvious, but worth
stating explicitly.

\begin{lemma}\label{lem_reduce_abs}
  If $\abstraction{a}{s} \arrW^* t$, then $t = \abstraction{a}{t'}$
  and $s \arrW^* t'$.  If $s \in \SN$ then both $\abs{x}{s}$ and
  $\tabs{\alpha}{s}$ are also in $\SN$.
\end{lemma}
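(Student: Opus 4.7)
The plan is to establish the first claim by induction on the length of the reduction sequence $\abstraction{a}{s} \arrW^* t$, and then to deduce the second claim as a direct contrapositive.

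For the first part, the base case (zero steps) is immediate: $t = \abstraction{a}{s}$ and we take $t' = s$. For the inductive step, suppose $\abstraction{a}{s} \arrW^* u \arrW t$. By the inductive hypothesis, $u = \abstraction{a}{s'}$ with $s \arrW^* s'$. Now I would do a case analysis on which clause of the definition of $\arrW$ justifies $\abstraction{a}{s'} \arrW t$. Inspecting the clauses, only clause~\ref{arrW:mono:abs} can apply at the top level of a term of the form $\abstraction{a}{s'}$: the $\beta$-rule (clause~\ref{arrW:beta:abs}) and the rules for $\oplus,\otimes,\flatten,\lift$ (clauses~\ref{arrW:plus:base}--13) all require the head to be an application or type application of one of those specific function symbols, not a $\lambda$/$\Lambda$ binder, while clauses~\ref{arrW:mono:right} and~\ref{arrW:mono:left} only produce applicative terms. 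Hence $t = \abstraction{a}{s''}$ with $s' \arrW s''$, and chaining gives $s \arrW^* s''$, as required.

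For the second part, I would argue by contrapositive. Suppose, for the $\lambda$-case, that $\abs{x}{s} \notin \SN$, so there is an infinite reduction $\abs{x}{s} \arrW t_1 \arrW t_2 \arrW \cdots$. Applying the first part to each prefix $\abs{x}{s} \arrW^* t_i$, we obtain $t_i = \abs{x}{s_i}$ with $s \arrW^* s_i$; moreover, because each step $\abs{x}{s_i} \arrW \abs{x}{s_{i+1}}$ must come from clause~\ref{arrW:mono:abs} (by the same case analysis as above), we get a single reduction $s_i \arrW s_{i+1}$. Thus $s \arrW s_1 \arrW s_2 \arrW \cdots$ is an infinite reduction, contradicting $s \in \SN$. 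The $\tabs{\alpha}{s}$ case is identical.

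The only real content is the case analysis showing that no rule other than the monotonicity rule for abstractions can fire at the head of $\abstraction{a}{s'}$; this is where one has to appeal to the specific shape of each clause of the definition of $\arrW$, but the check is entirely routine. I do not anticipate any genuine obstacle.
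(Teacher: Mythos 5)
Your proposal is correct and takes essentially the same route as the paper: both hinge on the observation that only the monotonicity clause for abstractions can fire at the root of $\abstraction{a}{s}$, so every reduct of an abstraction is an abstraction of a reduct of its body, and the first claim follows by induction on the length of the reduction. For the second claim the paper inducts on $\nu(s)$ (the longest reduction from $s$) where you extract an infinite reduction of $s$ by contraposition, but these are the same argument in dual form.
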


\begin{proof}
  We observe that every reduct of $\abstraction{x}{s}$ has the form
  $\abstraction{x}{s'}$ with $s \arrW s'$, and analogously for
  $\tabs{\alpha}{s}$.  Thus, the first statement follows by induction
  on the length of the reduction $\abstraction{a}{s} \arrW^* t$,
  and the second statement by induction on $\nu(s)$.
\end{proof}

\begin{lemma}\label{lem_circ_sn_base}
  If $t_1,t_2 \in \SN$ then $\circ_\nat t_1 t_2 \in \SN$ for $\circ
  \in \{\oplus,\otimes\}$.
\end{lemma}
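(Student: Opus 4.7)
The goal is to show that $\circ_\nat t_1 t_2$ admits no infinite $\arrW$-reduction, given that $t_1,t_2 \in \SN$. The natural approach is well-founded induction on the pair $(\nu(t_1),\nu(t_2))$, ordered lexicographically (or equivalently on $\nu(t_1)+\nu(t_2)$). It suffices to show that every one-step $\arrW$-reduct of $\circ_\nat t_1 t_2$ lies in $\SN$, for then $\circ_\nat t_1 t_2 \in \SN$ as well.

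The key structural observation is that $\circ_\nat t_1 t_2$ has base type $\nat$, not an arrow or a quantification, so the rewrite rules that lift $\oplus$/$\otimes$ under binders (rules 6 and 7 in the definition of $\arrW$) are inapplicable at the head. The only head rule that can fire is rule 5, namely $\circ_\nat n m \arrW n \circ m$, and this requires both arguments to be literal numerals. Consequently, the one-step reducts of $\circ_\nat t_1 t_2$ fall into exactly three cases: (a)~$\circ_\nat t_1' t_2$ with $t_1 \arrW t_1'$, (b)~$\circ_\nat t_1 t_2'$ with $t_2 \arrW t_2'$, or (c)~$t_1 = n$ and $t_2 = m$ are numerals and the reduct is the numeral $n \circ m$.

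In case (a), $t_1' \in \SN$ with $\nu(t_1') < \nu(t_1)$, so the induction hypothesis applied to the pair $(\nu(t_1'),\nu(t_2))$ gives $\circ_\nat t_1' t_2 \in \SN$. Case (b) is symmetric. In case (c), the reduct is a numeral, which is trivially in $\SN$ because numerals are normal forms. Thus every immediate reduct of $\circ_\nat t_1 t_2$ is in $\SN$, closing the induction.

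There is no real obstacle here: the only subtle point is to confirm that at type $\nat$ no head redex other than rule 5 can be created, either immediately or after reductions inside $t_1$ or $t_2$ (and this follows because $\nat$-reducts of $t_i$ remain of type $\nat$ by subject reduction, so the outer application never acquires a new reducible structure). Once that observation is in hand, the induction on $\nu(t_1)+\nu(t_2)$ is completely routine.
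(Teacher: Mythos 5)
Your proof is correct and follows essentially the same route as the paper: induction on $\nu(t_1)+\nu(t_2)$, closing under the three possible one-step reducts (reduction in $t_1$, in $t_2$, or the numeral rule at type $\nat$, whose result is a normal form). The extra observation that rules 6 and 7 cannot fire at type $\nat$ is exactly the point the paper leaves implicit, so nothing is missing.
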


\begin{proof}
  By induction on $\nu(t_1) + \nu(t_2)$. Assume $t_1,t_2 \in \SN$. To
  prove $\circ_\nat t_1 t_2 \in \SN$ it suffices to show $s \in \SN$
  for all~$s$ such that $\circ_\nat t_1 t_2 \arrW s$. If $s =
  \circ_\nat t_1' t_2$ or $s = \circ_\nat t_1 t_2'$ with $t_i \arrW
  t_i'$ then we complete by the induction hypothesis. Otherwise $s \in
  \mathbb{N}$ is obviously in $\SN$.
\end{proof}

In the rest of this section we adapt Tait-Girard's method of candidates
to prove termination of~$\arrW$. The proof is an adaptation of
chapters~6 and~14 from the book~\cite{Girard1989}, and chapters~10
and~11 from the book~\cite{SorensenUrzyczyn2006}.

\begin{defn}\label{def_candidate}
  A term~$t$ is \emph{neutral} if there does not exist a sequence of
  terms and types~$u_1,\ldots,u_n$ with $n \ge 1$ such that $t u_1
  \ldots u_n$ is a redex (by~$\arrW$).

  By induction on the kind~$\kappa$ of a type constructor~$\tau$ we
  define the set~$\Cb_\tau$ of all candidates of type
  constructor~$\tau$.

  First assume $\kappa=*$, i.e., $\tau$ is a type. A set~$X$ of
  interpretation terms of type~$\tau$ is a \emph{candidate of
    type~$\tau$} when:
  \begin{enumerate}
  \item $X \subseteq \SN$;
  \item if $t \in X$ and $t \arrW t'$ then $t' \in X$;
  \item if $t$ is neutral and for every~$t'$ with $t \arrW t'$ we
    have $t' \in X$, then $t \in X$;
  \item if $t_1,t_2 \in X$ then $\circ_\tau t_1 t_2 \in X$ for
    $\circ \in \{\oplus,\otimes\}$;
  \item if $t \in \SN$ and $t : \nat$ then $\lift_\tau t \in X$;
  \item if $t \in X$ then $\flatten_\tau t \in \SN$.
  \end{enumerate}
  Note that item~3 above implies:
  \begin{itemize}
  \item if $t$ is neutral and in normal form then $t \in X$.
  \end{itemize}

  Now assume $\kappa = \kappa_1\arrkind\kappa_2$. A function $f :
  \Tc_{\kappa_1} \times \bigcup_{\xi\in\Tc_{\kappa_1}}\Cb_\xi \to
  \bigcup_{\xi\in\Tc_{\kappa_2}}\Cb_\xi$ is a \emph{candidate of type
    constructor~$\tau$} if for every closed type constructor~$\sigma$
  of kind~$\kappa_1$ and a candidate $X \in \Cb_\sigma$ we have
  $f(\sigma,X) \in \Cb_{\tau\sigma}$.
\end{defn}

Note that the elements of a candidate of type~$\tau$ are required to
have type~$\tau$.

\begin{lemma}\label{lem_beta_candidate}
  If $\sigma =_\beta \sigma'$ then $\Cb_\sigma = \Cb_{\sigma'}$.
\end{lemma}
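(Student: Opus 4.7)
The plan is a straightforward induction on the common kind $\kappa$ of $\sigma$ and $\sigma'$ (they must share a kind, since $\beta$-reduction on type constructors preserves kind). The key enabling observation, which I would state up front, is that interpretation terms are defined as equivalence classes under $\equiv$ (Definitions~\ref{def_terms} and~\ref{def_iterms}), and that the typing rules include a $\beta$-conversion clause, so $\Iterms_\sigma = \Iterms_{\sigma'}$ whenever $\sigma =_\beta \sigma'$. In particular, any two preterms whose only difference is the replacement of a type annotation by a $\beta$-convertible one represent the same interpretation term.

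For the base case $\kappa = *$, I would verify that the six clauses defining $\Cb_\sigma$ in Definition~\ref{def_candidate} agree literally with those defining $\Cb_{\sigma'}$. Clause~1 uses $\SN$, which is type-independent, together with the fact that the underlying carriers $\Iterms_\sigma$ and $\Iterms_{\sigma'}$ coincide. Clauses~2 and~3 refer only to $\arrW$ and membership in $X$ and so are unchanged. For clauses~4--6 the relevant constants $\oplus_\sigma, \otimes_\sigma, \lift_\sigma, \flatten_\sigma$ are precisely the preterms $\tapp{c}{\sigma}$ for the corresponding symbol $c$; their counterparts $\tapp{c}{\sigma'}$ are $\equiv$-equivalent and hence name the same element of $\Iterms$. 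Thus the conditions are the same set-theoretic conditions, and $\Cb_\sigma = \Cb_{\sigma'}$.

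For the inductive step $\kappa = \kappa_1 \arrkind \kappa_2$, a candidate $f$ at $\sigma$ is a function on pairs $(\rho, X)$ with $\rho \in \Tc_{\kappa_1}$ closed and $X \in \Cb_\rho$, constrained by $f(\rho, X) \in \Cb_{\sigma\rho}$; likewise for $\sigma'$. Since $\sigma =_\beta \sigma'$ implies $\sigma\rho =_\beta \sigma'\rho$ (compatibility of $\beta$-reduction with application), the induction hypothesis gives $\Cb_{\sigma\rho} = \Cb_{\sigma'\rho}$, so the two constraints define the same set of functions. Hence $\Cb_\sigma = \Cb_{\sigma'}$.

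There is essentially no hard step: the only obstacle is conceptual rather than technical, namely making explicit that the type-indexed constants occurring in the definition of a candidate are identified modulo $\equiv$ in the quotient that defines $\Iterms$. Once that bookkeeping is acknowledged, the lemma reduces to reading off that the defining clauses of $\Cb_\sigma$ and $\Cb_{\sigma'}$ are syntactically the same in both the base and inductive cases.
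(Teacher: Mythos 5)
Your proposal is correct and matches the paper's approach: the paper proves this lemma by induction on the kind of $\sigma$, which is exactly the structure you spell out, with the base case resting on the identification of $\beta$-convertible type annotations under $\equiv$ and the step case using $\sigma\rho =_\beta \sigma'\rho$. Your write-up simply makes explicit the bookkeeping the paper leaves implicit.
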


\begin{proof}
  Induction on the kind of~$\sigma$.
\end{proof}

\begin{defn}\label{def_computability_valuation}
  Let $\omega$ be a mapping from type constructor variables to type
  constructors (respecting kinds). The mapping~$\omega$ extends in an
  obvious way to a mapping from type constructors to type
  constructors. A mapping~$\omega$ is \emph{closed for~$\sigma$} if
  $\omega(\alpha)$ is closed for $\alpha \in \FTV(\sigma)$ (then
  $\omega(\sigma)$ is closed).

  An \emph{$\omega$-valuation} is a mapping~$\xi$ from type
  constructor variables to candidates such that $\xi(\alpha) \in
  \Cb_{\omega(\alpha)}$.

  For each type constructor~$\sigma$, each mapping~$\omega$ closed
  for~$\sigma$, and each $\omega$-valuation~$\xi$, the set
  $\val{\sigma}{\xi}{\omega}$ is defined by induction on~$\sigma$:
  \begin{itemize}
  \item $\val{\alpha}{\xi}{\omega} = \xi(\alpha)$ for a type
    constructor variable~$\alpha$,
  \item $\val{\nat}{\xi}{\omega}$ is the set of all terms~$t \in \SN$
    such that $t : \nat$,
  \item $\val{\sigma \arrtype \tau}{\xi}{\omega}$ is the set of all
    terms~$t$ such that $t : \omega(\sigma\arrtype\tau)$ and for
    every~$s \in \val{\sigma}{\xi}{\omega}$ with $s : \omega(\sigma)$
    we have $\app{t}{s} \in \val{\tau}{\xi}{\omega}$,
  \item $\val{\forall(\alpha:\kappa)\sigma}{\xi}{\omega}$ is the set
    of all terms~$t$ such that $t : \omega(\forall\alpha\sigma)$ and
    for every closed type constructor~$\varphi$ of kind~$\kappa$ and
    every $X \in \Cb_\varphi$ we have $\tapp{t}{\varphi} \in
    \val{\sigma}{\xi[\subst{\alpha}{X}]}{\omega[\subst{\alpha}{\varphi}]}$,
  \item
    $\val{\varphi \psi}{\xi}{\omega} =
    \val{\varphi}{\xi}{\omega}(\omega(\psi),\val{\psi}{\xi}{\omega})$,
  \item
    $\val{\lambda(\alpha:\kappa)\varphi}{\xi}{\omega}(\psi,X) =
    \val{\varphi}{\xi[\subst{\alpha}{X}]}{\omega[\subst{\alpha}{\psi}]}$
    for closed $\psi \in \Tc_\kappa$ and $X \in \Cb_\psi$.
  \end{itemize}
  In the above, if e.g.~$\val{\psi}{\xi}{\omega} \notin
  \Cb_{\omega(\psi)}$ then $\val{\varphi \psi}{\xi}{\omega}$ is
  undefined.
\end{defn}

If~$\varphi$ is closed then $\omega,\xi$ do not affect the value
of~$\val{\varphi}{\xi}{\omega}$, so then we simply
write~$\val{\varphi}{}{}$.

\begin{lemma}\label{lem_nat_computable}
  $\val{\nat}{}{} \in \Cb_{\nat}$.
\end{lemma}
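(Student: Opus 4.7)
The plan is to unfold the definition of $\val{\nat}{}{}$ and verify each of the six clauses of Definition~\ref{def_candidate} in turn. Recall that $\val{\nat}{}{}$ is the set of interpretation terms $t$ with $t : \nat$ and $t \in \SN$; so the membership is by definition tied to strong normalisation, which makes most conditions essentially bookkeeping.

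First I would dispatch the easy clauses. Condition~1 (inclusion in $\SN$) is the defining property. Condition~2 (closure under $\arrW$) follows because $\SN$ is obviously closed under reduction, and subject reduction preserves the type $\nat$. Condition~3 (neutral terms whose reducts are all in $\val{\nat}{}{}$ lie in $\val{\nat}{}{}$) uses the fact that any interpretation term has finitely many one-step reducts, so if every reduct is in $\SN$ then $t$ itself is in $\SN$; typing is again preserved. Condition~4 is exactly Lemma~\ref{lem_circ_sn_base}, which has already been proved.

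For Conditions~5 and~6 I would give essentially the same argument for both $\lift_\nat$ and $\flatten_\nat$. Fix $t \in \SN$ with $t : \nat$ and proceed by induction on $\nu(t)$. The only reducts of $\lift_\nat t$ are (i) $t$ itself, via the rule $\lift_\nat s \arrW s$, which is in $\SN$ by assumption, or (ii) $\lift_\nat t'$ with $t \arrW t'$, which is in $\SN$ by the induction hypothesis applied to $t'$ (whose $\nu$-value is strictly smaller). Hence $\lift_\nat t \in \SN$, giving Condition~5. The same reasoning with $\flatten_\nat$ in place of $\lift_\nat$ yields Condition~6, and simultaneously shows that $\flatten_\nat t$ has type $\nat$ as required.

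There is no real obstacle here: the lemma is a sanity check that the base-type candidate is well-defined, and every clause reduces to either a definitional fact, subject reduction, the already-established Lemma~\ref{lem_circ_sn_base}, or a short induction on $\nu(t)$. The only point worth handling carefully is making sure, in the inductions for Conditions~5 and~6, to enumerate all possible one-step reducts of $\lift_\nat t$ and $\flatten_\nat t$, which is straightforward since neither term is itself of a form that produces any redex other than the obvious ones at the head.
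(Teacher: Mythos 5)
Your proposal is correct and matches the paper's proof in essence: both verify the six candidate conditions directly, dispatch condition~4 via Lemma~\ref{lem_circ_sn_base}, and handle conditions~5 and~6 by observing that the only reducts of $\lift_\nat t$ (resp.\ $\flatten_\nat t$) are $t$ itself or reductions inside $t$. The paper phrases this last step as a contradiction on the shape of a hypothetical infinite reduction rather than your induction on $\nu(t)$, but the two formulations are interchangeable (and your appeal to finite branching in condition~3 is unnecessary, though harmless).
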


\begin{proof}
  We check the conditions in Definition~\ref{def_candidate}.
  \begin{enumerate}
  \item $\val{\nat}{}{} \subseteq \SN$ follows
    directly from Definition~\ref{def_computability_valuation}.
  \item Let $t \in \val{\nat}{}{}$ and $t \arrW t'$. Then $t :
    \nat$ and $t \in \SN$. Hence $t' \in \SN$, and $t' : \nat$ by the
    subject reduction lemma. Thus $t' \in \val{\nat}{}{}$.
  \item Let $t$ be neutral and $t : \nat$. Assume that for all~$t'$
    with $t \arrW t'$ we have $t' \in \val{\nat}{}{}$, so in
    particular $t' \in \SN$. But then $t \in \SN$. Hence $t \in
    \val{\nat}{}{}$.
  \item Let $t_1,t_2 \in \SN$ be such that $t_i : \nat$. Obviously,
    $\circ_\nat t_1 t_2 : \nat$. Also $\circ_\nat t_1 t_2 \in \SN$
    follows by Lemma~\ref{lem_circ_sn_base}. So $\circ_\nat t_1 t_2
    \in \val{\nat}{}{}$.
  \item Let $t \in \SN$ be such that $t : \nat$. Then $\lift_\nat t :
    \nat$. It remains to show $\lift_\nat t \in \SN$. Any infinite
    reduction from~$\lift_\nat t$ has the form $\lift_\nat t
    \arrW^* \lift_\nat t_0 \arrW t_1 \arrW t_2 \arrW
    \ldots$ or $\lift_\nat t \arrW \lift_\nat t_0 \arrW
    \lift_\nat t_1 \arrW \lift_\nat t_2 \arrW \ldots$, where $t
    \arrW^* t_0$ and $t_i \arrW t_{i+1}$. This contradicts $t
    \in \SN$.
  \item Let $t \in \SN$ be such that $t : \nat$. The proof of
    $\flatten_\nat t \in \SN$ is analogous to the proof of $\lift_\nat
    t \in \SN$ above.\qedhere
  \end{enumerate}
\end{proof}

\begin{lemma}\label{lem_chi_kappa_computable}
  $\val{\chi_\kappa}{}{} \in \Cb_{\chi_\kappa}$.
\end{lemma}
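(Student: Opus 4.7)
The plan is to prove this by induction on the kind~$\kappa$, exploiting the inductive definition of $\chi_\kappa$ together with the previously established $\val{\nat}{}{} \in \Cb_\nat$ and $\beta$-invariance of candidates.

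For the base case $\kappa = *$, by Definition~\ref{def_iterms} (noting that the only type constructor symbol in $\ITypes$ is $\nat$) we have $\chi_* = \nat$. Hence $\val{\chi_*}{}{} = \val{\nat}{}{} \in \Cb_\nat = \Cb_{\chi_*}$ by Lemma~\ref{lem_nat_computable}.

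For the inductive step, suppose $\kappa = \kappa_1 \arrkind \kappa_2$, so $\chi_\kappa = \lambda \alpha : \kappa_1 . \chi_{\kappa_2}$. To check that $\val{\chi_\kappa}{}{}$ is a candidate of constructor kind, I must verify, per Definition~\ref{def_candidate}, that $\val{\chi_\kappa}{}{}(\sigma, X) \in \Cb_{\chi_\kappa \sigma}$ for each closed $\sigma \in \Tc_{\kappa_1}$ and each $X \in \Cb_\sigma$. Unfolding the defining clause for $\lambda$ in Definition~\ref{def_computability_valuation} gives
\[
\val{\chi_\kappa}{}{}(\sigma, X) = \val{\chi_{\kappa_2}}{[\alpha := X]}{[\alpha := \sigma]},
\]
and since $\chi_{\kappa_2}$ is closed (an easy side induction, because $\chi_*$ is a constant and the outer construction only binds fresh variables), the valuation and substitution have no effect, so this equals $\val{\chi_{\kappa_2}}{}{}$. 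By the induction hypothesis $\val{\chi_{\kappa_2}}{}{} \in \Cb_{\chi_{\kappa_2}}$. Finally $\chi_\kappa \sigma = (\lambda \alpha . \chi_{\kappa_2}) \sigma \to_\beta \chi_{\kappa_2}$ (again because $\alpha \notin \FTV(\chi_{\kappa_2})$), so by Lemma~\ref{lem_beta_candidate} we have $\Cb_{\chi_\kappa \sigma} = \Cb_{\chi_{\kappa_2}}$. Combining, $\val{\chi_\kappa}{}{}(\sigma, X) \in \Cb_{\chi_\kappa \sigma}$, as required.

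I do not expect any real obstacle here: the argument is essentially a routine induction, and the only point deserving attention is the need to justify that $\chi_{\kappa_2}$ has no free type constructor variables so that the valuation and substitution drop out, and that the $\beta$-reduct $(\lambda \alpha . \chi_{\kappa_2}) \sigma =_\beta \chi_{\kappa_2}$ is what allows Lemma~\ref{lem_beta_candidate} to align the index of the ambient set of candidates with the one produced by the valuation clause for application.
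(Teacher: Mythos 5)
Your proof is correct and follows essentially the same route as the paper's: induction on $\kappa$, with the base case by Lemma~\ref{lem_nat_computable} and the step case by unfolding the $\lambda$-clause of Definition~\ref{def_computability_valuation} and using that $\chi_{\kappa_2}$ is closed so the valuation and substitution drop out, then applying the inductive hypothesis. The only (harmless) difference is that you spell out the appeal to Lemma~\ref{lem_beta_candidate} to identify $\Cb_{\chi_\kappa\sigma}$ with $\Cb_{\chi_{\kappa_2}}$, a step the paper leaves implicit.
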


\begin{proof}
  Induction on~$\kappa$. If $\kappa = *$ then this follows from
  Lemma~\ref{lem_nat_computable}. If $\kappa=\kappa_1\arrkind\kappa_2$
  then $\chi_\kappa = \lambda \alpha : \kappa_1
  . \chi_{\kappa_2}$. Let~$\psi$ be a closed type constructor of
  kind~$\kappa_1$ and let $X \in \Cb_{\chi_{\kappa_1}}$. We have
  $\val{\chi_\kappa}{}{}(\psi,X) = \val{\chi_{\kappa_2}}{}{}$ because
  $\chi_{\kappa_2}$ is closed. By the inductive hypothesis
  $\val{\chi_\kappa}{}{}(\psi,X) = \val{\chi_{\kappa_2}}{}{} \in
  \Cb_{\chi_{\kappa_2}}$. This implies $\val{\chi_\kappa}{}{} \in
  \Cb_{\chi_\kappa}$.
\end{proof}

\begin{lemma}\label{lem_abstraction_computable}
  Let $\sigma,\tau$ be types. Suppose $\val{\tau}{\xi'}{\omega'} \in
  \Cb_{\omega'(\tau)}$ and $\val{\sigma}{\xi'}{\omega'} \in
  \Cb_{\omega'(\sigma)}$ for all suitable $\omega',\xi'$. Then
  \begin{itemize}
  \item
    $\abs{x}{s} \in \val{\tau \arrtype \sigma}{\xi}{\omega}$ if and
    only if $\abs{x}{s} : \omega(\tau \arrtype \sigma)$ and $s[x:=t]
    \in \val{\sigma}{\xi}{\omega}$ for all $t \in
    \val{\tau}{\xi}{\omega}$;
  \item
    $\tabs{\alpha}{s} \in
    \val{\quant{(\alpha:\kappa)}{\sigma}}{\xi}{\omega}$ if and only if
    $\tabs{\alpha}{s} : \omega(\quant{(\alpha:\kappa)}{\sigma})$ and
    for every closed type constructor~$\varphi$ of kind~$\kappa$ and
    all $X \in \Cb_\varphi$ we have $s[\alpha:=\varphi] \in
    \val{\sigma}{\xi[\subst{\alpha}{X}]}{\omega[\subst{\alpha}{\varphi}]}$.
  \end{itemize}
\end{lemma}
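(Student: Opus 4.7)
The plan is to prove both biconditionals in parallel, since the term-abstraction case and the type-abstraction case are structurally the same; each splits into an easy ``only if'' direction that unfolds definitions, and an ``if'' direction that invokes the candidate saturation property together with an induction on longest $\arrW$-reduction lengths.

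For the forward direction, suppose $\abs{x}{s} \in \val{\tau \arrtype \sigma}{\xi}{\omega}$. By definition, $\abs{x}{s} : \omega(\tau \arrtype \sigma)$ and $(\abs{x}{s})\,t \in \val{\sigma}{\xi}{\omega}$ for every $t \in \val{\tau}{\xi}{\omega}$. Since $(\abs{x}{s})\,t \arrW s[x:=t]$ by rule~4 and $\val{\sigma}{\xi}{\omega}$ is a candidate by the hypothesis of the lemma, property~2 of Definition~\ref{def_candidate} (closure under $\arrW$) gives $s[x:=t] \in \val{\sigma}{\xi}{\omega}$. The type-abstraction case proceeds identically, using the reduction $(\tabs{\alpha}{s})*\varphi \arrW s[\alpha:=\varphi]$ and the candidate assumption on $\val{\sigma}{\xi[\subst{\alpha}{X}]}{\omega[\subst{\alpha}{\varphi}]}$.

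The reverse direction is the substantive part. The key observation is that $(\abs{x}{s})\,t$ is \emph{neutral} in the sense of Definition~\ref{def_candidate}: attaching further arguments $u_1,\ldots,u_n$ with $n \ge 1$ yields a term whose outermost position is an application (or type application) whose head is itself an application, matching none of rules 4--13 at the top level. Hence by property~3 of candidates it suffices to show that every one-step $\arrW$-reduct of $(\abs{x}{s})\,t$ lies in $\val{\sigma}{\xi}{\omega}$. By Lemma~\ref{lem_reduce_abs} these reducts are exactly (i)~$s[x:=t]$, (ii)~$(\abs{x}{s'})\,t$ with $s \arrW s'$, and (iii)~$(\abs{x}{s})\,t'$ with $t \arrW t'$. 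Case~(i) is the hypothesis. For (ii) and (iii) I would run an induction on $\nu(s)+\nu(t)$: note $t \in \val{\tau}{\xi}{\omega} \subseteq \SN$ gives $\nu(t)<\infty$, and instantiating the hypothesis with a fresh variable $y:\omega(\tau)$ (which is neutral and in normal form, hence in every candidate of that type) yields $s[x:=y] \in \SN$, whence $s \in \SN$ since any infinite $\arrW$-sequence from~$s$ lifts to one from~$s[x:=y]$. For~(ii), the hypothesis transfers from $s$ to $s'$ because $s[x:=u] \arrW s'[x:=u]$ and $\val{\sigma}{\xi}{\omega}$ is closed under reduction; the inductive hypothesis then applies to $\abs{x}{s'}$. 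For~(iii), closure of $\val{\tau}{\xi}{\omega}$ under reduction gives $t' \in \val{\tau}{\xi}{\omega}$, so the hypothesis still supplies the required instances, and induction handles the reduct.

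The type-abstraction case follows the same recipe, simplified by the absence of $\arrW$-reduction on type arguments: the only reducts of $(\tabs{\alpha}{s})*\varphi$ are $s[\alpha:=\varphi]$ and $(\tabs{\alpha}{s'})*\varphi$ with $s \arrW s'$, so the induction runs on $\nu(s)$ alone. I expect the main conceptual obstacle to be recognising that $(\abs{x}{s})\,t$ and $(\tabs{\alpha}{s})*\varphi$ count as neutral under this paper's specific definition of neutrality --- a subtle but decisive point, because only then can property~3 be invoked to reduce the problem to checking the three enumerated one-step reducts; once this is granted, the remainder is a routine case analysis combined with a standard termination-length induction.
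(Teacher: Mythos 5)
Your proof is correct and follows essentially the same route as the paper's: the forward direction by closure of candidates under $\arrW$, and the reverse direction by observing that $(\abs{x}{s})\,t$ (resp.\ $(\tabs{\alpha}{s})*\varphi$) is neutral in the paper's sense and then inducting on $\nu(s)+\nu(t)$ (resp.\ $\nu(s)$) over the one-step reducts, having first extracted $s,t \in \SN$ from the hypothesis. The only differences are bookkeeping: the paper fixes the original hypothesis and proves the claim simultaneously for all reducts $s \arrW^* s'$, $t \arrW^* t'$ (using $s[x:=t] \arrW^* s'[x:=t']$ in the key case), whereas you re-establish the hypothesis for the reduced body via closure under reduction and obtain $s \in \SN$ via a fresh variable rather than via $t$ itself.
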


\begin{proof}
  First suppose
  $\abs{x:\omega(\tau)}{s} \in \val{\tau \arrtype
    \sigma}{\xi}{\omega}$. Then
  $\abs{x:\omega(\tau)}{s} : \omega(\tau\arrtype\sigma)$ and for all
  $t \in \val{\tau}{\xi}{\omega}$ we have
  $\app{(\abs{x:\omega(\tau)}{s})}{t} \in \val{\sigma}{\xi}{\omega}$.
  As this set is a candidate, it is closed under $\arrW$, so also
  $s[x:=t] \in \val{\sigma}{\xi}{\omega}$. Similarly, if
  $\tabs{\alpha}{s} \in \val{\quant{\alpha}{\sigma}}{\xi}{\omega}$,
  then $\tabs{\alpha}{s} : \quant{\alpha}{\sigma}$ and
  $\tapp{(\tabs{\alpha}{s})}{\varphi} \in
  \val{\sigma}{\xi[\subst{\alpha}{X}]}{\omega[\subst{\alpha}{\varphi}]}$,
  and we are done because
  $\tapp{(\tabs{\alpha}{s})}{\tau} \arrW s[\alpha:=\varphi]$ and
  $\val{\sigma}{\xi[\subst{\alpha}{X}]}{\omega[\subst{\alpha}{\varphi}]}$
  is a candidate, so it is closed under~$\arrW$.

  Now suppose $s[x:=t] \in \val{\sigma}{\xi}{\omega}$ for all
  $t \in \val{\tau}{\xi}{\omega}$. Let
  $t \in \val{\tau}{\xi}{\omega}$. Then $t \in \SN$ because
  $\val{\tau}{\xi}{\omega}$ is a candidate. Also $s \in \SN$ because
  every infinite reduction in $s$ induces an infinite reduction in
  $s[x:=t]$ ($\arrW$ is stable) and
  $\val{\sigma}{\xi}{\omega} \subseteq \SN$ is a candidate. For all
  $s',t'$ with $s \arrW^* s'$ and $t \arrW^* t'$, we show by
  induction on~$\nu(s') + \nu(t')$ that
  $\app{(\abs{x}{s'})} t' \in \val{\sigma}{\xi}{\omega}$. We have
  $\app{(\abs{x}{s'})} t' : \omega(\sigma)$ by definition and the
  subject reduction theorem (note that $t : \omega(\tau)$ because
  $\val{\tau}{\xi}{\omega} \in \Cb_{\omega(\tau)}$). The set
  $\val{\sigma}{\xi}{\omega}$ is a candidate, and
  $\app{(\abs{x}{s'})}{t'}$ is neutral, so in
  $\val{\sigma}{\xi}{\omega}$ if all its reducts are. Thus assume
  $\app{(\abs{x}{s'})}{t'} \arrW u$. If
  $u = \app{(\abs{x}{s'})}{t''}$ with $t' \arrW t''$ or
  $u = \app{(\abs{x}{s''})}{t'}$ with $s' \arrW s''$, then
  $u \in \val{\sigma}{\xi}{\omega}$ by the inductive hypothesis. So
  assume $u = s'[x:=t']$. We have $s[x:=t] \arrW^* s'[x:=t']$ by
  monotonicity and stability of $\arrW$. Therefore
  $u = s'[x:=t'] \in \val{\sigma}{\xi}{\omega}$, because
  $s[x:=t] \in \val{\sigma}{\xi}{\omega}$ and
  $\val{\sigma}{\xi}{\omega}$ is a candidate and hence closed under
  $\arrW$.

  A similar reasoning applies to $s[\alpha:=\varphi]$.
\end{proof}

\begin{lemma}\label{lem_val_computable}
  If $\sigma$ is a type constructor, $\omega$ is closed for~$\sigma$,
  and $\xi$ is an $\omega$-valuation, then $\val{\sigma}{\xi}{\omega}
  \in \Cb_{\omega(\sigma)}$.
\end{lemma}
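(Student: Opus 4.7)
I would proceed by induction on the structure of the type constructor~$\sigma$. The base cases are straightforward: for $\sigma$ a type variable $\alpha$, the result is immediate from the definition of an $\omega$-valuation; for $\sigma = \nat$, it is Lemma~\ref{lem_nat_computable}; for the constant $\chi_\kappa$, it is Lemma~\ref{lem_chi_kappa_computable}. For $\sigma = \varphi\psi$, the inductive hypothesis gives $\val{\varphi}{\xi}{\omega} \in \Cb_{\omega(\varphi)}$ and $\val{\psi}{\xi}{\omega} \in \Cb_{\omega(\psi)}$, which together with the definition of a candidate of higher kind yields $\val{\varphi\psi}{\xi}{\omega} \in \Cb_{\omega(\varphi\psi)}$. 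For $\sigma = \lambda(\alpha{:}\kappa).\varphi$, I unfold the definition: given closed $\psi \in \Tc_\kappa$ and $X \in \Cb_\psi$, the value at $(\psi,X)$ is $\val{\varphi}{\xi[\alpha:=X]}{\omega[\alpha:=\psi]}$, which lies in $\Cb_{\omega(\varphi)[\alpha:=\psi]}$ by the inductive hypothesis; by Lemma~\ref{lem_beta_candidate} this equals $\Cb_{\omega(\lambda\alpha.\varphi)\psi}$, as required.

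The substantial work lies in the cases $\sigma = \tau_1 \arrtype \tau_2$ and $\sigma = \forall(\alpha{:}\kappa).\tau$, where I must verify all six clauses of Definition~\ref{def_candidate}. For the arrow case, using the inductive hypothesis for $\tau_1$ and $\tau_2$: clause~(1) follows because for any $t \in \val{\tau_1\arrtype\tau_2}{\xi}{\omega}$ one can pick a neutral normal form $s \in \val{\tau_1}{\xi}{\omega}$ (e.g.\ a fresh variable of the appropriate type), whence $ts \in \val{\tau_2}{\xi}{\omega} \subseteq \SN$, forcing $t \in \SN$. Clause~(2) is immediate from monotonicity of $\arrW$ and the corresponding clause for $\tau_2$. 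Clause~(3) proceeds as in Lemma~\ref{lem_abstraction_computable}: for $t$ neutral with all reducts in the set, for each $s \in \val{\tau_1}{\xi}{\omega}$ I argue by induction on $\nu(s)$ that every reduct of $ts$ lies in $\val{\tau_2}{\xi}{\omega}$ (the $\beta$-reduct case cannot occur because $t$ is neutral), then conclude via clause~(3) for $\tau_2$. The $\forall$-case is entirely analogous, using a closed type constructor $\varphi$ of kind $\kappa$ together with a candidate $X \in \Cb_\varphi$ (for example $X = \val{\chi_\kappa}{}{}$) to extract SN, and using $\alpha$-freshness and Lemma~\ref{lem_abstraction_computable} for the neutral clause.

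The main obstacle, and the clauses that require the most care, are (4)--(6), which involve the type-indexed operators $\oplus,\otimes,\lift,\flatten$. For clause~(4) in the arrow case, given $t_1,t_2 \in \val{\tau_1\arrtype\tau_2}{\xi}{\omega}$ I must show $\circ_{\omega(\tau_1\arrtype\tau_2)}\,t_1\,t_2$ lies in the set; since this term is neutral, I invoke clause~(3) and classify its reducts. The only ``non-congruence'' reduct is $\abs{x}{\circ_{\omega(\tau_2)}(t_1 x)(t_2 x)}$, which by Lemma~\ref{lem_abstraction_computable} belongs to the set iff $\circ_{\omega(\tau_2)}(t_1 s)(t_2 s) \in \val{\tau_2}{\xi}{\omega}$ for each $s \in \val{\tau_1}{\xi}{\omega}$, and this follows from clause~(4) for $\tau_2$ applied to $t_i s \in \val{\tau_2}{\xi}{\omega}$. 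Clause~(5) is handled similarly: $\lift_{\omega(\tau_1\arrtype\tau_2)} t$ is neutral, and its key non-congruence reduct $\abs{x}{\lift_{\omega(\tau_2)} t}$ is in the set by Lemma~\ref{lem_abstraction_computable} together with clause~(5) for $\tau_2$. Clause~(6) is the subtlest: to show $\flatten_{\omega(\tau_1\arrtype\tau_2)} t \in \SN$ for $t$ in the candidate, I induct on $\nu(t)$; the only non-congruence reduct is $\flatten_{\omega(\tau_2)}(t(\lift_{\omega(\tau_1)} 0))$, and here I use clause~(5) for $\tau_1$ to place $\lift_{\omega(\tau_1)} 0 \in \val{\tau_1}{\xi}{\omega}$, hence $t(\lift_{\omega(\tau_1)} 0) \in \val{\tau_2}{\xi}{\omega}$, and then clause~(6) for $\tau_2$ delivers SN.

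The $\forall$-case mirrors this, with the added twist that in clause~(6) I reduce $\flatten_{\forall(\alpha{:}\kappa)\sigma}\,t \arrW \flatten_{\sigma[\alpha:=\chi_\kappa]}(t \cdot \chi_\kappa)$, and Lemma~\ref{lem_chi_kappa_computable} together with the inductive hypothesis at the (possibly type-substituted) smaller subtype $\sigma[\alpha:=\chi_\kappa]$ closes the argument; the reductions created inside types by the substitution of $\chi_\kappa$ are benign because, as observed in the discussion after Definition~\ref{def_candidate} and in Lemma~\ref{lem_beta_candidate}, candidates depend only on the $\beta$-equivalence class of the type. Once all six clauses are verified, the inductive step is complete, and the lemma follows.
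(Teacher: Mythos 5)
Your overall strategy coincides with the paper's proof: structural induction on $\sigma$, with the variable, $\nat$, application and $\lambda$-abstraction cases handled exactly as you describe (the $\lambda$-case via Lemma~\ref{lem_beta_candidate}), and the arrow and $\forall$ cases verified clause by clause using Lemma~\ref{lem_abstraction_computable}, Lemma~\ref{lem_nat_computable}, Lemma~\ref{lem_chi_kappa_computable}, and inner inductions on $\nu(\cdot)$. Two small caveats on the routine parts: in clauses~(4) and~(5) of the arrow and $\forall$ cases you still need the inner induction on $\nu(t_1)+\nu(t_2)$ (resp.\ $\nu(t)$) to dispose of the congruence reducts before Lemma~\ref{lem_abstraction_computable} takes over; you state it explicitly for clauses~(3) and~(6), so this is only presentational. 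Also, $\chi_\kappa$ is not a separate base case of the grammar (for interpretation types the only constant is $\nat$, and $\chi_\kappa$ for higher kinds is a $\lambda$-expression); Lemma~\ref{lem_chi_kappa_computable} enters only inside the $\forall$-case, as you in fact use it.

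The one step whose justification does not go through as written is clause~(6) of the $\forall$-case, where you invoke ``the inductive hypothesis at the (possibly type-substituted) smaller subtype $\sigma[\subst{\alpha}{\chi_\kappa}]$''. The induction is on the structure of the type constructor, and $\sigma[\subst{\alpha}{\chi_\kappa}]$ is not a structural subterm of $\forall\alpha.\sigma$ (for higher kinds $\kappa$ it can even be syntactically larger, since $\chi_\kappa$ is a $\lambda$-term whose substitution creates redexes); the ``collapsing redexes are benign'' argument you allude to is the one used for a different measure in Lemma~\ref{lem_well_founded} and would force you to change the induction here. The point of the valuation machinery is precisely to avoid applying the induction hypothesis to substituted types: from $t' \in \val{\forall(\alpha:\kappa)\sigma}{\xi}{\omega}$ the definition directly gives $t'\,\chi_\kappa \in \val{\sigma}{\xi[\subst{\alpha}{\val{\chi_\kappa}{}{}}]}{\omega[\subst{\alpha}{\chi_\kappa}]}$ (using Lemma~\ref{lem_chi_kappa_computable} to form the valuation), and the inductive hypothesis applied to the \emph{body} $\sigma$ with this modified $\xi,\omega$ says that this set is a candidate, so its property~6 yields $\flatten_{\omega[\subst{\alpha}{\chi_\kappa}](\sigma)}(t'\,\chi_\kappa) \in \SN$, which is what the reduct analysis needs. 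The same remark applies to clause~(1) of the $\forall$-case, where ``extracting SN'' likewise means invoking the inductive hypothesis on the body with the modified valuation rather than on any substituted type. With that correction your proof matches the paper's.
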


\begin{proof}
  By induction on the structure of~$\sigma$ we show that
  $\val{\sigma}{\xi}{\omega} \in \Cb_{\omega(\sigma)}$ for all
  suitable $\omega,\xi$. First, if $\sigma = \alpha$ is a type
  constructor variable~$\alpha$ then $\val{\sigma}{\xi}{\omega} =
  \xi(\alpha) \in \Cb_{\omega(\sigma)}$ by definition. If $\sigma =
  \nat$ then $\val{\nat}{\xi}{\omega} \in \Cb_{\nat}$ by
  Lemma~\ref{lem_nat_computable}.

  Assume $\sigma = \tau_1 \arrtype \tau_2$. We check the conditions in
  Definition~\ref{def_candidate}.
  \begin{enumerate}
  \item Let $t \in \val{\tau_1\arrtype\tau_2}{\xi}{\omega}$ and assume
    there is an infinite reduction $t \arrW t_1 \arrW t_2
    \arrW t_3 \arrW \ldots$. By the inductive hypothesis
    $\val{\tau_1}{\xi}{\omega}$ and $\val{\tau_2}{\xi}{\omega}$ are
    candidates. Let~$x$ be a fresh variable. Then $x^{\omega(\tau_1)}
    : \omega(\tau_1)$ and $x^{\omega(\tau_1)} \in
    \val{\tau_1}{\xi}{\omega}$ because it is neutral and normal. Thus
    $t x \in \val{\tau_2}{\xi}{\omega} \subseteq \SN$. But $t x
    \arrW t_1 x \arrW t_2 x \arrW t_3 x \arrW
    \ldots$. Contradiction.
  \item Let $t \in \val{\tau_1\arrtype\tau_2}{\xi}{\omega}$ and $t
    \arrW t'$. Let $u \in \val{\tau_1}{\xi}{\omega}$ be such that
    $u : \omega(\tau_1)$. Then $t u \in \val{\tau_2}{\xi}{\omega}$. By
    the inductive hypothesis $\val{\tau_2}{\xi}{\omega}$ is a
    candidate, so $t' u \in \val{\tau_2}{\xi}{\omega}$. Also note that
    $t' : \omega(\tau_1 \arrtype \tau_2)$ by the subject reduction
    lemma. Hence $t' \in \val{\tau_1\arrtype\tau_2}{\xi}{\omega}$.
  \item Let $t$ be neutral such that $t : \omega(\tau_1 \arrtype
    \tau_2)$. Assume for every~$t'$ with $t \arrW t'$ we have $t'
    \in \val{\tau_1\arrtype\tau_2}{\xi}{\omega}$. Let $u \in
    \val{\tau_1}{\xi}{\omega}$ be such that $u : \omega(\tau_1)$. By
    the inductive hypothesis $\val{\tau_1}{\xi}{\omega}$ is a
    candidate, so $u \in \SN$. By induction on~$\nu(u)$ we show that
    $t u \in \val{\tau_2}{\xi}{\omega}$. Assume $t u \arrW t''$. We
    show $t'' \in \val{\tau_2}{\xi}{\omega}$. Because~$t$ is neutral,
    $t u$ cannot be a redex. So there are two cases.
    \begin{itemize}
    \item $t'' = t u'$ with $u \arrW u'$. Then $u' \in
      \val{\tau_1}{\xi}{\omega}$ because~$\val{\tau_1}{\xi}{\omega}$
      is a candidate, and~$u' : \omega(\tau_1)$ by the subject
      reduction lemma. So $t u' \in \val{\tau_2}{\xi}{\omega}$ by the
      inductive hypothesis for~$u$.
    \item $t'' = t' u$ with $t \arrW t'$. Then $t' \in
      \val{\tau_1\arrtype\tau_2}{\xi}{\omega}$ by point~2 above. So
      $t' u \in \val{\tau_2}{\xi}{\omega}$.
    \end{itemize}
    We have thus shown that if $t u \arrW t''$ then $t'' \in
    \val{\tau_2}{\xi}{\omega}$. By the (main) inductive hypothesis
    $\val{\tau_2}{\xi}{\omega}$ is a candidate. Because $t u$
    is neutral, the above implies $t u \in
    \val{\tau_2}{\xi}{\omega}$. Since $u \in
    \val{\tau_1}{\xi}{\omega}$ was arbitrary with $u :
    \omega(\tau_1)$, we have shown $t \in
    \val{\tau_1\arrtype\tau_2}{\xi}{\omega}$.
  \item Assume $t_1,t_2 \in \val{\tau_1\arrtype\tau_2}{\xi}{\omega}$.
    We have already shown that this implies $t_1,t_2 \in \SN$. Let $s
    = \circ_{\omega(\tau_1\arrtype\tau_2)} t_1 t_2$. We show $s \in
    \val{\tau_1\arrtype\tau_2}{\xi}{\omega}$ by induction on $\nu(t_1)
    + \nu(t_2)$. Note that $s : \omega(\tau_1\arrtype\tau_2)$ because
    $t_i : \omega(\tau_1\arrtype\tau_2)$. Since $s$ is neutral, we
    have already seen in point~3 above that to prove $s \in
    \val{\tau_1\arrtype\tau_2}{\xi}{\omega}$ it suffices to show that
    $s' \in \val{\tau_1\arrtype\tau_2}{\xi}{\omega}$ whenever $s
    \arrW s'$. If $s' = \circ_{\omega(\tau_1\arrtype\tau_2)} t_1'
    t_2$ with $t_1 \arrW t_1'$, then note that $t_1' \in
    \val{\tau_1\arrtype\tau_2}{\xi}{\omega}$ because we have already
    shown that $\val{\tau_1\arrtype\tau_2}{\xi}{\omega}$ is closed
    under $\arrW$; thus, we can complete by the induction
    hypothesis. If $s' = \circ_{\omega(\tau_1\arrtype\tau_2)} t_1
    t_2'$, we complete in the same way.  The only alternative is that
    $s' = \abs{x:\omega(\tau_1)}{\circ_{\omega(\tau_2)}(t_1x)(t_2x)}$.
    Let $u \in \val{\tau_1}{\xi}{\omega}$. Then $u : \omega(\tau_1)$
    because $\val{\tau_1}{\xi}{\omega} \in \Cb_{\omega(\tau_1)}$ by
    the inductive hypothesis. Since $t_1,t_2 \in
    \val{\tau_1\arrtype\tau_2}{\xi}{\omega}$, we have that $t_1 u$ and
    $t_2 u$ are in $\val{\tau_2}{\xi}{\omega}$ by definition. Since
    $\val{\tau_2}{\xi}{\omega}$ is a candidate, this means that
    $\circ_{\omega(\tau_2)} (t_1 u) (t_2 u) = (\circ_{\omega(\tau_2)}
    (t_1 x) (t_2 x))[x:=u]$ is in $\val{\tau_2}{\xi}{\omega}$ as well.
    By Lemma~\ref{lem_abstraction_computable}, we conclude that $s'
    \in \val{\tau_1\arrtype\tau_2}{\xi}{\omega}$.
  \item Let $t \in \SN$ satisfy $t : \nat$, and let $s =
    \lift_{\omega(\tau_1\arrtype\tau_2)}(t)$. We show $s \in
    \val{\tau_1\arrtype\tau_2}{\xi}{\omega}$ by induction
    on~$\nu(t)$. We have $s : \omega(\tau_1\arrtype\tau_2)$ because $t
    : \nat$. Since~$s$ is neutral, we have already proved above in
    point~3 that it suffices to show that $s' \in
    \val{\tau_1\arrtype\tau_2}{\xi}{\omega}$ whenever $s \arrW
    s'$. If $s' = \lift_{\omega(\tau_1\arrtype\tau_2)}(t')$ with $t
    \arrW t'$ then still $t' \in \SN$ and $t' : \nat$, so $s' \in
    \val{\tau_1\arrtype\tau_2}{\xi}{\omega}$ by the inductive
    hypothesis. The only alternative is that $s' = \lambda x :
    \omega(\tau_1) . \lift_{\omega(\tau_2)}(t)$. Let $u \in
    \val{\tau_1}{\xi}{\omega}$ be such that $u :
    \omega(\tau_1)$. Because $\val{\tau_2}{\xi}{\omega} \in
    \Cb_{\omega(\tau_2)}$ by the (main) inductive hypothesis
    for~$\sigma$, we have $\lift_{\omega(\tau_2)}(t) \in
    \val{\tau_2}{\xi}{\omega}$. Since $\lift_{\omega(\tau_2)}(t) =
    (\lift_{\omega(\tau_2)}x)[\subst{x}{t}]$ we obtain $s' \in
    \val{\tau_1\arrtype\tau_2}{\xi}{\omega}$ by
    Lemma~\ref{lem_abstraction_computable}.
  \item Let $t \in \val{\tau_1\arrtype\tau_2}{\xi}{\omega}$.  We show
    $s := \flatten_{\omega(\tau_1\arrtype\tau_2)}t \in \SN$. We have
    already shown $t \in \SN$ in point~1 above. Thus any infinite
    reduction starting from~$s$ must have the form $s \arrW^*
    \flatten_{\omega(\tau_1\arrtype\tau_2)}t' \arrW
    \flatten_{\omega(\tau_2)}(t' (\lift_{\omega(\tau_1)}0)) \arrW
    \ldots$ with $t \arrW^* t'$. We have already shown in point~2
    above that $\val{\tau_1\arrtype\tau_2}{\xi}{\omega}$ is closed
    under~$\arrW$, so $t' \in
    \val{\tau_1\arrtype\tau_2}{\xi}{\omega}$. By the inductive
    hypothesis $\val{\tau_1}{\xi}{\omega} \in\Cb_{\omega(\tau_1)}$, so
    $\lift_{\omega(\tau_1)}0 \in \val{\tau_1}{\xi}{\omega}$ by
    property~5 of candidates. Hence $t' (\lift_{\omega(\tau_1)}0) \in
    \val{\tau_2}{\xi}{\omega}$ by definition. But by the inductive
    hypothesis~$\val{\tau_2}{\xi}{\omega}$ is a candidate, so
    $\flatten_{\omega(\tau_2)}(t'(\lift_{\omega(\tau_1)}0))\in\SN$. Contradiction.
  \end{enumerate}

  Assume $\sigma = \forall(\alpha:\kappa)\tau$. We check the
  conditions in Definition~\ref{def_candidate}.
  \begin{enumerate}
  \item Let $t \in \val{\forall(\alpha:\kappa)\tau}{\xi}{\omega}$
    and assume there is an infinite reduction $t \arrW t_1 \arrW
    t_2 \arrW t_3 \arrW \ldots$. Recall that~$\chi_\kappa$ from
    Definition~\ref{def_types} is a closed type constructor of
    kind~$\kappa$. By Lemma~\ref{lem_chi_kappa_computable} we have
    $\val{\chi_{\kappa}}{}{} \in \Cb_{\chi_\kappa}$. Then $t
    \chi_\kappa \in
    \val{\tau}{\xi[\subst{\alpha}{\val{\chi_\kappa}{}{}}]}{\omega[\subst{\alpha}{\chi_\kappa}]}$. By
    the inductive hypothesis
    $\val{\tau}{\xi[\subst{\alpha}{\val{\chi_\kappa}{}{}}]}{\omega[\subst{\alpha}{\chi_\kappa}]}$
    is a candidate, so $t \chi_\kappa \in \SN$. But $t \chi_\kappa
    \arrW t_1 \chi_\kappa \arrW t_2 \chi_\kappa \arrW t_3
    \chi_\kappa \arrW \ldots$. Contradiction.
  \item Let $t \in \val{\forall\alpha\tau}{\xi}{\omega}$ and $t
    \arrW t'$. By the subject reduction lemma $t' :
    \omega(\forall\alpha\tau)$. Let~$\varphi$ be a closed type
    constructor of kind~$\kappa$ and~$X \in \Cb_{\varphi}$. Then $t
    \varphi \in
    \val{\tau}{\xi[\subst{\alpha}{X}]}{\omega[\subst{\alpha}{\varphi}]}$. By
    the inductive hypothesis
    $\val{\tau}{\xi[\subst{\alpha}{X}]}{\omega[\subst{\alpha}{\varphi}]}$
    is a candidate, so $t' \varphi \in
    \val{\tau}{\xi[\subst{\alpha}{X}]}{\omega[\subst{\alpha}{\varphi}]}$. Therefore
    $t' \in \val{\forall\alpha\tau}{\xi}{\omega}$.
  \item Let $t$ be neutral such that $t :
    \omega(\forall\alpha\tau)$, and assume that for every~$t'$ with
    $t \arrW t'$ we have $t' \in
    \val{\forall\alpha\tau}{\xi}{\omega}$. Let~$\varphi$ be a closed
    type constructor of kind~$\kappa$ and~$X \in
    \Cb_{\varphi}$. Assume $t \varphi \arrW t''$. Then $t'' = t'
    \varphi$ with $t \arrW t'$, because~$t$ is neutral. Hence $t
    \varphi \arrW t' \varphi \in
    \val{\tau}{\xi[\subst{\alpha}{X}]}{\omega[\subst{\alpha}{\varphi}]}$. By
    the inductive
    hypothesis~$\val{\tau}{\xi[\subst{\alpha}{X}]}{\omega[\subst{\alpha}{\varphi}]}$
    is a candidate. Also $t \varphi$ is neutral, so $t \varphi \in
    \val{\tau}{\xi[\subst{\alpha}{X}]}{\omega[\subst{\alpha}{\varphi}]}$
    because~$t''$ was arbitrary with $t \varphi \arrW t''$. This
    implies that $t \in \val{\forall\alpha\tau}{\xi}{\omega}$.
  \item Assume $t_1,t_2 \in
    \val{\forall\alpha\tau}{\xi}{\omega}$. We have already shown
    that this implies $t_1,t_2 \in \SN$. We prove
    $\circ_{\omega(\forall\alpha\tau)} t_1 t_2 \in
    \val{\forall\alpha\tau}{\xi}{\omega}$ by induction on $\nu(t_1)
    + \nu(t_2)$. Since $s := \circ_{\omega(\forall\alpha\tau)} t_1
    t_2$ is neutral, we have already proven that it suffices to show
    that $s' \in \val{\forall\alpha\tau}{\xi}{\omega}$ whenever $s
    \arrW s'$. The cases when $t_1$ or $t_2$ are reduced are
    immediate with the induction hypotheses. The only remaining case
    is when $s'=\tabs{\alpha}{\circ_{\omega(\tau)} (t_1 \alpha) (t_2
      \alpha)}$.  For all closed type constructors $\varphi$ of
    kind~$\kappa$ and all $X \in \Cb_{\varphi}$ we have both
    $t_1 \varphi$ and $t_2 \varphi$ in
    $\val{\tau}{\xi[\subst{\alpha}{X}]}{\omega[\subst{\alpha}{\varphi}]}$
    (by definition of $t_1,t_2 \in
    \val{\forall\alpha\tau}{\xi}{\omega}$). Let $\omega' =
    \omega[\subst{\alpha}{\varphi}]$. By bound variable renaming, we
    may assume $\omega(\alpha) = \alpha$ and $\alpha$ does not occur
    in~$t_1,t_2$. Because
    $\val{\tau}{\xi[\subst{\alpha}{X}]}{\omega[\subst{\alpha}{\varphi}]}$
    is a candidate by the inductive hypothesis for~$\sigma$, we have
    \[
    \circ_{\omega'(\tau)} (t_1 \varphi)
    (t_2\varphi) = (\circ_{\omega(\tau)} (t_1 \alpha) (t_2
    \alpha))[\subst{\alpha}{\varphi}] \in
    \val{\tau}{\xi[\subst{\alpha}{X}]}{\omega[\subst{\alpha}{\varphi}]}.
    \]
    Hence $s' \in \val{\forall\alpha\tau}{\xi}{\omega}$ by
    Lemma~\ref{lem_abstraction_computable}.
  \item Let $t \in \SN$ be such that $t : \nat$. By induction
    on~$\nu(t)$ we show $s := \lift_{\omega(\forall\alpha\tau)}(t)
    \in \val{\forall\alpha\tau}{\xi}{\omega}$. First note that $s :
    \omega(\forall\alpha\tau)$. Since~$s$ is neutral, by the already
    proven point~3 above, it suffices to show that $s' \in
    \val{\forall\alpha\tau}{\xi}{\omega}$ whenever $s \arrW
    s'$. The case when~$t$ is reduced is immediate by the inductive
    hypothesis. The only remaining case is when $s' =
    \tabs{\alpha}{\lift_{\omega(\tau)}(t)}$ (without loss of
    generality assuming $\omega(\alpha) = \alpha$). Let $\varphi$ be a
    closed type constructor of kind~$\kappa$ and let $X \in
    \Cb_\varphi$. Because
    $\val{\tau}{\xi[\subst{\alpha}{X}]}{\omega[\subst{\alpha}{\varphi}]}$
    is a candidate, we have
    \[
    \lift_{\omega[\subst{\alpha}{\varphi}](\tau)}(t) =
    (\lift_{\omega(\tau)}(t))[\subst{\alpha}{\varphi}] \in
    \val{\tau}{\xi[\subst{\alpha}{X}]}{\omega[\subst{\alpha}{\varphi}]}.
    \]
    This implies $s' \in \val{\forall\alpha\tau}{\xi}{\omega}$.
  \item Let $t \in \val{\forall\alpha\tau}{\xi}{\omega}$. We show $s
    := \flatten_{\omega(\forall\alpha\tau)}t \in \SN$. We have
    already shown $t \in \SN$ in point~1 above. Thus any infinite
    reduction starting from~$s$ must have the form $s \arrW^*
    \flatten_{\omega(\forall\alpha\tau)}t' \arrW
    \flatten_{\omega(\tau)[\subst{\alpha}{\chi_\kappa}]}(t'
    \chi_\kappa) \arrW \ldots$ with $t \arrW^* t'$ (assuming
    $\omega(\alpha) = \alpha$ without loss of generality). We have
    already shown in point~2 above that
    $\val{\forall\alpha\tau}{\xi}{\omega}$ is closed
    under~$\arrW$, so $t' \in
    \val{\forall\alpha\tau}{\xi}{\omega}$. We have
    $\val{\chi_\kappa}{}{} \in \Cb_{\chi_\kappa}$ by
    Lemma~\ref{lem_chi_kappa_computable}. Since~$\chi_\kappa$ is also
    closed, we have $t' \chi_\kappa \in
    \val{\tau}{\xi[\subst{\alpha}{\val{\chi_\kappa}{}{}}]}{\omega[\subst{\alpha}{\chi_\kappa}]}$
    by definition of $\val{\forall\alpha\tau}{\xi}{\omega}$. By the
    inductive hypothesis
    $\val{\tau}{\xi[\subst{\alpha}{\val{\chi_\kappa}{}{}}]}{\omega[\subst{\alpha}{\chi_\kappa}]}
    \in \Cb_{\omega[\subst{\alpha}{\chi_\kappa}](\tau)}$. Hence
    $\flatten_{\omega[\subst{\alpha}{\chi_\kappa}](\tau)}(t'\chi_\kappa)\in\SN$. But
    $\omega[\subst{\alpha}{\chi_\kappa}](\tau) =
    \omega(\tau)[\subst{\alpha}{\chi_\kappa}]$ because~$\chi_\kappa$
    is closed and $\omega(\alpha) = \alpha$. Contradiction.
  \end{enumerate}

  Assume $\sigma = \varphi\psi$, with $\psi$ of kind~$\kappa_1$ and
  $\varphi$ of kind~$\kappa_1\arrkind\kappa_2$. By the inductive
  hypothesis $\val{\psi}{\xi}{\omega} \in \Cb_{\omega(\psi)}$ and
  $\val{\varphi}{\xi}{\omega} \in \Cb_{\omega(\varphi)}$. Because
  applying~$\omega$ does not change kinds, we have
  $\val{\varphi\psi}{\xi}{\omega} =
  \val{\varphi}{\xi}{\omega}(\omega(\psi), \val{\psi}{\xi}{\omega})
  \in \Cb_{\omega(\varphi\psi)}$, by the definition of candidates of a
  type constructor with kind~$\kappa_1\arrkind\kappa_2$ (note that
  $\omega(\psi)$ is closed, because $\omega$ is closed for~$\sigma$).

  Finally, assume $\sigma = \lambda(\alpha:\kappa)\varphi$. Let $\psi$
  be a closed type constructor of kind~$\kappa$ and $X \in
  \Cb_{\psi}$. By the inductive hypothesis
  $\val{\lambda(\alpha:\kappa)\varphi}{\xi}{\omega}(\psi,X) =
  \val{\varphi}{\xi[\subst{\alpha}{X}]}{\omega[\subst{\alpha}{\psi}]}
  \in \Cb_{\omega[\subst{\alpha}{\psi}](\varphi)}$. Because $\psi$ is
  closed we have $\omega[\subst{\alpha}{\psi}](\varphi) =
  \omega(\varphi[\subst{\alpha}{\psi}]) =_\beta
  \omega((\lambda\alpha.\varphi)\psi) = \omega(\sigma\psi) =
  \omega(\sigma)\psi$. By Lemma~\ref{lem_beta_candidate} this implies
  that $\val{\sigma}{\xi}{\omega} \in \Cb_{\omega(\sigma)}$.
\end{proof}

\begin{lemma}\label{lem_circ}
  $\circ \in \val{\forall\alpha . \alpha \arrtype \alpha \arrtype
    \alpha}{}{}$ for $\circ \in \{ \oplus, \otimes \}$.
\end{lemma}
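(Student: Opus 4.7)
The plan is to prove this by directly unfolding the definition of $\val{\forall\alpha.\alpha\arrtype\alpha\arrtype\alpha}{}{}$ and reducing it to property~4 of the candidate definition (Definition~\ref{def_candidate}). No induction or coinduction should be needed, since we have already established by Lemma~\ref{lem_val_computable} that the interpretation of every type constructor is a candidate.

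First, I would note that $\circ : \forall\alpha.\alpha\arrtype\alpha\arrtype\alpha$ by the definition of the function symbols in $\Iterms$, so the typing requirement is immediate. By the definition of $\val{\forall(\alpha:*)\tau}{\xi}{\omega}$, it suffices to show that for every closed type $\varphi$ of kind~$*$ and every $X \in \Cb_\varphi$, we have
\[
\tapp{\circ}{\varphi} \in \val{\alpha\arrtype\alpha\arrtype\alpha}{[\alpha:=X]}{[\alpha:=\varphi]}.
\]
Unfolding the arrow case of Definition~\ref{def_computability_valuation} twice, this reduces to showing that for all $t_1,t_2 \in X$ the term $\tapp{\circ}{\varphi}\cdot t_1 \cdot t_2 = \circ_\varphi\,t_1\,t_2$ lies in $X$, provided the intermediate terms $\tapp{\circ}{\varphi}$ and $\tapp{\circ}{\varphi}\cdot t_1$ carry the correct types $\varphi\arrtype\varphi\arrtype\varphi$ and $\varphi\arrtype\varphi$, respectively, which they do.

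The key step is then: since $X \in \Cb_\varphi$ is a candidate, property~4 of Definition~\ref{def_candidate} says exactly that $t_1,t_2 \in X$ implies $\circ_\varphi\,t_1\,t_2 \in X$. This closes the argument.

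There is no real obstacle: the statement is essentially a translation of the fourth clause in the definition of a candidate into the language of $\val{\cdot}{\cdot}{\cdot}$. The only subtlety is to keep the two layers of unfolding (the $\forall$-layer and the two arrow-layers) straight, and to observe that $\val{\alpha}{[\alpha:=X]}{[\alpha:=\varphi]} = X$ by definition, so that applying property~4 at type $\varphi$ yields the desired membership. The lemma will be used in the subsequent termination proof when populating the valuation of the symbols $\oplus$ and $\otimes$.
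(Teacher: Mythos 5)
Your proposal is correct and matches the paper's argument: the paper's proof is simply ``follows from definitions and property~4 of candidates,'' and your unfolding of the $\forall$-layer and the two arrow-layers of $\val{\forall\alpha.\alpha\arrtype\alpha\arrtype\alpha}{}{}$ down to property~4 of Definition~\ref{def_candidate} is exactly that argument spelled out. The appeal to Lemma~\ref{lem_val_computable} is harmless but unnecessary, since all you use is $\val{\alpha}{\xi}{\omega}=X$ by definition and the assumption $X\in\Cb_\varphi$.
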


\begin{proof}
  Follows from definitions and property~4 of candidates.
\end{proof}

\begin{lemma}\label{lem_lift}
  $\lift \in \val{\forall\alpha.\nat\arrtype\alpha}{}{}$.
\end{lemma}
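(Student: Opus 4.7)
The plan is to unfold the definition of $\val{\forall\alpha.\nat\arrtype\alpha}{}{}$ given in Definition~\ref{def_computability_valuation} and then appeal directly to property~5 of the definition of a candidate. Since $\lift$ is a closed function symbol with the closed type $\forall\alpha.\nat\arrtype\alpha$, the mappings $\xi,\omega$ play no role and we are really working in $\val{\forall\alpha.\nat\arrtype\alpha}{}{}$ with empty valuation/substitution.

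First I would verify that $\lift : \forall\alpha.\nat\arrtype\alpha$, which holds by Definition~\ref{def_iterms}. Next, by the $\forall$-case of the valuation, it suffices to show that for every closed type constructor $\varphi$ of kind~$*$ and every candidate $X \in \Cb_\varphi$, the term $\tapp{\lift}{\varphi}$ lies in $\val{\nat\arrtype\alpha}{[\alpha:=X]}{[\alpha:=\varphi]}$. Unfolding the arrow case, and using that $\val{\alpha}{[\alpha:=X]}{[\alpha:=\varphi]} = X$, this reduces to: for every $s \in \val{\nat}{[\alpha:=X]}{[\alpha:=\varphi]}$, one has $\app{(\tapp{\lift}{\varphi})}{s} \in X$. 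By definition $\val{\nat}{\xi}{\omega}$ is the set of terms $s \in \SN$ with $s : \nat$, so this amounts to showing $\lift_\varphi s \in X$ whenever $s \in \SN$ and $s : \nat$.

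This is exactly property~5 of Definition~\ref{def_candidate} for the candidate $X \in \Cb_\varphi$, taking $\tau = \varphi$ and $t = s$. Hence $\lift_\varphi s \in X$ and we are done. There is no real obstacle here: the lemma is essentially a one-line consequence of the definition of candidate, mirroring the argument used for $\circ$ in Lemma~\ref{lem_circ} but invoking clause~5 of Definition~\ref{def_candidate} instead of clause~4. The only care needed is typing/notational: $\tapp{\lift}{\varphi}$ has type $\nat\arrtype\varphi$, so $\app{(\tapp{\lift}{\varphi})}{s}$ coincides with the term $\lift_\varphi s$ referred to by property~5.
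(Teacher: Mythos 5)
Your proof is correct and is exactly the paper's argument (the paper states it in one line as ``follows from definitions and property~5 of candidates''); you have simply unfolded the $\forall$- and arrow-cases of Definition~\ref{def_computability_valuation} and then invoked clause~5 of Definition~\ref{def_candidate}, which is all that is needed.
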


\begin{proof}
  Follows from definitions and property~5 of candidates.
\end{proof}

\begin{lemma}\label{lem_flatten}
  $\flatten \in \val{\forall\alpha.\alpha\arrtype\nat}{}{}$.
\end{lemma}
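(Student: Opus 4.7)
The plan is to unfold the definition of $\val{\forall\alpha.\alpha\arrtype\nat}{}{}$ step by step and observe that what remains is essentially property~6 of candidates. First, membership in $\val{\forall(\alpha:*).\alpha\arrtype\nat}{}{}$ requires two things: that $\flatten$ has type $\forall\alpha.\alpha\arrtype\nat$ (immediate from its declaration in Definition~\ref{def_iterms}), and that for every closed type~$\varphi$ of kind~$*$ and every candidate $X \in \Cb_\varphi$, we have $\tapp{\flatten}{\varphi} \in \val{\alpha\arrtype\nat}{\xi'}{\omega'}$, where $\xi' = [\subst{\alpha}{X}]$ and $\omega' = [\subst{\alpha}{\varphi}]$.

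Unfolding one more layer, $\val{\alpha\arrtype\nat}{\xi'}{\omega'}$ consists of all terms $t$ with $t : \varphi \arrtype \nat$ such that $\app{t}{s} \in \val{\nat}{\xi'}{\omega'} = \val{\nat}{}{}$ for every $s \in \val{\alpha}{\xi'}{\omega'}$. The term $\tapp{\flatten}{\varphi}$ has the required type by construction, and by the first clause of Definition~\ref{def_computability_valuation} we have $\val{\alpha}{\xi'}{\omega'} = \xi'(\alpha) = X$. Thus the only remaining obligation is that, for each $s \in X$, the term $\app{(\tapp{\flatten}{\varphi})}{s} = \flatten_\varphi\,s$ lies in $\val{\nat}{}{}$, i.e., is a strongly normalising term of type~$\nat$.

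This is exactly what property~6 of the candidate $X \in \Cb_\varphi$ provides: $\flatten_\varphi\,s \in \SN$ for every $s \in X$. Typing gives $\flatten_\varphi\,s : \nat$, so $\flatten_\varphi\,s \in \val{\nat}{}{}$ as required. I do not anticipate any real obstacle: once one peels off the two layers of the interpretation of the quantified arrow type, the statement is a direct restatement of the sixth candidate condition, mirroring how Lemma~\ref{lem_lift} uses property~5 and Lemma~\ref{lem_circ} uses property~4.
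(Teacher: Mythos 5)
Your proposal is correct and is exactly the argument the paper intends: the paper's proof is the one-line ``follows from definitions and property~6 of candidates,'' and your unfolding of the $\forall$- and $\arrtype$-clauses of Definition~\ref{def_computability_valuation} down to the obligation $\flatten_\varphi\,s \in \SN$ (with $\flatten_\varphi\,s : \nat$) is precisely what that citation compresses. No gaps; it mirrors the treatment of Lemmas~\ref{lem_circ} and~\ref{lem_lift} just as you note.
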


\begin{proof}
  Follows from definitions and property~6 of candidates.
\end{proof}

\begin{lemma}\label{lem_val_subst}
  For any type constructors~$\sigma,\tau$ with $\alpha \notin
  \FTV(\tau)$, a mapping~$\omega$ closed for~$\sigma$ and for~$\tau$,
  and an $\omega$-valuation~$\xi$, we have:
  \[
  \val{\sigma[\subst{\alpha}{\tau}]}{\xi}{\omega} =
  \val{\sigma}{\xi[\subst{\alpha}{\val{\tau}{\xi}{\omega}}]}{\omega[\subst{\alpha}{\omega(\tau)}]}.
  \]
\end{lemma}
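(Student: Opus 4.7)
The plan is to proceed by induction on the structure of $\sigma$. The variable case splits into two: if $\sigma = \alpha$, then $\sigma[\alpha:=\tau] = \tau$, so the LHS is $\val{\tau}{\xi}{\omega}$, and the RHS is $\xi[\alpha:=\val{\tau}{\xi}{\omega}](\alpha) = \val{\tau}{\xi}{\omega}$; if $\sigma = \beta \ne \alpha$, then both sides equal $\xi(\beta)$. The base case $\sigma = \nat$ is trivial since $\val{\nat}{\xi}{\omega}$ does not depend on $\xi$ or $\omega$. The application case $\sigma = \varphi\psi$ follows by applying the induction hypothesis to $\varphi$ and $\psi$ and then unfolding the definition, after checking the bookkeeping identity $\omega[\alpha:=\omega(\tau)](\psi) = \omega(\psi[\alpha:=\tau])$, which is the standard substitution property for type constructors (provable by induction on $\psi$).

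For the binder cases $\sigma = \sigma_1 \arrtype \sigma_2$, $\sigma = \forall(\beta:\kappa).\sigma'$, and $\sigma = \lambda(\beta:\kappa).\varphi$, I would first use bound-variable renaming (which is legal since type constructors are treated up to $\alpha$-conversion) to assume that the bound variable $\beta$ differs from $\alpha$ and does not occur free in $\tau$ or in $\dom(\omega) \cup \bigcup_{\gamma \in \dom(\omega)} \FTV(\omega(\gamma))$. Then $\sigma[\alpha:=\tau]$ still has the same outer shape with the same binder, so unfolding the definition reduces the claim on either side to an instance of the induction hypothesis on the body. For the quantifier case, we must show that for every closed $\varphi'$ of kind $\kappa$ and every $X \in \Cb_{\varphi'}$, the two interpretations of $\sigma'$ (resp.\ $\sigma'[\alpha:=\tau]$) under the extended valuations agree; this is exactly the inductive hypothesis for $\sigma'$ applied to the mappings $\omega[\beta:=\varphi']$ and $\xi[\beta:=X]$.

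The main obstacle is a subtle one: for the quantifier and $\lambda$-cases to work, we need the auxiliary fact that $\val{\tau}{\xi[\beta:=X]}{\omega[\beta:=\varphi']} = \val{\tau}{\xi}{\omega}$ whenever $\beta \notin \FTV(\tau)$, so that the LHS's interpretation of $\tau$ does not depend on the new binding of $\beta$ introduced when going under the binder. This is a ``coincidence'' or ``free-variable support'' lemma stating that $\val{\rho}{\xi}{\omega}$ depends only on the values of $\xi$ and $\omega$ on $\FTV(\rho)$. I would prove this lemma separately by a routine structural induction on $\rho$, parallel to the induction here, and then invoke it at the moment we go under the binder. With that auxiliary lemma in hand, together with Lemma~\ref{lem_beta_candidate} to absorb $\beta$-conversion of the type decorations that appear on the right (e.g.\ $\omega[\alpha:=\omega(\tau)]((\lambda\beta.\varphi)\psi) =_\beta \omega[\alpha:=\omega(\tau)](\varphi[\beta:=\psi])$), the remaining work in each case reduces to mechanical unfolding of Definition~\ref{def_computability_valuation}.
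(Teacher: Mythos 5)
Your proposal is correct and follows essentially the same route as the paper: structural induction on $\sigma$, unfolding Definition~\ref{def_computability_valuation} case by case, with the paper treating your auxiliary ``free-variable support'' lemma as the obvious case $\alpha \notin \FTV(\sigma)$ and leaving it implicit when going under binders (so making it explicit is fine, just more careful). One minor remark: no $\beta$-conversion needs to be absorbed in this lemma (Lemma~\ref{lem_beta_candidate} is not required), since the $\lambda$-clause of the definition evaluates $\val{\lambda(\beta:\kappa)\varphi}{\xi}{\omega}$ at $(\psi,X)$ directly without creating a type-level redex.
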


\begin{proof}
  Let~$\omega' = \omega[\subst{\alpha}{\omega(\tau)}]$ and $\xi' =
  \xi[\subst{\alpha}{\val{\tau}{\xi}{\omega}}]$. First note
  that~$\omega$ is closed for~$\sigma[\subst{\alpha}{\tau}]$
  and~$\omega'$ is closed for~$\sigma$. We proceed by induction
  on~$\sigma$. If $\alpha \notin \FTV(\sigma)$ then the claim is
  obvious. If $\sigma = \alpha$ then
  $\val{\sigma[\subst{\alpha}{\tau}]}{\xi}{\omega} =
  \val{\tau}{\xi}{\omega} = \val{\sigma}{\xi'}{\omega'}$.

  Assume $\sigma = \sigma_1\arrtype\sigma_2$. We show
  $\val{\sigma[\subst{\alpha}{\tau}]}{\xi}{\omega} \subseteq
  \val{\sigma}{\xi'}{\omega'}$. Let $t \in
  \val{\sigma[\subst{\alpha}{\tau}]}{\xi}{\omega}$. We have $t :
  \omega(\sigma[\subst{\alpha}{\tau}])$, so $t : \omega'(\sigma)$. Let
  $u \in \val{\sigma_1}{\xi'}{\omega'}$. By the inductive hypothesis
  $u \in \val{\sigma_1[\subst{\alpha}{\tau}]}{\xi}{\omega}$. Hence $t
  u \in \val{\sigma_2[\subst{\alpha}{\tau}]}{\xi}{\omega} =
  \val{\sigma_2}{\xi'}{\omega'}$, where the last equality follows from
  the inductive hypothesis. Thus $t \in
  \val{\sigma}{\xi'}{\omega'}$. The other direction is analogous. The
  case $\sigma = \forall\alpha\sigma'$ is also analogous.

  Assume $\sigma = \varphi\psi$. We have
  $\val{\sigma[\subst{\alpha}{\tau}]}{\xi}{\omega} =
  \val{\varphi[\subst{\alpha}{\tau}]}{\xi}{\omega}(\omega(\psi[\subst{\alpha}{\tau}]),
  \val{\psi[\subst{\alpha}{\tau}]}{\xi}{\omega}) =
  \val{\varphi[\subst{\alpha}{\tau}]}{\xi}{\omega}(\omega'(\psi),
  \val{\psi[\subst{\alpha}{\tau}]}{\xi}{\omega}) =
  \val{\varphi}{\xi'}{\omega'}(\omega'(\psi),
  \val{\psi}{\xi'}{\omega'})$ where the last equality follows from the
  inductive hypothesis.

  Finally, assume $\sigma = \lambda(\beta:\kappa)\varphi$. Let $\psi
  \in \Tc_\kappa$ be closed and let $X \in \Cb_\psi$. We have
  $\val{\sigma[\subst{\alpha}{\tau}]}{\xi}{\omega}(\psi,X) =
  \val{\varphi[\subst{\alpha}{\tau}]}{\xi[\subst{\beta}{X}]}{\omega[\subst{\beta}{\tau}]}
  =
  \val{\varphi}{\xi'[\subst{\beta}{X}]}{\omega'[\subst{\beta}{\tau}]}
  = \val{\sigma}{\xi'}{\omega'}(\psi,X)$ where we use the inductive
  hypothesis in the penultimate equality.
\end{proof}

\begin{lemma}\label{lem_forall}
  Let $\tau$ be a type constructor of kind~$\kappa$. Assume $\omega$
  is closed for $\forall\alpha\sigma$ and for~$\tau$. If $t \in
  \val{\forall(\alpha:\kappa)\sigma}{\xi}{\omega}$ then $t
  (\omega(\tau)) \in \val{\sigma[\subst{\alpha}{\tau}]}{\xi}{\omega}$.
\end{lemma}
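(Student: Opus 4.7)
The plan is to reduce the claim to Lemma~\ref{lem_val_subst} by instantiating the quantifier witness appropriately. Unfolding the definition of $\val{\forall(\alpha:\kappa)\sigma}{\xi}{\omega}$, we know that for every closed $\varphi \in \Tc_\kappa$ and every $X \in \Cb_\varphi$, we have $t\varphi \in \val{\sigma}{\xi[\subst{\alpha}{X}]}{\omega[\subst{\alpha}{\varphi}]}$. I would instantiate this with $\varphi := \omega(\tau)$ and $X := \val{\tau}{\xi}{\omega}$.

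First I would verify that this instantiation is legitimate. Since $\omega$ is closed for $\tau$, the type constructor $\omega(\tau)$ is closed, and it has kind $\kappa$ because $\omega$ preserves kinds and $\tau$ has kind $\kappa$. To see that $\val{\tau}{\xi}{\omega} \in \Cb_{\omega(\tau)}$, I would invoke Lemma~\ref{lem_val_computable}, whose hypotheses are exactly that $\omega$ is closed for $\tau$ and $\xi$ is an $\omega$-valuation. Plugging in then yields
\[
t(\omega(\tau)) \;\in\; \val{\sigma}{\xi[\subst{\alpha}{\val{\tau}{\xi}{\omega}}]}{\omega[\subst{\alpha}{\omega(\tau)}]}.
\]

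The remaining step is to identify this set with $\val{\sigma[\subst{\alpha}{\tau}]}{\xi}{\omega}$, which is precisely the content of Lemma~\ref{lem_val_subst}. That lemma requires $\alpha \notin \FTV(\tau)$; this can be ensured without loss of generality by $\alpha$-renaming the bound variable of $\forall(\alpha:\kappa)\sigma$ before applying the definition (we treat type constructors up to $\alpha$-conversion, cf.\ Definition~\ref{def_types}). With this renaming in place, Lemma~\ref{lem_val_subst} gives the required equality of sets, completing the argument.

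The one subtlety worth flagging, rather than a genuine obstacle, is this side condition $\alpha \notin \FTV(\tau)$ and the matching fact that $\omega(\forall\alpha.\sigma) = \forall\alpha.\omega(\sigma)$ only after one assumes $\omega(\alpha)=\alpha$ (again, by $\alpha$-conversion). Once these bookkeeping issues are handled, the proof is essentially a one-line unfolding followed by an application of Lemmas~\ref{lem_val_computable} and~\ref{lem_val_subst}.
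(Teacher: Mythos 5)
Your proof is correct and follows exactly the same route as the paper's: instantiate the definition of $\val{\forall(\alpha:\kappa)\sigma}{\xi}{\omega}$ at $\omega(\tau)$ and $\val{\tau}{\xi}{\omega}$, justify the candidate via Lemma~\ref{lem_val_computable}, and conclude with Lemma~\ref{lem_val_subst}. The extra remarks about the side condition $\alpha \notin \FTV(\tau)$ and $\omega(\alpha)=\alpha$ are standard bookkeeping that the paper leaves implicit.
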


\begin{proof}
  By Lemma~\ref{lem_val_computable} we have~$\val{\tau}{\xi}{\omega}
  \in \Cb_{\omega(\tau)}$. So $t (\omega(\tau)) \in
  \val{\sigma}{\xi[\subst{\alpha}{\val{\tau}{\xi}{\omega}}]}{\omega[\subst{\alpha}{\omega(\tau)}]}$
  by $t \in \val{\forall(\alpha:\kappa)\sigma}{\xi}{\omega}$. Hence
  $t (\omega(\tau)) \in
  \val{\sigma[\subst{\alpha}{\tau}]}{\xi}{\omega}$ by
  Lemma~\ref{lem_val_subst}.
\end{proof}

\begin{lemma}\label{lem_beta_val}
  If $\omega$ is closed for~$\sigma,\sigma'$ and $\sigma =_\beta
  \sigma'$ then $\val{\sigma}{\xi}{\omega} =
  \val{\sigma'}{\xi}{\omega}$.
\end{lemma}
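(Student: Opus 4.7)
The plan is to reduce the claim to the case of a single $\beta$-reduction step and then dispatch that by the substitution lemma for $\val{\cdot}{\cdot}{\cdot}$ (Lemma~\ref{lem_val_subst}). Concretely, since $\beta$-reduction on type constructors is confluent and terminating, $\sigma =_\beta \sigma'$ means there is a common reduct $\rho$ reached by finitely many single-step reductions, so it suffices to prove: if $\sigma \to_\beta \sigma'$ (in one step) and $\omega$ is closed for both, then $\val{\sigma}{\xi}{\omega} = \val{\sigma'}{\xi}{\omega}$. Iterating this along the two reduction sequences $\sigma \to_\beta^* \rho$ and $\sigma' \to_\beta^* \rho$ yields the lemma.

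For the one-step claim I would induct on the structure of $\sigma$ and, at each node, either the redex is contracted at the root or it is contracted in an immediate subterm. The compatibility cases are routine: e.g.\ if $\sigma = \sigma_1 \arrtype \sigma_2$ and $\sigma_1 \to_\beta \sigma_1'$, then $\omega(\sigma) = \omega(\sigma')$ (the types at term level agree up to $=_\beta$ and hence define the same membership condition via $t : \omega(\sigma)$ together with the inductive equality $\val{\sigma_1}{\xi}{\omega} = \val{\sigma_1'}{\xi}{\omega}$); and similarly for $\forall (\alpha:\kappa).\tau$, $\varphi\psi$, and $\lambda(\alpha:\kappa)\varphi$, where in the last two cases one simply unfolds the clauses in Definition~\ref{def_computability_valuation} and applies the inductive hypothesis to the immediate components.

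The only nontrivial case is the root contraction $\sigma = (\lambda(\alpha:\kappa)\varphi)\psi \to_\beta \varphi[\subst{\alpha}{\psi}] = \sigma'$. Here I would compute
\[
\val{(\lambda(\alpha:\kappa)\varphi)\psi}{\xi}{\omega}
  = \val{\lambda(\alpha:\kappa)\varphi}{\xi}{\omega}\bigl(\omega(\psi),\val{\psi}{\xi}{\omega}\bigr)
  = \val{\varphi}{\xi[\subst{\alpha}{\val{\psi}{\xi}{\omega}}]}{\omega[\subst{\alpha}{\omega(\psi)}]}
\]
by directly unfolding the two relevant clauses of Definition~\ref{def_computability_valuation}; note that since $\omega$ is closed for $\sigma$, $\omega(\psi)$ is closed and $\val{\psi}{\xi}{\omega} \in \Cb_{\omega(\psi)}$ by Lemma~\ref{lem_val_computable}, so the $\lambda$-clause applies. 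Then Lemma~\ref{lem_val_subst} (applied with a suitable bound-variable renaming so that $\alpha \notin \FTV(\psi)$) rewrites the right-hand side as $\val{\varphi[\subst{\alpha}{\psi}]}{\xi}{\omega} = \val{\sigma'}{\xi}{\omega}$, closing the case.

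The main obstacle is bookkeeping rather than a mathematical gap: one must verify that $\omega$ remains ``closed for'' every intermediate type constructor along the reduction sequence (so that the values are defined at each step), and that Lemma~\ref{lem_val_computable} supplies a candidate $\val{\psi}{\xi}{\omega} \in \Cb_{\omega(\psi)}$ for the root $\beta$-step; once these hypotheses are established, the substitution lemma does all the real work.
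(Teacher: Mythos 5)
Your proposal is correct and follows essentially the same route as the paper: reduce to the case of a root $\beta$-redex (the congruence/common-reduct reduction being routine), unfold the clauses of Definition~\ref{def_computability_valuation} for $\lambda$ and application, and conclude by Lemma~\ref{lem_val_subst}. The extra bookkeeping you flag (closedness of $\omega$ along the reduction and $\val{\psi}{\xi}{\omega} \in \Cb_{\omega(\psi)}$ via Lemma~\ref{lem_val_computable}) is left implicit in the paper but is exactly the right justification.
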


\begin{proof}
  It suffices to show the lemma for the case when~$\sigma$ is a
  $\beta$-redex. Then the general case follows by induction
  on~$\sigma$ and the length of reduction to a common reduct.

  So assume $(\lambda\alpha\tau)\sigma \to_\beta
  \tau[\subst{\alpha}{\sigma}]$. We have
  $\val{(\lambda\alpha\tau)\sigma}{\xi}{\omega} =
  \val{\lambda\alpha\tau}{\xi}{\omega}(\omega(\sigma),
  \val{\sigma}{\xi}{\omega}) =
  \val{\tau}{\xi[\subst{\alpha}{\val{\sigma}{\xi}{\omega}}]}{\omega[\subst{\alpha}{\omega(\sigma)}]}
  = \val{\tau[\subst{\alpha}{\sigma}]}{\xi}{\omega}$ where the last
  equality follows from Lemma~\ref{lem_val_subst}.
\end{proof}

A mapping~$\omega$ on type constructors is extended in the obvious way
to a mapping on terms. Note that $\omega$ also acts on the type
annotations of variable occurrences, e.g.~$\omega(\lambda x : \alpha
. x^\alpha) = \lambda x : \omega(\alpha) . x^{\omega(\alpha)}$.

\begin{lemma}\label{lem_typable_computable}
  If $t : \sigma$ and $\omega$ is closed for~$\sigma$ and
  $\FTV(\omega(t)) = \emptyset$ then $\omega(t) \in
  \val{\sigma}{\xi}{\omega}$.
\end{lemma}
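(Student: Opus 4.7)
The plan is to prove a strengthened form of the statement by structural induction on $t$, so that abstractions can be handled. Specifically, I would prove: \emph{for any type substitution $\omega$ closed for $\sigma$, any $\omega$-valuation $\xi$, and any term substitution $\rho$ such that $\rho(x) \in \val{\tau}{\xi}{\omega}$ for every free term variable $x : \tau$ of $t$, we have $\omega(\rho(t)) \in \val{\sigma}{\xi}{\omega}$.} The original statement follows by taking $\rho$ to be the identity on the free term variables of $\omega(t)$; each such variable $x^{\omega(\tau)}$ is neutral and in normal form, hence belongs to $\val{\tau}{\xi}{\omega}$ by Lemma~\ref{lem_val_computable} and property~3 of Definition~\ref{def_candidate}.

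The base cases are then routine. For a variable $x$, the claim is immediate from the assumption on $\rho$. For a function symbol we distinguish four subcases: natural number constants $n : \nat$ are closed normal forms and thus in $\val{\nat}{}{}$ via Lemma~\ref{lem_nat_computable}; the symbols $\oplus$ and $\otimes$ are handled by Lemma~\ref{lem_circ}; $\lift$ by Lemma~\ref{lem_lift}; and $\flatten$ by Lemma~\ref{lem_flatten}. In each case the value of $\omega$ and $\xi$ is irrelevant since the symbol is closed.

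For the inductive cases, an application $t_1 \cdot t_2$ follows directly from the induction hypothesis and the definition of $\val{\tau \arrtype \sigma}{\xi}{\omega}$; a type application $t * \tau$ is handled by the induction hypothesis together with Lemma~\ref{lem_forall}. For an abstraction $\abs{x:\tau}{s}$, I would invoke Lemma~\ref{lem_abstraction_computable}: given $u \in \val{\tau}{\xi}{\omega}$, I extend $\rho$ to $\rho[x:=u]$ and apply the induction hypothesis to $s$ to obtain $\omega(\rho(s))[x:=u] \in \val{\sigma'}{\xi}{\omega}$ where $\sigma = \tau \arrtype \sigma'$. The type abstraction case $\tabs{\alpha:\kappa}{s}$ is symmetric: given a closed $\varphi \in \Tc_\kappa$ and $X \in \Cb_\varphi$, I apply the induction hypothesis with $\omega[\alpha:=\varphi]$ and $\xi[\alpha:=X]$, which requires checking that the resulting mapping is still a valuation and that $\rho$ still meets its hypothesis (since $\alpha$ may appear in the types of the free variables of $s$, but then it does not occur free in $\omega(\rho(s))$ because $\omega$ is closed for $\sigma$). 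Finally, the conversion clause $t : \tau'$ with $\tau' =_\beta \sigma$ is absorbed by Lemma~\ref{lem_beta_val}.

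The main obstacle is bookkeeping in the $\tabs{\alpha}{s}$ case, where I must be careful that extending $\omega$ and $\xi$ with a binding for $\alpha$ preserves both closedness-for-$\sigma$ and the valuation condition, and that the substitution $\rho$ remains compatible (after an $\alpha$-renaming so that $\alpha$ does not clash with the types of free variables of $s$, which is permitted by the side condition in Definition~\ref{def_preterms}). Everything else is a direct application of the already-established lemmas, so this proof is essentially a verification that the computability predicate is closed under all term-formation operations of the calculus.
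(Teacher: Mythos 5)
Your proposal is correct and matches the paper's own proof in essence: the paper likewise proves a strengthened statement by structural induction on $t$, replacing all free variables by arbitrary elements of the corresponding candidate sets (writing $\omega^*(t) = \omega(t)[\subst{x_1}{u_1},\ldots,\subst{x_n}{u_n}]$), handles the symbols via Lemmas~\ref{lem_nat_computable}, \ref{lem_circ}, \ref{lem_lift}, \ref{lem_flatten}, the abstraction and type-abstraction cases via Lemma~\ref{lem_abstraction_computable}, type application via Lemma~\ref{lem_forall}, and the conversion issue via the generation lemma together with Lemma~\ref{lem_beta_val}. The bookkeeping you flag (closedness of $\omega$ for the variables' types and freshness of $\alpha$ in the $\Lambda$-case) is exactly the point the paper also addresses, so nothing essential is missing.
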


\begin{proof}
  We prove by induction on the structure of~$t$ that if $t : \sigma$
  and $\omega$ is closed for~$\sigma$ and $\FTV(\omega(t)) =
  \emptyset$ and $x_1^{\tau_1},\ldots,x_n^{\tau_n}$ are all free
  variable occurrences in the canonical representative of~$t$ (so
  each~$\tau_i$ is $\beta$-normal), then for all
  $u_1\in\val{\tau_1}{\xi}{\omega},\ldots,u_n\in\val{\tau_n}{\xi}{\omega}$
  we have $\omega(t)[\subst{x_1}{u_1},\ldots,\subst{x_n}{u_n}] \in
  \val{\sigma}{\xi}{\omega}$. This suffices because
  $\omega(x_i^{\tau_i}) \in \val{\tau_i}{\xi}{\omega}$. Note that
  $\omega$ is closed for each~$\tau_i$ because $\FTV(\omega(t)) =
  \emptyset$ and~$t$ is typed, so no type constructor variable
  occurring free in~$\tau_i$ can be bound in~$t$ by a~$\Lambda$;
  e.g.~$\Lambda \alpha . x^\alpha$ is not a valid typed term (we
  assume~$\tau_i$ to be in $\beta$-normal form). For brevity, we use
  the notation $\omega^*(t) =
  \omega(t)[\subst{x_1}{u_1},\ldots,\subst{x_n}{u_n}]$. Note that
  $\omega^*(t) : \omega(\sigma)$.

  By the generation lemma for $t : \sigma$ there is a type~$\sigma'$
  such that $\sigma' =_\beta \sigma$ and $\FTV(\sigma') \subseteq
  \FTV(t)$ and one of the cases below holds. Note that~$\omega$ is
  closed for~$\sigma'$ because it is closed for~$\sigma$ and
  $\FTV(\omega(t)) = \emptyset$. By Lemma~\ref{lem_beta_val} it
  suffices to show $\omega^*(t) \in \val{\sigma'}{\xi}{\omega}$.
  \begin{itemize}
  \item If $t = x_1^{\sigma'}$ then $\omega(t)[\subst{x_1}{u_1}] =
    (x_1^{\omega(\sigma')})[\subst{x_1}{u_1}] = u_1 \in
    \val{\sigma'}{\xi}{\omega}$ by assumption.
  \item If $t = n$ is a natural number and $\sigma' = \nat$ then $t
    \in \val{\nat}{}{}$ by definition.
  \item If $t$ is a function symbol then the claim follows from
    Lemma~\ref{lem_circ}, Lemma~\ref{lem_lift} or
    Lemma~\ref{lem_flatten}.
  \item If $t = \abs{x:\sigma_1}{s}$ then
    $\sigma' = \sigma_1\arrtype\sigma_2$ and $s : \sigma_2$. Hence
    $\omega$ is closed for~$\sigma_2$. Let
    $u \in \val{\sigma_1}{\xi}{\omega}$. By the inductive hypothesis
    $\omega^*(s)[\subst{x}{u}] \in \val{\sigma_2}{\xi}{\omega}$. Hence
    $\omega^*(t) \in \val{\sigma'}{\xi}{\omega}$ by
    Lemma~\ref{lem_abstraction_computable}.
  \item If $t = \tabs{\alpha:\kappa}{s}$ then $\sigma' =
    \forall\alpha\tau$ and $s : \tau$. Let $\psi$ be a closed type
    constructor of kind~$\kappa$ and let $X \in \Cb_\psi$. Let
    $\omega_1 = \omega[\subst{\alpha}{\psi}]$ and
    $\xi_1=\xi[\subst{\alpha}{X}]$. Then $\omega_1$ is closed
    for~$\tau$ and $\FTV(\omega_1(s)) = \emptyset$. By the inductive
    hypothesis $\omega_1^*(s) \in \val{\tau}{\xi_1}{\omega_1}$. We
    have $\omega_1^*(s) = \omega^*(s)[\subst{\alpha}{\psi}]$ (assuming
    $\alpha$ chosen fresh such that $\omega(\alpha) = \alpha$). Hence
    $\omega^*(t) \in \val{\tau}{\xi}{\omega}$ by
    Lemma~\ref{lem_abstraction_computable}.
  \item If $t = t_1 t_2$ then $t_1 : \tau\arrtype\sigma'$ and $t_2 :
    \tau$ and $\FTV(\tau) \subseteq \FTV(t)$. Hence~$\omega$ is closed
    for~$\tau$ and for~$\tau\arrtype\sigma'$. By the inductive
    hypothesis $\omega^*(t_1) \in
    \val{\tau\arrtype\sigma'}{\xi}{\omega}$ and $\omega^*(t_2) \in
    \val{\tau}{\xi}{\omega}$. We have $\omega^*(t_2) :
    \omega(\tau)$. Then by definition $\omega^*(t) =
    (\omega^*(t_1))(\omega^*(t_2)) \in \val{\sigma'}{\xi}{\omega}$.
  \item If $t = s \psi$ then $s : \forall\alpha\tau$ and $\sigma' =
    \tau[\subst{\alpha}{\psi}]$. By the inductive hypothesis
    $\omega^*(s) \in \val{\forall\alpha\tau}{\xi}{\omega}$. Because
    $\FTV(\omega(t)) = \emptyset$, the mapping $\omega$ is closed
    for~$\psi$. So by Lemma~\ref{lem_forall} we have $\omega^*(t) =
    \omega^*(s) \omega(\psi) \in
    \val{\tau[\subst{\alpha}{\psi}]}{\xi}{\omega}$.\qedhere
  \end{itemize}
\end{proof}

{ \renewcommand{\thetheorem}{\ref{thm_sn}}
\begin{theorem}
  If $t : \sigma$ then $t \in \SN$.
\end{theorem}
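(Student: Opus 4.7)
The plan is to derive this as an essentially immediate corollary of Lemma~\ref{lem_typable_computable} combined with Lemma~\ref{lem_val_computable} and property~1 of candidates (computable terms are strongly normalising). The structure of the argument is the standard final step of the Tait--Girard reducibility method: specialise the main computability lemma to the ``trivial'' valuation, observe that computability implies $\SN$, and transfer that back to the original term.

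Concretely, suppose $t : \sigma$, and let $\alpha_1, \ldots, \alpha_k$ be the free type constructor variables of~$t$ and of~$\sigma$, with $\alpha_i$ of kind~$\kappa_i$. I would take $\omega$ to be the mapping $\omega(\alpha_i) = \chi_{\kappa_i}$ (and the identity elsewhere), which is closed for~$\sigma$ and satisfies $\FTV(\omega(t)) = \emptyset$. For the valuation, set $\xi(\alpha_i) = \val{\chi_{\kappa_i}}{}{}$, which is a candidate in $\Cb_{\chi_{\kappa_i}} = \Cb_{\omega(\alpha_i)}$ by Lemma~\ref{lem_chi_kappa_computable}, so that $\xi$ is indeed an $\omega$-valuation. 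Finally, for each free variable $x_i^{\tau_i}$ in the canonical representative of~$t$, pick a fresh variable $y_i$ of type~$\omega(\tau_i)$; because each~$y_i$ is neutral and in normal form, property~3 of candidates yields $y_i \in \val{\tau_i}{\xi}{\omega}$.

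Lemma~\ref{lem_typable_computable} then gives $\omega(t)[\subst{x_1}{y_1},\ldots,\subst{x_n}{y_n}] \in \val{\sigma}{\xi}{\omega}$. By Lemma~\ref{lem_val_computable} we have $\val{\sigma}{\xi}{\omega} \in \Cb_{\omega(\sigma)}$, and property~1 of candidates therefore forces $\omega(t)[\subst{x_1}{y_1},\ldots,\subst{x_n}{y_n}] \in \SN$.

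The only remaining point, and the one small item that needs an explicit remark, is to pull $\SN$ back to the original term~$t$. This relies on the observation that both type substitution and term substitution are stable under~$\arrW$: if $t \arrW t'$, then $\omega(t) \arrW \omega(t')$ and hence $\omega(t)[\subst{x_1}{y_1},\ldots,\subst{x_n}{y_n}] \arrW \omega(t')[\subst{x_1}{y_1},\ldots,\subst{x_n}{y_n}]$. Consequently any infinite $\arrW$-reduction starting from~$t$ would lift to one starting from $\omega(t)[\subst{x_1}{y_1},\ldots,\subst{x_n}{y_n}]$, contradicting the strong normalisation just established. So $t \in \SN$, as required. There is no serious obstacle here once the reducibility machinery of the appendix is in place; the only mild subtlety is noting that substitution into fresh variables of the free variables of~$t$ does not create spurious reductions that were not already reductions in~$t$.
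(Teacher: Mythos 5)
Your argument is correct, and its core is exactly the paper's: apply Lemma~\ref{lem_typable_computable} together with Lemma~\ref{lem_val_computable} and property~1 of candidates to conclude membership in $\SN$. Where you genuinely diverge is in the reduction of the general (open) case to that core. The paper closes the term with $\lambda$/$\Lambda$-abstractions and the type with corresponding $\forall$-quantifiers, so that the closed case applies and infinite reductions of $t$ lift immediately under the congruence rules of $\arrW$ (Lemma~\ref{lem_reduce_abs}). You instead keep $t$ open, instantiate its free type constructor variables by $\chi_{\kappa_i}$ via $\omega$, take $\xi(\alpha_i)=\val{\chi_{\kappa_i}}{}{}$ (Lemma~\ref{lem_chi_kappa_computable}), and pull $\SN$ back along the substitution. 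This buys a more direct argument but costs an extra verification that the paper's route avoids: $\arrW$ must be stable under type substitution, which is not entirely automatic here because the rules for $\oplus,\otimes,\flatten,\lift$ are type-directed. It does hold in the direction you need -- the side conditions ``type is $\nat$/an arrow/a $\forall$'' are preserved under substitution, so substitution can create redexes but never destroy the applicability of a rule -- and this fact is anyway used implicitly in the paper's own computability proofs, but your proof should state it as a lemma rather than an aside. One small citation slip: the substituted form $\omega(t)[\subst{x_1}{y_1},\ldots,\subst{x_n}{y_n}] \in \val{\sigma}{\xi}{\omega}$ is only established inside the \emph{proof} of Lemma~\ref{lem_typable_computable} (its strengthened induction statement); the lemma as stated gives $\omega(t)\in\val{\sigma}{\xi}{\omega}$ directly, which already suffices, so the fresh-variable renaming of the free term variables is unnecessary and can be dropped.
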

\addtocounter{theorem}{-1}}

\begin{proof}
  For closed terms~$t$ and closed types~$\sigma$ this follows from
  Lemma~\ref{lem_typable_computable}, Lemma~\ref{lem_val_computable}
  and property~1 of candidates (Definition~\ref{def_candidate}). For
  arbitrary terms and types, this follows by closing the terms with an
  appropriate number of abstractions, and the types with corresponding
  $\forall$-quantifiers.
\end{proof}

{ \renewcommand{\thelemma}{\ref{lem_final_nat}}
\begin{lemma}
  The only final interpretation terms of type $\nat$ are the natural
  numbers.
\end{lemma}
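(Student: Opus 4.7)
The plan is to proceed by induction on the size of a closed normal-form term $t$ of type $\nat$, showing that $t$ must be a natural number constant. As a first step, I would invoke the generation lemma to rule out that $t$ is a variable (since $t$ is closed) or an abstraction $\abs{x}{s}$ or $\tabs{\alpha}{s}$ (since $t : \nat$ is a type atom, not an arrow or quantifier). Peeling off outermost applications, I would write $t = \mathtt{f}\,v_1 \ldots v_m$ where each $v_i$ is a term or a type constructor and the head $\mathtt{f}$ is a function symbol — it cannot itself be an abstraction, because that would put a $\beta$-redex inside $t$, contradicting normality.

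Next I would note a small but essential structural fact: after canonicalising types to $\beta$-normal form, every closed type of kind $*$ in $\ITypes$ is either $\nat$, an arrow $\tau_1 \arrtype \tau_2$, or a quantification $\forall \alpha . \sigma$, because $\Sigma^T = \{\nat : *\}$ and closed $\beta$-normal type constructors of higher kind reduce to one of these forms at the outermost level. This guarantees that rules 5--13 exhaust the possible type annotations on $\oplus_\tau,\otimes_\tau,\flatten_\tau,\lift_\tau$ occurring at the head of $t$.

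Then I would perform a case analysis on $\mathtt{f}$. If $\mathtt{f}$ is a numeral $n$, typing forces $m = 0$ and so $t = n$. If $\mathtt{f}$ is $\flatten$ or $\lift$, typing forces at least one type argument and one term argument at the head (otherwise the overall type is not $\nat$), and rules 8--10 or 11--13 turn the resulting subterm into a redex regardless of the type annotation, contradicting normality. If $\mathtt{f}$ is $\oplus$ or $\otimes$, typing forces a head of the form $\mathtt{f}_\tau\,v_2\,v_3$ with $v_2,v_3 : \tau$ where $\tau$ must eventually be $\nat$: when $\tau$ is an arrow or a quantifier, rule 6 or 7 gives a redex; when $\tau = \nat$, the arguments $v_2, v_3$ are closed normal forms of type $\nat$ of strictly smaller size, so the inductive hypothesis makes them numerals, and then rule 5 gives a redex. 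In every non-trivial case we contradict normality.

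The only real obstacle is bookkeeping: I must be careful that typing really forces $\oplus, \otimes, \flatten, \lift$ to appear with enough arguments to trigger one of the $\arrW$ rules, ruling out partial applications like $\oplus * \tau$ or $\lift * \tau$ at the head — this is where the hypothesis that $t$ has type $\nat$ (a type atom) is essential, together with the observation on closed $\beta$-normal types above. Once this is set up, the induction is only genuinely needed in the $\oplus/\otimes$ case at type $\nat$, since $\flatten$ and $\lift$ reduce irrespective of the shape of their arguments.
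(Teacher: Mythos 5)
Your proposal is correct and follows essentially the same route as the paper's proof: induct on the term, observe that a closed normal term of type $\nat$ that is not a numeral must be a function symbol applied to a closed $\beta$-normal type annotation plus arguments, use the fact that such a type is $\nat$, an arrow, or a $\forall$ to show rules 5--13 always yield a redex (invoking the inductive hypothesis only for $\oplus/\otimes$ at type $\nat$), contradicting normality. Your extra bookkeeping about partial applications is implicit in the paper's ``$n \ge 2$'' / ``$n \ge 1$'' remarks, so there is no substantive difference.
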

\addtocounter{theorem}{-1}}

\begin{proof}
  We show by induction on~$t$ that if $t$ is a final interpretation
  term of type~$\nat$ then $t$ is a natural number. Because~$t$ is
  closed and in normal form, if it is not a natural number then it
  must have the form $\mathtt{f}_\sigma t_1 \ldots t_n$ for a function
  symbol $\mathtt{f}$. For concreteness assume $\mathtt{f} =
  \oplus$. Then $n \ge 2$. Because~$t$ is closed, $\sigma$ cannot be a
  type variable. It also cannot be an arrow or a $\forall$-type,
  because then $t$ would contain a redex. So $\sigma=\nat$. Then
  $t_1,t_2$ are final interpretation terms of type~$\nat$, hence
  natural numbers by the inductive hypothesis. But then $t$ contains a
  redex. Contradiction.
  The case for $\mathtt{f} = \otimes$ is parallel.  If
  $\mathtt{f} \in \{\flatten,\lift\}$ and $\sigma$ is closed, then
  $n \ge 1$ and in all cases $t$ is not in normal form.
\end{proof}

\subsection{Weak monotonicity proof}\label{sec_weakly_monotone_proof}

We want to show that if $s \succeq s'$ then $t[\subst{x}{s}] \succeq
t[\subst{x}{s'}]$. A straightforward proof attempt runs into a problem
that, because of impredicativity of polymorphism, direct induction on
type structure is not possible. We adopt a method similar to Girard's
method of candidates from the termination proof.

\begin{defn}\label{def_wm_candidate}
  By induction on the kind~$\kappa$ of a type constructor~$\tau$ we
  define the set~$\Cb_\tau$ of all candidates of type
  constructor~$\tau$.

  First assume $\kappa=*$, i.e., $\tau$ is a type. A set~$X$ of terms
  of type~$\tau$ equipped with a binary relation~$\ge^X$ is a
  \emph{candidate of type~$\tau$} if it satisfies the following
  properties:
  \begin{enumerate}
  \item if $t \in X$ and $t' : \tau$ and $t' \leadsto t$ then $t' \in
    X$,
  \item if $t_1,t_2 \in X$ then $\circ_\tau t_1 t_2 \in X$ for $\circ
    \in \{\oplus,\otimes\}$,
  \item if $t : \nat$ then $\lift_\tau t \in X$.
  \end{enumerate}
  and the relation~$\ge^X$ satisfies the following properties:
  \begin{enumerate}
  \item ${\succeq} \cap X \times X \subseteq {\ge^X}$,
  \item if $t_1 \ge^X t_2$ and $t_1' \leadsto t_1$ (resp.~$t_2'
    \leadsto t_2$) then $t_1' \ge^X t_2$ (resp.~$t_1 \ge^X t_2'$),
  \item if $t_1 \ge^X t_1'$ and $t_2 \ge^X t_2'$ then $\circ_\tau t_1
    t_2 \ge^X \circ_\tau t_1' t_2'$ for $\circ \in
    \{\oplus,\otimes\}$,
  \item if $t_1 \succeq_\nat t_2$ then $\lift_\tau(t_1) \ge^X
    \lift_\tau(t_2)$,
  \item if $t_1 \ge^X t_2$ then $\flatten_\tau(t_1) \succeq_\nat
    \flatten_\tau(t_2)$,
  \item $\ge^X$ is reflexive and transitive on~$X$.
  \end{enumerate}
  The relation~$\ge^X$ is a \emph{comparison candidate for~$X$},
  and~$X$ is a \emph{candidate set}.

  Now assume $\kappa = \kappa_1\arrkind\kappa_2$. A function $f :
  \Tc_{\kappa_1} \times \bigcup_{\xi\in\Tc_{\kappa_1}}\Cb_\xi \to
  \bigcup_{\xi\in\Tc_{\kappa_2}}\Cb_\xi$ is a \emph{candidate of type
    constructor~$\tau$} if for every closed type constructor~$\sigma$
  of kind~$\kappa_1$ and a candidate $X \in \Cb_\sigma$ we have
  $f(\sigma,X) \in \Cb_{\tau\sigma}$.
\end{defn}

\begin{lemma}\label{lem_beta_wm_candidate}
  If $\sigma =_\beta \sigma'$ then $\Cb_\sigma = \Cb_{\sigma'}$.
\end{lemma}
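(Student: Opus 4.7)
The plan is to proceed by induction on the common kind $\kappa$ of $\sigma$ and $\sigma'$. Note that $\beta$-reduction on type constructors preserves kinds, so $\sigma =_\beta \sigma'$ implies that they share a kind. This mirrors the proof of Lemma~\ref{lem_beta_candidate} in the termination argument, so the same structure should work here once we verify that the new data carried by candidates (the comparison relation $\ge^X$) behaves well under $\beta$-conversion of the underlying type.

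For the base case $\kappa = *$, I would argue that $\Cb_\sigma$ and $\Cb_{\sigma'}$ are literally the same collection. Since interpretation terms are equivalence classes modulo~$\equiv$ (see Definitions~\ref{def_terms} and~\ref{def_iterms}), and since $\equiv$ identifies preterms whose types are $\beta$-convertible, every term of type $\sigma$ is a term of type $\sigma'$ and vice versa. Consequently, the operators $\oplus_\sigma,\otimes_\sigma,\lift_\sigma,\flatten_\sigma$ coincide with $\oplus_{\sigma'},\otimes_{\sigma'},\lift_{\sigma'},\flatten_{\sigma'}$ as elements of $\Iterms$. Thus each closure condition for candidate sets (closure under $\leadsto^{-1}$, under $\circ_\tau$, under $\lift_\tau$) and each condition on the comparison candidate (involving $\ge^X$, $\succeq$, $\lift_\tau$, $\flatten_\tau$) is the very same statement whether we read $\tau$ as $\sigma$ or as $\sigma'$.

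For the inductive case $\kappa = \kappa_1 \arrkind \kappa_2$, an element $f$ lies in $\Cb_\sigma$ iff for every closed $\rho \in \Tc_{\kappa_1}$ and every $X \in \Cb_\rho$ we have $f(\rho,X) \in \Cb_{\sigma\rho}$; similarly for $\Cb_{\sigma'}$ with $\sigma'\rho$ in place of $\sigma\rho$. Since $\sigma =_\beta \sigma'$ implies $\sigma\rho =_\beta \sigma'\rho$, and since $\sigma\rho$ has kind $\kappa_2$, the inductive hypothesis yields $\Cb_{\sigma\rho} = \Cb_{\sigma'\rho}$, so the two membership conditions on $f$ agree.

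There is no serious obstacle here; the only subtlety is the observation (at the base case) that the underlying terms, together with the type-indexed operators appearing in the candidate axioms, are invariant under $\beta$-conversion of types because terms in $\Iterms$ are already quotiented by $\equiv$. Once that is noted, the induction is immediate.
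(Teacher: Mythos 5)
Your proof is correct and follows the same route as the paper, whose proof is simply ``induction on the kind of $\sigma$''; your base-case observation that terms and the type-indexed operators $\oplus_\sigma,\otimes_\sigma,\lift_\sigma,\flatten_\sigma$ are invariant under $\beta$-conversion of types (because $\Iterms$ is quotiented by $\equiv$) is exactly the implicit justification, and the step case via $\sigma\rho =_\beta \sigma'\rho$ matches the intended induction.
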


\begin{proof}
  Induction on the kind of~$\sigma$.
\end{proof}

\begin{defn}\label{def_wm_valuation}
  Let $\omega$ be a mapping from type constructor variables to type
  constructors (respecting kinds). The mapping~$\omega$ extends in an
  obvious way to a mapping from type constructors to type
  constructors. A mapping~$\omega$ is \emph{closed for~$\sigma$} if
  $\omega(\alpha)$ is closed for $\alpha \in \FTV(\sigma)$ (then
  $\omega(\sigma)$ is closed).

  An \emph{$\omega$-valuation} is a mapping~$\xi$ on type constructor
  variables such that $\xi(\alpha) \in \Cb_{\omega(\alpha)}$.

  For each type constructor~$\sigma$, each mapping~$\omega$ closed
  for~$\sigma$, and each $\omega$-valuation~$\xi$, we define
  $\val{\sigma}{\xi}{\omega}$ by induction on~$\sigma$:
  \begin{itemize}
  \item $\val{\alpha}{\xi}{\omega} = \xi(\alpha)$ for a type
    constructor variable~$\alpha$,
  \item $\val{\nat}{\xi}{\omega}$ is the set of all terms~$t \in
    \Iterms$ such that $t : \nat$; equipped with the relation
    $\gteq{\nat}{\xi}{\omega} = \succeq_\nat$,
  \item $\val{\sigma \arrtype \tau}{\xi}{\omega}$ is the set of all
    terms~$t$ such that $t : \omega(\sigma\arrtype\tau)$ and:
    \begin{itemize}
    \item for all $s \in \val{\sigma}{\xi}{\omega}$ we have
      $\app{t}{s} \in \val{\tau}{\xi}{\omega}$, and
    \item if $s_1 \gteq{\sigma}{\xi}{\omega} s_2$ then $\app{t}{s_1}
      \gteq{\tau}{\xi}{\omega} \app{t}{s_2}$;
    \end{itemize}
    equipped with the
    relation~$\gteq{\sigma\arrtype\tau}{\xi}{\omega}$ defined by:
    \begin{itemize}
    \item $t_1 \gteq{\sigma\arrtype\tau}{\xi}{\omega} t_2$ iff
      $t_1,t_2 \in \val{\sigma\arrtype\tau}{\xi}{\omega}$ and for
      every $s \in \val{\sigma}{\xi}{\omega}$ we have $t_1 s
      \gteq{\tau}{\xi}{\omega} t_2 s$,
    \end{itemize}
  \item $\val{\forall(\alpha:\kappa)[\sigma]}{\xi}{\omega}$ is the set
    of all terms~$t$ such that $t : \omega(\forall\alpha[\sigma])$ and:
    \begin{itemize}
    \item for every closed type constructor~$\varphi$ of kind~$\kappa$
      and every $X \in \Cb_\varphi$ we have $\tapp{t}{\varphi} \in
      \val{\sigma}{\xi[\subst{\alpha}{X}]}{\omega[\subst{\alpha}{\varphi}]}$;
    \end{itemize}
    equipped with the
    relation~$\gteq{\forall\alpha[\sigma]}{\xi}{\omega}$ defined by:
    \begin{itemize}
    \item $t_1 \gteq{\forall(\alpha:\kappa)[\sigma]}{\xi}{\omega} t_2$
      iff $t_1,t_2 \in
      \val{\forall(\alpha:\kappa)[\sigma]}{\xi}{\omega}$ and for every
      closed type constructor~$\varphi$ of kind~$\kappa$ and every $X
      \in \Cb_\varphi$ we have $t_1 \varphi
      \gteq{\sigma}{\xi[\subst{\alpha}{X}]}{\omega[\subst{\alpha}{\varphi}]}
      t_2 \varphi$,
    \end{itemize}
  \item
    $\val{\varphi \psi}{\xi}{\omega} =
    \val{\varphi}{\xi}{\omega}(\omega(\psi),\val{\psi}{\xi}{\omega})$,
  \item
    $\val{\lambda(\alpha:\kappa)\varphi}{\xi}{\omega}(\psi,X) =
    \val{\varphi}{\xi[\subst{\alpha}{X}]}{\omega[\subst{\alpha}{\psi}]}$
    for closed $\psi \in \Tc_\kappa$ and $X \in \Cb_\psi$.
  \end{itemize}
  In the above, if e.g.~$\val{\psi}{\xi}{\omega} \notin
  \Cb_{\omega(\psi)}$ then $\val{\varphi \psi}{\xi}{\omega}$ is
  undefined.
\end{defn}

Note that if $t \in \val{\sigma}{\xi}{\omega}$ then $t :
\omega(\sigma)$, and if $t_1 \gteq{\sigma}{\xi}{\omega} t_2$ then
$t_1,t_2\in\val{\sigma}{\xi}{\omega}$. For brevity we
use~$\val{\sigma}{\xi}{\omega}$ to denote both the pair
$(\val{\sigma}{\xi}{\omega},{\gteq{\sigma}{\xi}{\omega}})$ and its
first element, depending on the context. For a type~$\tau$,
by~$\gteq{\tau}{\xi}{\omega}$ we always denote the second element of
the pair~$\val{\tau}{\xi}{\omega}$. If $\tau$ is closed then~$\xi$
and~$\omega$ do not matter and we simply write~$\geq_\tau$
and~$\val{\tau}{}{}$.

\begin{lemma}\label{lem_val_wm_computable}
  If $\sigma$ is a type constructor, $\omega$ is closed for~$\sigma$,
  and $\xi$ is an $\omega$-valuation, then $\val{\sigma}{\xi}{\omega}
  \in \Cb_{\omega(\sigma)}$.
\end{lemma}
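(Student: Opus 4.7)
The plan is to adapt the Tait--Girard-style computability argument from the proof of Lemma~\ref{lem_val_computable} (in Appendix~\ref{app_proofs_SN}), this time verifying both the three candidate-set closure properties \emph{and} the six comparison-relation properties of Definition~\ref{def_wm_candidate} at every step.

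Before the main induction I would establish two preliminary facts: $\val{\nat}{}{}\in\Cb_\nat$, which reduces to reflexivity, transitivity, backward-reduction closure, and the monotonicity properties of $\succeq_\nat$ already collected in Section~\ref{subsec:succ} and Lemmas~\ref{lem:liftgreater}, \ref{lem_flatten_succ}, \ref{lem:plustimesmonotonic}; and $\val{\chi_\kappa}{}{}\in\Cb_{\chi_\kappa}$, by induction on $\kappa$ using $\chi_{\kappa_1\arrkind\kappa_2}=\lambda\alpha.\chi_{\kappa_2}$, exactly as in Lemma~\ref{lem_chi_kappa_computable}. The main argument then proceeds by induction on the structure of $\sigma$. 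The variable case is immediate from the assumption $\xi(\alpha)\in\Cb_{\omega(\alpha)}$. The cases $\sigma=\varphi\psi$ and $\sigma=\lambda(\alpha:\kappa)\varphi$ follow, as in the SN proof, by unfolding the definition of $\val{\cdot}{\xi}{\omega}$ and invoking Lemma~\ref{lem_beta_wm_candidate} to absorb the $\beta$-equality between $\omega[\alpha:=\psi](\varphi)$ and $\omega((\lambda\alpha.\varphi)\psi)$.

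The core work lies in the two binder cases $\sigma=\tau_1\arrtype\tau_2$ and $\sigma=\forall(\alpha:\kappa)\tau$, where all nine properties of Definition~\ref{def_wm_candidate} must be checked. The three candidate-set properties are obtained, as in Lemma~\ref{lem_val_computable}, by reducing each compound term with an appropriate $\arrW$-rule to an expression to which the inductive hypothesis applies, and then using backward-reduction closure to transport membership back. For the comparison properties: property~1 uses Lemma~\ref{lem_app_succ} (applied to either a term or a type argument); property~2 follows from Lemma~\ref{lem_succ_red}; properties~3 and~4 reduce through the $\arrW$-rules for $\oplus,\otimes$ and $\lift$ on arrow and $\forall$ types, after which the inductive hypothesis at $\tau_2$ (resp.\ at $\tau$) closes the goal; property~5 first produces a witness to instantiate the body---$\lift_{\omega(\tau_1)}0\in\val{\tau_1}{\xi}{\omega}$, from candidate-set property~3 of the IH, in the arrow case, and $\chi_\kappa$ paired with $\val{\chi_\kappa}{}{}$, from the preliminary lemma, in the $\forall$ case---and then applies property~5 of the IH together with Lemma~\ref{lem_succ_red}; reflexivity and transitivity are inherited pointwise from the IH.

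The main obstacle I anticipate is purely bookkeeping. Verifying nine properties in two binder cases with the $\oplus,\otimes$-on-arrow and $\oplus,\otimes$-on-$\forall$ rewrite rules (which introduce new binders on the right-hand side) produces many subcases. In the $\forall$ case one must additionally, as in the SN proof, assume $\omega(\alpha)=\alpha$ and $\alpha\notin\FTV(t_1)\cup\FTV(t_2)$ when analysing the reduction $(\circ_{\omega(\forall\alpha\tau)}\,t_1\,t_2)\arrW\tabs{\alpha}{\circ_{\omega(\tau)}(t_1\,\alpha)(t_2\,\alpha)}$, so that $\omega[\alpha:=\varphi](\tau)=\omega(\tau)[\alpha:=\varphi]$ lines up (cf.\ Lemma~\ref{lem_val_subst}). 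Modulo this bookkeeping, every subcase ultimately reduces to the inductive hypothesis and one of the previously-established monotonicity lemmas.
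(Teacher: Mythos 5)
Your proposal is correct and follows essentially the same route as the paper's proof: induction on the structure of $\sigma$, with the application and $\lambda$ cases absorbed by Lemma~\ref{lem_beta_wm_candidate}, the arrow and $\forall$ cases checked property-by-property via $\arrW$-reduction, backward closure and the inductive hypothesis, and your explicit preliminary fact $\val{\chi_\kappa}{}{}\in\Cb_{\chi_\kappa}$ is precisely the ingredient (left implicit in the paper's ``analogous'' $\forall$-case) needed for comparison property~5 there. One small correction: comparison property~2 in the binder cases is not discharged by Lemma~\ref{lem_succ_red}, since the inner relation $\gteq{\tau_2}{\xi}{\omega}$ (or $\gteq{\tau}{\xi[\subst{\alpha}{X}]}{\omega[\subst{\alpha}{\varphi}]}$) is an abstract comparison candidate rather than $\succeq$; as in the paper, it follows from property~2 of the inductive hypothesis together with the already-established candidate-set property~1, which is anyway consistent with your closing remark that every subcase reduces to the IH.
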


\begin{proof}
  Induction on~$\sigma$. If $\sigma=\alpha$ then $\xi(\alpha) \in
  \Cb_{\omega(\alpha)}$ by definition. If $\sigma=\nat$ then this
  follows from definitions.

  Assume $\sigma=\sigma_1\arrtype\sigma_2$. We check the properties of
  a candidate set.
  \begin{enumerate}
  \item The first property follows from the inductive hypothesis and
    property~2 of comparison candidates.
  \item Let $t_1,t_2 \in \val{\sigma}{\xi}{\omega}$. We need to show
    $\circ_\omega(\sigma) t_1 t_2 \in
    \val{\sigma_1\arrtype\sigma_2}{\xi}{\omega}$.

    Let $s \in \val{\sigma_1}{\xi}{\omega}$. Then
    $\circ_{\omega(\sigma)} t_1 t_2 s \leadsto
    \circ_{\omega(\sigma_2)} (t_1 s) (t_2 s)$. Because $t_i \in
    \val{\sigma_1\arrtype\sigma_2}{\xi}{\omega}$, we have $t_i s \in
    \val{\sigma_2}{\xi}{\omega}$. By the inductive
    hypothesis~$\val{\sigma_2}{\xi}{\omega} \in
    \Cb_{\omega(\sigma_2)}$, so $\circ_{\omega(\sigma_2)} (t_1 s) (t_2
    s) \in \val{\sigma_2}{\xi}{\omega}$. Hence
    $\circ_{\omega(\sigma_2)} t_1 t_2 s \in
    \val{\sigma_2}{\xi}{\omega}$ by property~1 of candidate sets.

    Let $s_1 \gteq{\sigma_1}{\xi}{\omega} s_2$. Then $s_i \in
    \val{\sigma_1}{\xi}{\omega}$. Because $t_j \in
    \val{\sigma_1\arrtype\sigma_2}{\xi}{\omega}$, we have $t_j s_i \in
    \val{\sigma_2}{\xi}{\omega}$ and $t_j s_1
    \gteq{\sigma_2}{\xi}{\omega} t_j s_2$. By the inductive
    hypothesis~$\gteq{\sigma_2}{\xi}{\omega}$ is a comparison
    candidate for~$\val{\sigma_2}{\xi}{\omega}$. Thus
    $\circ_{\omega(\sigma_2)} (t_1 s_1) (t_2 s_1)
    \gteq{\sigma_2}{\xi}{\omega} \circ_{\omega(\sigma_2)} (t_1 s_2)
    (t_2 s_2)$ by property~3 of comparison candidates. This suffices
    by property~2 of comparison candidates.
  \item Let $t : \nat$. Then $\lift_{\omega(\sigma)} t :
    \omega(\sigma)$.

    Let $s \in \val{\sigma_1}{\xi}{\omega}$. Then
    $\lift_{\omega(\sigma)}t s \leadsto \lift_{\omega(\sigma_2)}
    t$. By the inductive hypothesis $\lift_{\omega(\sigma_2)} t \in
    \val{\sigma_2}{\xi}{\omega}$. Hence $\lift_{\omega(\sigma)}t s \in
    \val{\sigma_2}{\xi}{\omega}$ by property~1 of candidate sets.

    Let $s_1,s_2 \in \val{\sigma_1}{\xi}{\omega}$. By the inductive
    hypothesis~$\gteq{\sigma_2}{\xi}{\omega}$ is a comparison
    candidate for~$\val{\sigma_2}{\xi}{\omega}$. We have
    $\lift_{\omega(\sigma_2)} t \gteq{\sigma_2}{\xi}{\omega}
    \lift_{\omega(\sigma_2)} t$ by the reflexivity
    of~$\gteq{\sigma_2}{\xi}{\omega}$ (property~6 of comparison
    candidates). This suffices by property~2 of comparison candidates,
    because $\lift_{\omega(\sigma)}t s_i \leadsto
    \lift_{\omega(\sigma_2)} t$.
  \end{enumerate}
  Now we check the properties of a comparison candidate
  for~$\val{\sigma_1\arrtype\sigma_2}{\xi}{\omega}$.
  \begin{enumerate}
  \item Suppose $t_1 \succeq t_2$ with $t_1,t_2 \in
    \val{\sigma}{\xi}{\omega}$. Let $s \in
    \val{\sigma_1}{\xi}{\omega}$. Then $t_1 s \succeq t_2 s$ by the
    definition of~$\succeq$. Hence $t_1 s \gteq{\sigma_2}{\xi}{\omega}
    t_2 s$ by the inductive hypothesis.
  \item Follows from the inductive hypothesis and the already shown
    property~1 of candidate sets
    for~$\val{\sigma_1\arrtype\sigma_2}{\xi}{\omega}$.
  \item Assume $t_i \gteq{\sigma}{\xi}{\omega} t_i'$. Let $s \in
    \val{\sigma_1}{\xi}{\omega}$. We have $\circ_{\omega(\sigma)} t_1
    t_2 s \leadsto \circ_{\omega(\sigma_2)} (t_1 s) (t_2 s)$ and
    $\circ_{\omega(\sigma)} t_1' t_2' s \leadsto
    \circ_{\omega(\sigma_2)} (t_1' s) (t_2' s)$. Since
    $t_i,t_i'\in\val{\sigma}{\xi}{\omega}$, we have $t_i s
    \gteq{\sigma_2}{\xi}{\omega} t_i' s$ and $t_i s, t_i' s \in
    \val{\sigma_2}{\xi}{\omega}$. By the inductive hypothesis $\circ
    (t_1 s) (t_2 s) \gteq{\sigma_2}{\xi}{\omega} \circ (t_1' s) (t_2'
    s)$, so $\circ t_1 t_2 s \gteq{\sigma_2}{\xi}{\omega} \circ t_1'
    t_2' s$ by property~2 of comparison candidates. This implies
    $\circ t_1 t_2 \gteq{\sigma}{\xi}{\omega} \circ t_1' t_2'$.
  \item Follows from Lemma~\ref{lem:liftgreater} and property~1 of
    comparison candidates.
  \item Assume $t_1 \gteq{\sigma}{\xi}{\omega} t_2$. Then
    $\flatten_{\omega(\sigma)} t_i \leadsto
    \flatten_{\omega(\sigma_2)} (t_i (\lift_{\omega(\sigma_1)}0))$. By
    the inductive hypothesis and property~3 of candidate sets
    $\lift_{\omega(\sigma_1)}0 \in \val{\sigma_1}{\xi}{\omega}$. Hence
    $t_i (\lift_{\omega(\sigma_1)}0) \in \val{\sigma_2}{\xi}{\omega}$
    and $t_1 (\lift_{\omega(\sigma_1)}0) \gteq{\sigma_2}{\xi}{\omega}
    t_2 (\lift_{\omega(\sigma_1)}0)$. Thus by the inductive hypothesis
    $\flatten_{\omega(\sigma_2)} (t_1 (\lift_{\omega(\sigma_1)}0))
    \succeq_\nat \flatten_{\omega(\sigma_2)} (t_2
    (\lift_{\omega(\sigma_1)}0))$. This implies
    $\flatten_{\omega(\sigma)} t_1 \succeq_\nat
    \flatten_{\omega(\sigma)} t_2$.
  \item Follows directly from the inductive hypothesis.
  \end{enumerate}

  If $\sigma=\forall\alpha\tau$ then the proof is analogous to the
  case $\sigma=\sigma_1\arrtype\sigma_2$. If $\sigma=\varphi\psi$ or
  $\sigma=\lambda(\alpha:\kappa)\varphi$ then the claim follows from
  the inductive hypothesis and Lemma~\ref{lem_beta_wm_candidate}, like
  in the proof of Lemma~\ref{lem_val_computable}.
\end{proof}

\begin{lemma}\label{lem_wm_circ}
  $\circ \in \val{\forall \alpha . \alpha \arrtype \alpha \arrtype
    \alpha}{}{}$ for $\circ \in \{ \oplus, \otimes \}$.
\end{lemma}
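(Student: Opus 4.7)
The plan is to unfold Definition~\ref{def_wm_valuation} at the type $\forall\alpha.\alpha\arrtype\alpha\arrtype\alpha$ and verify each resulting clause using the axioms of Definition~\ref{def_wm_candidate}. This lemma is the weak-monotonicity counterpart of Lemma~\ref{lem_circ}: one must check not only that $\circ$ lands in the candidate set at each instantiation, but also that $\circ$ is monotonic with respect to the comparison candidate supplied with~$X$.

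First I would note that $\circ$ has the declared type, so the typing side condition on the $\forall$-clause is automatic. It then suffices, for an arbitrary closed type $\varphi$ of kind $*$ and arbitrary $X \in \Cb_\varphi$, to show $\circ\,\varphi \in \val{\alpha\arrtype\alpha\arrtype\alpha}{\xi}{\omega}$, where $\omega$ and $\xi$ send $\alpha$ to $\varphi$ and to $X$ respectively. By the first clause of Definition~\ref{def_wm_valuation}, $\val{\alpha}{\xi}{\omega} = X$ and $\gteq{\alpha}{\xi}{\omega}$ is precisely $\ge^X$, which reduces all remaining obligations to statements purely about $X$ and $\ge^X$.

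Next I would apply the arrow clause of Definition~\ref{def_wm_valuation} twice, peeling off one argument at a time. The pure membership obligations collapse to $\circ_\varphi s\,t \in X$ for all $s,t \in X$, which is exactly property~2 of candidate sets applied to~$X$. Monotonicity in each argument -- that $t_1 \ge^X t_2$ implies $\circ_\varphi s\,t_1 \ge^X \circ_\varphi s\,t_2$, and symmetrically when $s_1 \ge^X s_2$ with $t$ fixed -- follows from property~3 of comparison candidates combined with the reflexivity clause (property~6), the latter supplying the trivial inequality on the non-varying argument.

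There is no real obstacle: every clause reduces directly to one of the axioms, and the only care needed is the bookkeeping of peeling off the $\forall$ layer and the two arrow layers in the correct order, together with the recognition that the interpretation of the type variable $\alpha$ is exactly the pair $(X,\ge^X)$ given at the outset. The role of this lemma, parallel to Lemma~\ref{lem_circ} in the SN proof, is to supply the basic computability of $\circ$ for later use in the weak-monotonicity analogue of Lemma~\ref{lem_typable_computable}.
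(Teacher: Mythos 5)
Your proposal matches the paper's proof: both fix a closed type $\varphi$ and $X \in \Cb_\varphi$, set $\omega(\alpha)=\varphi$, $\xi(\alpha)=X$, reduce the membership obligations to property~2 of candidate sets, and establish monotonicity in each argument of $\circ$ via property~3 of comparison candidates together with reflexivity (property~6). The approach and the key steps are essentially identical, so no further comparison is needed.
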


\begin{proof}
  Let~$\tau$ be a closed type and let $X \in \Cb_{\tau}$. Let
  $\omega(\alpha) = \tau$ and $\xi(\alpha) = X$.

  Let $t_1,t_2 \in \val{\alpha}{\xi}{\omega} = X$. Then $\circ_{\tau}
  t_1 t_2 \in \val{\alpha}{\xi}{\omega}$ by property~2 of candidate
  sets.

  Let $t_2' \in \val{\alpha}{\xi}{\omega}$ be such that $t_2
  \gteq{\alpha}{\xi}{\omega} t_2'$, i.e., $t_2 \ge^X t_2'$. By
  properties~6 and~3 of comparison candidates we have we have
  $\circ_{\tau} t_1 t_2 \gteq{\alpha}{\xi}{\omega} \circ_{\tau} t_1
  t_2'$. This shows $\circ_{\tau} t_1 \in
  \val{\alpha\arrtype\alpha}{\xi}{\omega}$.

  Let $t_1' \in \val{\alpha}{\xi}{\omega}$ be such that $t_1
  \gteq{\alpha}{\xi}{\omega} t_1'$. Let $u \in
  \val{\alpha}{\xi}{\omega}$. By properties~6 and~3 of comparison
  candidates we have $\circ_{\tau} t_1 u \gteq{\alpha}{\xi}{\omega}
  \circ_{\tau} t_1' u$. Hence $\circ_{\tau} t_1
  \gteq{\alpha\arrtype\alpha}{\xi}{\omega} \circ_{\tau} t_1'$. This
  shows $\circ_{\tau} \in
  \val{\alpha\arrtype\alpha\arrtype\alpha}{\xi}{\omega}$.
\end{proof}

\begin{lemma}\label{lem_wm_lift}
  $\lift \in \val{\forall\alpha.\nat\arrtype\alpha}{}{}$.
\end{lemma}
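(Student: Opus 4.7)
The plan is to mimic the proof of the corresponding result in the computability setting (where $\lift\in\val{\forall\alpha.\nat\arrtype\alpha}{}{}$ followed from property~5 of the original candidates), but now using the candidate set / comparison candidate axioms from Definition~\ref{def_wm_candidate}. The statement involves no free type variables, so I can work with a trivial valuation throughout.

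First, unfold what it means for $\lift$ to lie in $\val{\forall\alpha.\nat\arrtype\alpha}{}{}$. By Definition~\ref{def_wm_valuation} applied at a $\forall$-type, this amounts to showing that $\lift:\forall\alpha.\nat\arrtype\alpha$ (immediate from Definition~\ref{def_iterms}) and that for every closed type~$\varphi$ of kind~$*$ and every $X\in\Cb_\varphi$, $\lift\varphi\in\val{\nat\arrtype\alpha}{\xi}{\omega}$, where $\xi(\alpha)=X$ and $\omega(\alpha)=\varphi$ (so $\val{\alpha}{\xi}{\omega}=X$ with comparison $\ge^X$, and $\val{\nat}{\xi}{\omega}$ is just the set of terms of type $\nat$ with $\gteq{\nat}{\xi}{\omega}={\succeq_\nat}$).

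Next, unfold membership in $\val{\nat\arrtype\alpha}{\xi}{\omega}$ using the arrow clause of Definition~\ref{def_wm_valuation}. This produces two obligations: (i) for every $s$ with $s:\nat$, we need $\lift\varphi\, s\in X$; and (ii) whenever $s_1\succeq_\nat s_2$ we need $\lift\varphi\, s_1\ \ge^X\ \lift\varphi\, s_2$. Obligation (i) is exactly property~3 of candidate sets (``if $t:\nat$ then $\lift_\tau t\in X$'') instantiated at $\tau=\varphi$ and $t=s$. Obligation (ii) is exactly property~4 of comparison candidates (``if $t_1\succeq_\nat t_2$ then $\lift_\tau(t_1)\ge^X\lift_\tau(t_2)$'') instantiated at $\tau=\varphi$.

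Since there is no residual comparison obligation at the $\forall$-level in Definition~\ref{def_wm_valuation} for set membership, these two points complete the proof. There is no real obstacle here: the statement is essentially a direct translation of the candidate-set/comparison-candidate axioms into the syntactic language of the interpretation $\val{\cdot}{}{}$, and the whole argument should read as a one-line appeal to properties~3 and~4 of Definition~\ref{def_wm_candidate}, in parallel with Lemma~\ref{lem_lift}.
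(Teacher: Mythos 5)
Your proposal is correct and matches the paper's own proof: both unfold the $\forall$- and arrow-clauses of Definition~\ref{def_wm_valuation} at a fixed closed $\varphi$ and $X \in \Cb_\varphi$ (with $\omega(\alpha)=\varphi$, $\xi(\alpha)=X$), then discharge the membership obligation by property~3 of candidate sets and the comparison obligation by property~4 of comparison candidates. Nothing is missing.
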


\begin{proof}
  Let~$\tau$ be a closed type and let $X \in \Cb_{\tau}$. Let
  $\omega(\alpha) = \tau$ and $\xi(\alpha) = X$. By property~4 of
  comparison candidates we have $\lift_{\tau}s_1
  \gteq{\alpha}{\xi}{\omega} \lift_{\tau}s_2$ for all $s_i : \nat$
  with $s_1 \succeq_\nat s_2$. It remains to show that $\lift_{\tau}s
  \in \val{\alpha}{\xi}{\omega} = X$ for all $s : \nat$. This follows
  from property~3 of candidate sets.
\end{proof}

\begin{lemma}\label{lem_wm_flatten}
  $\flatten \in \val{\forall\alpha.\alpha\arrtype\nat}{}{}$.
\end{lemma}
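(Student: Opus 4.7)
The plan is to prove this by unfolding the definition of $\val{\forall\alpha.\alpha\arrtype\nat}{}{}$ and then leveraging the properties of comparison candidates from Definition~\ref{def_wm_candidate}. This lemma is structurally parallel to Lemma~\ref{lem_wm_lift} (the statement for $\lift$), so I expect the same style of argument: fix a closed type together with a candidate, check the two clauses of the arrow-type valuation, and appeal to the appropriate clause in the definition of a comparison candidate.

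Concretely, I would first fix an arbitrary closed type~$\tau$ and an arbitrary $X \in \Cb_\tau$, and set $\omega(\alpha) = \tau$ and $\xi(\alpha) = X$. The goal then reduces to showing $\flatten_\tau \in \val{\alpha\arrtype\nat}{\xi}{\omega}$. Unfolding the arrow-type case of Definition~\ref{def_wm_valuation}, this requires two things: (i) for every $s \in \val{\alpha}{\xi}{\omega} = X$ we need $\flatten_\tau s \in \val{\nat}{\xi}{\omega}$, which is immediate because $\flatten_\tau s : \nat$ by typing, and $\val{\nat}{}{}$ contains all terms of type~$\nat$; and (ii) for all $s_1,s_2$ with $s_1 \gteq{\alpha}{\xi}{\omega} s_2$ (i.e.~$s_1 \ge^X s_2$) we need $\flatten_\tau s_1 \succeq_\nat \flatten_\tau s_2$, which is \emph{precisely} property~5 in the definition of a comparison candidate for~$X$.

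Having shown $\flatten_\tau \in \val{\alpha\arrtype\nat}{\xi}{\omega}$ for arbitrary $\tau$ and $X \in \Cb_\tau$, the defining clause of $\val{\forall\alpha.\,\cdot}{}{}$ yields $\flatten \in \val{\forall\alpha.\alpha\arrtype\nat}{}{}$, provided that $\flatten$ itself has the correct type, which holds by Definition~\ref{def_iterms}.

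I do not anticipate a genuine obstacle here: the design of comparison candidates has already baked in the monotonicity behaviour of $\flatten$ (and symmetrically of $\lift$) as properties~4 and~5, so the proof amounts to matching goals against those clauses. The only minor subtlety is to remember that $\val{\nat}{\xi}{\omega}$ is defined purely by the typing condition $t : \nat$ (no reachability or reducibility requirement), so clause~(i) is trivial; everything nontrivial is isolated in clause~(ii) and is discharged in one step by property~5.
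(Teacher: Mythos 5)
Your proposal is correct and matches the paper's own argument: the paper's proof is just the one-liner ``follows from definitions and property~5 of comparison candidates,'' and your unfolding (fix closed $\tau$ and $X \in \Cb_\tau$, note that membership in $\val{\nat}{\xi}{\omega}$ is only a typing condition, and discharge the comparison clause by property~5) is exactly the intended expansion, in the same style as the paper's proof of the companion lemma for $\lift$. No gaps.
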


\begin{proof}
  Follows from definitions and property~5 of comparison candidates.
\end{proof}

\begin{lemma}\label{lem_val_subst_wm}
  For any type constructors~$\sigma,\tau$ with $\alpha \notin
  \FTV(\tau)$, a mapping~$\omega$ closed for~$\sigma$ and for~$\tau$,
  and an $\omega$-valuation~$\xi$, we have:
  \[
  \val{\sigma[\subst{\alpha}{\tau}]}{\xi}{\omega} =
  \val{\sigma}{\xi[\subst{\alpha}{\val{\tau}{\xi}{\omega}}]}{\omega[\subst{\alpha}{\omega(\tau)}]}.
  \]
\end{lemma}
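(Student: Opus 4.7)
The plan is to follow the same scheme as the earlier Lemma~\ref{lem_val_subst} in the strong normalisation section, proceeding by induction on the structure of $\sigma$. The key twist here is that $\val{\cdot}{\xi}{\omega}$ now carries not just a set but also a comparison relation~$\gteq{\cdot}{\xi}{\omega}$, so at each inductive step I need to check that the two sides agree both as sets \emph{and} as relations. Let $\omega' = \omega[\subst{\alpha}{\omega(\tau)}]$ and $\xi' = \xi[\subst{\alpha}{\val{\tau}{\xi}{\omega}}]$. First note that $\omega$ is indeed closed for $\sigma[\subst{\alpha}{\tau}]$ (using that it is closed for $\sigma$ and for $\tau$) and $\omega'$ is closed for $\sigma$ (since $\omega(\tau)$ is closed), so both expressions are defined; moreover $\xi'$ is an $\omega'$-valuation because $\val{\tau}{\xi}{\omega} \in \Cb_{\omega(\tau)}$ by Lemma~\ref{lem_val_wm_computable}.

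The base cases are routine: if $\alpha \notin \FTV(\sigma)$, both $\xi'$ and $\omega'$ act identically to $\xi$ and $\omega$ on the free variables of $\sigma$, and the claim is immediate. If $\sigma = \alpha$, both sides unfold to $\val{\tau}{\xi}{\omega}$ by the substitution for $\alpha$ in $\xi'$. The case $\sigma = \nat$ is trivial since $\val{\nat}{\cdot}{\cdot}$ does not depend on the valuation.

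For $\sigma = \sigma_1 \arrtype \sigma_2$, I would unfold the definition of $\val{\sigma_1 \arrtype \sigma_2}{\cdot}{\cdot}$ on both sides and show each inclusion. Membership on the left says $t : \omega(\sigma_1[\subst{\alpha}{\tau}] \arrtype \sigma_2[\subst{\alpha}{\tau}])$, which equals $\omega'(\sigma_1 \arrtype \sigma_2)$ because $\omega(\tau)$ is closed; the closure-under-application and monotonicity-under-$\gteq{}{}{}$ conditions then match directly via the inductive hypothesis at $\sigma_1$ and $\sigma_2$. The associated relation $\gteq{\sigma_1\arrtype\sigma_2}{\cdot}{\cdot}$ is likewise defined purely in terms of the sets and relations at $\sigma_1, \sigma_2$, so it transports across the equation by the inductive hypothesis. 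The case $\sigma = \forall(\beta:\kappa)[\sigma']$ is entirely analogous, where without loss of generality we rename so that $\beta \neq \alpha$ and $\beta \notin \FTV(\tau)$, using that $(\xi'[\subst{\beta}{X}])$ and $(\omega'[\subst{\beta}{\varphi}])$ still relate correctly to $(\xi[\subst{\beta}{X}])$ and $(\omega[\subst{\beta}{\varphi}])$ via the bindings for $\alpha$.

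For $\sigma = \varphi\psi$ and $\sigma = \lambda(\beta:\kappa)\varphi$, the argument is a direct transcription of the corresponding cases from Lemma~\ref{lem_val_subst}: unfold the definition, use the inductive hypothesis on the immediate subterms, and observe that substitution commutes with the relevant operations because $\omega(\tau)$ is closed. I expect the main obstacle to be bookkeeping rather than mathematical depth -- specifically, carefully handling variable-renaming conventions in the $\forall$ and $\lambda$ cases so that the substitutions commute correctly, and making sure the comparison relations (and not just the underlying sets) are preserved. Once the pattern from the arrow case is in place, however, the remaining cases are mechanical.
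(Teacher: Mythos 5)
Your proposal matches the paper's proof: the paper also proceeds by induction on $\sigma$, explicitly "analogous to the proof of Lemma~\ref{lem_val_subst}", with the only new content being exactly what you identify — checking that the comparison relations $\gteq{\cdot}{\xi}{\omega}$ (not just the underlying sets) transport across the equation, which in the arrow case amounts to moving $s_1 \gteq{\sigma_1}{\xi'}{\omega'} s_2$ back and forth through the inductive hypothesis. Your additional remarks on well-definedness (that $\xi'$ is an $\omega'$-valuation via Lemma~\ref{lem_val_wm_computable}) are correct and consistent with the paper's setup.
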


\begin{proof}
  Let~$\omega' = \omega[\subst{\alpha}{\omega(\tau)}]$ and $\xi' =
  \xi[\subst{\alpha}{\val{\tau}{\xi}{\omega}}]$. The proof by
  induction on~$\sigma$ is analogous to the proof of
  Lemma~\ref{lem_val_subst}. The main difference is that in the case
  $\sigma = \sigma_1\arrtype\sigma_2$ we need to show that if e.g.~$t
  \in \val{\sigma[\subst{\alpha}{\tau}]}{\xi}{\omega}$ and $s_1
  \gteq{\sigma_1}{\xi'}{\omega'} s_2$ then $t s_1
  \gteq{\sigma_2}{\xi'}{\omega'} t s_2$. But then $s_1
  \gteq{\sigma_1[\subst{\alpha}{\tau}]}{\xi}{\omega} s_2$ by the
  inductive hypothesis, so $t s_1
  \gteq{\sigma_2[\subst{\alpha}{\tau}]}{\xi}{\omega} t s_2$ by
  definition. Hence $t s_1 \gteq{\sigma_2}{\xi'}{\omega'} t s_2$ by
  the inductive hypothesis.
\end{proof}

\begin{lemma}\label{lem_wm_forall}
  Let $\tau$ be a type constructor of kind~$\kappa$. Assume $\omega$
  is closed for $\forall\alpha[\sigma]$ and for~$\tau$.
  \begin{enumerate}
  \item If $t \in \val{\forall(\alpha:\kappa)[\sigma]}{\xi}{\omega}$
    then $t (\omega(\tau)) \in
    \val{\sigma[\subst{\alpha}{\tau}]}{\xi}{\omega}$.
  \item If $t_1 \gteq{\forall(\alpha:\kappa)[\sigma]}{\xi}{\omega}
    t_2$ then $t_1 (\omega(\tau))
    \gteq{\sigma[\subst{\alpha}{\tau}]}{\xi}{\omega} t_2
    (\omega(\tau))$.
  \end{enumerate}
\end{lemma}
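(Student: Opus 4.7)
The plan is to imitate the proof of Lemma~\ref{lem_forall} from the termination part, enhanced to also account for the comparison relation $\gteq{}{}{}$. The two key ingredients are already established: Lemma~\ref{lem_val_wm_computable} guarantees that $\val{\tau}{\xi}{\omega}$ is a candidate of $\omega(\tau)$, and Lemma~\ref{lem_val_subst_wm} supplies the substitution identity $\val{\sigma[\subst{\alpha}{\tau}]}{\xi}{\omega} = \val{\sigma}{\xi[\subst{\alpha}{\val{\tau}{\xi}{\omega}}]}{\omega[\subst{\alpha}{\omega(\tau)}]}$. By $\alpha$-conversion on the bound type variable of $\forall(\alpha:\kappa)[\sigma]$, we may assume $\alpha \notin \FTV(\tau)$ and $\omega(\alpha)=\alpha$, so that the hypotheses of Lemma~\ref{lem_val_subst_wm} are met.

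For part~1, write $X := \val{\tau}{\xi}{\omega}$. By Lemma~\ref{lem_val_wm_computable}, $X \in \Cb_{\omega(\tau)}$. Since $t \in \val{\forall(\alpha:\kappa)[\sigma]}{\xi}{\omega}$, the defining clause for $\val{\forall(\alpha:\kappa)[\sigma]}{\xi}{\omega}$, instantiated with the closed type constructor $\omega(\tau)$ and the candidate $X$, gives
\[
t\,(\omega(\tau)) \in \val{\sigma}{\xi[\subst{\alpha}{X}]}{\omega[\subst{\alpha}{\omega(\tau)}]}.
\]
Applying Lemma~\ref{lem_val_subst_wm} then rewrites the right-hand side as $\val{\sigma[\subst{\alpha}{\tau}]}{\xi}{\omega}$, as required.

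For part~2, the same instantiation applied to the defining clause of $\gteq{\forall(\alpha:\kappa)[\sigma]}{\xi}{\omega}$ yields
\[
t_1\,(\omega(\tau))\ \gteq{\sigma}{\xi[\subst{\alpha}{X}]}{\omega[\subst{\alpha}{\omega(\tau)}]}\ t_2\,(\omega(\tau)).
\]
Because Lemma~\ref{lem_val_subst_wm} establishes equality of the interpretations as pairs (that is, both the carrier sets and their associated comparison candidates coincide), the relation $\gteq{\sigma}{\xi[\subst{\alpha}{X}]}{\omega[\subst{\alpha}{\omega(\tau)}]}$ is literally the same relation as $\gteq{\sigma[\subst{\alpha}{\tau}]}{\xi}{\omega}$, so the desired inequality follows immediately.

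The only subtle point is the bookkeeping around $\alpha$-renaming so that Lemma~\ref{lem_val_subst_wm} is applicable; once that is handled, the argument is a direct two-step application of the cited lemmas, and there is no genuinely hard step. This mirrors the structure of Lemma~\ref{lem_forall} and should pose no new technical obstacles beyond verifying that the comparison clause in Lemma~\ref{lem_val_subst_wm} indeed transports along with the set equality.
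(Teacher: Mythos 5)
Your proposal is correct and follows essentially the same route as the paper, whose proof is exactly "analogous to Lemma~\ref{lem_forall}, using Lemma~\ref{lem_val_wm_computable} and Lemma~\ref{lem_val_subst_wm}": instantiate the defining clause of $\val{\forall(\alpha:\kappa)[\sigma]}{\xi}{\omega}$ (respectively of $\gteq{\forall(\alpha:\kappa)[\sigma]}{\xi}{\omega}$) at the closed constructor $\omega(\tau)$ and the candidate $\val{\tau}{\xi}{\omega}$, then transport along the substitution lemma, which (by the paper's convention that $\val{\sigma}{\xi}{\omega}$ denotes the pair with its comparison candidate) indeed identifies the relations as well as the sets.
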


\begin{proof}
  Analogous to the proof of Lemma~\ref{lem_forall}, using
  Lemma~\ref{lem_val_wm_computable} and Lemma~\ref{lem_val_subst_wm}.
\end{proof}

\begin{lemma}\label{lem_beta_val_wm}
  If $\omega$ is closed for~$\sigma,\sigma'$ and $\sigma =_\beta
  \sigma'$ then $\val{\sigma}{\xi}{\omega} =
  \val{\sigma'}{\xi}{\omega}$ and ${\gteq{\sigma}{\xi}{\omega}} =
      {\gteq{\sigma'}{\xi}{\omega}}$.
\end{lemma}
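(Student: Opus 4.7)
The plan is to mirror the proof of Lemma~\ref{lem_beta_val}, but to establish the equality for the comparison relations~$\gteq{\sigma}{\xi}{\omega}$ in parallel with the equality of candidate sets, since they are defined together in Definition~\ref{def_wm_valuation}.

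First I would reduce to the single-step case: if $\sigma =_\beta \sigma'$ then by confluence and termination of $\beta$-reduction on type constructors, there is a common reduct~$\sigma''$, and it suffices to prove both equalities for one $\beta$-step $\sigma \to_\beta \sigma''$. The general statement then follows by a straightforward induction on the lengths of the two reduction sequences, using that $\omega$ remains closed for every intermediate type constructor (as $\beta$-reduction on type constructors does not introduce fresh free variables). Moreover, once the redex case is handled, the compatible-closure cases ($\sigma = \varphi\psi$, $\sigma = \lambda\alpha.\varphi$, $\sigma = \forall\alpha.\varphi$, $\sigma = \varphi\arrtype\psi$) reduce to a simple induction on~$\sigma$, since $\val{\cdot}{\xi}{\omega}$ and $\gteq{\cdot}{\xi}{\omega}$ are defined compositionally in these cases, with the relation components depending only on the values and relations of the immediate subcomponents.

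The core case is $\sigma = (\lambda(\alpha:\kappa)\varphi)\psi$ reducing to $\sigma'' = \varphi[\subst{\alpha}{\psi}]$. By the defining clauses of Definition~\ref{def_wm_valuation},
\[
\val{(\lambda\alpha.\varphi)\psi}{\xi}{\omega} = \val{\lambda\alpha.\varphi}{\xi}{\omega}(\omega(\psi),\val{\psi}{\xi}{\omega}) = \val{\varphi}{\xi[\subst{\alpha}{\val{\psi}{\xi}{\omega}}]}{\omega[\subst{\alpha}{\omega(\psi)}]},
\]
and by Lemma~\ref{lem_val_subst_wm} the right-hand side equals $\val{\varphi[\subst{\alpha}{\psi}]}{\xi}{\omega}$, which gives the desired equality of sets. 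The same chain of equalities, read off the second components of the pairs $\val{\cdot}{\xi}{\omega}$, yields the equality of the comparison relations, since Lemma~\ref{lem_val_subst_wm} is stated as an identity of pairs $(\val{\cdot}{\xi}{\omega},\gteq{\cdot}{\xi}{\omega})$.

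The main obstacle I anticipate is purely bookkeeping: ensuring that in every step of the induction the mapping $\omega$ remains closed for the relevant subterms, and that the valuation $\xi$ remains an $\omega$-valuation after any reindexing $\xi[\subst{\alpha}{X}]$, so that Lemma~\ref{lem_val_subst_wm} genuinely applies. This is handled exactly as in Lemma~\ref{lem_beta_val}: one picks bound variables fresh so that $\omega(\alpha) = \alpha$ on the abstracted variable, and observes that $\beta$-reduction on type constructors neither frees nor introduces variables outside $\FTV(\sigma)$. With these precautions the argument for $\gteq{}{}{}$ is no harder than the argument for $\val{}{}{}$, since both are transported together by Lemma~\ref{lem_val_subst_wm}.
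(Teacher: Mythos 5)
Your proposal is correct and follows essentially the same route as the paper: the paper's proof is simply ``analogous to Lemma~\ref{lem_beta_val}, using Lemma~\ref{lem_val_subst_wm}'', i.e.\ reduce to the single $\beta$-redex case by induction on~$\sigma$ and on the length of the reductions to a common reduct, then unfold the defining clauses for $(\lambda\alpha.\varphi)\psi$ and apply Lemma~\ref{lem_val_subst_wm}, which (read as an identity of pairs, per the paper's notational convention) transports the comparison relation along with the candidate set exactly as you describe.
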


\begin{proof}
  Analogous to the proof of Lemma~\ref{lem_beta_val}, using
  Lemma~\ref{lem_val_subst_wm}.
\end{proof}

For two replacements $\delta_1 = \gamma_1 \circ \omega$ and $\delta_2
= \gamma_2 \circ \omega$ (see Definition~\ref{def_closure}) and an
$\omega$-valuation~$\xi$ we write $\delta_1 \gteq{\tau}{\xi}{\omega}
\delta_2$ iff $\delta_1(x) \gteq{\tau}{\xi}{\omega} \delta_2(x)$ for
each~$x : \tau$.

\begin{lemma}\label{lem_typable_wm_computable}
  Assume $t : \sigma$ and $\delta_1=\gamma_1\circ\omega$,
  $\delta_2=\gamma_2\circ\omega$ are replacements and~$\xi$ an
  $\omega$-valuation such that $\delta_1 \gteq{}{\xi}{\omega}
  \delta_2$ and $\omega$ is closed for~$\sigma$ and $\FTV(\omega(t)) =
  \emptyset$ and for all $x^\tau \in \FTV(t)$ we have $\delta_i(x) \in
  \val{\tau}{\xi}{\omega}$. Then $\delta_i(t) \in
  \val{\sigma}{\xi}{\omega}$ and $\delta_1(t)
  \gteq{\sigma}{\xi}{\omega} \delta_2(t)$.
\end{lemma}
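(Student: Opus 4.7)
The plan is structural induction on $t$ using the generation lemma, proving both conclusions simultaneously: that $\delta_i(t) \in \val{\sigma}{\xi}{\omega}$ for $i = 1,2$, and that $\delta_1(t)\gteq{\sigma}{\xi}{\omega}\delta_2(t)$. By Lemma~\ref{lem_beta_val_wm}, if the generation lemma gives type $\sigma'$ with $\sigma' =_\beta \sigma$, it suffices to prove the conclusions for~$\sigma'$. The architecture mirrors Lemma~\ref{lem_typable_computable}, but with an extra obligation at every step involving the comparison relation~$\gteq{}{\xi}{\omega}$.

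The leaf cases are essentially routine. For a variable $x^\tau \in \FV(t)$, both conclusions are immediate from the hypotheses $\delta_i(x) \in \val{\tau}{\xi}{\omega}$ and $\delta_1 \gteq{}{\xi}{\omega} \delta_2$. For a natural number constant $n$, we have $\delta_i(n) = n \in \val{\nat}{}{}$ and $n \succeq_\nat n$ by reflexivity. For the function symbols $\oplus, \otimes, \lift, \flatten$, closedness means $\delta_i$ acts trivially, and membership in the appropriate value set follows from Lemmas~\ref{lem_wm_circ}, \ref{lem_wm_lift} and~\ref{lem_wm_flatten}; the comparison is reflexivity (property~6 of comparison candidates, supplied by Lemma~\ref{lem_val_wm_computable}).

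The application case $t = t_1 t_2$ goes by combining the two inductive hypotheses: membership of $\delta_i(t) = \delta_i(t_1)\,\delta_i(t_2)$ is direct from the first clause of $\val{\tau\arrtype\sigma'}{\xi}{\omega}$, while the comparison is obtained in two steps. First, from $\delta_1(t_1)\gteq{\tau\arrtype\sigma'}{\xi}{\omega}\delta_2(t_1)$ and the definition of~$\gteq{\tau\arrtype\sigma'}{\xi}{\omega}$, we get $\delta_1(t_1)\,\delta_1(t_2)\gteq{\sigma'}{\xi}{\omega}\delta_2(t_1)\,\delta_1(t_2)$. Second, from the second clause of $\val{\tau\arrtype\sigma'}{\xi}{\omega}$ applied to $\delta_2(t_1)$ together with $\delta_1(t_2)\gteq{\tau}{\xi}{\omega}\delta_2(t_2)$, we obtain $\delta_2(t_1)\,\delta_1(t_2)\gteq{\sigma'}{\xi}{\omega}\delta_2(t_1)\,\delta_2(t_2)$. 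Transitivity (property~6 of comparison candidates) closes the chain. Type applications $t = \tapp{s}{\psi}$ are handled analogously by invoking Lemma~\ref{lem_wm_forall}.

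The main obstacle is the abstraction case $t = \abs{x:\sigma_1}{s}$, which requires three subgoals. To show $\delta_i(t) \in \val{\sigma_1\arrtype\sigma_2}{\xi}{\omega}$, we verify both clauses of that set. For the membership clause, given $s'\in\val{\sigma_1}{\xi}{\omega}$, we apply the IH to~$s$ with the extended replacement $\delta_i[x:=s']$ paired with itself (reflexivity of $\gteq{}{\xi}{\omega}$ is available on candidate sets), obtaining $\delta_i(s)[x:=s']\in\val{\sigma_2}{\xi}{\omega}$; since $\app{\delta_i(t)}{s'}\leadsto\delta_i(s)[x:=s']$, property~1 of candidate sets gives $\app{\delta_i(t)}{s'}\in\val{\sigma_2}{\xi}{\omega}$. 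For the monotonicity clause of~$\val{\sigma_1\arrtype\sigma_2}{\xi}{\omega}$, given $s_1\gteq{\sigma_1}{\xi}{\omega}s_2$, we apply the IH to~$s$ with the pair $\delta_i[x:=s_1], \delta_i[x:=s_2]$ (which respects the hypothesis on $\delta_1 \gteq{}{\xi}{\omega} \delta_2$ by our assumption on $s_1, s_2$), yielding $\delta_i(s)[x:=s_1]\gteq{\sigma_2}{\xi}{\omega}\delta_i(s)[x:=s_2]$, which lifts through the $\beta$-redexes by property~2 of comparison candidates. Finally, for $\delta_1(t)\gteq{\sigma_1\arrtype\sigma_2}{\xi}{\omega}\delta_2(t)$, we show that for each $s'\in\val{\sigma_1}{\xi}{\omega}$, $\app{\delta_1(t)}{s'}\gteq{\sigma_2}{\xi}{\omega}\app{\delta_2(t)}{s'}$ by applying the IH to~$s$ with the pair $\delta_1[x:=s'], \delta_2[x:=s']$ and again using property~2 to cross the $\beta$-redexes. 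The type abstraction case $t=\tabs{\alpha:\kappa}{s}$ is formally identical after replacing $s'$ by a closed type constructor~$\varphi$ of kind~$\kappa$ together with an $X\in\Cb_\varphi$, extending $\omega$ and $\xi$ accordingly.
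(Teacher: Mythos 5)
Your proposal is correct and follows essentially the same route as the paper's proof: induction on the structure of $t$ via the generation lemma, reduction to a $\beta$-equal type using Lemma~\ref{lem_beta_val_wm}, instantiating the induction hypothesis with extended replacements in the (type) abstraction cases and crossing the resulting $\beta$-redexes with properties~1 and~2 of candidate sets and comparison candidates, a transitivity chain through a mixed term for application, and Lemma~\ref{lem_wm_forall} for type application. The only deviations are cosmetic: your instantiation $\delta_i[x:=s_1],\delta_i[x:=s_2]$ for the monotonicity clause of the abstraction case is, if anything, the more precise choice, and your transitivity chain in the application case swaps the order of the two steps.
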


\begin{proof}
  Induction on the structure of~$t$. By the generation lemma for $t :
  \sigma$ there is a type~$\sigma'$ such that $\sigma' =_\beta \sigma$
  and $\FTV(\sigma') \subseteq \FTV(t)$ and one of the cases below
  holds. Note that $\omega$ is closed for~$\sigma'$, because it is
  closed for~$\sigma$ and $\FTV(\omega(t)) = \emptyset$. Hence by
  Lemma~\ref{lem_beta_val_wm} it suffices to show $\delta_i(t) \in
  \val{\sigma'}{\xi}{\omega}$ and $\delta_1(t)
  \gteq{\sigma'}{\xi}{\omega} \delta_2(t)$.
  \begin{itemize}
  \item If $t = x^{\sigma'}$ then $\delta_i(t) \in
    \val{\sigma'}{\xi}{\omega}$ by assumption. Also $\delta_1(t)
    \gteq{\sigma'}{\xi}{\omega} \delta_2(t)$ by assumption.
  \item If $t = n$ is a natural number and $\sigma' = \nat$ then
    $\delta_i(t) = t$ and thus $t \in \val{\nat}{}{}$ and $\delta_1(t)
    \gteq{\nat}{\xi}{\omega} \delta_2(t)$ by definition and the
    reflexivity of~$\gteq{\nat}{\xi}{\omega}$.
  \item If $t$ is a function symbol then the claim follows from
    Lemma~\ref{lem_wm_circ}, Lemma~\ref{lem_wm_lift} or
    Lemma~\ref{lem_wm_flatten}, and the reflexivity
    of~$\gteq{}{\xi}{\omega}$.
  \item If $t = \abs{x:\sigma_1}{u}$ then $\sigma' =
    \sigma_1\arrtype\sigma_2$ and $u : \sigma_2$. Let $s \in
    \val{\sigma_1}{\xi}{\omega}$ and
    $\delta_i'=\delta_i[\subst{x}{s}]$. This is well-defined because
    $s : \omega(\sigma_1)$ and~$\omega(x)$ has
    type~$\omega(\sigma_1)$. We have $\delta_1' \gteq{}{\xi}{\omega}
    \delta_2'$ by the reflexivity of~$\gteq{\sigma_1}{\xi}{\omega}$
    (Lemma~\ref{lem_val_wm_computable} and property~6 of comparison
    candidates). Hence by the inductive hypothesis $\delta_i'(u) \in
    \val{\sigma_2}{\xi}{\omega}$. We have $\delta_i(\abs{x}{u}) s
    \leadsto \delta_i'(u)$, so $\delta_i(\abs{x}{u}) s \in
    \val{\sigma_2}{\xi}{\omega}$ by Lemma~\ref{lem_val_wm_computable}
    and property~1 of candidate sets.

    Let $s_1,s_2 \in \val{\sigma_1}{\xi}{\omega}$ be such that $s_1
    \gteq{\sigma_1}{\xi}{\omega} s_2$. Let
    $\delta_i'=\delta_i[\subst{x}{s_i}]$. We have $\delta_1
    \gteq{}{\xi}{\omega} \delta_2$. Hence by the inductive hypothesis
    $\delta_1'(u)\gteq{\sigma_2}{\xi}{\omega}\delta_2'(u)$. We have
    $\delta_i(\abs{x}{u}) s_i \leadsto \delta_i'(u)$. Thus
    $\delta_1(t) s_1 \gteq{\sigma_2}{\xi}{\omega} \delta_2(t) s_2$ by
    Lemma~\ref{lem_val_wm_computable} and property~2 of comparison
    candidates.

    Finally, we show $\delta_1(t)
    \gteq{\sigma_1\arrtype\sigma_2}{\xi}{\omega} \delta_2(t)$. Let $s
    \in \val{\sigma_1}{\xi}{\omega}$ and
    $\delta_i'=\delta_i[\subst{x}{s}]$. We have $\delta_1'
    \gteq{}{\xi}{\omega} \delta_2'$. By the inductive hypothesis
    $\delta_1'(u) \gteq{\sigma_2}{\xi}{\omega} \delta_2'(u)$. We have
    $\delta_i(\abs{x}{u}) s \leadsto \delta_i'(u)$. Thus $\delta_1(t)
    s \gteq{\sigma_2}{\xi}{\omega} \delta_2(t) s$ by
    Lemma~\ref{lem_val_wm_computable} and property~2 of comparison
    candidates.
  \item If $t = \tabs{\alpha:\kappa}{u}$ then $\sigma' =
    \forall\alpha[\tau]$ and $u : \tau$. Let $\psi$ be a closed type
    constructor of kind~$\kappa$ and let $X \in \Cb_\psi$. Let
    $\omega' = \omega[\subst{\alpha}{\psi}]$ and
    $\xi'=\xi[\subst{\alpha}{X}]$. Then $\omega'$ is closed for~$\tau$
    and $\FTV(\omega'(u)) = \emptyset$. Let
    $\delta_i'=\gamma_i\circ\omega'$. By the inductive hypothesis
    $\delta_i'(u) \in \val{\tau}{\xi'}{\omega'}$ and $\delta_1'(u)
    \gteq{\tau}{\xi'}{\omega'} \delta_2'(u)$. We have
    $\delta_i(\tabs{\alpha}{u}) \psi \leadsto \delta_i'(u)$. Hence
    $\delta_i(\tabs{\alpha}{u}) \psi \in \val{\tau}{\xi'}{\omega'}$ by
    Lemma~\ref{lem_val_wm_computable} and property~1 of candidate
    sets. Thus $\delta_i(\tabs{\alpha}{u}) \in
    \val{\forall\alpha[\tau]}{\xi}{\omega}$. Also
    $\delta_1(\tabs{\alpha}{u}) \psi \gteq{\tau}{\xi'}{\omega'}
    \delta_2(\tabs{\alpha}{u}) \psi$ by
    Lemma~\ref{lem_val_wm_computable} and property~2 of comparison
    candidates. Thus $\delta_1(\tabs{\alpha}{u})
    \gteq{\forall\alpha[\tau]}{\xi'}{\omega'}
    \delta_2(\tabs{\alpha}{u})$.
  \item If $t = t_1 t_2$ then $t_1 : \tau\arrtype\sigma'$ and $t_2 :
    \tau$ and $\FTV(\tau) \subseteq \FTV(t)$. Hence~$\omega$ is closed
    for~$\tau$ and for~$\tau\arrtype\sigma'$. By the inductive
    hypothesis $\delta_i(t_1) \in
    \val{\tau\arrtype\sigma'}{\xi}{\omega}$ and $\delta_i(t_2) \in
    \val{\tau}{\xi}{\omega}$ and $\delta_1(t_1)
    \gteq{\tau\arrtype\sigma'}{\xi}{\omega} \delta_2(t_1)$ and
    $\delta_1(t_2) \gteq{\tau}{\xi}{\omega} \delta_2(t_2)$. By the
    definition of $\val{\tau\arrtype\sigma'}{\xi}{\omega}$ we have
    $\delta_i(t) = \delta_i(t_1)\delta_i(t_2) \in
    \val{\sigma'}{\xi}{\omega}$, and $\delta_1(t_1)\delta_1(t_2)
    \gteq{\sigma'}{\xi}{\omega} \delta_1(t_1)\delta_2(t_2)$. By the
    definition of~$\gteq{\tau\arrtype\sigma'}{\xi}{\omega}$ we have
    $\delta_1(t_1)\delta_2(t_2)\gteq{\sigma'}{\xi}{\omega}\delta_2(t_1)\delta_2(t_2)$. Hence
    $\delta_1(t)\gteq{\sigma'}{\xi}{\omega}\delta_2(t)$ by the
    transitivity of~$\gteq{\sigma'}{\xi}{\omega}$.
  \item If $t = s \psi$ then $s : \forall\alpha[\tau]$ and $\sigma' =
    \tau[\subst{\alpha}{\psi}]$. By the inductive hypothesis
    $\delta_i(s) \in \val{\forall\alpha[\tau]}{\xi}{\omega}$ and
    $\delta_1(s) \gteq{\forall\alpha[\tau]}{\xi}{\omega}
    \delta_2(s)$. Because $\FTV(\omega(t)) = \emptyset$, the mapping
    $\omega$ is closed for~$\psi$. So by Lemma~\ref{lem_wm_forall} we
    have $\delta_i(t) = \delta_i(s) \omega(\psi) \in
    \val{\tau[\subst{\alpha}{\psi}]}{\xi}{\omega}$ and $\delta_1(t)
    \gteq{\tau[\subst{\alpha}{\psi}]}{\xi}{\omega} \delta_2(t)$.\qedhere
  \end{itemize}
\end{proof}

\begin{corollary}\label{cor_typable_wm_computable}
  If $t$ is closed and $t : \sigma$ then $t \in \val{\sigma}{}{}$.
\end{corollary}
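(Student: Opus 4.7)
The plan is to obtain this as a direct specialisation of Lemma~\ref{lem_typable_wm_computable} using constant substitutions. First I would reduce to the case of a closed type: since $t$ is closed we have $\FV(t) = \FTV(t) = \emptyset$, so by the generation lemma (Lemma~\ref{lem:generation}) there is a type $\sigma'$ with $\sigma' =_\beta \sigma$ and $\FTV(\sigma') \subseteq \FTV(t) = \emptyset$, hence $\sigma'$ is closed. By Lemma~\ref{lem_beta_val_wm} we have $\val{\sigma}{}{} = \val{\sigma'}{}{}$, so it suffices to prove $t \in \val{\sigma'}{}{}$, and the notation $\val{\sigma'}{}{}$ is justified.

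Next I would apply Lemma~\ref{lem_typable_wm_computable} with the following concrete choices: $\omega$ the identity type constructor substitution, $\gamma_1 = \gamma_2 = \gamma$ where $\gamma(x) = \lift_\tau(0)$ for each variable $x : \tau$, $\delta_1 = \delta_2 = \gamma \circ \omega$, and $\xi$ any $\omega$-valuation. By Lemma~\ref{lem_val_wm_computable} and property~3 of candidate sets we have $\gamma(x) = \lift_\tau(0) \in \val{\tau}{\xi}{\omega}$ for every variable. Reflexivity of $\gteq{\tau}{\xi}{\omega}$ on $\val{\tau}{\xi}{\omega}$ (property~6 of comparison candidates, combined again with Lemma~\ref{lem_val_wm_computable}) then gives $\delta_1 \gteq{}{\xi}{\omega} \delta_2$. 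The remaining hypotheses hold trivially: $\omega$ is closed for $\sigma'$ since $\sigma'$ is closed, $\FTV(\omega(t)) = \FTV(t) = \emptyset$, and the condition about free variables of $t$ is vacuous since $\FV(t) = \emptyset$.

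Lemma~\ref{lem_typable_wm_computable} thus yields $\delta_1(t) \in \val{\sigma'}{\xi}{\omega}$. Since $t$ has no free variables, $\delta_1(t) = t$, and since $\sigma'$ is closed, $\val{\sigma'}{\xi}{\omega}$ does not depend on $\xi$ or $\omega$ and equals $\val{\sigma'}{}{}$. Hence $t \in \val{\sigma'}{}{} = \val{\sigma}{}{}$, as required. I do not expect any real obstacle here: this is a straightforward packaging of the main computability lemma, and the only step requiring any care is the preliminary move to a closed $\beta$-equivalent type so that the notation $\val{\sigma}{}{}$ is well-defined.
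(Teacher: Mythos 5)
Your overall route is exactly the intended one: the paper offers no separate argument for this corollary, treating it as the immediate specialisation of Lemma~\ref{lem_typable_wm_computable} to trivial replacements, and your preliminary move to a closed $\beta$-equal type $\sigma'$ via Lemma~\ref{lem:generation} and Lemma~\ref{lem_beta_val_wm} is a sensible (and harmless) extra precaution.

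However, your concrete instantiation is not actually licensed by the machinery as stated. You take $\omega$ to be the identity and ``$\xi$ any $\omega$-valuation'', but an $\omega$-valuation for the identity must satisfy $\xi(\alpha) \in \Cb_{\alpha}$ for every type constructor variable $\alpha$, i.e.\ it requires candidates over \emph{open} type constructors; the paper never constructs such candidates (all its candidate constructions, e.g.\ Lemma~\ref{lem_val_wm_computable}, presuppose $\omega$ closed for the type constructor in question), so the existence of your $\xi$ is unjustified. Relatedly, the hypothesis $\delta_1 \gteq{}{\xi}{\omega} \delta_2$ is defined via $\gteq{\tau}{\xi}{\omega}$ for each $x:\tau$, and $\val{\tau}{\xi}{\omega}$, $\gteq{\tau}{\xi}{\omega}$ are only defined when $\omega$ is closed for $\tau$; with the identity $\omega$ this fails for every variable whose declared type is open, so the hypothesis you claim to verify by ``reflexivity'' is not even well-formed there. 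Both problems disappear with the instantiation the paper's lemmas do support: set $\omega(\alpha) = \chi_\kappa$ for each type constructor variable $\alpha$ of kind $\kappa$, $\xi(\alpha) = \val{\chi_\kappa}{}{}$ (a candidate by Lemma~\ref{lem_val_wm_computable}, since $\chi_\kappa$ is closed), and $\gamma_1 = \gamma_2$ mapping each variable of type $\omega(\tau)$ to $\lift_{\omega(\tau)}(0)$, which lies in $\val{\tau}{\xi}{\omega}$ by property~3 of candidate sets, with $\delta_1 \gteq{}{\xi}{\omega} \delta_2$ by reflexivity. Since $t$ is closed, $\omega(t) = t$ and $\delta_i(t) = t$, so Lemma~\ref{lem_typable_wm_computable} yields $t \in \val{\sigma'}{\xi}{\omega} = \val{\sigma'}{}{} = \val{\sigma}{}{}$ exactly as you conclude.
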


\begin{lemma}\label{lem_gteq_to_succeq}
  If $\sigma$ is a closed type and $t_1 \geq_\sigma t_2$ then
  $t_1 \succeq_{\sigma} t_2$.
\end{lemma}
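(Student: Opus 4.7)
The plan is to proceed by coinduction on the definition of $\succeq^0$ in Definition~\ref{def:succ}. Since $\sigma$ is closed, by Lemma~\ref{lem_beta_val_wm} we may replace $\sigma$ with $\nf_\beta(\sigma)$ and so assume $\sigma$ is already in $\beta$-normal form. The principal case is when $t_1, t_2$ are closed (the general case follows by first showing that $\gteq{\sigma}{}{}$ is preserved under any closure~$\cl$, a routine consequence of Lemma~\ref{lem_typable_wm_computable} applied to $t_1$ and $t_2$ along with a replacement whose values populate the corresponding candidate sets, and then appealing to the closed case on $\cl(t_1), \cl(t_2)$).

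For the closed case I would set
\[
R = \{\, (t_1, t_2, \sigma) \mid \sigma \text{ closed and $\beta$-normal}, \; t_1, t_2 \text{ closed}, \; t_1 \gteq{\sigma}{}{} t_2 \,\}
\]
and show that $R$ is closed under the (unfolded) rules of $\succeq^0$, so that $R \subseteq {\succeq^0}$ by coinduction. The case analysis on the shape of~$\sigma$ goes as follows. If $\sigma = \nat$, then $\gteq{\nat}{}{}$ is by definition equal to $\succeq_\nat$, which on closed terms is the numeric comparison $t_1{\downarrow} \ge t_2{\downarrow}$. If $\sigma = \sigma_1 \arrtype \sigma_2$, any closed $q \in \Iterms^f_{\sigma_1}$ satisfies $q : \sigma_1$ and so lies in $\val{\sigma_1}{}{}$ by Corollary~\ref{cor_typable_wm_computable}; then the definition of $\gteq{\sigma_1\arrtype\sigma_2}{}{}$ gives $t_1 q \gteq{\sigma_2}{}{} t_2 q$ with both sides closed, yielding $(t_1 q, t_2 q, \sigma_2) \in R$. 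If $\sigma = \forall(\alpha:\kappa)\rho$, for any closed $\tau \in \Tc_\kappa$ I take $X = \val{\tau}{}{}$, which is a candidate in $\Cb_\tau$ by Lemma~\ref{lem_val_wm_computable}; then $t_1 \tau \gteq{\rho}{[\alpha:=X]}{[\alpha:=\tau]} t_2 \tau$, and by Lemma~\ref{lem_val_subst_wm} followed by Lemma~\ref{lem_beta_val_wm} this coincides with $\gteq{\nf_\beta(\rho[\alpha:=\tau])}{}{}$, supplying the required coinductive premise.

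The main obstacle is the polymorphic case: one has to produce a concrete witness candidate $X$ for every closed type argument~$\tau$, and then align the abstract type $\rho$ (under a valuation-side substitution $[\alpha:=X]$ and a type-side substitution $[\alpha:=\tau]$) with the type $\nf_\beta(\rho[\alpha:=\tau])$ that the coinductive rule for $\succeq^0$ demands; this is precisely the content of Lemmas~\ref{lem_val_subst_wm} and~\ref{lem_beta_val_wm}. A secondary obstacle is the bookkeeping needed to reduce the open-term statement to the closed-term statement via closures, which requires the preservation of $\gteq{\sigma}{}{}$ under substitution of candidate-set elements; this is a direct application of Lemma~\ref{lem_typable_wm_computable}.
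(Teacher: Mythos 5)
Your proof follows essentially the same route as the paper's: a coinductive argument on the shape of the ($\beta$-normalised, closed) type, using Corollary~\ref{cor_typable_wm_computable} in the arrow case and Lemma~\ref{lem_val_wm_computable} together with Lemmas~\ref{lem_val_subst_wm} and~\ref{lem_beta_val_wm} in the quantifier case, exactly as in the paper. The only divergence is your parenthetical reduction of open terms to closed ones via closures, which the paper does not attempt (its later use of the lemma is reduced to closed terms anyway) and whose justification via Lemma~\ref{lem_typable_wm_computable} is not immediate -- that lemma compares two substitutions applied to one term, not one closure applied to two different terms -- but this side remark does not affect the core argument.
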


\begin{proof}
  By coinduction. By Lemma~\ref{lem_beta_val_wm} we may assume
  that~$\sigma$ is in $\beta$-normal form. The case $\sigma=\alpha$ is
  impossible because~$\sigma$ is closed. If $\sigma = \nat$ then
  ${\geq_\nat} = {\succeq_\nat}$.

  Assume $\sigma=\sigma_1\arrtype\sigma_2$. Let $u : \sigma_1$ be
  closed. By Corollary~\ref{cor_typable_wm_computable} we have $u \in
  \val{\sigma_1}{}{}$. Hence $t_1 u \geq_{\sigma_2} t_2 u$. By the
  coinductive hypothesis $t_1 u \succeq_{\sigma_2} t_2 u$. This
  implies $t_1 \succeq_{\sigma} t_2$.

  Assume $\sigma=\forall(\alpha:\kappa)\tau$. Let $\varphi$ be a
  closed type constructor of kind~$\kappa$. By
  Lemma~\ref{lem_val_wm_computable} we have $\val{\varphi}{}{} \in
  \Cb_\varphi$. By the definition of~$\geq_{\forall\alpha\tau}$ and
  Lemma~\ref{lem_val_subst_wm} we have $t_1 \varphi
  \geq_{\tau[\subst{\alpha}{\varphi}]} t_2 \varphi$. Note that
  $\tau[\subst{\alpha}{\varphi}]$ is still closed. Hence by the
  coinductive hypothesis $t_1 \varphi
  \succeq_{\tau[\subst{\alpha}{\varphi}]} t_2 \varphi$. This implies
  $t_1 \succeq_{\sigma} t_2$.
\end{proof}

\begin{corollary}\label{cor_gteq_succeq}
  If~$\sigma$ is a closed type then ${\geq_{\sigma}} =
  {\succeq_\sigma}$.
\end{corollary}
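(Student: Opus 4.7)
The plan is to establish both inclusions ${\geq_\sigma} \subseteq {\succeq_\sigma}$ and ${\succeq_\sigma} \subseteq {\geq_\sigma}$ for closed~$\sigma$. The first inclusion is exactly the content of Lemma~\ref{lem_gteq_to_succeq}, which was just proven, so nothing more is needed there.

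For the reverse inclusion, I would argue as follows. Suppose $t_1 \succeq_\sigma t_2$ with $\sigma$ a closed type. To apply property~1 of comparison candidates (Definition~\ref{def_wm_candidate}), which says ${\succeq} \cap X \times X \subseteq {\ge^X}$ for any candidate set~$X$, I need to know that $t_1,t_2$ both live in the candidate set $\val{\sigma}{}{}$. First, $\val{\sigma}{}{}$ is indeed a candidate by Lemma~\ref{lem_val_wm_computable}. Second, $t_1$ and $t_2$ are typable of type~$\sigma$, so by Corollary~\ref{cor_typable_wm_computable} they belong to $\val{\sigma}{}{}$ (for non-closed terms, one would first close them by substituting default elements of each $\val{\tau}{}{}$ for free variables and use that $\succeq$ is preserved under substitutions arising from closures by Definition~\ref{def_closure}, but it suffices to treat the closed case since closed types with closed terms are the setting of the corollary). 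Applying property~1 of the comparison candidate $\geq_\sigma$ for $X = \val{\sigma}{}{}$ yields $t_1 \geq_\sigma t_2$, as required.

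The whole argument is essentially one line: the forward inclusion is Lemma~\ref{lem_gteq_to_succeq}, and the backward inclusion is a direct consequence of Corollary~\ref{cor_typable_wm_computable} together with property~1 of comparison candidates from Definition~\ref{def_wm_candidate}. The only subtle point, which will be the main thing to get right, is the bookkeeping around open terms versus closed terms: one must make sure that the ambient notion of $\succeq_\sigma$ being related to $\geq_\sigma$ is set up on the same class of terms. For terms with free (term or type) variables one reduces to the closed case via closures~$\cl$ in the same way as in Definition~\ref{def_closure} and the remarks immediately following it, so there is no real difficulty.
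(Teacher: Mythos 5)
Your proof is correct and follows essentially the same route as the paper: the forward inclusion is Lemma~\ref{lem_gteq_to_succeq}, and the reverse inclusion is property~1 of comparison candidates applied to the candidate $\val{\sigma}{}{}$ (Lemma~\ref{lem_val_wm_computable}). Your extra appeal to Corollary~\ref{cor_typable_wm_computable} to get $t_1,t_2 \in \val{\sigma}{}{}$ just makes explicit a membership step the paper leaves implicit, and restricting to closed terms is exactly the setting in which the corollary is used.
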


\begin{proof}
  Follows from Lemma~\ref{lem_gteq_to_succeq},
  Lemma~\ref{lem_val_wm_computable} and property~1 of comparison
  candidates.
\end{proof}

{ \renewcommand{\thelemma}{\ref{lem_succeq_subst}}
\begin{lemma}[Weak monotonicity]
  If $s \succeq_\sigma s'$ then $t[\subst{x}{s}] \succeq_\tau t[\subst{x}{s'}]$.
\end{lemma}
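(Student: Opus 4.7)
The plan is to exploit the reducibility-candidate machinery developed in this subsection, applying Lemma \ref{lem_typable_wm_computable} to two replacements that agree everywhere except on $x$. First I would reduce to a closed instance. Let $\cl = \gamma \circ \omega$ be an arbitrary closure; by Definition \ref{def_closure}, proving $\cl(t[\subst{x}{s}]) \succeq^0_{\nf_\beta(\cl(\tau))} \cl(t[\subst{x}{s'}])$ for every $\cl$ suffices. Taking $\cl$ to act as the identity on $x$ (without loss of generality) and setting $\cl_i = \cl[\subst{x}{\cl(s)}]$ for $i=1$ and $\cl_i = \cl[\subst{x}{\cl(s')}]$ for $i=2$, we have $\cl(t[\subst{x}{s}]) = \cl_1(t)$ and $\cl(t[\subst{x}{s'}]) = \cl_2(t)$. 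So it suffices to show $\cl_1(t) \succeq^0_{\nf_\beta(\cl(\tau))} \cl_2(t)$.

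Next I would set up the candidate parameters. Keep $\omega$ from $\cl$ and define the valuation $\xi$ by $\xi(\alpha) = \val{\omega(\alpha)}{}{}$; this is a legitimate $\omega$-valuation by Lemma \ref{lem_val_wm_computable}. Using Lemma \ref{lem_val_subst_wm} together with the closedness of $\omega(\alpha)$ gives, by induction on types, the equality $\val{\rho}{\xi}{\omega} = \val{\omega(\rho)}{}{}$ for every type $\rho$. Consequently, for every free variable $y : \rho$ of $t$ distinct from $x$, Corollary \ref{cor_typable_wm_computable} yields $\cl_1(y) = \cl(y) = \cl_2(y) \in \val{\rho}{\xi}{\omega}$, and reflexivity of $\gteq{\rho}{\xi}{\omega}$ (Lemma \ref{lem_val_wm_computable} plus property~6 of comparison candidates) gives $\cl_1(y) \gteq{\rho}{\xi}{\omega} \cl_2(y)$.

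For $x$ itself, the hypothesis $s \succeq_\sigma s'$ unfolds, for the closure $\cl$, to $\cl(s) \succeq^0_{\nf_\beta(\omega(\sigma))} \cl(s')$. Corollary \ref{cor_gteq_succeq} converts this to $\cl(s) \geq_{\nf_\beta(\omega(\sigma))} \cl(s')$, and Lemma \ref{lem_beta_val_wm} lets us re-read this as $\cl_1(x) \gteq{\sigma}{\xi}{\omega} \cl_2(x)$. We therefore have $\cl_1 \gteq{}{\xi}{\omega} \cl_2$ together with $\cl_i(y) \in \val{\rho}{\xi}{\omega}$ for each free variable of $t$, and $\omega$ closes $\tau$ and $t$. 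Lemma \ref{lem_typable_wm_computable} then yields $\cl_1(t) \gteq{\tau}{\xi}{\omega} \cl_2(t)$, and Lemma \ref{lem_gteq_to_succeq} applied at the closed type $\omega(\tau)$ (via Lemma \ref{lem_beta_val_wm}) converts this into $\cl_1(t) \succeq^0_{\nf_\beta(\cl(\tau))} \cl_2(t)$, as required.

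The main obstacle I anticipate is the bookkeeping identity $\val{\rho}{\xi}{\omega} = \val{\omega(\rho)}{}{}$ that mediates between the two-level presentation of the candidate sets (parametric in $\xi,\omega$) and the closed-type primitive relations $\succeq_\sigma$ and $\ge_\sigma$. This is conceptually straightforward — it is essentially a repeated application of Lemma \ref{lem_val_subst_wm} combined with the fact that $\omega(\alpha)$ is closed — but one must be careful about $\beta$-normal forms, for which Lemma \ref{lem_beta_val_wm} is the key tool. Once this identification is in place, the proof is essentially a direct instantiation of Lemma \ref{lem_typable_wm_computable}.
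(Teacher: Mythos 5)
Your proof is correct and follows essentially the same route as the paper's: reduce to a closed instance, transfer the hypothesis $s \succeq_\sigma s'$ into the candidate comparison relation via Corollary~\ref{cor_gteq_succeq}, apply Lemma~\ref{lem_typable_wm_computable} to two replacements that differ only at $x$, and transfer back with Lemma~\ref{lem_gteq_to_succeq}. The only difference is bookkeeping: the paper closes everything first and so invokes Lemma~\ref{lem_typable_wm_computable} with a trivial $\omega$, never needing the auxiliary identity $\val{\rho}{\xi}{\omega} = \val{\omega(\rho)}{}{}$, whereas you keep a general closure and discharge that identity by iterating Lemma~\ref{lem_val_subst_wm} (together with Lemma~\ref{lem_beta_val_wm}), which indeed works since each $\omega(\alpha)$ is closed.
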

\addtocounter{theorem}{-1}}

\begin{proof}
  It suffices to show this when
  $s,s',t[\subst{x}{s}],t[\subst{x}{s'}]$ and $\sigma,\tau$ are all
  closed. Assume $s \succeq_\sigma s'$. Then $s \geq_{\sigma} s'$ by
  Corollary~\ref{cor_gteq_succeq}. Thus $t[\subst{x}{s}] \geq_{\tau}
  t[\subst{x}{s'}]$ follows from
  Lemma~\ref{lem_typable_wm_computable}. Hence $t[\subst{x}{s}]
  \succeq_\tau t[\subst{x}{s'}]$ by Corollary~\ref{cor_gteq_succeq}.
\end{proof}

\subsection{Proofs for Section~\ref{sec_rule_removal}}\label{app_rule_removal}

{ \renewcommand{\thelemma}{\ref{lem:plusparts}}
\begin{lemma}
For all types $\sigma$, terms $s,t$ of type $\sigma$ and natural
numbers $n > 0$:
\begin{enumerate}
\item $s \oplus_{\sigma} t \succeq s$ and $s \oplus_{\sigma} t \succeq
  t$;
\item $s \oplus_{\sigma} (\lift_{\sigma} n) \succ s$ and
  $(\lift_{\sigma} n) \oplus_{\sigma} t \succ t$.
\end{enumerate}
\end{lemma}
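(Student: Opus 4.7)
The plan is to reduce both parts to an elementary arithmetic comparison by applying Lemma~\ref{lem_succ_explicit}: we need to show that for every closure~$\cl$ and every sequence $u_1,\ldots,u_k$ of closed terms/type constructors such that the result has type~$\nat$, the normal form on the left is $\ge$ (resp.~$>$) the one on the right.

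The key calculation I would establish first is a small reduction lemma: for any closure $\cl$ and arguments $u_1,\ldots,u_k$ making $\cl(s\oplus_\sigma t)\,u_1\cdots u_k$ typable at~$\nat$, one has
\[
\cl(s\oplus_\sigma t)\,u_1\cdots u_k \;\arrW^*\; \bigl(\cl(s)\,u_1\cdots u_k\bigr) \oplus_{\nat} \bigl(\cl(t)\,u_1\cdots u_k\bigr).
\]
This is proved by induction on $k$: at each step the head $\oplus$ sits at some closed type $\cl(\sigma)$ applied to the next argument~$u_i$, and rule~6 (if $u_i$ is a term) or rule~7 (if $u_i$ is a type) of $\arrW$ pushes the argument inside both operands. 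When $k = 0$ the type must already be $\nat$, so the statement is trivial. By Lemma~\ref{lem_unique_final} and Lemma~\ref{lem_final_nat}, the right-hand side then reduces (via rule~5) to the sum of two natural numbers, so
\[
\bigl(\cl(s\oplus_\sigma t)\,u_1\cdots u_k\bigr)\da \;=\; \bigl(\cl(s)\,u_1\cdots u_k\bigr)\da + \bigl(\cl(t)\,u_1\cdots u_k\bigr)\da.
\]
Part~(1) is then immediate from $n+m \ge n$ and $n+m \ge m$ for $n,m \in \mathbb{N}$, applying Lemma~\ref{lem_succ_explicit}.

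For part~(2), I would prove a similar auxiliary reduction $\cl(\lift_\sigma n)\,u_1\cdots u_k \arrW^* n$ whenever the whole term is typable at~$\nat$: inducting on~$k$, each step uses rule~12 or~13 to push $\lift$ past an abstraction, with rule~11 at the base. Combined with the previous identity, this yields
\[
\bigl(\cl(s \oplus_\sigma \lift_\sigma n)\,u_1\cdots u_k\bigr)\da \;=\; \bigl(\cl(s)\,u_1\cdots u_k\bigr)\da + n,
\]
and since $n > 0$ the strict inequality follows by Lemma~\ref{lem_succ_explicit}. The two remaining inequalities (with $\oplus$ flipped) are symmetric, using commutativity provided by the corresponding argument or by Lemma~\ref{lem:approxproperties}(\ref{lem:approx:symmetry}) together with Lemma~\ref{lem_succ_red}.

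The only real obstacle is carrying out the $\arrW^*$-reduction cleanly in the presence of type substitutions coming from the closure: one has to make sure that as each~$u_i$ is absorbed, the residual type subscript on $\oplus$ (and on $\lift$) is exactly the type required for the next rule to fire, which follows because $\cl(\sigma)$ is closed and the typing discipline matches the shape of~$u_i$. Once that bookkeeping is done, everything reduces to arithmetic on~$\mathbb{N}$.
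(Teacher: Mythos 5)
Your argument is correct and is in substance the paper's own proof: the paper establishes exactly the same key reductions, $(s\oplus t)\,u_1\cdots u_m \arrW^* (s\,u_1\cdots u_m)\oplus(t\,u_1\cdots u_m)$ and $(s\oplus\lift(n))\,u_1\cdots u_m \arrW^* (s\,u_1\cdots u_m)\oplus n$, and then compares natural numbers, merely packaging the quantification over closed argument sequences as a coinduction on the (closed, $\beta$-normal) type rather than routing it through Lemma~\ref{lem_succ_explicit} --- a reformulation the paper itself notes is interchangeable. One cosmetic point: rules 6/7 (and 12/13) only produce an abstraction, so it is the $\beta$-step (rule 4) that actually absorbs each $u_i$, but this does not affect your argument.
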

\addtocounter{theorem}{-1}}

\begin{proof}
  It suffices to prove this for closed $s,t$ and closed $\sigma$ in
  $\beta$-normal form.
  \begin{enumerate}
  \item By coinduction we show $(s \oplus t) u_1 \ldots u_m
    \succeq_\sigma s u_1 \ldots u_m$ for closed $u_1,\ldots,u_m$. The
    second case is similar.

    First note that $(s \oplus t) u_1 \ldots u_m \leadsto^* s u_1
    \ldots u_m \oplus t u_1 \ldots u_m$.

    If $\sigma = \nat$ then $((s \oplus t) u_1 \ldots u_m)\da = (s u_1
    \ldots u_m)\da + (t u_1 \ldots u_m)\da \ge (s u_1 \ldots
    u_m)\da$. Hence $(s \oplus t) u_1 \ldots u_m) \succeq_\nat s u_1
    \ldots u_m$.

    If $\sigma = \tau_1\arrtype\tau_2$ then by the coinductive
    hypothesis $(s \oplus t) u_1 \ldots u_m q \succeq_{\tau_2} s u_1
    \ldots u_m q$ for any $q \in \Iterms_{\tau_1}^f$. Hence $(s \oplus
    t) u_1 \ldots u_m \succeq_\sigma s u_1 \ldots u_m$.

    If $\sigma = \forall(\alpha:\kappa)[\tau]$ then by the coinductive
    hypothesis $(s \oplus t) u_1 \ldots u_m \xi \succeq_{\sigma'} s
    u_1 \ldots u_m \xi$ for any closed $\xi \in \Tc_\kappa$, where
    $\sigma' = \tau[\subst{\alpha}{\xi}]$. Hence $(s \oplus t) u_1
    \ldots u_m \succeq_\sigma s u_1 \ldots u_m$.
  \item By coinduction we show $(s \oplus (\lift n)) u_1 \ldots u_m
    \succeq_\sigma s u_1 \ldots u_m$ for closed $u_1,\ldots,u_m$. The
    second case is similar.

    Note that $(s \oplus (\lift n)) u_1 \ldots u_m \leadsto^* s u_1
    \ldots u_m \oplus n$. From this the case $\sigma=\nat$
    follows. The other cases follow from the coinductive hypothesis,
    like in the first point above.\qedhere
  \end{enumerate}
\end{proof}

{ \renewcommand{\thelemma}{\ref{lem:approxproperties}}
\begin{lemma}
For all types $\sigma$ and all terms $s,t,u$ of type $\sigma$, we
have:
\begin{enumerate}
\item $s \oplus_\sigma t \approx t \oplus_\sigma s$ and $s
  \otimes_\sigma t \approx t \otimes_\sigma s$;
\item $s \oplus_\sigma (t \oplus_\sigma u) \approx (s \oplus_\sigma t)
  \oplus_\sigma u$ and $s \otimes_\sigma (t \otimes_\sigma u) \approx
  (s \otimes_\sigma t) \otimes_\sigma u$;
\item $s \otimes_\sigma (t \oplus_\sigma u) \approx (s \otimes_\sigma
  t) \oplus_\sigma (s \otimes_\sigma u)$;
\item $(\lift_\sigma 0) \oplus_\sigma s \approx s$ and $(\lift_\sigma
  1) \otimes_\sigma s \approx s$.
\end{enumerate}
\end{lemma}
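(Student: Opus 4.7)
The plan is to prove each of the four claims by coinduction, following the template established by Lemmas \ref{lem:liftgreater} and \ref{lem_flatten_succ}. Without loss of generality we may assume $s,t,u$ closed and $\sigma$ a closed type in $\beta$-normal form (by Definition \ref{def_closure}, since both sides of each claim are invariant under closures, and the argument is symmetric under substitution of any closure for the free variables). Then we proceed by coinduction on the structure of the $\beta$-normal $\sigma$, using the three-way case split $\sigma=\nat$, $\sigma=\tau_1\arrtype\tau_2$, $\sigma=\forall(\alpha:\kappa).\tau$ that is available by the lemma just above Lemma~\ref{lem_well_founded}.

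For the base case $\sigma=\nat$: by Lemma~\ref{lem_final_nat}, $s\da,t\da,u\da$ are natural numbers, and by clause~5 of $\arrW$ together with Corollary~\ref{cor_succ_da} we have $(s\oplus_\nat t)\da = s\da + t\da$ and $(s\otimes_\nat t)\da = s\da \times t\da$. The four properties then reduce respectively to commutativity and associativity of $+$ and $\times$, distributivity of $\times$ over $+$, and $0,1$ being additive/multiplicative identities in~$\mathbb{N}$. For the neutrality claims we additionally use clause~11 of $\arrW$ to rewrite $\lift_\nat(n)\arrW n$ before appealing to arithmetic, and Lemma~\ref{lem_leadsto_to_approx} to absorb the rewrite steps.

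For $\sigma=\tau_1\arrtype\tau_2$: given any closed $q\in\Iterms^f_{\tau_1}$, clause~6 of $\arrW$ gives $(s\circ_\sigma t)q \arrW \circ_{\tau_2}(sq)(tq)$; combined with Lemma~\ref{lem_succ_red}, this transforms the desired equivalence at type~$\sigma$ between the two sides, applied to~$q$, into the same equivalence at type~$\tau_2$ between $sq$, $tq$ (and $uq$), which is handed to the coinductive hypothesis. The quantified case $\sigma=\forall(\alpha:\kappa).\tau$ is fully analogous, using clause~7 and instantiating with any closed $\rho\in\Tc_\kappa$, so that the conclusion reduces to the same identity at the smaller-type level $\nf_\beta(\tau[\alpha:=\rho])$ and the coinductive hypothesis applies. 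For neutrality at higher types one further uses clauses 12 and~13 to push $\lift_\sigma$ under the $\lambda$ or $\Lambda$ that is exposed by clauses~6/7.

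The only subtlety I anticipate is the bookkeeping for distributivity, where the reduction $(s \otimes_\sigma (t\oplus_\sigma u))q$ produces $(sq)\otimes_{\tau_2}((tq)\oplus_{\tau_2}(uq))$ while $((s\otimes_\sigma t)\oplus_\sigma(s\otimes_\sigma u))q$ produces $((sq)\otimes_{\tau_2}(tq))\oplus_{\tau_2}((sq)\otimes_{\tau_2}(uq))$, so one has to check that a \emph{single} $\arrW^*$ normalisation on each side lines up the copies of $sq$ correctly before invoking the coinductive hypothesis at~$\tau_2$. Because the $\arrW$-rules for $\circ_\sigma$ are congruence-closed and left-to-right, this is purely a matter of careful term-surgery rather than any real difficulty; apart from this, the argument is routine, and in particular no new computability candidates are needed beyond those developed for weak monotonicity.
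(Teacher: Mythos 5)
Your proposal is essentially the paper's own argument: assume $s,t,u$ and $\sigma$ closed with $\sigma$ in $\beta$-normal form, proceed coinductively with the case split $\nat$ / arrow / $\forall$, use the $\arrW$-rules for $\oplus,\otimes,\lift$ to unfold applications, and settle the $\nat$ leaves by ordinary arithmetic on normal forms. Two remarks, though. First, your justification for the $\forall$ case via ``the smaller-type level $\nf_\beta(\tau[\alpha:=\rho])$'' is wrong as stated: because of impredicativity this type need not be smaller than $\forall\alpha.\tau$ (indeed $\rho$ may be $\forall\alpha.\tau$ itself), which is precisely why a structural induction on $\sigma$ is unavailable and the definition of $\succeq$ is coinductive in the first place; the coinductive hypothesis does not need, and should not appeal to, any size decrease. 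Second, the paper strengthens the coinductive statement to ``$(s\circ t)\,w_1\cdots w_n \succeq (t\circ s)\,w_1\cdots w_n$ for all closed $w_1,\ldots,w_n$'' (as in Lemmas~\ref{lem:liftgreater} and~\ref{lem:plusparts}), so that in the arrow and $\forall$ cases the coinductive hypothesis applies \emph{verbatim} to the longer vector and reduction facts are only used at the $\nat$ leaves. Your version instead applies the coinductive hypothesis to the reducts $sq,tq,uq$ and then repairs the mismatch with Lemma~\ref{lem_succ_red} inside the coinductive step; this is a coinduction ``up to $\leadsto$-expansion'' and needs a word of justification (e.g.\ close your candidate relation under $\leadsto^*$-expansion on both sides before checking it is a post-fixpoint, which works since $\approx$ contains $\arrW$ by Lemma~\ref{lem_leadsto_to_approx}). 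With the size-based wording dropped and either the vector strengthening or the up-to argument made explicit, the proof is correct and coincides with the paper's.
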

\addtocounter{theorem}{-1}}

\begin{proof}
  The proof is again analogous to the proof of
  Lemma~\ref{lem:plusparts}. For instance, for closed $s,t$ and closed
  $\sigma$ in $\beta$-normal form, we show by coinduction that $(s
  \oplus t) w_1 \ldots w_n \succeq (t \oplus s) w_1 \ldots w_n$ for
  closed $w_1,\ldots,w_n$ (and then the same with $\preceq$).
\end{proof}

{ \renewcommand{\thelemma}{\ref{lem_lift_approx}}
\begin{lemma}
  \begin{enumerate}
  \item $\lift_\sigma(n+m) \approx_\sigma (\lift_\sigma n)
    \oplus_\sigma (\lift_\sigma n)$.
  \item $\lift_\sigma(n m) \approx_\sigma (\lift_\sigma n)
    \otimes_\sigma (\lift_\sigma n)$.
  \item $\flatten_\sigma(\lift_\sigma(n)) \approx n$.
  \end{enumerate}
\end{lemma}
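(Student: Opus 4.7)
The plan is to follow the same template used in the proofs of Lemma~\ref{lem:plusparts} and Lemma~\ref{lem:approxproperties}: reduce to the case where $\sigma$ is closed and in $\beta$-normal form, then proceed by coinduction on $\sigma$ (equivalently, by appealing to Lemma~\ref{lem_succ_explicit}). The passage from $\arrW$-reductions to $\approx$ at the end is always by Lemma~\ref{lem_leadsto_to_approx} together with transitivity and compatibility (Lemmas~\ref{lem_transitive} and~\ref{lem:compatibility}).

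For parts~1 and~2, I would prove coinductively that for all closed interpretation terms $w_1,\ldots,w_k$, $(\lift_\sigma(n) \oplus_\sigma \lift_\sigma(m))\,w_1\cdots w_k \approx_\sigma \lift_\sigma(n+m)\,w_1\cdots w_k$, and analogously for $\otimes$. The reduction rules for $\oplus_{\tau_1\arrtype\tau_2}$, $\oplus_{\forall\alpha\sigma}$, $\lift_{\tau_1\arrtype\tau_2}$, and $\lift_{\forall\alpha\sigma}$ push these operators uniformly through applications and type applications, so the cases $\sigma = \tau_1\arrtype\tau_2$ and $\sigma = \forall\alpha\tau$ reduce directly to the coinductive hypothesis at a subtype (as in the proof of Lemma~\ref{lem:liftgreater}). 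At the base case $\sigma = \nat$, both sides $\arrW$-reduce to the same natural number using the base rules $\oplus_\nat\,n\,m \arrW n+m$ and $\otimes_\nat\,n\,m \arrW n \times m$.

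For part~3, since $\flatten_\sigma(\lift_\sigma(n))$ already has type $\nat$, no coinduction is needed: by Corollary~\ref{cor_succ_da} it suffices to show that its $\arrW$-normal form equals $n$. I would argue by induction on the size of the $\beta$-normal form of $\sigma$. The base case $\sigma = \nat$ is immediate: $\flatten_\nat(\lift_\nat(n)) \arrW \lift_\nat(n) \arrW n$. For $\sigma = \tau_1 \arrtype \tau_2$, the rule for $\flatten_{\tau_1\arrtype\tau_2}$ followed by unfolding $\lift_\sigma(n) \arrW \abs{x}{\lift_{\tau_2}(n)}$ and a $\beta$-step gives $\flatten_\sigma(\lift_\sigma(n)) \arrW^* \flatten_{\tau_2}(\lift_{\tau_2}(n))$, to which the inductive hypothesis applies. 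The case $\sigma = \forall(\alpha:\kappa)\tau$ is handled analogously by instantiating with $\chi_\kappa$; as in the proof of Lemma~\ref{lem_well_founded}, $\nf_\beta(\tau[\subst{\alpha}{\chi_\kappa}])$ is strictly smaller than $\sigma$, so the induction goes through.

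The main obstacle will be carefully tracking the reduction sequences in the arrow and $\forall$ cases of part~3, where $\flatten$/$\lift$ unfoldings interleave with $\beta$-steps both on terms and on the type annotations subscripting $\lift$ and $\flatten$. These steps are routine once one observes, via the same argument used in Lemma~\ref{lem_well_founded}, that the induction measure on $\sigma$ decreases strictly after each peeling.
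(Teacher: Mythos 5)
Your proposal is correct. For part~3 it is essentially the paper's own argument: reduce to closed $\beta$-normal $\sigma$, then induct on $\sigma$, peeling off the arrow case via $\flatten_{\tau_1\arrtype\tau_2}$ and a $\lift$-unfolding plus one $\beta$-step, and the quantifier case via $\chi_\kappa$; your explicit appeal to the size measure from Lemma~\ref{lem_well_founded} to justify that $\nf_\beta(\tau[\subst{\alpha}{\chi_\kappa}])$ is smaller is a detail the paper leaves implicit, and it is the right thing to say. For parts~1 and~2 you take a mildly different route: you run the coinductive template of Lemmas~\ref{lem:liftgreater}, \ref{lem:plusparts} and \ref{lem:approxproperties}, pushing an arbitrary closed argument sequence through $\oplus/\otimes$ and $\lift$ and settling the $\nat$ base case by normal forms (note the type subscript on $\approx$ should then be the residual type after applying $w_1,\dots,w_k$, not $\sigma$), whereas the paper instead proves by induction on $\sigma$ that both sides have \emph{identical} $\arrW$-normal forms and concludes by Corollary~\ref{cor_succ_da} together with reflexivity of $\approx$. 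Both arguments go through on the same computations; yours has the advantage of uniformity with the neighbouring lemmas, while the paper's normal-form comparison is slightly more economical for an $\approx$-statement since it avoids establishing the two $\succeq$ directions coinductively. (Incidentally, you silently corrected the statement's typo $(\lift_\sigma n)\oplus_\sigma(\lift_\sigma n)$ to the intended $(\lift_\sigma n)\oplus_\sigma(\lift_\sigma m)$, matching the main-text version of the lemma, which is the right reading.)
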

\addtocounter{theorem}{-1}}

\begin{proof}
  It suffices to show this for closed~$\sigma$ in $\beta$-normal
  form. For the first two points, one proves by induction on~$\sigma$
  that
  $(\lift_\sigma(n+m))\da = (\lift_\sigma n \oplus_\sigma \lift_\sigma
  n)\da$ (analogously for multiplication). This suffices by
  Corollary~\ref{cor_succ_da} and the reflexivity of~$\approx$.

  For the third point, one proceeds by induction on~$\sigma$. For
  example, if $\sigma = \sigma_1\to\sigma_2$ then
  $\flatten_\sigma(\lift_\sigma(n)) \leadsto^*
  \flatten_{\sigma_2}((\lambda x . \lift_{\sigma_2} n)
  (\lift_{\sigma_1} 0)) \leadsto \flatten_{\sigma_2} (\lift_{\sigma_1}
  n)$. Then the claim follows from the inductive hypothesis and
  Lemma~\ref{lem_succ_red}.
\end{proof}

\clearpage
\section{Proving the inequalities in \refsec{examples}}\label{app_ineqs}

The system IPC2 can be seen as a PFS with the following type constructors:
\[
\begin{array}{c}
\Sigma^T_\kappa = \{\quad
  \bot : *,\quad
  \mathtt{or} : * \arrkind * \arrkind *,\quad
  \mathtt{and} : * \arrkind * \arrkind *,\quad
  \exists : (* \arrkind *) \arrkind *
  \}
\end{array}
\]
We also have the following function symbols:
\[
\begin{array}{rclcrcl}
@ & : & \forall \alpha \forall \beta . (\alpha \arrtype \beta) \arrtype \alpha \arrtype \beta &
\quad &
\epsilon & : & \forall \alpha . \bot \arrtype \alpha \\

\mathtt{tapp} & : & \forall \alpha : * \arrkind * . \forall \beta .
  (\forall \beta [\alpha \beta]) \arrtype \alpha \beta &
\quad &
\proj^1 & : & \forall \alpha \forall \beta . \mathtt{and}\, \alpha\, \beta \arrtype \alpha \\

\mathtt{pair} & : & \forall \alpha \forall \beta . \alpha \arrtype \beta \arrtype
  \mathtt{and}\, \alpha\, \beta &
\quad &
\proj^2 & : & \forall \alpha \forall \beta . \mathtt{and}\, \alpha\, \beta \arrtype \beta \\

\mathtt{case} & : & \forall \alpha \forall \beta \forall \gamma . \mathtt{or}\, \alpha\, \beta \arrtype
  (\alpha \arrtype \gamma) \arrtype (\beta \arrtype \gamma) \arrtype \gamma &
\quad &
\mathtt{in}^1 & : & \forall \alpha \forall \beta . \alpha \arrtype
  \mathtt{or}\, \alpha\, \beta \\

\mathtt{let} & : & \forall \alpha : * \arrkind * . \forall \beta .
  (\exists (\alpha)) \arrtype
  (\forall \gamma . \alpha \gamma \arrtype \beta) \arrtype \beta &
\quad &
\mathtt{in}^2 & : & \forall \alpha \forall \beta . \beta \arrtype
  \mathtt{or}\, \alpha\, \beta \\

\mathtt{ext} & : & \forall \alpha : * \arrkind * . \forall \beta . \alpha \beta \arrtype
  \exists (\alpha)
\end{array}
\]

The following are the \emph{core} rules ($\beta$-reductions):
\[
\begin{array}{rclrcl}
@_{\sigma,\tau}(\abs{x}{s},t) & \red & s[x:=t] &
\mathtt{case}_{\sigma,\tau,\rho}(\mathtt{in}^1_{\sigma,\tau}(u),
  \abs{x}{s},\abs{y}{t}) & \red & s[x:=u] \\

\mathtt{tapp}_{\abs{\alpha}{\sigma},\tau}(\tabs{\alpha}{s}) & \red &
  s[\alpha:=\tau] &
\mathtt{case}_{\sigma,\tau,\rho}(\mathtt{in}^2_{\sigma,\tau}(u),
  \abs{x}{s},\abs{y}{t}) & \red & t[x:=u] \\

\proj^1_{\sigma,\tau}(\mathtt{pair}_{\sigma,\tau}(s,t)) & \red & s &
\mathtt{let}_{\varphi,\rho}(\mathtt{ext}_{\varphi,\tau}(s),
  \tabs{\alpha}{\abs{x:\varphi \alpha}{t}}) & \red & t[\alpha:=\tau][x:=s] \\

\proj^2_{\sigma,\tau}(\mathtt{pair}_{\sigma,\tau}(s,t)) & \red & t \\
\end{array}
\]

Then the next rules simplify proofs from contradiction:
\[
\begin{array}{rclrcl}
\epsilon_\tau(\epsilon_\bot(s)) & \red & \epsilon_\tau(s) \\
\proj^1_{\sigma,\tau}(\epsilon_{\mathtt{and}\,\sigma\,\tau}(s)) & \red &
  \epsilon_\sigma(s) \\
\proj^2_{\sigma,\tau}(\epsilon_{\mathtt{and}\,\sigma\,\tau}(s)) & \red &
  \epsilon_\tau(s) \\
@_{\sigma,\tau}(\epsilon_{\sigma \arrtype \tau}(s),t) & \red &
  \epsilon_\tau(s) \\
\mathtt{tapp}_{\varphi,\tau}(
  \epsilon_{\quant{\alpha}{\varphi\alpha}}(s)) & \red &
  \epsilon_{\varphi\tau}(s) \\
\mathtt{case}_{\sigma,\tau,\rho}(\epsilon_{\mathtt{or}\,\sigma\,\tau}(
  u),\abs{x:\sigma}{s},\abs{y:\tau}{t}) & \red & \epsilon_\rho(u) \\
\mathtt{let}_{\varphi,\rho}(\epsilon_{\exists(\varphi)}(s),\tabs{\alpha}{\abs{x:\varphi\alpha}{t}}) & \red &
  \epsilon_\rho(s) \\
\end{array}
\]

When a $\mathtt{case}$ occurs in a first argument, then it is shifted
to the root of the term.
\[
\begin{array}{l}
\epsilon_\rho(\mathtt{case}_{\sigma,\tau,\bot}(u,\abs{x:\sigma}{s},
  \abs{y:\tau}{t})) \red
  \mathtt{case}_{\sigma,\tau,\rho}(u,\abs{x:\sigma}{\epsilon_\rho(s)},
  \abs{y:\tau}{\epsilon_\rho(t)}) \\
@_{\rho,\pi}(\mathtt{case}_{\sigma,\tau,\rho \arrtype \pi}(u,
  \abs{x:\sigma}{s},\abs{y:\tau}{t}),v) \red
  \mathtt{case}_{\sigma,\tau,\pi}(u,
  \abs{x:\sigma}{@_{\rho,\pi}(s,v)},\abs{y:\tau}{@_{\rho,\pi}(t,v)}) \\
\mathtt{tapp}_{\varphi,\pi}(\mathtt{case}_{\sigma,\tau,
  \quant{\alpha}{\varphi\alpha}}(u,\abs{x:\sigma}{s},\abs{y:\tau}{t}))
  \red
  \mathtt{case}_{\sigma,\tau,\varphi\pi}(u,
  \abs{x:\sigma}{\mathtt{tapp}_{\varphi,\pi}(s)},
  \abs{y:\tau}{\mathtt{tapp}_{\varphi,\pi}(t)}) \\
\proj^1_{\rho,\pi}(\mathtt{case}_{\sigma,\tau,\mathtt{and}\,\rho\,\pi}(u,
  \abs{x:\sigma}{s},\abs{y:\tau}{t})) \red
  \mathtt{case}_{\sigma,\tau,\rho}(u,\abs{x:\sigma}{\proj^1_{\rho,\pi}(s)},
  \abs{y:\tau}{\proj^1_{\rho,\pi}(t)}) \\
\proj^2_{\rho,\pi}(\mathtt{case}_{\sigma,\tau,\mathtt{and}\,\rho,\pi}(u,
  \abs{x:\sigma}{s},\abs{y:\tau}{t}))\red
  \mathtt{case}_{\sigma,\tau,\pi}(u,\abs{x:\sigma}{\proj^2_{\rho,\pi}(s)},
  \abs{y:\tau}{\proj^2_{\rho,\pi}(t)}) \\
\mathtt{case}_{\rho,\pi,\xi}(\mathtt{case}_{\sigma,\tau,\mathtt{or}\,
  \rho\,\pi}(u,\abs{x:\sigma}{s},\abs{y:\tau}{t}),\abs{z:\rho}{v},
  \abs{a:\pi}{w}) \red \\
\phantom{AB}
  \mathtt{case}_{\sigma,\tau,\xi}(u,\abs{x:\sigma}{
    \mathtt{case}_{\rho,\pi,\xi}(s,\abs{z:\rho}{v},\abs{a:\pi}{w})},
    \abs{y:\tau}{\mathtt{case}_{\rho,\pi,\xi}(t,\abs{z:\rho}{v},
    \abs{a:\pi}{w})}) \\
\mathtt{let}_{\varphi,\rho}(
  \mathtt{case}_{\sigma,\tau,\exists\varphi}(
  u,\abs{x:\sigma}{s},\abs{y:\tau}{t}),v) \red \\
\phantom{AB}
  \mathtt{case}_{\sigma,\tau,\rho}(u,
  \abs{x:\sigma}{\mathtt{let}_{\varphi,\rho}(s,v)},
  \abs{y:\tau}{\mathtt{let}_{\varphi,\rho}(t,v)}) \\
\end{array}
\]

And the same happens for the $\mathtt{let}$:
\[
\begin{array}{l}
\epsilon_\tau(\mathtt{let}_{\varphi,\bot}(s,\tabs{\alpha}{
  \abs{x:\varphi\alpha}{t}})) \red
  \mathtt{let}_{\varphi,\tau}(s,\tabs{\alpha}{\abs{x:\varphi\alpha}{
  \epsilon_\tau(t)}}) \\
@_{\tau,\rho}(\mathtt{let}_{\varphi, \tau \arrtype
  \rho}(s,\tabs{\alpha}{\abs{x:\varphi\alpha}{t}}),u) \red
  \mathtt{let}_{\varphi,\rho}(s,\tabs{\alpha}{\abs{x:\varphi\alpha}{
  @_{\tau,\rho}(t,u)}}) \\
\mathtt{tapp}_{\psi,\rho}(\mathtt{let}_{\varphi,
  \forall\beta[\psi\beta]}(s,\tabs{\alpha}{\abs{x:\varphi\alpha}{t}}))
  \red
  \mathtt{let}_{\varphi,\psi\rho}(s,\tabs{\alpha}{\abs{x:\varphi\alpha}{
  \mathtt{tapp}_{\psi,\rho}(t)}}) \\
\proj^1_{\tau,\rho}(\mathtt{let}_{\varphi,
  \mathtt{and}\,\tau,\rho}(s,\tabs{\alpha}{\abs{x:\varphi\alpha}{t}}))
  \red
  \mathtt{let}_{\varphi,\tau}(s,\tabs{\alpha}{\abs{x:\varphi\alpha}{
  \proj^1_{\tau,\rho}(t)}}) \\
\proj^2_{\tau,\rho}(\mathtt{let}_{\varphi,
  \mathtt{and}\,\tau\,\rho}(s,\tabs{\alpha}{\abs{x:\varphi\alpha}{t}}))
  \red
  \mathtt{let}_{\varphi,\rho}(s,\tabs{\alpha}{\abs{x:\varphi\alpha}{
  \proj^2_{\tau,\rho}(t)}}) \\
\mathtt{case}_{\tau,\rho,\pi}(
  \mathtt{let}_{\varphi,\mathtt{or}\,\tau\,\rho}(s,\tabs{\alpha}{
  \abs{x:\varphi\alpha}{t}}),\abs{x:\tau}{u},\abs{y:\rho}{v})
  \red \\
\phantom{AB}
  \mathtt{let}_{\varphi,\pi}(s,\tabs{\alpha}{\abs{x:\varphi\alpha}{
  \mathtt{case}_{\tau,\rho,\pi}(t,\abs{x:\tau}{u},\abs{y:\rho}{v})}}) \\
\mathtt{let}_{\psi,\rho}(\mathtt{let}_{\varphi,\exists\psi}(s,
  \tabs{\alpha}{\abs{x:\varphi\alpha}{t}}),u) \red
  \mathtt{let}_{\varphi,\rho}(s,\tabs{\alpha}{\abs{x:\varphi\alpha}{
  \mathtt{let}_{\psi,\rho}(t,u)}})
  \phantom{ABCDEFGHIJ}
\end{array}
\]

It is this last group of rules that is not oriented by our method.
For all other rules $\ell \red r$ we have $\interpret{\ell} \succ
\interpret{r}$, as demonstrated below.

We will use the fact that $\beta$-reduction provides the derived
reduction rules for $\pi^i$ and $\mathtt{let}$.

\begin{lemma}\label{lem:encodings_reduce}
  $\pi^i(\pair{t_1}{t_2}) \leadsto_\beta^* t_i$ and
  $\xlet{}{\expair{\tau}{t}}{\alpha,x}{s} \leadsto_\beta^*
  s[\subst{\alpha}{\tau}][\subst{x}{t}]$.
\end{lemma}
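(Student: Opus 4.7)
The plan is to prove both statements by simply unfolding the encodings and performing the sequence of $\beta$-reduction steps one by one, using that each rule of the form $\app{(\abs{x:\rho}{u})}{v} \arrW u[\subst{x}{v}]$ and $\tapp{(\tabs{\alpha}{u})}{\varphi} \arrW u[\subst{\alpha}{\varphi}]$ is already present among clauses 4 of the definition of $\arrW$, which in particular includes $\beta$-reduction.

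For the first claim, I would unfold $\pi^i_{\sigma,\tau}(\pair{t_1}{t_2}_{\sigma,\tau})$ to obtain the explicit term $\app{\app{(\tabs{p}{\abs{x:\sigma\arrtype\tau\arrtype p}{x t_1 t_2}})}{\rho_i}}{\abs{x}{\abs{y}{x_i}}}$, where $\rho_1=\sigma$, $\rho_2=\tau$, $x_1=x$, $x_2=y$. I would then perform a type application $\beta$-step, followed by a term application $\beta$-step, which yields $(\abs{x}{\abs{y}{x_i}})\cdot t_1 \cdot t_2$; two further $\beta$-steps collapse this to $t_i$.

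For the second claim, I would analogously expand $\xlet{\rho}{\expair{\tau}{t}_{\Sigma\alpha.\sigma}}{\alpha,x:\sigma}{s}$ into $\app{\app{(\tabs{p}{\abs{x:\forall\alpha.\sigma\arrtype p}{x \tau t}})}{\rho}}{\tabs{\alpha}{\abs{x:\sigma}{s}}}$. A type-$\beta$-step followed by a term-$\beta$-step reduces this to $\app{\app{(\tabs{\alpha}{\abs{x:\sigma}{s}})}{\tau}}{t}$. A further type-$\beta$-step gives $\app{(\abs{x:\sigma[\subst{\alpha}{\tau}]}{s[\subst{\alpha}{\tau}]})}{t}$, and a final term-$\beta$-step yields $s[\subst{\alpha}{\tau}][\subst{x}{t}]$, as required.

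There is no real obstacle here: the proof is a routine unfolding of definitions and a short explicit sequence of $\beta$-reductions. The only mildly delicate point is keeping track of the type annotations on bound variables after the substitutions (i.e.\ verifying that the successive types are well-formed under $\alpha$-conversion), but this is immediate from the substitution lemma (Lemma~\ref{lem:substitution}) and can be handled implicitly by our convention of treating preterms up to $\equiv$.
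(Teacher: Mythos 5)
Your proof is correct and is exactly the routine argument the paper leaves implicit: the paper states this lemma without proof, as an immediate consequence of the standard System~$\mathtt{F}$ encodings, and your explicit unfolding followed by the alternating type- and term-$\beta$-steps of $\arrW$ (clause~4) is precisely that calculation. The only cosmetic caveat is that the applications to $\rho_i$, $\rho$ and $\tau$ are type applications ($*$) rather than term applications, and that bound variables must be taken fresh so the final substitutions do not capture, both of which you handle implicitly and harmlessly.
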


In the proofs below, we will often use that $\lift(n) \otimes s \oplus
t \succeq s$ if $n \geq 1$, which holds because $\lift(n) \otimes s
\oplus t \approx \lift(1) \otimes s \oplus (\lift(n-1) \otimes s
\oplus t) \approx s \oplus (\lift(n-1) \otimes s \oplus t) \succeq
s$, using the calculation rules. Having this, the core rules and the
contradiction simplifications are all quite easy due to the choice of
$\Termmap$:
\begin{itemize}
\item $\interpret{@_{\sigma,\tau}(\abs{x}{s},t)} \succ
  \interpret{s[x:=t]}$ \\ We have
  $\interpret{@_{\sigma,\tau}(\abs{x:\sigma}{s},t)} \arrrbeta
  \lift_{\typeinterpret{\tau}}(2) \otimes
  (\ (\abs{x:\typeinterpret{\sigma}}{\interpret{s}}) \cdot
  \interpret{t}\ ) \oplus
  \lift_{\typeinterpret{\tau}}(\langle\text{something}\rangle \oplus
  1) \leadsto
  \lift(2) \otimes \interpret{s}[x:=\interpret{t}] \oplus
  \lift(\langle\text{something}\rangle \oplus 1) \succ
  \interpret{s}[x:=\interpret{t}]$, which equals $\interpret{s[x:=t]}$
  by Lemma \ref{lem:substitutioninterpret}.
\item $\interpret{\mathtt{tapp}_{\abs{\alpha}{\sigma},\tau}(
  \tabs{\alpha}{s})} \succ \interpret{s[\alpha:=\tau]}$ \\ We have
  $\interpret{\mathtt{tapp}_{\lambda \alpha.\sigma,\tau}(
  \tabs{\alpha}{s})} \arrrbeta \lift_{(\abs{\alpha}{
  \typeinterpret{\sigma}})\interpret{\tau}}(2) \otimes (
  (\tabs{\alpha}{\interpret{s}}) * \beta) \oplus
  \lift_{(\abs{\alpha}{\typeinterpret{\sigma}})\interpret{\tau}}(1)
  \leadsto
  \lift(2) \otimes \interpret{s}[\alpha:=\typeinterpret{\tau}] \oplus
  \lift(1) \succ \interpret{s}[\alpha:=\typeinterpret{\tau}] =
  \interpret{s[\alpha:=\tau]}$, using Lemma
  \ref{lem:substitutioninterpret}.
\item $\interpret{\proj^1_{\sigma,\tau}(\mathtt{pair}_{\sigma,\tau}(s,t))}
  \succ \interpret{s}$ \\ We have
  $\interpret{\mathtt{pair}_{\sigma,\tau}(s,t)} \arrrbeta
  \pair{\interpret{s}}{\interpret{t}} \oplus \lift_{\typeinterpret{
  \sigma} \times \typeinterpret{\tau}}(\flatten_{\typeinterpret{
  \sigma}}(\interpret{s}) \oplus \flatten_{\typeinterpret{\tau}}(
  \interpret{t}))$ and therefore
  $\interpret{\proj^1_{\alpha,\beta}(\mathtt{pair}_{\alpha,\beta}(s,t))}
  \arrrbeta \lift_{\typeinterpret{\sigma}}(2) \otimes
  \pi^1(\pair{\interpret{s}}{\interpret{t}} \oplus \langle
  \text{something}\rangle) \oplus \lift_{\typeinterpret{\sigma}}(1)
  \succeq \pi^1(\pair{\interpret{s}}{\interpret{t}}) \oplus
  \lift_{\typeinterpret{\sigma}}(1)$,
  which $\succ \interpret{s}$ by Lemma \ref{lem:encodings_reduce}.
\item $\interpret{\proj^2_{\sigma,\tau}(\mathtt{pair}_{\sigma,\tau}(s,t))}
  \succ \interpret{t}$ \\ Analogous to the inequality above.
\item $\interpret{\mathtt{case}_{\sigma,\tau,\rho}(\mathtt{in}^1_{
  \sigma,\tau}(u),\abs{x}{s},\abs{y}{t})} \succ \interpret{s[x:=u]}$ \\
  Write $A := \lift_{\typeinterpret{\sigma} \times \typeinterpret{
  \tau}}(\flatten_{\typeinterpret{\sigma}}(\interpret{u}))$;
  then $\interpret{\mathtt{in}^1_{\sigma,\tau}(u)} =
  \pair{\interpret{u}}{\lift_{\typeinterpret{\tau}}(1)} \oplus A$.
  Let $B := \flatten_{\typeinterpret{\sigma} \times
  \typeinterpret{\tau}}(\pair{\interpret{u}}{\lift_{\typeinterpret{
  \tau}}(1)} \oplus A)$ and $C := \interpret{\abs{y}{t}} \cdot \pi^2(
  \pair{\interpret{u}}{\lift_{\typeinterpret{\tau}}(1)} \oplus A)$.
  Then we can write:
  $\interpret{\mathtt{case}_{\sigma,\tau,\rho}(\mathtt{in}^1_{
  \sigma,\tau}(u),\abs{x}{s},\abs{y}{t})} = \lift_{\typeinterpret{
  \rho}}(2) \oplus \lift_{\typeinterpret{\rho}}(3 \otimes B) \oplus
  \lift_{\typeinterpret{\rho}}(B \oplus 1) \otimes (\
  \interpret{\abs{x}{s}} \cdot \pi^1(\pair{\interpret{u}}{
  \lift_{\typeinterpret{\tau}}(1)} \oplus A) \oplus C\ )$.
  By splitting additive terms, distribution, neutrality of 1 and
  absolute positiveness, this $\succ
  \interpret{\abs{x}{s}} \cdot \pi^1(\pair{\interpret{u}}{
  \lift_\tau(1)}) \leadsto^*
  \interpret{\abs{x}{s}} \cdot \interpret{u}$ (by
  Lemma \ref{lem:encodings_reduce}), $= (\abs{x}{\interpret{s}}) \cdot
  \interpret{u} \arrrbeta \interpret{s}[x:=\interpret{u}] =
  \interpret{s[x:=u]}$ by Lemma \ref{lem:substitutioninterpret}.
\item $\interpret{\mathtt{case}_{\sigma,\tau,\rho}(\mathtt{in}^2_{
  \sigma,\tau}(u),\abs{x}{s},\abs{y}{t})} \succ \interpret{s[x:=u]}$. \\
  Analogous to the inequality above.
\item $\interpret{\mathtt{let}_{\varphi,\rho}(
  \mathtt{ext}_{\varphi,\tau}(s),
  \tabs{\alpha}{\abs{x:\varphi\alpha}{t}})} \succ
  \interpret{t[\alpha:=\tau][x:=s]}$. \\ We have
  $\interpret{\mathtt{ext}_{\varphi,\tau}(s)} \succeq
  \expair{\typeinterpret{\tau}}{\interpret{s}}$ by absolute positiveness.
  Therefore, using monotonicity,
  $\interpret{\mathtt{let}_{\varphi,\rho}(\mathtt{ext}_{\varphi,\tau}(
  s),\tabs{\alpha}{\abs{x:\typeinterpret{\varphi} \alpha}{t}})} \succeq
  \lift_{\typeinterpret{\rho}}(2) \otimes (\xlet{\typeinterpret{\rho}}{
    \expair{\typeinterpret{\tau}}{\interpret{s}}
  }{
    \expair{\alpha}{x}
  }{
    \interpret{\tabs{\alpha}{\abs{x:\varphi \alpha}{t}}} * \alpha \cdot
    x
  }) \oplus
  \langle\text{something}\rangle \oplus
  \lift_{\typeinterpret{\rho}}(1)$.
  Again by absolute positiveness, this $\succ
  \xlet{\typeinterpret{\rho}}{\expair{\typeinterpret{\tau}}{
  \interpret{s}}}{\expair{\alpha}{x}}{\interpret{\tabs{\alpha}{
  \abs{x:\typeinterpret{\varphi}\alpha}{t}}} * \alpha \cdot x} \leadsto
  \xlet{\typeinterpret{\rho}}{\expair{\typeinterpret{\tau}}{
  \interpret{s}}}{\expair{\alpha}{x}}{\interpret{t}}$.
  By Lemma \ref{lem:encodings_reduce}, this term
  $\succeq \interpret{t}[\alpha:=\typeinterpret{\tau}][x:=\interpret{s}]$.
  We complete by Lemma
  \ref{lem:substitutioninterpret}.
\end{itemize}

\begin{itemize}
\item $\interpret{\epsilon_\tau(\epsilon_\bot(s))} \succ
  \interpret{\epsilon_\tau(s)}$. \\
  We have $\interpret{\epsilon_\tau(\epsilon_\bot(s))} =
  \lift_{\typeinterpret{\tau}}(2 \otimes \lift_\nat(2 \otimes
  \interpret{s} \oplus 1) \oplus 1) \approx
  \lift_{\typeinterpret{\tau}}(4 \otimes \interpret{s} \oplus 3)
  \succ \lift_{\typeinterpret{\tau}}(2 \otimes \interpret{s} \oplus 1) =
  \interpret{\epsilon_\tau(s)}$.
\item $\interpret{@_{\sigma,\tau}(\epsilon_{\sigma \arrtype \tau}(s),
  t)} \succ \interpret{\epsilon_\tau(s)}$. \\
  We have
  $\interpret{@_{\sigma,\tau}(\epsilon_{\sigma \arrtype \tau}(s),t)}
  = \lift_{\typeinterpret{\tau}}(2) \otimes (\
  \lift_{\typeinterpret{\sigma} \arrtype \typeinterpret{\tau}}(
  2 \otimes \interpret{s} \oplus 1) \cdot \interpret{t}\ )\ \oplus \\
  \lift_{\typeinterpret{\tau}}(\langle\text{something}\rangle \oplus
  1) \succ \lift_{\typeinterpret{\sigma} \arrtype \typeinterpret{\tau}}(
  2 \otimes \interpret{s} \oplus 1) \cdot \interpret{t} \leadsto
  \lift_{\typeinterpret{\tau}}(2 \otimes \interpret{s} \oplus 1) =
  \interpret{\epsilon_\tau(s)}$.
\item $\interpret{\mathtt{tapp}_{\varphi,\tau}(
  \epsilon_{\quant{\alpha}{\varphi\alpha}}(s))} \succ
  \interpret{\epsilon_{\varphi\tau}(s)}$ \\
  We have $\interpret{\mathtt{tapp}_{\varphi,\tau}(
  \epsilon_{\quant{\alpha}{\varphi\alpha}}(s))} =
  \lift_{\typeinterpret{\varphi}\typeinterpret{\tau}}(2) \otimes (\
  \lift_{\quant{\alpha}{\typeinterpret{\varphi}\alpha}}(2 \otimes
  \interpret{s} \oplus 1) * \typeinterpret{\tau}\ ) \oplus \lift_{
  \typeinterpret{\varphi}\typeinterpret{\tau}}(1) \succ
  \lift_{\quant{\alpha}{\typeinterpret{\varphi}\alpha}}(2 \otimes
  \interpret{s} \oplus 1) * \typeinterpret{\tau} =
  (\tabs{\alpha}{\lift_{\typeinterpret{\varphi}\alpha}(2 \otimes
  \interpret{s} \oplus 1)}) * \typeinterpret{\tau} \leadsto
  \lift_{\typeinterpret{\varphi}\typeinterpret{\tau}}(2 \otimes
  \interpret{s} \oplus 1) =
  \lift_{\typeinterpret{\varphi\tau}}(2 \otimes \interpret{s} \oplus 1)
  = \interpret{\epsilon_{\varphi\tau}(s)}$
\item $\interpret{\proj^1_{\sigma,\tau}(\epsilon_{\mathtt{and}\,
  \sigma\,\tau}(s))} \succ \interpret{\epsilon_\sigma(s)}$ \\ We have
  $\interpret{\proj^1_{\sigma,\tau}(\epsilon_{\mathtt{and}\,
  \sigma\,\tau}(s))} = \lift_{\typeinterpret{\sigma}}(2) \otimes
  \pi^1(\lift_{\typeinterpret{\sigma} \times
  \typeinterpret{\tau}}(2 \otimes \interpret{s} \oplus 1)) \oplus
  \lift_{\typeinterpret{\sigma}}(1) \succ
  \pi^1(\lift_{\typeinterpret{\sigma} \times
  \typeinterpret{\tau}}(2 \otimes \interpret{s} \oplus 1)) =
  \lift_{\forall p.(\typeinterpret{\sigma} \arrtype
  \typeinterpret{\tau} \arrtype p) \arrtype p}(2 \otimes
  \interpret{s} \oplus 1)) * \typeinterpret{\sigma} \cdot
  (\lambda xy.x) =
  (\tabs{p}{\lambda f.\lift_p(2 \otimes \interpret{s} \oplus 1)}) *
  \typeinterpret{\sigma} \cdot (\lambda xy.x) \leadsto^*
  \lift_{\typeinterpret{\sigma}}(2 \otimes \interpret{s} \oplus 1) =
  \interpret{\epsilon_\sigma(s)}$.
\item $\interpret{\proj^2_{\sigma,\tau}(\epsilon_{\mathtt{and}\,
  \sigma\,\tau}(s))} \succ \interpret{\epsilon_\tau(s)}$ \\
  Analogous to the inequality above.
\item $\interpret{\mathtt{case}_{\sigma,\tau,\rho}(
  \epsilon_{\mathtt{or}\,\sigma\,\tau}(u),\abs{x:\sigma}{s},
  \abs{y:\tau}{t})} \succ \interpret{\epsilon_\rho(u)}$. \\
  We have $\interpret{\mathtt{case}_{\sigma,\tau,\rho}(
   \epsilon_{\mathtt{or}\,\sigma\,\tau}(u),\abs{x}{s},\abs{y}{t})} =\\
  \lift_{\typeinterpret{\rho}}(2) \oplus
  \lift_{\typeinterpret{\rho}}(3 \otimes
    \flatten_{\interpret{\sigma} \times
    \interpret{\tau}}(\lift_{\interpret{\sigma} \times
    \interpret{\tau}}(2 \otimes \interpret{u} \oplus 1))) \oplus
    \langle\text{something}\rangle \succ
  \lift_{\typeinterpret{\rho}}(3 \otimes \flatten_{
    \typeinterpret{\sigma} \times
    \interpret{\tau}}(\lift_{\interpret{\sigma} \times
    \interpret{\tau}}(2 \otimes \interpret{u} \oplus 1))) \succeq\\
  \lift_{\typeinterpret{\rho}}(\flatten_{\interpret{\sigma} \times
    \interpret{\tau}}(\lift_{\interpret{\sigma} \times
    \interpret{\tau}}(2 \otimes \interpret{u} \oplus 1))) \approx
  \lift_{\typeinterpret{\rho}}(2 \otimes \interpret{u} \oplus 1) =
  \interpret{\epsilon_\rho(u)}$ because $\flatten_\sigma(\lift_\sigma(
  n)) \approx n$ for all $\sigma,n$.
\item $\interpret{\mathtt{let}_{\varphi,\rho}(
  \epsilon_{\exists(\varphi)}(s),\tabs{\alpha}{\abs{x:\varphi\alpha}{
  t}})} \succ \interpret{\epsilon_\rho(s)}$. \\
  $\interpret{\mathtt{let}_{\varphi,\rho}(\epsilon_{
    \exists(\varphi)}(s),\tabs{\alpha}{\abs{x}{t}})} =
    \lift_{\typeinterpret{\rho}}(2) \otimes
    (\xlet{\typeinterpret{\rho}}{\lift_{\Sigma \alpha.\typeinterpret{
    \varphi}\alpha}(2 \otimes \interpret{s} \oplus 1)}{
    \expair{\alpha}{x}}{  \\
    (\tabs{\alpha}{\abs{x}{\interpret{t}}}) * \alpha \cdot x}) \oplus
    \langle\text{something}\rangle \oplus
    \lift_{\typeinterpret{\rho}}(1) \succ
    \xlet{\typeinterpret{\rho}}{\lift_{\Sigma \alpha.\typeinterpret{
    \varphi}\alpha}(2 \otimes \interpret{s} \oplus 1)}{
    \expair{\alpha}{x}}{  \\
    \interpret{t}} =
    \lift_{\forall p.(\forall \alpha.\typeinterpret{\varphi}\alpha
    \arrtype p) \arrtype p}(2 \otimes \interpret{s}
    \oplus 1) * \typeinterpret{\rho} \cdot (\tabs{\alpha}{\abs{x}{
    \interpret{t}}}) \leadsto^*
    \lift_{\typeinterpret{\rho}}(2 \otimes \interpret{s} \oplus 1) =
    \interpret{\epsilon_\rho(s)}$.
\end{itemize}

Unfortunately, the rules where $\mathtt{case}$ is shifted to the root
are rather more complicated, largely due to the variable
multiplication in $\Termmap(\mathtt{case})$ -- which we had to choose
because these rules may duplicate variables.

\begin{itemize}
\item $\interpret{\epsilon_\rho(\mathtt{case}_{\sigma,\tau,\bot}(u,
  \abs{x:\sigma}{s},\abs{y:\tau}{t}))} \succ
  \interpret{\mathtt{case}_{\sigma,\tau,\rho}(u,
  \abs{x:\sigma}{\epsilon_\rho(s)},\abs{y:\tau}{\epsilon_\rho(t)})}$ \\
  On the left-hand side, we have:
  \[
  \begin{array}{l}
  \interpret{\epsilon_\rho(\mathtt{case}_{\sigma,\tau,\bot}(u,
  \abs{x:\sigma}{s},\abs{y:\tau}{t}))} \approx \\
  \lift_{\typeinterpret{\rho}}(2 \otimes (
  2\ \oplus \\
  \phantom{ABCDEFG,}
  3 \otimes \flatten_{\typeinterpret{\sigma} \times
    \typeinterpret{\tau}}(\interpret{u})\ \oplus \\
  \phantom{ABCDEFG,}
  (\flatten_{\typeinterpret{\sigma} \times \typeinterpret{\tau}}(
    \interpret{u}) \oplus 1) \otimes (\
      \interpret{s}[x:=\pi^1(\interpret{u})] \oplus
      \interpret{t}[y:=\pi^2(\interpret{u})]\ )
  )\ \oplus \\
  \phantom{ABCD,} 1) \approx \\
  \lift_{\typeinterpret{\rho}}(
  1 \oplus 4\ \oplus \\
    \phantom{ABCDe}
    6 \otimes \flatten_{\typeinterpret{\sigma} \times
    \typeinterpret{\tau}}(\interpret{u})\ \oplus \\
    \phantom{ABCDe}
  (2 \otimes \flatten_{\typeinterpret{\sigma} \times \typeinterpret{\tau}}(
    \interpret{u}) \oplus 2) \otimes (\
      \interpret{s}[x:=\pi^1(\interpret{u})] \oplus
      \interpret{t}[y:=\pi^2(\interpret{u})]\ )) \approx \\
  \lift_{\typeinterpret{\rho}}(5)\ \oplus \\
  \phantom{A}
    \lift_{\typeinterpret{\rho}}(6 \otimes
      \flatten_{\typeinterpret{\sigma} \times
      \typeinterpret{\tau}}(\interpret{u}))\ \oplus \\
  \phantom{A}
    \lift_{\typeinterpret{\rho}}(\
      (\ 2 \otimes \flatten_{\typeinterpret{\sigma} \times
      \typeinterpret{\tau}}(\interpret{u}) \oplus 2\ ) \otimes
      (\ \interpret{s}[x:=\pi^1(\interpret{u})] \oplus
      \interpret{t}[y:=\pi^2(\interpret{u})]\ )\ )
  \end{array}
  \]

  On the right-hand side, we have:
  \[
  \begin{array}{l}
  \interpret{\mathtt{case}_{\sigma,\tau,\rho}(u,
  \abs{x:\sigma}{\epsilon_\rho(s)},\abs{y:\tau}{\epsilon_\rho(t)})}
  \approx \\
  \lift_{\typeinterpret{\rho}}(2)\ \oplus \\
  \phantom{A}
  \lift_{\typeinterpret{\rho}}(3 \otimes \flatten_{\typeinterpret{
    \sigma} \times \typeinterpret{\tau}}(\interpret{u}))\ \oplus \\
  \phantom{A}
  \lift_{\typeinterpret{\rho}}(\flatten_{\typeinterpret{\sigma} \times
    \typeinterpret{\tau}}(\interpret{u}) \oplus 1)\ \otimes \\
  \phantom{ABC}
    (\ \lift_{\typeinterpret{\rho}}(2 \otimes \interpret{s} \oplus 1)
      [x:=\pi^1(\interpret{u})]
      \oplus
     \lift_{\typeinterpret{\rho}}(2 \otimes \interpret{t} \oplus 1)
      [y:=\pi^2(\interpret{u})]
    \ ) \approx \\
  \lift_{\typeinterpret{\rho}}(2)\ \oplus \\
  \phantom{A}
  \lift_{\typeinterpret{\rho}}(3 \otimes \flatten_{\typeinterpret{
    \sigma} \times \typeinterpret{\tau}}(\interpret{u}))\ \oplus \\
  \phantom{A}
  \lift_{\typeinterpret{\rho}}(\ (\ \flatten_{\typeinterpret{\sigma} \times
    \typeinterpret{\tau}}(\interpret{u}) \oplus 1\ )\ \otimes \\
  \phantom{ABCDEF}
    (\ 2 \otimes \interpret{s}[x:=\pi^1(\interpret{u})] \oplus 1
       \oplus
       2 \otimes \interpret{t}[y:=\pi^2(\interpret{u})] \oplus 1
    \ )\ ) \approx \\
  \lift_{\typeinterpret{\rho}}(2)\ \oplus \\
  \phantom{A}
  \lift_{\typeinterpret{\rho}}(3 \otimes \flatten_{\typeinterpret{
    \sigma} \times \typeinterpret{\tau}}(\interpret{u}))\ \oplus \\
  \phantom{A}
  \lift_{\typeinterpret{\rho}}((\ 2 \otimes \flatten_{\typeinterpret{
    \sigma} \times \typeinterpret{\tau}}(\interpret{u}) \oplus 2\ )
    \otimes (\ \interpret{s}[x:=\pi^1(\interpret{u})] \oplus
    \interpret{t}[y:=\pi^2(\interpret{u})]\ )\ \oplus \\
  \phantom{A}
  \lift_{\typeinterpret{\rho}}((\ \flatten_{\typeinterpret{\sigma} \times
    \typeinterpret{\tau}}(\interpret{u}) \oplus 1\ ) \otimes
    (\ 1 \oplus 1\ )\ ) \approx \\
  \end{array}
  \]
  \[
  \begin{array}{l}
  \lift_{\typeinterpret{\rho}}(2)\ \oplus \\
  \phantom{A}
  \lift_{\typeinterpret{\rho}}(3 \otimes \flatten_{\typeinterpret{
    \sigma} \times \typeinterpret{\tau}}(\interpret{u}))\ \oplus \\
  \phantom{A}
  \lift_{\typeinterpret{\rho}}((\ 2 \otimes \flatten_{\typeinterpret{
    \sigma} \times \typeinterpret{\tau}}(\interpret{u}) \oplus 2\ )
    \otimes (\ \interpret{s}[x:=\pi^1(\interpret{u})] \oplus
    \interpret{t}[y:=\pi^2(\interpret{u})]\ ))\ \oplus \\
  \phantom{A}
  \lift_{\typeinterpret{\rho}}(2 \otimes \flatten_{\typeinterpret{\sigma}
    \times \typeinterpret{\tau}}(\interpret{u}))\ \oplus \\
  \phantom{A}
    \lift_{\typeinterpret{\rho}}(2) \approx \\
  \lift_{\typeinterpret{\rho}}(4)\ \oplus \\
  \phantom{A}
  \lift_{\typeinterpret{\rho}}(5 \otimes \flatten_{\typeinterpret{
    \sigma} \times \typeinterpret{\tau}}(\interpret{u}))\ \oplus \\
  \phantom{A}
  \lift_{\typeinterpret{\rho}}((\ 2 \otimes \flatten_{\typeinterpret{
    \sigma} \times \typeinterpret{\tau}}(\interpret{u}) \oplus 2\ )
    \otimes (\ \interpret{s}[x:=\pi^1(\interpret{u})] \oplus
    \interpret{t}[y:=\pi^2(\interpret{u})]\ ) )
  \end{array}
  \]
  By absolute positiveness, it is clear that the rule is oriented
  with $\succeq$.
\item $\interpret{@_{\rho,\pi}(\mathtt{case}_{\sigma,\tau,\rho
  \arrtype \pi}(u,\abs{x:\sigma}{s},\abs{y:\tau}{t}),v)} \succ
  \interpret{\mathtt{case}_{\sigma,\tau,\pi}(u,\abs{x:\sigma}{
  @_{\rho,\pi}(s,v)},\abs{y:\tau}{@_{\rho,\pi}(t,v)})}$ \\
  On the left-hand side, we have:
  \[
  \begin{array}{l}
  \interpret{@_{\rho,\pi}(\mathtt{case}_{\sigma,\tau,\rho \arrtype
  \pi}(u,\abs{x:\sigma}{s},\abs{y:\tau}{t}),v)} \approx \\
  \lift_{\typeinterpret{\pi}}(2) \otimes (\\
    \phantom{ABC}
    (\ \lift_{\typeinterpret{\rho} \arrtype \typeinterpret{\pi}}(2)
       \oplus \lift_{\typeinterpret{\rho} \arrtype
       \typeinterpret{\pi}}(3 \otimes \flatten_{\typeinterpret{\sigma}
       \times \typeinterpret{\tau}}(\interpret{u}))\ \oplus \\
    \phantom{ABCD}
      \lift_{\typeinterpret{\rho} \arrtype \typeinterpret{\pi}}(
      \flatten_{\typeinterpret{\sigma} \times \typeinterpret{\tau}}(
      \interpret{u}) \oplus 1) \otimes
      (\interpret{s}[x:=\pi^1(\interpret{u})] \oplus
       \interpret{t}[y:=\pi^2(\interpret{u})]) \\
    \phantom{ABC}
    ) \cdot \interpret{v} \\
    \phantom{A} ) \oplus \lift_{\typeinterpret{\pi}}(\\
    \phantom{ABC}\flatten_{\typeinterpret{\sigma}}(\interpret{v})\
      \oplus \\
    \phantom{ABC}\flatten_{\typeinterpret{\sigma} \arrtype
      \typeinterpret{\tau}}( \\
      \phantom{ABCDE}
       \lift_{\typeinterpret{\rho} \arrtype \typeinterpret{\pi}}(2)
       \oplus \lift_{\typeinterpret{\rho} \arrtype
       \typeinterpret{\pi}}(3 \otimes \flatten_{\typeinterpret{\sigma}
       \times \typeinterpret{\tau}}(\interpret{u}))\ \oplus \\
    \phantom{ABCDE}
      \lift_{\typeinterpret{\rho} \arrtype \typeinterpret{\pi}}(
      \flatten_{\typeinterpret{\sigma} \times \typeinterpret{\tau}}(
      \interpret{u}) \oplus 1) \otimes
      (\interpret{s}[x:=\pi^1(\interpret{u})] \oplus
       \interpret{t}[y:=\pi^2(\interpret{u})]) \\
    \phantom{ABC} ) \otimes
    \flatten_{\typeinterpret{\sigma}}(\interpret{v}) \oplus 1 \\
    \phantom{A} )
  \end{array}
  \]
  Using that for $\circ \in \{\oplus,\otimes\}$ we always have
  $(s \circ t) \cdot v \approx (s \cdot v) \circ (t \cdot v)$ as well
  as $\lift_{\alpha\arrtype \beta}(s) \cdot v \approx \lift_\beta(s)$,
  and that always $\flatten_\alpha(\lift_\alpha(s)) \approx s)$, this
  term $\approx$
  \[
  \begin{array}{l}
  \lift_{\typeinterpret{\pi}}(2) \otimes (\\
    \phantom{ABC}
    (\ \lift_{\typeinterpret{\pi}}(2) \oplus
       \lift_{\typeinterpret{\pi}}(3 \otimes
          \flatten_{\typeinterpret{\sigma} \times
          \typeinterpret{\tau}}(\interpret{u}))\ \oplus \\
    \phantom{ABCD}
      \lift_{\typeinterpret{\pi}}(
      \flatten_{\typeinterpret{\sigma} \times \typeinterpret{\tau}}(
      \interpret{u}) \oplus 1) \otimes
      (\interpret{s}[x:=\pi^1(\interpret{u})] \cdot \interpret{v} \oplus
       \interpret{t}[y:=\pi^2(\interpret{u})] \cdot \interpret{v}) \\
    \phantom{ABC}
    )\\
    \phantom{A} ) \oplus \lift_{\typeinterpret{\pi}}(\\
    \phantom{ABC}\flatten_{\typeinterpret{\sigma}}(\interpret{v})\
      \oplus \\
    \phantom{ABC}(\ 2 \oplus 3 \otimes \flatten_{\typeinterpret{\sigma}
      \times \typeinterpret{\tau}}(\interpret{u})\ \oplus \\
    \phantom{ABCD}(\
    \flatten_{\typeinterpret{\sigma} \times \typeinterpret{\tau}}(
    \interpret{u}) \oplus 1\ ) \otimes
      \flatten_{\typeinterpret{\sigma} \arrtype
      \typeinterpret{\tau}}(
      \interpret{s}[x:=\pi^1(\interpret{u})] \oplus
       \interpret{t}[y:=\pi^2(\interpret{u})]) \\
    \phantom{ABC} ) \otimes
    \flatten_{\typeinterpret{\sigma}}(\interpret{v}) \oplus 1 \\
    \phantom{A} ) \approx \\
  \lift_{\typeinterpret{\pi}}(4)\ \oplus \\
  \phantom{A}
  \lift_{\typeinterpret{\pi}}(6 \otimes
    \flatten_{\typeinterpret{\sigma} \times \typeinterpret{\tau}}(
    \interpret{u}))\ \oplus \\
  \phantom{A}
  \lift_{\typeinterpret{\pi}}(2 \otimes
    \flatten_{\typeinterpret{\sigma} \times \typeinterpret{\tau}}(
    \interpret{u})) \otimes
    (\interpret{s}[x:=\pi^1(\interpret{u})] \cdot \interpret{v} \oplus
     \interpret{t}[y:=\pi^2(\interpret{u})] \cdot \interpret{v})\
     \oplus \\
  \phantom{A}
  \lift_{\typeinterpret{\pi}}(2) \otimes
    (\interpret{s}[x:=\pi^1(\interpret{u})] \cdot \interpret{v} \oplus
     \interpret{t}[y:=\pi^2(\interpret{u})] \cdot \interpret{v})\
     \oplus \\
  \phantom{A}
  \lift_{\typeinterpret{\pi}}(\flatten_{\typeinterpret{
    \sigma}}(\interpret{v}))\ \oplus \\
  \phantom{A}
  \lift_{\typeinterpret{\pi}}(2 \otimes \flatten_{\typeinterpret{
    \sigma}}(\interpret{v}))\ \oplus \\
  \phantom{A}
  \lift_{\typeinterpret{\pi}}(3 \otimes \flatten_{\typeinterpret{
    \sigma} \times \typeinterpret{\tau}}(\interpret{u}) \otimes
    \flatten_{\typeinterpret{\sigma}}(\interpret{v}))\ \oplus \\
  \phantom{A}
  \lift_{\typeinterpret{\pi}}(\flatten_{\typeinterpret{\sigma} \times
    \typeinterpret{\tau}}(\interpret{u}) \otimes
    \flatten_{\typeinterpret{\sigma}}(\interpret{v})\ \otimes \\
  \phantom{ABC}
    \flatten_{\typeinterpret{\sigma} \arrtype \typeinterpret{\tau}}(
    \interpret{s}[x:=\pi^1(\interpret{u})] \oplus
    \interpret{t}[y:=\pi^2(\interpret{u})]))\ \oplus \\
  \phantom{A}
  \lift_{\typeinterpret{\pi}}(
    \flatten_{\typeinterpret{\sigma}}(\interpret{v}) \otimes
    \flatten_{\typeinterpret{\sigma} \arrtype \typeinterpret{\tau}}(
    (\interpret{s}[x:=\pi^1(\interpret{u})] \oplus
    \interpret{t}[y:=\pi^2(\interpret{u})])))\ \oplus \\
  \phantom{A}
  \lift_{\typeinterpret{\pi}}(1)\ \approx \\
  \end{array}
  \]
  \[
  \begin{array}{l}
  \lift_{\typeinterpret{\pi}}(5)\ \oplus \\
  \phantom{A}
  \lift_{\typeinterpret{\pi}}(6 \otimes
    \flatten_{\typeinterpret{\sigma} \times \typeinterpret{\tau}}(
    \interpret{u}))\ \oplus \\
  \phantom{A}
  \lift_{\typeinterpret{\pi}}(3 \otimes \flatten_{\typeinterpret{
    \sigma}}(\interpret{v}))\ \oplus \\
  \phantom{A}
  \lift_{\typeinterpret{\pi}}(3 \otimes \flatten_{\typeinterpret{
    \sigma} \times \typeinterpret{\tau}}(\interpret{u}) \otimes
    \flatten_{\typeinterpret{\sigma}}(\interpret{v}))\ \oplus \\
  \phantom{A}
  \lift_{\typeinterpret{\pi}}(
    \flatten_{\typeinterpret{\sigma}}(\interpret{v}) \otimes
    \flatten_{\typeinterpret{\sigma} \arrtype \typeinterpret{\tau}}(
    \interpret{s}[x:=\pi^1(\interpret{u})]))\ \oplus \\
  \phantom{A}
  \lift_{\typeinterpret{\pi}}(
    \flatten_{\typeinterpret{\sigma}}(\interpret{v}) \otimes
    \flatten_{\typeinterpret{\sigma} \arrtype \typeinterpret{
    \tau}}(\interpret{t}[y:=\pi^2(\interpret{u})]))\ \oplus \\
  \phantom{A}
  \lift_{\typeinterpret{\pi}}(\flatten_{\typeinterpret{\sigma} \times
    \typeinterpret{\tau}}(\interpret{u}) \otimes
    \flatten_{\typeinterpret{\sigma}}(\interpret{v}) \otimes
    \flatten_{\typeinterpret{\sigma} \arrtype \typeinterpret{\tau}}(
    \interpret{s}[x:=\pi^1(\interpret{u})]))\ \oplus \\
  \phantom{A}
  \lift_{\typeinterpret{\pi}}(\flatten_{\typeinterpret{\sigma} \times
    \typeinterpret{\tau}}(\interpret{u}) \otimes
    \flatten_{\typeinterpret{\sigma}}(\interpret{v}) \otimes
    \flatten_{\typeinterpret{\sigma} \arrtype \typeinterpret{\tau}}(
    \interpret{t}[y:=\pi^2(\interpret{u})]))\ \oplus \\
  \phantom{A}
  \lift_{\typeinterpret{\pi}}(2) \otimes
    \interpret{s}[x:=\pi^1(\interpret{u})] \cdot \interpret{v}\ \oplus \\
  \phantom{A}
  \lift_{\typeinterpret{\pi}}(2) \otimes
     \interpret{t}[y:=\pi^2(\interpret{u})] \cdot \interpret{v}\
     \oplus \\
  \phantom{A}
  \lift_{\typeinterpret{\pi}}(2 \otimes
    \flatten_{\typeinterpret{\sigma} \times \typeinterpret{\tau}}(
    \interpret{u})) \otimes
    \interpret{s}[x:=\pi^1(\interpret{u})] \cdot \interpret{v}\ \oplus \\
  \phantom{A}
  \lift_{\typeinterpret{\pi}}(2 \otimes
    \flatten_{\typeinterpret{\sigma} \times \typeinterpret{\tau}}(
    \interpret{u})) \otimes
     \interpret{t}[y:=\pi^2(\interpret{u})] \cdot \interpret{v} \\
  \end{array}
  \]
  And on the right-hand side, we have:
  \[
  \begin{array}{l}
  \interpret{\mathtt{case}_{\sigma,\tau,\pi}(u,\abs{x:\sigma}{
  @_{\rho,\pi}(s,v)},\abs{y:\tau}{@_{\rho,\pi}(t,v)})} \approx \\
  \lift_{\typeinterpret{\pi}}(2)\ \oplus \\
  \phantom{A}
  \lift_{\typeinterpret{\pi}}(3 \otimes \flatten_{\typeinterpret{\sigma}
    \times \typeinterpret{\tau}}(\interpret{u}))\ \oplus \\
  \phantom{A}
  \lift_{\typeinterpret{\pi}}(\flatten_{\typeinterpret{\sigma} \times
    \typeinterpret{\tau}}(\interpret{u}) \oplus 1)\ \otimes \\
  \phantom{ABC}
  (\ (\abs{x}{\interpret{@_{\rho,\pi}(s,v)}}) \cdot \pi^1(
      \interpret{u}) \oplus
     (\abs{y}{\interpret{@_{\rho,\pi}(t,v)}}) \cdot \pi^2(
      \interpret{u})\ ) \approx \\
  \lift_{\typeinterpret{\pi}}(2)\ \oplus \\
  \phantom{A}
  \lift_{\typeinterpret{\pi}}(3 \otimes \flatten_{\typeinterpret{\sigma}
    \times \typeinterpret{\tau}}(\interpret{u}))\ \oplus \\
  \phantom{A}
  \lift_{\typeinterpret{\pi}}(\flatten_{\typeinterpret{\sigma} \times
    \typeinterpret{\tau}}(\interpret{u}) \oplus 1)\ \otimes \\
  \phantom{ABC}
  (\ (\abs{x}{\lift_{\typeinterpret{\pi}}(2) \otimes
    \interpret{s} \cdot \interpret{v} \oplus
    \lift_{\typeinterpret{\pi}}(\flatten_{
    \typeinterpret{\rho}}(\interpret{v})\ \oplus \\
  \phantom{ABCDEF}\flatten_{\typeinterpret{\rho} \arrtype
    \typeinterpret{\pi}}(\interpret{s}) \otimes
    \flatten_{\typeinterpret{\rho}}(\interpret{v}) \oplus
    1)}) \cdot \pi^1(\interpret{u})\ \oplus \\
  \phantom{ABCD}
     (\abs{y}{\lift_{\typeinterpret{\pi}}(2) \otimes
     \interpret{t} \cdot \interpret{v} \oplus
    \lift_{\typeinterpret{\pi}}(
    \flatten_{\typeinterpret{\rho}}(\interpret{v})\ \oplus \\
  \phantom{ABCDEF} \flatten_{\typeinterpret{\rho} \arrtype
    \typeinterpret{\pi}}(\interpret{t}) \otimes
    \flatten_{\typeinterpret{\rho}}(\interpret{v}) \oplus
    1)}) \cdot \pi^2(\interpret{u}) \\
  \phantom{ABC}) \approx \\
  \lift_{\typeinterpret{\pi}}(2)\ \oplus \\
  \phantom{A}
  \lift_{\typeinterpret{\pi}}(3 \otimes \flatten_{\typeinterpret{\sigma}
    \times \typeinterpret{\tau}}(\interpret{u}))\ \oplus \\
  \phantom{A}
  \lift_{\typeinterpret{\pi}}(\flatten_{\typeinterpret{\sigma} \times
    \typeinterpret{\tau}}(\interpret{u}) \oplus 1)\ \otimes \\
  \phantom{ABC}
  (\ \lift_{\typeinterpret{\pi}}(2) \otimes \interpret{s}[x:=\pi^1(
    \interpret{u})] \cdot \interpret{v} \oplus
    \lift_{\typeinterpret{\pi}}(\flatten_{
    \typeinterpret{\rho}}(\interpret{v})\ \oplus \\
  \phantom{ABCDE}\flatten_{\typeinterpret{\rho} \arrtype
    \typeinterpret{\pi}}(\interpret{s}[x:=\pi^1(\interpret{u})]) \otimes
    \flatten_{\typeinterpret{\rho}}(\interpret{v}) \oplus
    1)\ \oplus \\
  \phantom{ABCD}
  \lift_{\typeinterpret{\pi}}(2) \otimes \interpret{t}[y:=\pi^2(
    \interpret{u})] \cdot \interpret{v} \oplus
    \lift_{\typeinterpret{\pi}}(
    \flatten_{\typeinterpret{\rho}}(\interpret{v})\ \oplus \\
  \phantom{ABCDEF} \flatten_{\typeinterpret{\rho} \arrtype
    \typeinterpret{\pi}}(\interpret{t}[y:=\pi^2(\interpret{u})]) \otimes
    \flatten_{\typeinterpret{\rho}}(\interpret{v}) \oplus
    1) \\
  \phantom{ABC}) \approx \\
  \lift_{\typeinterpret{\pi}}(2)\ \oplus \\
  \phantom{A}
  \lift_{\typeinterpret{\pi}}(3 \otimes \flatten_{\typeinterpret{\sigma}
    \times \typeinterpret{\tau}}(\interpret{u}))\ \oplus \\
  \phantom{A}
  \lift_{\typeinterpret{\pi}}(\flatten_{\typeinterpret{\sigma} \times
    \typeinterpret{\tau}}(\interpret{u}) \oplus 1)\ \otimes \\
  \phantom{ABC}
  (\ \lift_{\typeinterpret{\pi}}(2) \otimes \interpret{s}[x:=\pi^1(
    \interpret{u})] \cdot \interpret{v}\ \oplus \\
  \phantom{ABCD}
  \lift_{\typeinterpret{\pi}}(\flatten_{
    \typeinterpret{\rho}}(\interpret{v}))\ \oplus \\
  \phantom{ABCD}
  \lift_{\typeinterpret{\pi}}(
    \flatten_{\typeinterpret{\rho}}(\interpret{v}) \otimes
    \flatten_{\typeinterpret{\rho} \arrtype \typeinterpret{\pi}}(
    \interpret{s}[x:=\pi^1(\interpret{u})]))\ \oplus \\
  \phantom{ABCD}
  \lift_{\typeinterpret{\pi}}(1)\ \oplus \\
  \phantom{ABCD}
  \lift_{\typeinterpret{\pi}}(2) \otimes \interpret{t}[y:=\pi^2(
    \interpret{u})] \cdot \interpret{v}\ \oplus \\
  \phantom{ABCD}
  \lift_{\typeinterpret{\pi}}(
    \flatten_{\typeinterpret{\rho}}(\interpret{v}))\ \oplus \\
  \phantom{ABCD}
  \lift_{\typeinterpret{\pi}}(
    \flatten_{\typeinterpret{\rho}}(\interpret{v}) \otimes
    \flatten_{\typeinterpret{\rho} \arrtype
    \typeinterpret{\pi}}(\interpret{t}[y:=\pi^2(\interpret{u})]))\
    \oplus\\
  \phantom{ABCD}
  \lift_{\typeinterpret{\pi}}(1) \\
  \phantom{ABC}) \approx \\
  \end{array}
  \]
  \[
  \begin{array}{l}
  \lift_{\typeinterpret{\pi}}(2)\ \oplus \\
  \phantom{A}
  \lift_{\typeinterpret{\pi}}(3 \otimes \flatten_{\typeinterpret{\sigma}
    \times \typeinterpret{\tau}}(\interpret{u}))\ \oplus \\
  \phantom{A}
  \lift_{\typeinterpret{\pi}}(\flatten_{\typeinterpret{\sigma} \times
    \typeinterpret{\tau}}(\interpret{u}) \oplus 1)\ \otimes \\
  \phantom{ABC}
  (\ \lift_{\typeinterpret{\pi}}(2)\ \oplus \\
  \phantom{ABCD}
  \lift_{\typeinterpret{\pi}}(2 \otimes \flatten_{
    \typeinterpret{\rho}}(\interpret{v}))\ \oplus \\
  \phantom{ABCD}
  \lift_{\typeinterpret{\pi}}(2) \otimes \interpret{s}[x:=\pi^1(
    \interpret{u})] \cdot \interpret{v}\ \oplus \\
  \phantom{ABCD}
  \lift_{\typeinterpret{\pi}}(2) \otimes \interpret{t}[y:=\pi^2(
    \interpret{u})] \cdot \interpret{v}\ \oplus \\
  \phantom{ABCD}
  \lift_{\typeinterpret{\pi}}(
    \flatten_{\typeinterpret{\rho}}(\interpret{v}) \otimes
    \flatten_{\typeinterpret{\rho} \arrtype \typeinterpret{\pi}}(
    \interpret{s}[x:=\pi^1(\interpret{u})]))\ \oplus \\
  \phantom{ABCD}
  \lift_{\typeinterpret{\pi}}(
    \flatten_{\typeinterpret{\rho}}(\interpret{v}) \otimes
    \flatten_{\typeinterpret{\rho} \arrtype
    \typeinterpret{\pi}}(\interpret{t}[y:=\pi^2(\interpret{u})])) \\
  \phantom{ABC}) \approx \\
  \lift_{\typeinterpret{\pi}}(2)\ \oplus \\
  \phantom{A}
  \lift_{\typeinterpret{\pi}}(3 \otimes \flatten_{\typeinterpret{\sigma}
    \times \typeinterpret{\tau}}(\interpret{u}))\ \oplus \\
  \phantom{A}\lift_{\typeinterpret{\pi}}(2 \otimes
    \flatten_{\typeinterpret{\sigma} \times
    \typeinterpret{\tau}}(\interpret{u}))\ \oplus \\
  \phantom{A}
  \lift_{\typeinterpret{\pi}}(2 \otimes
    \flatten_{\typeinterpret{\sigma} \times
    \typeinterpret{\tau}}(\interpret{u}) \otimes
    \flatten_{\typeinterpret{\rho}}(\interpret{v}))\ \oplus \\
  \phantom{A}
  \lift_{\typeinterpret{\pi}}(2 \otimes
    \flatten_{\typeinterpret{\sigma} \times
    \typeinterpret{\tau}}(\interpret{u})) \otimes
    \interpret{s}[x:=\pi^1(\interpret{u})] \cdot \interpret{v}\ \oplus\\
  \phantom{A}
  \lift_{\typeinterpret{\pi}}(2 \otimes
    \flatten_{\typeinterpret{\sigma} \times
    \typeinterpret{\tau}}(\interpret{u})) \otimes
    \interpret{t}[y:=\pi^2(\interpret{u})] \cdot \interpret{v}\ \oplus \\
  \phantom{A}
  \lift_{\typeinterpret{\pi}}(\flatten_{\typeinterpret{\sigma} \times
    \typeinterpret{\tau}}(\interpret{u}) \otimes
    \flatten_{\typeinterpret{\rho}}(\interpret{v}) \otimes
    \flatten_{\typeinterpret{\rho} \arrtype \typeinterpret{\pi}}(
    \interpret{s}[x:=\pi^1(\interpret{u})]))\ \oplus \\
  \phantom{A}
  \lift_{\typeinterpret{\pi}}(\flatten_{\typeinterpret{\sigma} \times
    \typeinterpret{\tau}}(\interpret{u}) \otimes
    \flatten_{\typeinterpret{\rho}}(\interpret{v}) \otimes
    \flatten_{\typeinterpret{\rho} \arrtype \typeinterpret{\pi}}(
    \interpret{t}[y:=\pi^2(\interpret{u})]))\ \oplus \\
  \phantom{A}
  \lift_{\typeinterpret{\pi}}(2)\ \oplus \\
  \phantom{A}
  \lift_{\typeinterpret{\pi}}(2 \otimes \flatten_{
    \typeinterpret{\rho}}(\interpret{v}))\ \oplus \\
  \phantom{A}
  \lift_{\typeinterpret{\pi}}(2) \otimes \interpret{s}[x:=\pi^1(
    \interpret{u})] \cdot \interpret{v}\ \oplus \\
  \phantom{A}
  \lift_{\typeinterpret{\pi}}(2) \otimes \interpret{t}[y:=\pi^2(
    \interpret{u})] \cdot \interpret{v}\ \oplus \\
  \phantom{A}
  \lift_{\typeinterpret{\pi}}(
    \flatten_{\typeinterpret{\rho}}(\interpret{v}) \otimes
    \flatten_{\typeinterpret{\rho} \arrtype \typeinterpret{\pi}}(
    \interpret{s}[x:=\pi^1(\interpret{u})]))\ \oplus \\
  \phantom{A}
  \lift_{\typeinterpret{\pi}}(
    \flatten_{\typeinterpret{\rho}}(\interpret{v}) \otimes
    \flatten_{\typeinterpret{\rho} \arrtype
    \typeinterpret{\pi}}(\interpret{t}[y:=\pi^2(\interpret{u})])) \\
  \end{array}
  \]
  This we can reorder to:
  \[
  \begin{array}{l}
  \lift_{\typeinterpret{\pi}}(4)\ \oplus \\
  \phantom{A}
  \lift_{\typeinterpret{\pi}}(5 \otimes \flatten_{\typeinterpret{\sigma}
    \times \typeinterpret{\tau}}(\interpret{u}))\ \oplus \\
  \phantom{A}
  \lift_{\typeinterpret{\pi}}(2 \otimes \flatten_{
    \typeinterpret{\rho}}(\interpret{v}))\ \oplus \\
  \phantom{A}
  \lift_{\typeinterpret{\pi}}(2 \otimes
    \flatten_{\typeinterpret{\sigma} \times
    \typeinterpret{\tau}}(\interpret{u}) \otimes
    \flatten_{\typeinterpret{\rho}}(\interpret{v}))\ \oplus \\
  \phantom{A}
  \lift_{\typeinterpret{\pi}}(
    \flatten_{\typeinterpret{\rho}}(\interpret{v}) \otimes
    \flatten_{\typeinterpret{\rho} \arrtype \typeinterpret{\pi}}(
    \interpret{s}[x:=\pi^1(\interpret{u})]))\ \oplus \\
  \phantom{A}
  \lift_{\typeinterpret{\pi}}(
    \flatten_{\typeinterpret{\rho}}(\interpret{v}) \otimes
    \flatten_{\typeinterpret{\rho} \arrtype
    \typeinterpret{\pi}}(\interpret{t}[y:=\pi^2(\interpret{u})]))\
    \oplus \\
  \phantom{A}
  \lift_{\typeinterpret{\pi}}(\flatten_{\typeinterpret{\sigma} \times
    \typeinterpret{\tau}}(\interpret{u}) \otimes
    \flatten_{\typeinterpret{\rho}}(\interpret{v}) \otimes
    \flatten_{\typeinterpret{\rho} \arrtype \typeinterpret{\pi}}(
    \interpret{s}[x:=\pi^1(\interpret{u})]))\ \oplus \\
  \phantom{A}
  \lift_{\typeinterpret{\pi}}(\flatten_{\typeinterpret{\sigma} \times
    \typeinterpret{\tau}}(\interpret{u}) \otimes
    \flatten_{\typeinterpret{\rho}}(\interpret{v}) \otimes
    \flatten_{\typeinterpret{\rho} \arrtype \typeinterpret{\pi}}(
    \interpret{t}[y:=\pi^2(\interpret{u})]))\ \oplus \\
  \phantom{A}
  \lift_{\typeinterpret{\pi}}(2) \otimes \interpret{s}[x:=\pi^1(
    \interpret{u})] \cdot \interpret{v}\ \oplus \\
  \phantom{A}
  \lift_{\typeinterpret{\pi}}(2) \otimes \interpret{t}[y:=\pi^2(
    \interpret{u})] \cdot \interpret{v}\ \oplus \\
  \phantom{A}
  \lift_{\typeinterpret{\pi}}(2 \otimes
    \flatten_{\typeinterpret{\sigma} \times
    \typeinterpret{\tau}}(\interpret{u})) \otimes
    \interpret{s}[x:=\pi^1(\interpret{u})] \cdot \interpret{v}\ \oplus\\
  \phantom{A}
  \lift_{\typeinterpret{\pi}}(2 \otimes
    \flatten_{\typeinterpret{\sigma} \times
    \typeinterpret{\tau}}(\interpret{u})) \otimes
    \interpret{t}[y:=\pi^2(\interpret{u})] \cdot \interpret{v} \\
  \end{array}
  \]
  Using absolute positiveness, it is clear that the inequality is
  oriented.
\item $\interpret{\mathtt{tapp}_{\varphi,\pi}(\mathtt{case}_{\sigma,\tau,
  \quant{\alpha}{\varphi\alpha}}(u,\abs{x:\sigma}{s},\abs{y:\tau}{t}))}
  \succ \\ \interpret{\mathtt{case}_{\sigma,\tau,\varphi\pi}(u,
  \abs{x:\sigma}{\mathtt{tapp}_{\varphi,\pi}(s)},
  \abs{y:\tau}{\mathtt{tapp}_{\varphi,\pi}(t)})}$ \\
  On the left-hand side, we have
  \[
  \begin{array}{l}
  \interpret{\mathtt{tapp}_{\varphi,\pi}(\mathtt{case}_{\sigma,\tau,
  \quant{\alpha}{\varphi\alpha}}(u,\abs{x}{s},\abs{y}{t}))} \approx \\
  \lift_{\typeinterpret{\varphi\pi}}(2) \otimes ( \\
    \phantom{ABC}
    \lift_{\quant{\alpha}{\typeinterpret{\varphi}\alpha}}(2) \oplus
    \lift_{\quant{\alpha}{\typeinterpret{\varphi}\alpha}}(3 \otimes
      \flatten_{\typeinterpret{\sigma} \times \typeinterpret{\tau}}(
      \interpret{u})) \oplus \\
    \phantom{ABC}
    \lift_{\quant{\alpha}{\typeinterpret{\varphi}\alpha}}(
      \flatten_{\typeinterpret{\sigma} \times \typeinterpret{\tau}}(
      \interpret{u}) \oplus 1) \otimes (
      \interpret{s}[x:=\pi^1(\interpret{u})] \oplus
      \interpret{t}[y:=\pi^2(\interpret{u})]) \\
  \phantom{A}) * \typeinterpret{\pi} \oplus
  \lift_{\typeinterpret{\varphi\pi}}(1) \approx \\
  \end{array}
  \]
  \[
  \begin{array}{l}
  \lift_{\typeinterpret{\varphi\pi}}(4)\ \oplus \\
  \phantom{A}
  \lift_{\typeinterpret{\varphi\pi}}(6 \otimes
      \flatten_{\typeinterpret{\sigma} \times \typeinterpret{\tau}}(
      \interpret{u})) \oplus \\
  \phantom{A}
  \lift_{\typeinterpret{\varphi\pi}}(2) \otimes
    \interpret{s}[x:=\pi^1(\interpret{u})] * \typeinterpret{\pi}\ \oplus\\
  \phantom{A}
  \lift_{\typeinterpret{\varphi\pi}}(2) \otimes
    \interpret{t}[y:=\pi^2(\interpret{u})] * \typeinterpret{\pi}\ \oplus\\
  \phantom{A}
  \lift_{\typeinterpret{\varphi\pi}}(2 \otimes
    \flatten_{\typeinterpret{\sigma} \times \typeinterpret{\tau}}(
      \interpret{u})) \otimes \interpret{s}[x:=\pi^1(\interpret{u})]
      * \typeinterpret{\pi}\ \oplus \\
  \phantom{A}
  \lift_{\typeinterpret{\varphi\pi}}(2 \otimes
    \flatten_{\typeinterpret{\sigma} \times \typeinterpret{\tau}}(
      \interpret{u})) \otimes \interpret{t}[x:=\pi^2(\interpret{u})]
      * \typeinterpret{\pi}\ \oplus \\
  \phantom{A}
  \lift_{\typeinterpret{\varphi\pi}}(1) \approx \\
  \lift_{\typeinterpret{\varphi\pi}}(5)\ \oplus \\
  \phantom{A}
  \lift_{\typeinterpret{\varphi\pi}}(6 \otimes
      \flatten_{\typeinterpret{\sigma} \times \typeinterpret{\tau}}(
      \interpret{u})) \oplus \\
  \phantom{A}
  \lift_{\typeinterpret{\varphi\pi}}(2) \otimes
    \interpret{s}[x:=\pi^1(\interpret{u})] * \typeinterpret{\pi}\ \oplus\\
  \phantom{A}
  \lift_{\typeinterpret{\varphi\pi}}(2) \otimes
    \interpret{t}[y:=\pi^2(\interpret{u})] * \typeinterpret{\pi}\ \oplus\\
  \phantom{A}
  \lift_{\typeinterpret{\varphi\pi}}(2 \otimes
    \flatten_{\typeinterpret{\sigma} \times \typeinterpret{\tau}}(
      \interpret{u})) \otimes \interpret{s}[x:=\pi^1(\interpret{u})]
      * \typeinterpret{\pi}\ \oplus \\
  \phantom{A}
  \lift_{\typeinterpret{\varphi\pi}}(2 \otimes
    \flatten_{\typeinterpret{\sigma} \times \typeinterpret{\tau}}(
      \interpret{u})) \otimes \interpret{t}[x:=\pi^2(\interpret{u})]
      * \typeinterpret{\pi} \\
  \end{array}
  \]

  On the right-hand side, we have:
  \[
  \begin{array}{l}
  \interpret{\mathtt{case}_{\sigma,\tau,\varphi\pi}(u,
    \abs{x:\sigma}{\mathtt{tapp}_{\varphi,\pi}(s)},
    \abs{y:\tau}{\mathtt{tapp}_{\varphi,\pi}(t)})} \approx \\
  \lift_{\typeinterpret{\varphi\pi}}(2)\ \oplus \\
  \phantom{A}
  \lift_{\typeinterpret{\varphi\pi}}(3 \otimes \flatten_{
    \typeinterpret{\sigma} \times \typeinterpret{\tau}}(
    \interpret{u}))\ \oplus \\
  \phantom{A}
  \lift_{\typeinterpret{\varphi\pi}}(\flatten_{\typeinterpret{\sigma}
    \times \typeinterpret{\tau}}(\interpret{u}) \oplus 1)\ \otimes \\
  \phantom{ABC}(\
    \lift_{\typeinterpret{\varphi\pi}}(2) \otimes
    (\typeinterpret{s}[x:=\pi^1(\interpret{u})]
      * \typeinterpret{\pi}) \oplus
    \lift_{\typeinterpret{\varphi\pi}}(1)\ \oplus \\
  \phantom{ABCD}
    \lift_{\typeinterpret{\varphi\pi}}(2) \otimes
    (\typeinterpret{t}[x:=\pi^2(\interpret{u})]
      * \typeinterpret{\pi}) \oplus
    \lift_{\typeinterpret{\varphi\pi}}(1)
  \ ) \approx \\
  \lift_{\typeinterpret{\varphi\pi}}(4)\ \oplus \\
  \phantom{A}
  \lift_{\typeinterpret{\varphi\pi}}(5 \otimes \flatten_{
    \typeinterpret{\sigma} \times \typeinterpret{\tau}}(
    \interpret{u}))\ \oplus \\
  \phantom{A}
  \lift_{\typeinterpret{\varphi\pi}}(\flatten_{\typeinterpret{\sigma}
    \times \typeinterpret{\tau}}(\interpret{u}) \oplus 1)\ \otimes \\
  \phantom{ABC}(\
    \lift_{\typeinterpret{\varphi\pi}}(2) \otimes
    (\typeinterpret{s}[x:=\pi^1(\interpret{u})]
      * \typeinterpret{\pi})\ \oplus \\
  \phantom{ABCD}
    \lift_{\typeinterpret{\varphi\pi}}(2) \otimes
    (\typeinterpret{t}[x:=\pi^2(\interpret{u})]
      * \typeinterpret{\pi})
  \ ) \approx \\
  \lift_{\typeinterpret{\varphi\pi}}(4)\ \oplus \\
  \phantom{A}
  \lift_{\typeinterpret{\varphi\pi}}(5 \otimes \flatten_{
    \typeinterpret{\sigma} \times \typeinterpret{\tau}}(
    \interpret{u}))\ \oplus \\
  \phantom{A}
  \lift_{\typeinterpret{\varphi\pi}}(2) \otimes
    \interpret{s}[x:=\pi^1(\interpret{u})] * \typeinterpret{\pi}\ \oplus\\
  \phantom{A}
  \lift_{\typeinterpret{\varphi\pi}}(2) \otimes
    \interpret{t}[y:=\pi^2(\interpret{u})] * \typeinterpret{\pi}\ \oplus\\
  \phantom{A}
  \lift_{\typeinterpret{\varphi\pi}}(2 \otimes
    \flatten_{\typeinterpret{\sigma} \times \typeinterpret{\tau}}(
      \interpret{u})) \otimes \interpret{s}[x:=\pi^1(\interpret{u})]
      * \typeinterpret{\pi}\ \oplus \\
  \phantom{A}
  \lift_{\typeinterpret{\varphi\pi}}(2 \otimes
    \flatten_{\typeinterpret{\sigma} \times \typeinterpret{\tau}}(
      \interpret{u})) \otimes \interpret{t}[x:=\pi^2(\interpret{u})]
      * \typeinterpret{\pi} \\
  \end{array}
  \]
  Again, it is clear that the required inequality holds.

\item $\interpret{\proj^1_{\rho,\pi}(\mathtt{case}_{\sigma,\tau,
  \mathtt{and}\,\rho\,\pi}(u,\abs{x:\sigma}{s},\abs{y:\tau}{t}))} \succ \\
  \interpret{\mathtt{case}_{\sigma,\tau,\rho}(u,\abs{x:\sigma}{
  \proj^1_{\rho,\pi}(s)},\abs{y:\tau}{\proj^1_{\rho,\pi}(t)})}$ \\
  On the left-hand side, we have: \\
  \[
  \begin{array}{l}
  \interpret{\proj^1_{\rho,\pi}(\mathtt{case}_{\sigma,\tau,
  \mathtt{and}\,\rho\,\pi}(u,\abs{x}{s},\abs{y}{t}))} \approx \\
  \lift_{\typeinterpret{\rho}}(2) \otimes \pi^1( \\
  \phantom{AB}
    \lift_{\typeinterpret{\rho} \times \typeinterpret{\pi}}(2)\ \oplus \\
  \phantom{AB}
    \lift_{\typeinterpret{\rho} \times \typeinterpret{\pi}}(3 \otimes
    \flatten_{\typeinterpret{\sigma} \times \typeinterpret{\tau}}(
    \interpret{u}))\ \oplus \\
  \phantom{AB}
    \lift_{\typeinterpret{\rho} \times \typeinterpret{\pi}}(
    \flatten_{\typeinterpret{\sigma} \times \typeinterpret{\tau}}(
    \interpret{u}) \oplus 1)\ \otimes \\
  \phantom{ABCD}
    (\interpret{s}[x:=\pi^1(\interpret{u})] \oplus
     \interpret{t}[y:=\pi^2(\interpret{u})]) \\
  \phantom{A}) \oplus \lift_{\typeinterpret{\rho}}(1) \\
  \end{array}
  \]
  Taking into account that $\typeinterpret{\rho} \times
  \typeinterpret{\tau}$ is just shorthand notation for
  $\quant{p}{(\typeinterpret{\rho} \arrtype \typeinterpret{\tau}
  \arrtype p) \arrtype p}$, that $\pi^1(x) = x * \typeinterpret{\rho}
  \cdot (\abs{xy}{x})$, and that $\lift_{\sigma \arrtype \tau}(x)
  \cdot y \approx \lift_\tau(x)$, this term $\approx$
  \[
  \begin{array}{l}
  \lift_{\typeinterpret{\rho}}(5)\ \oplus \\
  \phantom{A}
    \lift_{\typeinterpret{\rho}}(6 \otimes
    \flatten_{\typeinterpret{\sigma} \times \typeinterpret{\tau}}(
    \interpret{u}))\ \oplus \\
  \phantom{A}
    \lift_{\typeinterpret{\rho}}(2 \otimes
    \flatten_{\typeinterpret{\sigma} \times
    \typeinterpret{\tau}}(\interpret{u}) \oplus 2) \otimes
    \pi^1(\interpret{s}[x:=\pi^1(\interpret{u})])\ \oplus \\
  \phantom{A}
    \lift_{\typeinterpret{\rho}}(2 \otimes
    \flatten_{\typeinterpret{\sigma} \times
    \typeinterpret{\tau}}(\interpret{u}) \oplus 2) \otimes
    \pi^1(\interpret{t}[y:=\pi^2(\interpret{u})]) \approx \\
  \lift_{\typeinterpret{\rho}}(5)\ \oplus \\
  \phantom{A}
    \lift_{\typeinterpret{\rho}}(6 \otimes
    \flatten_{\typeinterpret{\sigma} \times \typeinterpret{\tau}}(
    \interpret{u}))\ \oplus \\
  \phantom{A}
    \lift_{\typeinterpret{\rho}}(2 \otimes
    \flatten_{\typeinterpret{\sigma} \times
    \typeinterpret{\tau}}(2) \otimes
    \pi^1(\interpret{s}[x:=\pi^1(\interpret{u})])\ \oplus \\
  \phantom{A}
    \lift_{\typeinterpret{\rho}}(2 \otimes
    \flatten_{\typeinterpret{\sigma} \times
    \typeinterpret{\tau}}(2) \otimes
    \pi^1(\interpret{t}[y:=\pi^2(\interpret{u})])\ \oplus \\
  \phantom{A}
    \lift_{\typeinterpret{\rho}}(2 \otimes
    \flatten_{\typeinterpret{\sigma} \times
    \typeinterpret{\tau}}(\interpret{u})) \otimes
    \pi^1(\interpret{s}[x:=\pi^1(\interpret{u})])\ \oplus \\
  \phantom{A}
    \lift_{\typeinterpret{\rho}}(2 \otimes
    \flatten_{\typeinterpret{\sigma} \times
    \typeinterpret{\tau}}(\interpret{u})) \otimes
    \pi^1(\interpret{t}[y:=\pi^2(\interpret{u})])
  \end{array}
  \]
  On the right-hand side, we have:
  \[
  \begin{array}{l}
  \interpret{\mathtt{case}_{\sigma,\tau,\rho}(u,\abs{x:\sigma}{
  \proj^1_{\rho,\pi}(s)},\abs{y:\tau}{\proj^1_{\rho,\pi}(t)})} \approx \\
  \lift_{\typeinterpret{\rho}}(2)\ \oplus \\
  \phantom{A}
    \lift_{\typeinterpret{\rho}}(3 \otimes
    \flatten_{\typeinterpret{\sigma} \times \typeinterpret{\tau}}(
    \interpret{u}))\ \oplus \\
  \phantom{A}
    \lift_{\typeinterpret{\rho}}(
    \flatten_{\typeinterpret{\sigma} \times \typeinterpret{\tau}}(
    \interpret{u}) \oplus 1)\ \otimes \\
  \phantom{ABC}
    (\ \lift_{\typeinterpret{\rho}}(2) \otimes \pi^1(\interpret{s})[
    x:=\pi^1(\interpret{u})] \oplus \lift_{\typeinterpret{\rho}}(1)
    \oplus \\
  \phantom{ABCD}
    \lift_{\typeinterpret{\rho}}(2) \otimes \pi^1(\interpret{t})[
    y:=\pi^2(\interpret{u})] \oplus \lift_{\typeinterpret{\rho}}(1)
    \ )
  \end{array}
  \]
  Following the definition of $\pi^1$, we can pull the substitution
  inside $\pi^1$, and rewrite this term to:
  \[
  \begin{array}{l}
  \lift_{\typeinterpret{\rho}}(2)\ \oplus \\
  \phantom{A}
    \lift_{\typeinterpret{\rho}}(3 \otimes
    \flatten_{\typeinterpret{\sigma} \times \typeinterpret{\tau}}(
    \interpret{u}))\ \oplus \\
  \phantom{A}
    \lift_{\typeinterpret{\rho}}(
    \flatten_{\typeinterpret{\sigma} \times \typeinterpret{\tau}}(
    \interpret{u}) \oplus 1)\ \otimes \\
  \phantom{ABC}
    (\ \lift_{\typeinterpret{\rho}}(2) \otimes \pi^1(\interpret{s}[
    x:=\pi^1(\interpret{u})])\ \oplus \\
  \phantom{ABCD}
    \lift_{\typeinterpret{\rho}}(2) \otimes \pi^1(\interpret{t}[
    y:=\pi^2(\interpret{u})]) \oplus \lift_{\typeinterpret{\rho}}(2)
    \ ) \approx \\
  \lift_{\typeinterpret{\rho}}(4)\ \oplus \\
  \phantom{A}
    \lift_{\typeinterpret{\rho}}(5 \otimes
    \flatten_{\typeinterpret{\sigma} \times \typeinterpret{\tau}}(
    \interpret{u}))\ \oplus \\
  \phantom{A}
    \lift_{\typeinterpret{\rho}}(
    \flatten_{\typeinterpret{\sigma} \times \typeinterpret{\tau}}(
    \interpret{u}) \oplus 1)\ \otimes \\
  \phantom{ABC}
    (\ \lift_{\typeinterpret{\rho}}(2) \otimes \pi^1(\interpret{s}[
    x:=\pi^1(\interpret{u})]) \oplus
    \lift_{\typeinterpret{\rho}}(2) \otimes \pi^1(\interpret{t}[
    y:=\pi^2(\interpret{u})])
    \ ) \approx \\
  \lift_{\typeinterpret{\rho}}(4)\ \oplus \\
  \phantom{A}
    \lift_{\typeinterpret{\rho}}(5 \otimes
    \flatten_{\typeinterpret{\sigma} \times \typeinterpret{\tau}}(
    \interpret{u}))\ \oplus \\
  \phantom{A}
    \lift_{\typeinterpret{\rho}}(2) \otimes \pi^1(\interpret{s}[
    x:=\pi^1(\interpret{u})])\ \oplus \\
  \phantom{A}
    \lift_{\typeinterpret{\rho}}(2) \otimes \pi^1(\interpret{t}[
    y:=\pi^2(\interpret{u})])\ \oplus \\
  \phantom{A}
    \lift_{\typeinterpret{\rho}}(2 \otimes
    \flatten_{\typeinterpret{\sigma} \times \typeinterpret{\tau}}(
    \interpret{u}) \otimes
    \lift_{\typeinterpret{\rho}}(2) \otimes \pi^1(\interpret{s}[
    x:=\pi^1(\interpret{u})])\ \oplus \\
  \phantom{A}
    \lift_{\typeinterpret{\rho}}(2 \otimes
    \flatten_{\typeinterpret{\sigma} \times \typeinterpret{\tau}}(
    \interpret{u}) \otimes
    \lift_{\typeinterpret{\rho}}(2) \otimes \pi^1(\interpret{t}[
    y:=\pi^2(\interpret{u})])
  \end{array}
  \]
  This is once more oriented by absolute positiveness.

\item $\interpret{\proj^2_{\rho,\pi}(\mathtt{case}_{\sigma,\tau,
  \mathtt{and}\,\rho,\pi}(u,\abs{x:\sigma}{s},\abs{y:\tau}{t}))}
  \succ \interpret{\mathtt{case}_{\sigma,\tau,\pi}(u,\abs{x:
  \sigma}{\proj^2_{\rho,\pi}(s)},\abs{y:\tau}{\proj^2_{\rho,\pi}(t)})}$ \\
  Analogous to the inequality above.

\item $\interpret{\mathtt{case}_{\rho,\pi,\xi}(\mathtt{case}_{\sigma,
  \tau,\mathtt{or}\,\rho\,\pi}(u,\abs{x:\sigma}{s},\abs{y:\tau}{t}),
  \abs{z:\rho}{v},\abs{a:\pi}{w})} \succ \\
  \interpret{\mathtt{case}_{\sigma,\tau,\xi}(u,\abs{x:\sigma}{
  \mathtt{case}_{\rho,\pi,\xi}(s,\abs{z:\rho}{v},\abs{a:\pi}{w})},
  \abs{y:\tau}{\mathtt{case}_{\rho,\pi,\xi}(t,\abs{z:\rho}{v},\abs{a:
  \pi}{w})})}$ \\
  This is the longest of the inequalities.  As before, we turn first
  to the left-hand side.
  \[
  \begin{array}{l}
  \interpret{\mathtt{case}_{\rho,\pi,\xi}(\mathtt{case}_{\sigma,
  \tau,\mathtt{or}\,\rho\,\pi}(u,\abs{x:\sigma}{s},\abs{y:\tau}{t}),
  \abs{z:\rho}{v},\abs{a:\pi}{w})} \approx \\
  \mathcal{J}(\mathtt{case})_{\typeinterpret{\rho},\typeinterpret{\pi},
    \typeinterpret{\xi}}(
    \mathcal{J}(\mathtt{case})_{\typeinterpret{\sigma},
    \typeinterpret{\tau},\typeinterpret{\rho} \times
    \typeinterpret{pi}}(\interpret{u},\abs{x}{\interpret{s}},
    \abs{y}{\interpret{t}}),\abs{z}{\interpret{v}},
    \abs{a}{\interpret{w}}) \approx \\
  \mathcal{J}(\mathtt{case})_{\typeinterpret{\rho},
    \typeinterpret{\pi},\typeinterpret{\xi}}( \\
  \phantom{ABC}
    \lift_{\typeinterpret{\rho} \times \typeinterpret{\pi}}(2) \oplus
    \lift_{\typeinterpret{\rho} \times \typeinterpret{\pi}}(3 \otimes
      \flatten_{\typeinterpret{\sigma} \times \typeinterpret{\tau}}(
      \interpret{u}))\ \oplus \\
  \end{array}
  \]
  \[
  \begin{array}{l}
  \phantom{ABC}
    \interpret{s}[x:=\pi^1(\interpret{u})] \oplus
    \interpret{t}[y:=\pi^2(\interpret{u})]\ \oplus \\
  \phantom{ABC}
    \lift_{\typeinterpret{\rho} \times \typeinterpret{\pi}}(
      \flatten_{\typeinterpret{\sigma} \times \typeinterpret{\tau}}(
      \interpret{u})) \otimes \interpret{s}[x:=\pi^1(\interpret{u})]
      \oplus \\
  \phantom{ABC}
    \lift_{\typeinterpret{\rho} \times \typeinterpret{\pi}}(
      \flatten_{\typeinterpret{\sigma} \times \typeinterpret{\tau}}(
      \interpret{u})) \otimes \interpret{t}[y:=\pi^2(\interpret{u})] \\
  \phantom{A}
  , \abs{z}{\interpret{v}},\ \abs{a}{\interpret{w}}\ ) \\
  \end{array}
  \]
  Once we start filling in the outer $\mathtt{case}$ interpretation,
  this is going to get very messy indeed.  So, we will use the
  following shorthand notation: \\
  $su = \interpret{s}[x:=\pi^1(\interpret{u})]$ \\
  $tu = \interpret{t}[y:=\pi^2(\interpret{u})]$ \\
  $A = \,
    \lift_{\typeinterpret{\rho} \times \typeinterpret{\pi}}(2) \oplus
    \lift_{\typeinterpret{\rho} \times \typeinterpret{\pi}}(3 \otimes
      \flatten_{\typeinterpret{\sigma} \times \typeinterpret{\tau}}(
      \interpret{u})) \oplus su \oplus tu\ \oplus \\
  \phantom{ABC}
    \lift_{\typeinterpret{\rho} \times \typeinterpret{\pi}}(
      \flatten_{\typeinterpret{\sigma} \times \typeinterpret{\tau}}(
      \interpret{u})) \otimes su\ \oplus \\
  \phantom{ABC}
    \lift_{\typeinterpret{\rho} \times \typeinterpret{\pi}}(
      \flatten_{\typeinterpret{\sigma} \times \typeinterpret{\tau}}(
      \interpret{u})) \otimes\ tu \\
  $ \\
  Then, the left-hand side $\approx$
  \[
  \begin{array}{l}
  \mathcal{J}(\mathtt{case})_{\typeinterpret{\rho},
    \typeinterpret{\pi},\typeinterpret{\xi}}(A,
      \abs{z}{\interpret{v}}, \abs{a}{\interpret{w}}\ ) \approx \\
  \lift_{\typeinterpret{\xi}}(2) \oplus
    \lift_{\typeinterpret{\xi}}(3 \otimes \flatten_{
    \typeinterpret{\rho} \times \typeinterpret{\pi}}(A))\ \oplus \\
  \phantom{A}
    \interpret{v}[z:=\pi^1(A)] \oplus
    \interpret{w}[a:=\pi^2(A)]\ \oplus\ \\
  \phantom{A}
    \lift_{\typeinterpret{\xi}}(\flatten_{
    \typeinterpret{\rho} \times \typeinterpret{\pi}}(A)) \otimes
    \interpret{v}[z:=\pi^1(A)]\ \oplus \\
  \phantom{A}
    \lift_{\typeinterpret{\xi}}(\flatten_{
    \typeinterpret{\rho} \times \typeinterpret{\pi}}(A)) \otimes
    \interpret{w}[a:=\pi^2(A)] \approx \\
  \lift_{\typeinterpret{\xi}}(2)\ \oplus \\
  \phantom{A}
     \lift_{\typeinterpret{\xi}}(6)\ \oplus \\
  \phantom{A}
     \lift_{\typeinterpret{\xi}}(9 \otimes
      \flatten_{\typeinterpret{\sigma} \times \typeinterpret{\tau}}(
      \interpret{u}))\ \oplus \\
  \phantom{A}
    \lift_{\typeinterpret{\xi}}(3 \otimes \flatten_{\typeinterpret{\rho}
    \times \typeinterpret{\pi}}(su))\ \oplus \\
  \phantom{A}
    \lift_{\typeinterpret{\xi}}(3 \otimes \flatten_{\typeinterpret{\rho}
    \times \typeinterpret{\pi}}(tu))\ \oplus \\
  \phantom{A}
    \lift_{\typeinterpret{\xi}}(3 \otimes \flatten_{\typeinterpret{\sigma}
    \times \typeinterpret{\tau}}(\interpret{u}) \otimes
    \flatten_{\typeinterpret{\rho} \times \typeinterpret{\pi}}(su))\
    \oplus \\
  \phantom{A}
    \lift_{\typeinterpret{\xi}}(3 \otimes \flatten_{\typeinterpret{\sigma}
    \times \typeinterpret{\tau}}(\interpret{u}) \otimes
    \flatten_{\typeinterpret{\rho} \times \typeinterpret{\pi}}(tu))\
    \oplus \\
  \phantom{A}
    \interpret{v}[z:=\pi^1(A)] \oplus
    \interpret{w}[a:=\pi^2(A)]\ \oplus\ \\
  \phantom{A}
    \lift_{\typeinterpret{\xi}}(2) \otimes
     \interpret{v}[z:=\pi^1(A)]
    \ \oplus \\
  \phantom{A}
    \lift_{\typeinterpret{\xi}}(3 \otimes
    \flatten_{\typeinterpret{\sigma} \times \typeinterpret{\tau}}(
    \interpret{u})) \otimes
     \interpret{v}[z:=\pi^1(A)]
    \ \oplus \\
  \phantom{A}
    \lift_{\typeinterpret{\xi}}(\flatten_{\typeinterpret{\rho} \times
    \typeinterpret{\pi}}(su)) \otimes
     \interpret{v}[z:=\pi^1(A)]
    \ \oplus\\
  \phantom{A}
    \lift_{\typeinterpret{\xi}}(\flatten_{\typeinterpret{\rho} \times
    \typeinterpret{\pi}}(tu)) \otimes
     \interpret{v}[z:=\pi^1(A)]
    \ \oplus \\
  \phantom{A}
    \lift_{\typeinterpret{\xi}}(\flatten_{\typeinterpret{\sigma} \times
    \typeinterpret{\tau}}(\interpret{u}) \otimes
    \flatten_{\typeinterpret{\rho} \times \typeinterpret{\pi}}(su))
    \otimes
     \interpret{v}[z:=\pi^1(A)]
    \ \oplus \\
  \phantom{A}
    \lift_{\typeinterpret{\xi}}(\flatten_{\typeinterpret{\sigma} \times
    \typeinterpret{\tau}}(\interpret{u}) \otimes
    \flatten_{\typeinterpret{\rho} \times \typeinterpret{\pi}}(tu))
    \otimes
     \interpret{v}[z:=\pi^1(A)]
    \ \oplus \\
  \phantom{A}
    \lift_{\typeinterpret{\xi}}(2) \otimes
    \interpret{w}[a:=\pi^2(A)]
    \ \oplus \\
  \phantom{A}
    \lift_{\typeinterpret{\xi}}(3 \otimes
    \flatten_{\typeinterpret{\sigma} \times \typeinterpret{\tau}}(
    \interpret{u})) \otimes
    \interpret{w}[a:=\pi^2(A)]
    \ \oplus \\
  \phantom{A}
    \lift_{\typeinterpret{\xi}}(\flatten_{\typeinterpret{\rho} \times
    \typeinterpret{\pi}}(su)) \otimes
    \interpret{w}[a:=\pi^2(A)]
    \ \oplus\\
  \phantom{A}
    \lift_{\typeinterpret{\xi}}(\flatten_{\typeinterpret{\rho} \times
    \typeinterpret{\pi}}(tu)) \otimes
    \interpret{w}[a:=\pi^2(A)]
    \ \oplus \\
  \phantom{A}
    \lift_{\typeinterpret{\xi}}(\flatten_{\typeinterpret{\sigma} \times
    \typeinterpret{\tau}}(\interpret{u}) \otimes
    \flatten_{\typeinterpret{\rho} \times \typeinterpret{\pi}}(su))
    \otimes
    \interpret{w}[a:=\pi^2(A)]
    \ \oplus \\
  \phantom{A}
    \lift_{\typeinterpret{\xi}}(\flatten_{\typeinterpret{\sigma} \times
    \typeinterpret{\tau}}(\interpret{u}) \otimes
    \flatten_{\typeinterpret{\rho} \times \typeinterpret{\pi}}(tu))
    \otimes
    \interpret{w}[a:=\pi^2(A)]
    \ \oplus \\
  \end{array}
  \]
  We can \emph{slightly} shorten this term by combining parts, but
  the result is still quite long:
  \[
  \begin{array}{l}
  \langle\text{the left-hand side}\rangle \approx \\
  \lift_{\typeinterpret{\xi}}(8)\ \oplus \\
  \phantom{A}
     \lift_{\typeinterpret{\xi}}(9 \otimes
      \flatten_{\typeinterpret{\sigma} \times \typeinterpret{\tau}}(
      \interpret{u}))\ \oplus \\
  \phantom{A}
    \lift_{\typeinterpret{\xi}}(3 \otimes \flatten_{\typeinterpret{\rho}
    \times \typeinterpret{\pi}}(su))\ \oplus \\
  \phantom{A}
    \lift_{\typeinterpret{\xi}}(3 \otimes \flatten_{\typeinterpret{\rho}
    \times \typeinterpret{\pi}}(tu))\ \oplus \\
  \phantom{A}
    \lift_{\typeinterpret{\xi}}(3 \otimes \flatten_{\typeinterpret{\sigma}
    \times \typeinterpret{\tau}}(\interpret{u}) \otimes
    \flatten_{\typeinterpret{\rho} \times \typeinterpret{\pi}}(su))\
    \oplus \\
  \phantom{A}
    \lift_{\typeinterpret{\xi}}(3 \otimes \flatten_{\typeinterpret{\sigma}
    \times \typeinterpret{\tau}}(\interpret{u}) \otimes
    \flatten_{\typeinterpret{\rho} \times \typeinterpret{\pi}}(tu))\
    \oplus \\
  \phantom{A}
    \lift_{\typeinterpret{\xi}}(3) \otimes \interpret{v}[z:=\pi^1(A)]
    \ \oplus \\
  \end{array}
  \]
  \[
  \begin{array}{l}
  \phantom{A}
    \lift_{\typeinterpret{\xi}}(3 \otimes
    \flatten_{\typeinterpret{\sigma} \times \typeinterpret{\tau}}(
    \interpret{u})) \otimes
     \interpret{v}[z:=\pi^1(A)]
    \ \oplus \\
  \phantom{A}
    \lift_{\typeinterpret{\xi}}(\flatten_{\typeinterpret{\rho} \times
    \typeinterpret{\pi}}(su)) \otimes
     \interpret{v}[z:=\pi^1(A)]
    \ \oplus\\
  \phantom{A}
    \lift_{\typeinterpret{\xi}}(\flatten_{\typeinterpret{\rho} \times
    \typeinterpret{\pi}}(tu)) \otimes
     \interpret{v}[z:=\pi^1(A)]
    \ \oplus \\
  \phantom{A}
    \lift_{\typeinterpret{\xi}}(\flatten_{\typeinterpret{\sigma} \times
    \typeinterpret{\tau}}(\interpret{u}) \otimes
    \flatten_{\typeinterpret{\rho} \times \typeinterpret{\pi}}(su))
    \otimes
     \interpret{v}[z:=\pi^1(A)]
    \ \oplus \\
  \phantom{A}
    \lift_{\typeinterpret{\xi}}(\flatten_{\typeinterpret{\sigma} \times
    \typeinterpret{\tau}}(\interpret{u}) \otimes
    \flatten_{\typeinterpret{\rho} \times \typeinterpret{\pi}}(tu))
    \otimes
     \interpret{v}[z:=\pi^1(A)]
    \ \oplus \\
  \phantom{A}
    \lift_{\typeinterpret{\xi}}(3) \otimes \interpret{w}[a:=\pi^2(A)]
    \ \oplus \\
  \phantom{A}
    \lift_{\typeinterpret{\xi}}(3 \otimes
    \flatten_{\typeinterpret{\sigma} \times \typeinterpret{\tau}}(
    \interpret{u})) \otimes
    \interpret{w}[a:=\pi^2(A)]
    \ \oplus \\
  \phantom{A}
    \lift_{\typeinterpret{\xi}}(\flatten_{\typeinterpret{\rho} \times
    \typeinterpret{\pi}}(su)) \otimes
    \interpret{w}[a:=\pi^2(A)]
    \ \oplus\\
  \phantom{A}
    \lift_{\typeinterpret{\xi}}(\flatten_{\typeinterpret{\rho} \times
    \typeinterpret{\pi}}(tu)) \otimes
    \interpret{w}[a:=\pi^2(A)]
    \ \oplus \\
  \phantom{A}
    \lift_{\typeinterpret{\xi}}(\flatten_{\typeinterpret{\sigma} \times
    \typeinterpret{\tau}}(\interpret{u}) \otimes
    \flatten_{\typeinterpret{\rho} \times \typeinterpret{\pi}}(su))
    \otimes
    \interpret{w}[a:=\pi^2(A)]
    \ \oplus \\
  \phantom{A}
    \lift_{\typeinterpret{\xi}}(\flatten_{\typeinterpret{\sigma} \times
    \typeinterpret{\tau}}(\interpret{u}) \otimes
    \flatten_{\typeinterpret{\rho} \times \typeinterpret{\pi}}(tu))
    \otimes
    \interpret{w}[a:=\pi^2(A)]
  \end{array}
  \]
  Now, let us turn to the right-hand side.
  \[
  \begin{array}{l}
  \interpret{\mathtt{case}_{\sigma,\tau,\xi}(u,\abs{x:\sigma}{
  \mathtt{case}_{\rho,\pi,\xi}(s,\abs{z:\rho}{v},\abs{a:\pi}{w})},
  \abs{y:\tau}{\mathtt{case}_{\rho,\pi,\xi}(t,\abs{z:\rho}{v},\abs{a:
  \pi}{w})})} \approx \\
  \mathcal{J}(\mathtt{case})_{\sigma,\tau,\xi}(\interpret{u},\abs{x}{
    \lift_{\typeinterpret{\xi}}(2) \oplus
      \lift_{\typeinterpret{\xi}}(3 \otimes \flatten_{
      \typeinterpret{\rho} \times \typeinterpret{\pi}}(\interpret{s}))\
      \oplus \\
    \phantom{ABCDEFGHIJKL}
      \interpret{v}[z:=\pi^1(\interpret{s})] \oplus
      \interpret{w}[a:=\pi^2(\interpret{s})]\ \oplus \\
    \phantom{ABCDEFGHIJKL}
      \lift_{\typeinterpret{\xi}}(\flatten_{\typeinterpret{\rho} \times
      \typeinterpret{\pi}}(\interpret{s}))
      \otimes \interpret{v}[z:=\pi^1(\interpret{s})]\
      \oplus \\
    \phantom{ABCDEFGHIJKL}
      \lift_{\typeinterpret{\xi}}(\flatten_{\typeinterpret{\rho}
      \times \typeinterpret{\pi}}(\interpret{s}))
      \otimes \interpret{w}[a:=\pi^2(\interpret{s})]
    },\ \\
    \phantom{ABCDEFGHIJ}\abs{y}{
    \lift_{\typeinterpret{\xi}}(2) \oplus
      \lift_{\typeinterpret{\xi}}(3 \otimes \flatten_{
      \typeinterpret{\rho} \times \typeinterpret{\pi}}(\interpret{t}))\
      \oplus \\
    \phantom{ABCDEFGHIJKL}
      \interpret{v}[z:=\pi^1(\interpret{t})] \oplus
      \interpret{w}[a:=\pi^2(\interpret{t})]\ \oplus \\
    \phantom{ABCDEFGHIJKL}
      \lift_{\typeinterpret{\xi}}(\flatten_{\typeinterpret{\rho} \times
      \typeinterpret{\pi}}(\interpret{t}))
      \otimes \interpret{v}[z:=\pi^1(\interpret{t})]\
      \oplus \\
    \phantom{ABCDEFGHIJKL}
      \lift_{\typeinterpret{\xi}}(\flatten_{\typeinterpret{\rho}
      \times \typeinterpret{\pi}}(\interpret{t}))
      \otimes \interpret{w}[a:=\pi^2(\interpret{t})]
    }\ )
  \end{array}
  \]
  For brevity, we introduce another shorthand notation:
  for a given term $q$: \\
  $B_q =
    \lift_{\typeinterpret{\xi}}(2) \oplus
      \lift_{\typeinterpret{\xi}}(3 \otimes \flatten_{
      \typeinterpret{\rho} \times \typeinterpret{\pi}}(q))\ \oplus \\
    \phantom{ABC}
      \interpret{v}[z:=\pi^1(q)] \oplus
      \interpret{w}[a:=\pi^2(q)]\ \oplus \\
    \phantom{ABC}
      \lift_{\typeinterpret{\xi}}(\flatten_{\typeinterpret{\rho} \times
      \typeinterpret{\pi}}(q))
      \otimes \interpret{v}[z:=\pi^1(q)]\
      \oplus \\
    \phantom{ABC}
      \lift_{\typeinterpret{\xi}}(\flatten_{\typeinterpret{\rho}
      \times \typeinterpret{\pi}}(q)) \otimes \interpret{w}[a:=\pi^2(q)]
  $. \\
  With this, we have:
  \[
  \begin{array}{l}
  \langle\text{the right-hand side}\rangle \approx \\
  \mathcal{J}(\mathtt{case})_{\typeinterpret{\sigma},
  \typeinterpret{\tau},\typeinterpret{\xi}}(\interpret{u},\abs{x}{
  B_{\interpret{s}}},\abs{y}{B_{\interpret{t}}}) \approx \\
  \lift_{\typeinterpret{\xi}}(2) \oplus
    \lift_{\typeinterpret{\xi}}(3 \otimes \flatten_{\typeinterpret{
    \sigma} \times \typeinterpret{\tau}}(\interpret{u}))\ \oplus \\
  \phantom{A}
  B_{\interpret{s}}[x:=\pi^1(\interpret{u})] \oplus
  B_{\interpret{t}}[x:=\pi^2(\interpret{u})]\ \oplus \\
  \phantom{A}
  \lift_{\typeinterpret{\xi}}(\flatten_{\typeinterpret{\sigma}
    \times \typeinterpret{\tau}}(\interpret{u})) \otimes
    B_{\interpret{s}}[x:=\pi^1(\interpret{u})]\ \oplus \\
  \phantom{A}
    \lift_{\typeinterpret{\xi}}(\flatten_{\typeinterpret{\sigma}
    \times \typeinterpret{\tau}}(\interpret{u})) \otimes
    B_{\interpret{t}}[x:=\pi^2(\interpret{u})]
  \end{array}
  \]
  Note that $x$ is a bound variable in $s$ and $y$ a bound variable
  in $t$; these variables do not occur in $B_q$.  So, we can rewrite
  the above term to:
  \[
  \begin{array}{l}
  \langle\text{the right-hand side}\rangle \approx \\
  \lift_{\typeinterpret{\xi}}(2) \oplus
    \lift_{\typeinterpret{\xi}}(3 \otimes \flatten_{\typeinterpret{
    \sigma} \times \typeinterpret{\tau}}(\interpret{u}))\ \oplus \\
  \phantom{A}
  B_{su} \oplus
  B_{tu}\ \oplus \\
  \phantom{A}
  \lift_{\typeinterpret{\xi}}(\flatten_{\typeinterpret{\sigma}
    \times \typeinterpret{\tau}}(\interpret{u})) \otimes
    B_{su}\ \oplus \\
  \phantom{A}
    \lift_{\typeinterpret{\xi}}(\flatten_{\typeinterpret{\sigma}
    \times \typeinterpret{\tau}}(\interpret{u})) \otimes
    B_{tu} \approx \\
  \lift_{\typeinterpret{\xi}}(2)\ \oplus \\
  \phantom{A}
    \lift_{\typeinterpret{\xi}}(3 \otimes \flatten_{\typeinterpret{
    \sigma} \times \typeinterpret{\tau}}(\interpret{u}))\ \oplus \\
  \phantom{A}
    \lift_{\typeinterpret{\xi}}(2)\ \oplus \\
  \phantom{A}
    \lift_{\typeinterpret{\xi}}(3 \otimes \flatten_{
    \typeinterpret{\rho} \times \typeinterpret{\pi}}(su))\ \oplus \\
  \phantom{A}
    \interpret{v}[z:=\pi^1(su)] \oplus
    \interpret{w}[a:=\pi^2(su)]\ \oplus \\
  \phantom{A}
    \lift_{\typeinterpret{\xi}}(\flatten_{\typeinterpret{\rho} \times
    \typeinterpret{\pi}}(su)) \otimes \interpret{v}[z:=\pi^1(su)]\
    \oplus \\
  \phantom{A}
    \lift_{\typeinterpret{\xi}}(\flatten_{\typeinterpret{\rho}
    \times \typeinterpret{\pi}}(su)) \otimes \interpret{w}[a:=\pi^2(su)]
    \ \oplus \\
  \end{array}
  \]
  \[
  \begin{array}{l}
  \phantom{A}
    \lift_{\typeinterpret{\xi}}(2)\ \oplus \\
  \phantom{A}
    \lift_{\typeinterpret{\xi}}(3 \otimes \flatten_{
    \typeinterpret{\rho} \times \typeinterpret{\pi}}(tu))\ \oplus \\
  \phantom{A}
    \interpret{v}[z:=\pi^1(tu)] \oplus
    \interpret{w}[a:=\pi^2(tu)]\ \oplus \\
  \phantom{A}
    \lift_{\typeinterpret{\xi}}(\flatten_{\typeinterpret{\rho} \times
    \typeinterpret{\pi}}(tu)) \otimes \interpret{v}[z:=\pi^1(tu)]\
    \oplus \\
  \phantom{A}
    \lift_{\typeinterpret{\xi}}(\flatten_{\typeinterpret{\rho}
    \times \typeinterpret{\pi}}(tu)) \otimes \interpret{w}[a:=\pi^2(tu)]
    \ \oplus \\
  \phantom{A}
    \lift_{\typeinterpret{\xi}}(2 \otimes \flatten_{
    \typeinterpret{\sigma} \times \typeinterpret{\tau}}(
    \interpret{u}))\ \oplus \\
  \phantom{A}
    \lift_{\typeinterpret{\xi}}(3 \otimes \flatten_{
    \typeinterpret{\sigma} \times \typeinterpret{\tau}}(
    \interpret{u}) \otimes \flatten_{
    \typeinterpret{\rho} \times \typeinterpret{\pi}}(su))\ \oplus \\
  \phantom{A}
    \lift_{\typeinterpret{\xi}}(\flatten_{\typeinterpret{\sigma}
    \times \typeinterpret{\tau}}(\interpret{u})) \otimes
    \interpret{v}[z:=\pi^1(su)]\ \oplus \\
  \phantom{A}
    \lift_{\typeinterpret{\xi}}(\flatten_{\typeinterpret{\sigma}
    \times \typeinterpret{\tau}}(\interpret{u})) \otimes
    \interpret{w}[a:=\pi^2(su)]\ \oplus \\
  \phantom{A}
    \lift_{\typeinterpret{\xi}}(\flatten_{\typeinterpret{\sigma}
    \times \typeinterpret{\tau}}(\interpret{u}) \otimes
    \flatten_{\typeinterpret{\rho} \times
    \typeinterpret{\pi}}(su)) \otimes \interpret{v}[z:=\pi^1(su)]\
    \oplus \\
  \phantom{A}
    \lift_{\typeinterpret{\xi}}(\flatten_{\typeinterpret{\sigma}
    \times \typeinterpret{\tau}}(\interpret{u}) \otimes
    \flatten_{\typeinterpret{\rho}
    \times \typeinterpret{\pi}}(su)) \otimes \interpret{w}[a:=\pi^2(su)]
    \ \oplus \\
  \phantom{A}
    \lift_{\typeinterpret{\xi}}(2 \otimes \flatten_{
    \typeinterpret{\sigma} \times \typeinterpret{\tau}}(
    \interpret{u}))\ \oplus \\
  \phantom{A}
    \lift_{\typeinterpret{\xi}}(3 \otimes \flatten_{
    \typeinterpret{\sigma} \times \typeinterpret{\tau}}(
    \interpret{u}) \otimes \flatten_{
    \typeinterpret{\rho} \times \typeinterpret{\pi}}(tu))\ \oplus \\
  \phantom{A}
    \lift_{\typeinterpret{\xi}}(\flatten_{\typeinterpret{\sigma}
    \times \typeinterpret{\tau}}(\interpret{u})) \otimes
    \interpret{v}[z:=\pi^1(tu)]\ \oplus \\
  \phantom{A}
    \lift_{\typeinterpret{\xi}}(\flatten_{\typeinterpret{\sigma}
    \times \typeinterpret{\tau}}(\interpret{u})) \otimes
    \interpret{w}[a:=\pi^2(tu)]\ \oplus \\
  \phantom{A}
    \lift_{\typeinterpret{\xi}}(\flatten_{\typeinterpret{\sigma}
    \times \typeinterpret{\tau}}(\interpret{u}) \otimes
    \flatten_{\typeinterpret{\rho} \times
    \typeinterpret{\pi}}(tu)) \otimes \interpret{v}[z:=\pi^1(tu)]\
    \oplus \\
  \phantom{A}
    \lift_{\typeinterpret{\xi}}(\flatten_{\typeinterpret{\sigma}
    \times \typeinterpret{\tau}}(\interpret{u}) \otimes
    \flatten_{\typeinterpret{\rho}
    \times \typeinterpret{\pi}}(tu)) \otimes \interpret{w}[a:=\pi^2(tu)]
  \end{array}
  \]
  Here, we can do some further combinations.
  Let us denote:
  \begin{itemize}
  \item $vsu := \interpret{v}[z:=\pi^1(su)] =
    \interpret{v}[z:=\pi^1(\interpret{s}[x:=\pi^1(\interpret{u})])]$
  \item $wsu := \interpret{w}[a:=\pi^2(su)] =
    \interpret{w}[a:=\pi^2(\interpret{s}[x:=\pi^1(\interpret{u})])]$
  \item $vtu := \interpret{v}[z:=\pi^1(tu)] =
    \interpret{v}[z:=\pi^1(\interpret{t}[y:=\pi^2(\interpret{u})])]$
  \item $wtu := \interpret{w}[a:=\pi^1(tu)] =
    \interpret{w}[a:=\pi^2(\interpret{t}[y:=\pi^2(\interpret{u})])]$
  \end{itemize}

  \medskip
  Then:
  \[
  \begin{array}{l}
  \langle\text{the right-hand side}\rangle \approx \\
  \lift_{\typeinterpret{\xi}}(6)\ \oplus \\
  \phantom{A}
    \lift_{\typeinterpret{\xi}}(7 \otimes \flatten_{\typeinterpret{
    \sigma} \times \typeinterpret{\tau}}(\interpret{u}))\ \oplus \\
  \phantom{A}
    \lift_{\typeinterpret{\xi}}(3 \otimes \flatten_{
    \typeinterpret{\rho} \times \typeinterpret{\pi}}(su))\ \oplus \\
  \phantom{A}
    \lift_{\typeinterpret{\xi}}(3 \otimes \flatten_{
    \typeinterpret{\rho} \times \typeinterpret{\pi}}(tu))\ \oplus \\
  \phantom{A}
    vsu \oplus wsu \oplus
    vtu \oplus wtu\ \oplus \\
  \phantom{A}
    \lift_{\typeinterpret{\xi}}(\flatten_{\typeinterpret{\rho} \times
    \typeinterpret{\pi}}(su)) \otimes vsu\ \oplus \\
  \phantom{A}
    \lift_{\typeinterpret{\xi}}(\flatten_{\typeinterpret{\rho}
    \times \typeinterpret{\pi}}(su)) \otimes wsu\ \oplus \\
  \phantom{A}
    \lift_{\typeinterpret{\xi}}(\flatten_{\typeinterpret{\rho} \times
    \typeinterpret{\pi}}(tu)) \otimes vtu\ \oplus \\
  \phantom{A}
    \lift_{\typeinterpret{\xi}}(\flatten_{\typeinterpret{\rho}
    \times \typeinterpret{\pi}}(tu)) \otimes wtu\ \oplus \\
  \phantom{A}
    \lift_{\typeinterpret{\xi}}(3 \otimes \flatten_{
    \typeinterpret{\sigma} \times \typeinterpret{\tau}}(
    \interpret{u}) \otimes \flatten_{
    \typeinterpret{\rho} \times \typeinterpret{\pi}}(su))\ \oplus \\
  \phantom{A}
    \lift_{\typeinterpret{\xi}}(\flatten_{\typeinterpret{\sigma}
    \times \typeinterpret{\tau}}(\interpret{u})) \otimes vsu\ \oplus \\
  \phantom{A}
    \lift_{\typeinterpret{\xi}}(\flatten_{\typeinterpret{\sigma}
    \times \typeinterpret{\tau}}(\interpret{u})) \otimes wsu\ \oplus \\
  \phantom{A}
    \lift_{\typeinterpret{\xi}}(\flatten_{\typeinterpret{\sigma}
    \times \typeinterpret{\tau}}(\interpret{u}) \otimes
    \flatten_{\typeinterpret{\rho} \times
    \typeinterpret{\pi}}(su)) \otimes vsu\ \oplus \\
  \phantom{A}
    \lift_{\typeinterpret{\xi}}(\flatten_{\typeinterpret{\sigma}
    \times \typeinterpret{\tau}}(\interpret{u}) \otimes
    \flatten_{\typeinterpret{\rho}
    \times \typeinterpret{\pi}}(su)) \otimes wsu\ \oplus \\
  \phantom{A}
    \lift_{\typeinterpret{\xi}}(3 \otimes \flatten_{
    \typeinterpret{\sigma} \times \typeinterpret{\tau}}(
    \interpret{u}) \otimes \flatten_{
    \typeinterpret{\rho} \times \typeinterpret{\pi}}(tu))\ \oplus \\
  \phantom{A}
    \lift_{\typeinterpret{\xi}}(\flatten_{\typeinterpret{\sigma}
    \times \typeinterpret{\tau}}(\interpret{u})) \otimes vtu\ \oplus \\
  \phantom{A}
    \lift_{\typeinterpret{\xi}}(\flatten_{\typeinterpret{\sigma}
    \times \typeinterpret{\tau}}(\interpret{u})) \otimes wtu\ \oplus \\
  \phantom{A}
    \lift_{\typeinterpret{\xi}}(\flatten_{\typeinterpret{\sigma}
    \times \typeinterpret{\tau}}(\interpret{u}) \otimes
    \flatten_{\typeinterpret{\rho} \times
    \typeinterpret{\pi}}(tu)) \otimes vtu\ \oplus \\
  \phantom{A}
    \lift_{\typeinterpret{\xi}}(\flatten_{\typeinterpret{\sigma}
    \times \typeinterpret{\tau}}(\interpret{u}) \otimes
    \flatten_{\typeinterpret{\rho}
    \times \typeinterpret{\pi}}(tu)) \otimes wtu
  \end{array}
  \]

  Now, if we strike out equal terms in the left-hand side and the
  right-hand side (after splitting additive terms where needed)
  the following inequality remains:
  \[
  \begin{array}{l}
  \lift_{\typeinterpret{\xi}}(2)\ \oplus \\
  \phantom{A}
     \lift_{\typeinterpret{\xi}}(2 \otimes
      \flatten_{\typeinterpret{\sigma} \times \typeinterpret{\tau}}(
      \interpret{u}))\ \oplus \\
  \phantom{A}
    \lift_{\typeinterpret{\xi}}(3) \otimes \interpret{v}[z:=\pi^1(A)]
    \ \oplus \\
  \phantom{A}
    \lift_{\typeinterpret{\xi}}(3 \otimes
    \flatten_{\typeinterpret{\sigma} \times \typeinterpret{\tau}}(
    \interpret{u})) \otimes
     \interpret{v}[z:=\pi^1(A)]
    \ \oplus \\
  \phantom{A}
    \lift_{\typeinterpret{\xi}}(\flatten_{\typeinterpret{\rho} \times
    \typeinterpret{\pi}}(su)) \otimes
     \interpret{v}[z:=\pi^1(A)]
    \ \oplus\\
  \phantom{A}
    \lift_{\typeinterpret{\xi}}(\flatten_{\typeinterpret{\rho} \times
    \typeinterpret{\pi}}(tu)) \otimes
     \interpret{v}[z:=\pi^1(A)]
    \ \oplus \\
  \phantom{A}
    \lift_{\typeinterpret{\xi}}(\flatten_{\typeinterpret{\sigma} \times
    \typeinterpret{\tau}}(\interpret{u}) \otimes
    \flatten_{\typeinterpret{\rho} \times \typeinterpret{\pi}}(su))
    \otimes
     \interpret{v}[z:=\pi^1(A)]
    \ \oplus \\
  \phantom{A}
    \lift_{\typeinterpret{\xi}}(\flatten_{\typeinterpret{\sigma} \times
    \typeinterpret{\tau}}(\interpret{u}) \otimes
    \flatten_{\typeinterpret{\rho} \times \typeinterpret{\pi}}(tu))
    \otimes
     \interpret{v}[z:=\pi^1(A)]
    \ \oplus \\
  \phantom{A}
    \lift_{\typeinterpret{\xi}}(3) \otimes \interpret{w}[a:=\pi^2(A)]
    \ \oplus \\
  \phantom{A}
    \lift_{\typeinterpret{\xi}}(3 \otimes
    \flatten_{\typeinterpret{\sigma} \times \typeinterpret{\tau}}(
    \interpret{u})) \otimes
    \interpret{w}[a:=\pi^2(A)]
    \ \oplus \\
  \phantom{A}
    \lift_{\typeinterpret{\xi}}(\flatten_{\typeinterpret{\rho} \times
    \typeinterpret{\pi}}(su)) \otimes
    \interpret{w}[a:=\pi^2(A)]
    \ \oplus\\
  \phantom{A}
    \lift_{\typeinterpret{\xi}}(\flatten_{\typeinterpret{\rho} \times
    \typeinterpret{\pi}}(tu)) \otimes
    \interpret{w}[a:=\pi^2(A)]
    \ \oplus \\
  \phantom{A}
    \lift_{\typeinterpret{\xi}}(\flatten_{\typeinterpret{\sigma} \times
    \typeinterpret{\tau}}(\interpret{u}) \otimes
    \flatten_{\typeinterpret{\rho} \times \typeinterpret{\pi}}(su))
    \otimes
    \interpret{w}[a:=\pi^2(A)]
    \ \oplus \\
  \phantom{A}
    \lift_{\typeinterpret{\xi}}(\flatten_{\typeinterpret{\sigma} \times
    \typeinterpret{\tau}}(\interpret{u}) \otimes
    \flatten_{\typeinterpret{\rho} \times \typeinterpret{\pi}}(tu))
    \otimes
    \interpret{w}[a:=\pi^2(A)] \succ \\
  vsu \oplus wsu \oplus
    vtu \oplus wtu\ \oplus \\
  \phantom{A}
    \lift_{\typeinterpret{\xi}}(\flatten_{\typeinterpret{\rho} \times
    \typeinterpret{\pi}}(su)) \otimes vsu\ \oplus \\
  \phantom{A}
    \lift_{\typeinterpret{\xi}}(\flatten_{\typeinterpret{\rho}
    \times \typeinterpret{\pi}}(su)) \otimes wsu\ \oplus \\
  \phantom{A}
    \lift_{\typeinterpret{\xi}}(\flatten_{\typeinterpret{\rho} \times
    \typeinterpret{\pi}}(tu)) \otimes vtu\ \oplus \\
  \phantom{A}
    \lift_{\typeinterpret{\xi}}(\flatten_{\typeinterpret{\rho}
    \times \typeinterpret{\pi}}(tu)) \otimes wtu\ \oplus \\
  \phantom{A}
    \lift_{\typeinterpret{\xi}}(\flatten_{\typeinterpret{\sigma}
    \times \typeinterpret{\tau}}(\interpret{u})) \otimes vsu\ \oplus \\
  \phantom{A}
    \lift_{\typeinterpret{\xi}}(\flatten_{\typeinterpret{\sigma}
    \times \typeinterpret{\tau}}(\interpret{u})) \otimes wsu\ \oplus \\
  \phantom{A}
    \lift_{\typeinterpret{\xi}}(\flatten_{\typeinterpret{\sigma}
    \times \typeinterpret{\tau}}(\interpret{u}) \otimes
    \flatten_{\typeinterpret{\rho} \times
    \typeinterpret{\pi}}(su)) \otimes vsu\ \oplus \\
  \phantom{A}
    \lift_{\typeinterpret{\xi}}(\flatten_{\typeinterpret{\sigma}
    \times \typeinterpret{\tau}}(\interpret{u}) \otimes
    \flatten_{\typeinterpret{\rho}
    \times \typeinterpret{\pi}}(su)) \otimes wsu\ \oplus \\
  \phantom{A}
    \lift_{\typeinterpret{\xi}}(\flatten_{\typeinterpret{\sigma}
    \times \typeinterpret{\tau}}(\interpret{u})) \otimes vtu\ \oplus \\
  \phantom{A}
    \lift_{\typeinterpret{\xi}}(\flatten_{\typeinterpret{\sigma}
    \times \typeinterpret{\tau}}(\interpret{u})) \otimes wtu\ \oplus \\
  \phantom{A}
    \lift_{\typeinterpret{\xi}}(\flatten_{\typeinterpret{\sigma}
    \times \typeinterpret{\tau}}(\interpret{u}) \otimes
    \flatten_{\typeinterpret{\rho} \times
    \typeinterpret{\pi}}(tu)) \otimes vtu\ \oplus \\
  \phantom{A}
    \lift_{\typeinterpret{\xi}}(\flatten_{\typeinterpret{\sigma}
    \times \typeinterpret{\tau}}(\interpret{u}) \otimes
    \flatten_{\typeinterpret{\rho}
    \times \typeinterpret{\pi}}(tu)) \otimes wtu
  \end{array}
  \]
  But now note that $A \succeq su$ and $A \succeq tu$.  Therefore, by
  monotonicity, each term L$i \succeq$ R$i$ below:
  \[
  \begin{array}{l}
  \lift_{\typeinterpret{\xi}}(2)\ \oplus \\
  \phantom{A}
     \lift_{\typeinterpret{\xi}}(2 \otimes
      \flatten_{\typeinterpret{\sigma} \times \typeinterpret{\tau}}(
      \interpret{u}))\ \oplus \\
  \phantom{A}
    \lift_{\typeinterpret{\xi}}(3) \otimes \interpret{v}[z:=\pi^1(A)]
    \ \oplus \\
  \phantom{A}
    \lift_{\typeinterpret{\xi}}(3 \otimes
    \flatten_{\typeinterpret{\sigma} \times \typeinterpret{\tau}}(
    \interpret{u})) \otimes
     \interpret{v}[z:=\pi^1(A)]
    \ \oplus \\
  \phantom{A}
    \lift_{\typeinterpret{\xi}}(\flatten_{\typeinterpret{\rho} \times
    \typeinterpret{\pi}}(su)) \otimes
     \interpret{v}[z:=\pi^1(A)]
    \ \oplus \hfill (L1) \\
  \phantom{A}
    \lift_{\typeinterpret{\xi}}(\flatten_{\typeinterpret{\rho} \times
    \typeinterpret{\pi}}(tu)) \otimes
     \interpret{v}[z:=\pi^1(A)]
    \ \oplus \hfill (L2) \\
  \phantom{A}
    \lift_{\typeinterpret{\xi}}(\flatten_{\typeinterpret{\sigma} \times
    \typeinterpret{\tau}}(\interpret{u}) \otimes
    \flatten_{\typeinterpret{\rho} \times \typeinterpret{\pi}}(su))
    \otimes
     \interpret{v}[z:=\pi^1(A)]
    \ \oplus \hfill (L3) \\
  \phantom{A}
    \lift_{\typeinterpret{\xi}}(\flatten_{\typeinterpret{\sigma} \times
    \typeinterpret{\tau}}(\interpret{u}) \otimes
    \flatten_{\typeinterpret{\rho} \times \typeinterpret{\pi}}(tu))
    \otimes
     \interpret{v}[z:=\pi^1(A)]
    \ \oplus \hfill (L4) \\
  \phantom{A}
    \lift_{\typeinterpret{\xi}}(3) \otimes \interpret{w}[a:=\pi^2(A)]
    \ \oplus \\
  \phantom{A}
    \lift_{\typeinterpret{\xi}}(3 \otimes
    \flatten_{\typeinterpret{\sigma} \times \typeinterpret{\tau}}(
    \interpret{u})) \otimes
    \interpret{w}[a:=\pi^2(A)]
    \ \oplus \\
  \phantom{A}
    \lift_{\typeinterpret{\xi}}(\flatten_{\typeinterpret{\rho} \times
    \typeinterpret{\pi}}(su)) \otimes
    \interpret{w}[a:=\pi^2(A)]
    \ \oplus \hfill (L5) \\
  \phantom{A}
    \lift_{\typeinterpret{\xi}}(\flatten_{\typeinterpret{\rho} \times
    \typeinterpret{\pi}}(tu)) \otimes
    \interpret{w}[a:=\pi^2(A)]
    \ \oplus \hfill (L6) \\
  \phantom{A}
    \lift_{\typeinterpret{\xi}}(\flatten_{\typeinterpret{\sigma} \times
    \typeinterpret{\tau}}(\interpret{u}) \otimes
    \flatten_{\typeinterpret{\rho} \times \typeinterpret{\pi}}(su))
    \otimes
    \interpret{w}[a:=\pi^2(A)]
    \ \oplus \hfill (L7) \\
  \phantom{A}
    \lift_{\typeinterpret{\xi}}(\flatten_{\typeinterpret{\sigma} \times
    \typeinterpret{\tau}}(\interpret{u}) \otimes
    \flatten_{\typeinterpret{\rho} \times \typeinterpret{\pi}}(tu))
    \otimes
    \interpret{w}[a:=\pi^2(A)] \hfill (L8) \\
  \succ \\
  vsu \oplus wsu \oplus
    vtu \oplus wtu\ \oplus \\
  \phantom{A}
    \lift_{\typeinterpret{\xi}}(\flatten_{\typeinterpret{\rho} \times
    \typeinterpret{\pi}}(su)) \otimes vsu\ \oplus
    \hfill (R1) \\
  \phantom{A}
    \lift_{\typeinterpret{\xi}}(\flatten_{\typeinterpret{\rho}
    \times \typeinterpret{\pi}}(su)) \otimes wsu\ \oplus
    \hfill (R5) \\
  \end{array}\]\[\begin{array}{l}
  \phantom{A}
    \lift_{\typeinterpret{\xi}}(\flatten_{\typeinterpret{\rho} \times
    \typeinterpret{\pi}}(tu)) \otimes vtu\ \oplus \hfill (R2) \\
  \phantom{A}
    \lift_{\typeinterpret{\xi}}(\flatten_{\typeinterpret{\rho}
    \times \typeinterpret{\pi}}(tu)) \otimes wtu\ \oplus
    \hfill (R6) \\
  \phantom{A}
    \lift_{\typeinterpret{\xi}}(\flatten_{\typeinterpret{\sigma}
    \times \typeinterpret{\tau}}(\interpret{u})) \otimes vsu\ \oplus \\
  \phantom{A}
    \lift_{\typeinterpret{\xi}}(\flatten_{\typeinterpret{\sigma}
    \times \typeinterpret{\tau}}(\interpret{u})) \otimes wsu\ \oplus \\
  \phantom{A}
    \lift_{\typeinterpret{\xi}}(\flatten_{\typeinterpret{\sigma}
    \times \typeinterpret{\tau}}(\interpret{u}) \otimes
    \flatten_{\typeinterpret{\rho} \times
    \typeinterpret{\pi}}(su)) \otimes vsu\ \oplus \hfill (R3) \\
  \phantom{A}
    \lift_{\typeinterpret{\xi}}(\flatten_{\typeinterpret{\sigma}
    \times \typeinterpret{\tau}}(\interpret{u}) \otimes
    \flatten_{\typeinterpret{\rho}
    \times \typeinterpret{\pi}}(su)) \otimes wsu\ \oplus \hfill (R7) \\
  \phantom{A}
    \lift_{\typeinterpret{\xi}}(\flatten_{\typeinterpret{\sigma}
    \times \typeinterpret{\tau}}(\interpret{u})) \otimes vtu\ \oplus \\
  \phantom{A}
    \lift_{\typeinterpret{\xi}}(\flatten_{\typeinterpret{\sigma}
    \times \typeinterpret{\tau}}(\interpret{u})) \otimes wtu\ \oplus \\
  \phantom{A}
    \lift_{\typeinterpret{\xi}}(\flatten_{\typeinterpret{\sigma}
    \times \typeinterpret{\tau}}(\interpret{u}) \otimes
    \flatten_{\typeinterpret{\rho} \times
    \typeinterpret{\pi}}(tu)) \otimes vtu\ \oplus \hfill (R4) \\
  \phantom{A}
    \lift_{\typeinterpret{\xi}}(\flatten_{\typeinterpret{\sigma}
    \times \typeinterpret{\tau}}(\interpret{u}) \otimes
    \flatten_{\typeinterpret{\rho}
    \times \typeinterpret{\pi}}(tu)) \otimes wtu \hfill (R8)
  \end{array}
  \]
  This merely leaves the following proof obligation:
  \[
  \begin{array}{l}
  \lift_{\typeinterpret{\xi}}(2)\ \oplus \\
  \phantom{A}
     \lift_{\typeinterpret{\xi}}(2 \otimes
      \flatten_{\typeinterpret{\sigma} \times \typeinterpret{\tau}}(
      \interpret{u}))\ \oplus \\
  \phantom{A}
    \lift_{\typeinterpret{\xi}}(3) \otimes \interpret{v}[z:=\pi^1(A)]
    \ \oplus \\
  \phantom{A}
    \lift_{\typeinterpret{\xi}}(3 \otimes
    \flatten_{\typeinterpret{\sigma} \times \typeinterpret{\tau}}(
    \interpret{u})) \otimes
     \interpret{v}[z:=\pi^1(A)]
    \ \oplus \\
  \phantom{A}
    \lift_{\typeinterpret{\xi}}(3) \otimes \interpret{w}[a:=\pi^2(A)]
    \ \oplus \\
  \phantom{A}
    \lift_{\typeinterpret{\xi}}(3 \otimes
    \flatten_{\typeinterpret{\sigma} \times \typeinterpret{\tau}}(
    \interpret{u})) \otimes
    \interpret{w}[a:=\pi^2(A)]
  \succ \\
  vsu \oplus wsu \oplus vtu \oplus wtu\ \oplus \\
  \phantom{A}
    \lift_{\typeinterpret{\xi}}(\flatten_{\typeinterpret{\sigma}
    \times \typeinterpret{\tau}}(\interpret{u})) \otimes vsu\ \oplus \\
  \phantom{A}
    \lift_{\typeinterpret{\xi}}(\flatten_{\typeinterpret{\sigma}
    \times \typeinterpret{\tau}}(\interpret{u})) \otimes wsu\ \oplus \\
  \phantom{A}
    \lift_{\typeinterpret{\xi}}(\flatten_{\typeinterpret{\sigma}
    \times \typeinterpret{\tau}}(\interpret{u})) \otimes vtu\ \oplus \\
  \phantom{A}
    \lift_{\typeinterpret{\xi}}(\flatten_{\typeinterpret{\sigma}
    \times \typeinterpret{\tau}}(\interpret{u})) \otimes wtu \\
  \end{array}
  \]
  Since $\lift_{\typeinterpret{\xi}}(3) \otimes s \approx
  s \oplus s \oplus s$, we can eliminate all remaining terms (for
  example: $\lift_{\typeinterpret{\xi}}(3) \otimes
  \interpret{v}[z:=\pi^1(A)] \approx
  \interpret{v}[z:=\pi^1(A)] \oplus \interpret{v}[z:=\pi^1(A)] \oplus
  \interpret{v}[z:=\pi^1(A)] \succeq vsu \oplus vtu$); thus, the
  inequality holds.

\item
  $\interpret{\mathtt{let}_{\varphi,\rho}(
  \mathtt{case}_{\sigma,\tau,\exists\varphi}(
  u,\abs{x:\sigma}{s},\abs{y:\tau}{t}),v)} \succ \\
  \interpret{\mathtt{case}_{\sigma,\tau,\rho}(u,
  \abs{x:\sigma}{\mathtt{let}_{\varphi,\rho}(s,v)},
  \abs{y:\tau}{\mathtt{let}_{\varphi,\rho}(t,v)})}$ \\

  In the following, let us denote $v_N :=
  \interpret{v} * \nat \cdot \lift_{\typeinterpret{\varphi}\nat}(0)$
  and $u_f := \flatten_{\typeinterpret{\sigma} \times \typeinterpret{
  \tau}}(\interpret{u})$.  With these abbreviations, we have the
  following on the left-hand side:
  \[
  \begin{array}{l}
  \interpret{\mathtt{let}_{\varphi,\rho}(
    \mathtt{case}_{\sigma,\tau,\exists\varphi}(u,
    \abs{x:\sigma}{s},\abs{y:\tau}{t}),v)} \approx \\
  \lift_{\interpret{\rho}}(1)\ \oplus \\
  \phantom{A}
  \lift_{\interpret{\rho}}(2) \otimes
    \interpret{\mathtt{case}_{\sigma,\tau,\exists\varphi}(u,
    \abs{x}{s},\abs{y}{t})} * \typeinterpret{\rho} \cdot
    (\tabs{\alpha}{\abs{z}{\interpret{v} * \alpha \cdot z}})\ \oplus \\
  \phantom{A}
  \lift_{\interpret{\rho}}(\flatten_{\Sigma\gamma.\typeinterpret{
    \varphi}\gamma}(\interpret{\mathtt{case}_{\sigma,\tau,
    \exists\varphi}(u,\abs{x}{s},\abs{y}{t})}) \oplus 1) \otimes
    v_N \approx \\
  \lift_{\interpret{\rho}}(1) \oplus v_N \oplus
  \lift_{\interpret{\rho}}(2)\ \otimes \\
  \phantom{A}
  (\ \lift_{\Sigma\gamma.\typeinterpret{\varphi}\gamma}(2) \oplus
    \lift_{\Sigma\gamma.\typeinterpret{\varphi}\gamma}(3 \otimes u_f)\
    \oplus \\
  \phantom{AB}
     \lift_{\Sigma\gamma.\typeinterpret{\varphi}\gamma}(u_n \oplus 1)
     \otimes
     (\interpret{s}[x:=\pi^1(\interpret{u})] \oplus
      \interpret{t}[y:=\pi^2(\interpret{u})])
  \\
  \phantom{A}) * \typeinterpret{\rho} \cdot
    (\tabs{\alpha}{\abs{z}{\interpret{v} * \alpha \cdot z}})\ \oplus \\
  \phantom{A}
  \lift_{\typeinterpret{\rho}}(\flatten_{\Sigma\gamma.\typeinterpret{
    \varphi}\gamma}( \\
  \phantom{AB}
  \lift_{\Sigma\gamma.\typeinterpret{\varphi}\gamma}(2) \oplus
    \lift_{\Sigma\gamma.\typeinterpret{\varphi}\gamma}(3 \otimes
    u_f)\ \oplus \\
  \phantom{AB}
     \lift_{\Sigma\gamma.\typeinterpret{\varphi}\gamma}(u_f \oplus 1)
     \otimes
     (\interpret{s}[x:=\pi^1(\interpret{u})] \oplus
      \interpret{t}[y:=\pi^2(\interpret{u})])
  \\
  \phantom{A})) \otimes v_N \approx \\
  \lift_{\interpret{\rho}}(1) \oplus v_N
    \oplus \lift_{\interpret{\rho}}(2)\ \otimes \\
  \phantom{A}
  (\ \lift_{\typeinterpret{\rho}}(2) \oplus
    \lift_{\typeinterpret{\rho}}(3 \otimes u_f) \oplus
     \lift_{\typeinterpret{\rho}}(u_f \oplus 1)\ \otimes \\
  \phantom{ABC}(\ \interpret{s}[x:=\pi^1(\interpret{u})] *
    \typeinterpret{\rho} \cdot
    (\tabs{\alpha}{\abs{z}{\interpret{v} * \alpha \cdot z}})\ \oplus \\
  \phantom{ABCD} \interpret{t}[y:=\pi^2(\interpret{u})] *
    \typeinterpret{\rho} \cdot
    (\tabs{\alpha}{\abs{z}{\interpret{v} * \alpha \cdot z}})\ ) \\
  \phantom{A})\ \oplus \\
  \end{array}\]\[\begin{array}{l}
  \phantom{A}(\
  \lift_{\typeinterpret{\rho}}(2) \oplus
    \lift_{\typeinterpret{\rho}}(3 \otimes u_f) \oplus
  \lift_{\typeinterpret{\rho}}(u_f \oplus 1)\ \otimes \\
  \phantom{ABC}
    \lift_{\typeinterpret{\rho}}(\flatten_{\Sigma\gamma.\typeinterpret{
    \varphi}\gamma}(\interpret{s}[x:=\pi^1(\interpret{u})] \oplus
      \interpret{t}[y:=\pi^2(\interpret{u})])) \\
  \phantom{A}) \otimes v_N \approx \\
  \lift_{\interpret{\rho}}(1) \oplus v_N \oplus
    \lift_{\interpret{\rho}}(4) \oplus
    \lift_{\typeinterpret{\rho}}(6 \otimes u_f)\ \oplus \\
  \phantom{A}
  \lift_{\interpret{\rho}}(2) \otimes
      (\interpret{s}[x:=\pi^1(\interpret{u})] * \typeinterpret{\rho} \cdot
      (\tabs{\alpha}{\abs{z}{\interpret{v} * \alpha \cdot z}}))\ \oplus \\
  \phantom{A}
  \lift_{\interpret{\rho}}(2) \otimes
      (\interpret{t}[y:=\pi^2(\interpret{u})] * \typeinterpret{\rho} \cdot
      (\tabs{\alpha}{\abs{z}{\interpret{v} * \alpha \cdot z}}))\ \oplus \\
  \phantom{A}
  \lift_{\interpret{\rho}}(2 \otimes u_f) \otimes
    (\interpret{s}[x:=\pi^1(\interpret{u})] *
    \typeinterpret{\rho} \cdot
    (\tabs{\alpha}{\abs{z}{\interpret{v} * \alpha \cdot z}}))\ \oplus \\
  \phantom{A}
  \lift_{\interpret{\rho}}(2 \otimes u_f) \otimes
    (\interpret{t}[y:=\pi^2(\interpret{u})] *
    \typeinterpret{\rho} \cdot
    (\tabs{\alpha}{\abs{z}{\interpret{v} * \alpha \cdot z}}))\ \oplus \\
  \phantom{A}
  \lift_{\typeinterpret{\rho}}(2) \otimes v_N\ \oplus \\
  \phantom{A}
  \lift_{\typeinterpret{\rho}}(3 \otimes u_f) \otimes v_N\ \oplus \\
  \phantom{A}
  \lift_{\typeinterpret{\rho}}(\flatten_{\Sigma\gamma.\typeinterpret{
    \varphi}\gamma}(\interpret{s}[x:=\pi^1(\interpret{u})]))
    \otimes v_N\ \oplus \\
  \phantom{A}
  \lift_{\typeinterpret{\rho}}(\flatten_{\Sigma\gamma.\typeinterpret{
    \varphi}\gamma}(\interpret{t}[y:=\pi^2(\interpret{u})]))
    \otimes v_N\ \oplus \\
  \phantom{A}
  \lift_{\typeinterpret{\rho}}(u_f \otimes
    \flatten_{\Sigma\gamma.\typeinterpret{\varphi}\gamma}(
    \interpret{s}[x:=\pi^1(\interpret{u})]))
    \otimes v_N\ \oplus \\
  \phantom{A} \lift_{\typeinterpret{\rho}}(u_f \otimes
    \flatten_{\Sigma\gamma.\typeinterpret{
    \varphi}\gamma}(\interpret{t}[y:=\pi^2(\interpret{u})])) \otimes
    v_N \approx \\
  \lift_{\interpret{\rho}}(5)\ \oplus \\
  \phantom{A}
  \lift_{\typeinterpret{\rho}}(6 \otimes u_f)\ \oplus \\
  \phantom{A}
  \lift_{\interpret{\rho}}(2) \otimes
    (\xlet{\typeinterpret{\rho}}{\interpret{s}[x:=\pi^1(
    \interpret{u})]}{\expair{\alpha}{z}}{\interpret{v} * \alpha
    \cdot z})\ \oplus \\
  \phantom{A}
  \lift_{\interpret{\rho}}(2) \otimes
    (\xlet{\typeinterpret{\rho}}{\interpret{y}[y:=\pi^2(
    \interpret{u})]}{\expair{\alpha}{z}}{\interpret{v} * \alpha
    \cdot z})\ \oplus \\
  \phantom{A}
  \lift_{\interpret{\rho}}(2 \otimes u_f) \otimes
    (\xlet{\typeinterpret{\rho}}{\interpret{s}[x:=\pi^1(
    \interpret{u})]}{\expair{\alpha}{z}}{\interpret{v} * \alpha
    \cdot z})\ \oplus \\
  \phantom{A}
  \lift_{\interpret{\rho}}(2 \otimes u_f) \otimes
    (\xlet{\typeinterpret{\rho}}{\interpret{y}[y:=\pi^2(
    \interpret{u})]}{\expair{\alpha}{z}}{\interpret{v} * \alpha
    \cdot z})\ \oplus \\
  \phantom{A}
  \lift_{\typeinterpret{\rho}}(3) \otimes v_N\ \oplus \\
  \phantom{A}
  \lift_{\typeinterpret{\rho}}(3 \otimes u_f) \otimes v_N\ \oplus \\
  \phantom{A}
  \lift_{\typeinterpret{\rho}}(\flatten_{\Sigma\gamma.\typeinterpret{
    \varphi}\gamma}(\interpret{s}[x:=\pi^1(\interpret{u})]))
    \otimes v_N\ \oplus \\
  \phantom{A}
  \lift_{\typeinterpret{\rho}}(\flatten_{\Sigma\gamma.\typeinterpret{
    \varphi}\gamma}(\interpret{t}[y:=\pi^2(\interpret{u})]))
    \otimes v_N\ \oplus \\
  \phantom{A}
  \lift_{\typeinterpret{\rho}}(u_f \otimes
    \flatten_{\Sigma\gamma.\typeinterpret{\varphi}\gamma}(
    \interpret{s}[x:=\pi^1(\interpret{u})]))
    \otimes v_N\ \oplus \\
  \phantom{A} \lift_{\typeinterpret{\rho}}(u_f \otimes
    \flatten_{\Sigma\gamma.\typeinterpret{
    \varphi}\gamma}(\interpret{t}[y:=\pi^2(\interpret{u})])) \otimes
    v_N \approx \\
  \end{array}
  \]

  On the right-hand side, we have:
  \[
  \begin{array}{l}
  \interpret{\mathtt{case}_{\sigma,\tau,\rho}(u,
  \abs{x:\sigma}{\mathtt{let}_{\varphi,\rho}(s,v)},
  \abs{y:\tau}{\mathtt{let}_{\varphi,\rho}(t,v)})} \approx \\
  \lift_{\typeinterpret{\rho}}(2) \oplus
  \lift_{\typeinterpret{\rho}}(3 \otimes u_f) \oplus
  \lift_{\typeinterpret{\rho}}(u_f \oplus 1)\ \otimes \\
  \phantom{AB}
    (\interpret{\mathtt{let}_{\varphi,\rho}(s,v)}[x:=\pi^1(
    \interpret{u})] \oplus
    \interpret{\mathtt{let}_{\varphi,\rho}(t,v)}[y:=\pi^2(
    \interpret{u})]) \approx \\
  \lift_{\typeinterpret{\rho}}(2) \oplus
  \lift_{\typeinterpret{\rho}}(3 \otimes u_f)\ \oplus \\
  \phantom{A}
  \interpret{\mathtt{let}_{\varphi,\rho}(s,v)}[x:=\pi^1(
    \interpret{u})]\ \oplus \\
  \phantom{A}
    \interpret{\mathtt{let}_{\varphi,\rho}(t,v)}[y:=\pi^2(
    \interpret{u})]\ \oplus \\
  \phantom{A}
  \lift_{\typeinterpret{\rho}}(u_f) \otimes
    \interpret{\mathtt{let}_{\varphi,\rho}(s,v)}[x:=\pi^1(
    \interpret{u})]\ \oplus \\
  \phantom{A}
  \lift_{\typeinterpret{\rho}}(u_f) \otimes
    \interpret{\mathtt{let}_{\varphi,\rho}(t,v)}[y:=\pi^2(
    \interpret{u})] \approx \\
  \lift_{\typeinterpret{\rho}}(2) \oplus
  \lift_{\typeinterpret{\rho}}(3 \otimes u_f)\ \oplus \\
  \phantom{A}
  (\ \lift_{\typeinterpret{\rho}}(1) \oplus
    \lift_{\typeinterpret{\rho}}(2) \otimes
    (\xlet{\typeinterpret{\rho}}{\interpret{s}}{\expair{\alpha}{z}}{
      \interpret{v} * \alpha \cdot z})\ \oplus \\
  \phantom{AB}
    \lift_{\typeinterpret{\rho}}(\flatten(\interpret{s}) \oplus 1)
    \otimes v_N\ )[x:=\pi^1(\interpret{u})]\ \oplus \\
  \phantom{A}
  (\ \lift_{\typeinterpret{\rho}}(1) \oplus
    \lift_{\typeinterpret{\rho}}(2) \otimes
    (\xlet{\typeinterpret{\rho}}{\interpret{t}}{\expair{\alpha}{z}}{
      \interpret{v} * \alpha \cdot z})\ \oplus \\
  \phantom{AB}
    \lift_{\typeinterpret{\rho}}(\flatten(\interpret{t}) \oplus 1)
    \otimes v_N\ )[y:=\pi^2(\interpret{u})]\ \oplus \\
  \phantom{A}
  \lift_{\typeinterpret{\rho}}(u_f) \otimes
  (\ \lift_{\typeinterpret{\rho}}(1) \oplus
    \lift_{\typeinterpret{\rho}}(2) \otimes
    (\xlet{\typeinterpret{\rho}}{\interpret{s}}{\expair{\alpha}{z}}{
      \interpret{v} * \alpha \cdot z})\ \oplus \\
  \phantom{ABCDEFGHIJ}
    \lift_{\typeinterpret{\rho}}(\flatten(\interpret{s}) \oplus 1)
    \otimes v_N\ )[x:=\pi^1(\interpret{u})]\ \oplus \\
  \phantom{A}
  \lift_{\typeinterpret{\rho}}(u_f) \otimes
  (\ \lift_{\typeinterpret{\rho}}(1) \oplus
    \lift_{\typeinterpret{\rho}}(2) \otimes
    (\xlet{\typeinterpret{\rho}}{\interpret{t}}{\expair{\alpha}{z}}{
      \interpret{v} * \alpha \cdot z})\ \oplus \\
  \phantom{ABCDEFGHIJ}
    \lift_{\typeinterpret{\rho}}(\flatten(\interpret{t}) \oplus 1)
    \otimes v_N\ )[y:=\pi^2(\interpret{u})] \approx \\
  \end{array}\]\[\begin{array}{l}
  \lift_{\typeinterpret{\rho}}(2) \oplus
  \lift_{\typeinterpret{\rho}}(3 \otimes u_f)\ \oplus \\
  \phantom{A}
  \lift_{\typeinterpret{\rho}}(1) \oplus
  \lift_{\typeinterpret{\rho}}(2) \otimes
  (\xlet{\typeinterpret{\rho}}{\interpret{s}[x:=\pi^1(\interpret{u})}{
    \expair{\alpha}{z}}{\interpret{v} * \alpha \cdot z})\ \oplus \\
  \phantom{A}
  v_N \oplus \lift_{\typeinterpret{\rho}}(\flatten(\interpret{s}[x:=
    \pi^1(\interpret{u}))) \otimes v_N\ \oplus \\
  \phantom{A}
  \lift_{\typeinterpret{\rho}}(1) \oplus
    \lift_{\typeinterpret{\rho}}(2) \otimes
    (\xlet{\typeinterpret{\rho}}{\interpret{t}[y:=\pi^2(\interpret{u})]}{
    \expair{\alpha}{z}}{\interpret{v} * \alpha \cdot z})\ \oplus \\
  \phantom{A}
  v_N \oplus \lift_{\typeinterpret{\rho}}(\flatten(\interpret{t}[y:=
    \pi^2(\interpret{u})])) \otimes v_N\ \oplus \\
  \phantom{A}
  \lift_{\typeinterpret{\rho}}(u_f) \oplus
    \lift_{\typeinterpret{\rho}}(2 \otimes u_f) \otimes
    (\xlet{\typeinterpret{\rho}}{\interpret{s}[x:=\pi^1(
    \interpret{u})]}{\expair{\alpha}{z}}{
      \interpret{v} * \alpha \cdot z})\ \oplus \\
  \phantom{A}
  \lift_{\typeinterpret{\rho}}(u_f) \otimes v_N \oplus
    \lift_{\typeinterpret{\rho}}(u_f \otimes
    \flatten(\interpret{s}[x:=\pi^1(\interpret{u})])) \otimes v_N\
    \oplus \\
  \phantom{A}
  \lift_{\typeinterpret{\rho}}(u_f) \oplus
    \lift_{\typeinterpret{\rho}}(2 \otimes u_f) \otimes
    (\xlet{\typeinterpret{\rho}}{\interpret{t}[y:=\pi^2(
    \interpret{u})]}{\expair{\alpha}{z}}{\interpret{v} * \alpha
    \cdot z})\ \oplus \\
  \phantom{A}
  \lift_{\typeinterpret{\rho}}(u_f) \otimes v_N \oplus
    \lift_{\typeinterpret{\rho}}(u_f \otimes\flatten(\interpret{t}[y:=
    \pi^2(\interpret{u})])) \otimes v_N \\
  \end{array}
  \]
  The last step follows because $x$ occurs only in $s$, and $y$ occurs
  only in $t$.  This term can now be reordered to:
  \[
  \begin{array}{l}
  \lift_{\typeinterpret{\rho}}(4)\ \oplus \\
  \phantom{A}
  \lift_{\typeinterpret{\rho}}(5 \otimes u_f)\ \oplus \\
  \phantom{A}
  \lift_{\typeinterpret{\rho}}(2) \otimes
  (\xlet{\typeinterpret{\rho}}{\interpret{s}[x:=\pi^1(\interpret{u})}{
    \expair{\alpha}{z}}{\interpret{v} * \alpha \cdot z})\ \oplus \\
  \phantom{A}
  \lift_{\typeinterpret{\rho}}(2) \otimes
    (\xlet{\typeinterpret{\rho}}{\interpret{t}[y:=\pi^2(\interpret{u})]}{
    \expair{\alpha}{z}}{\interpret{v} * \alpha \cdot z})\ \oplus \\
  \phantom{A}
    \lift_{\typeinterpret{\rho}}(2 \otimes u_f) \otimes
    (\xlet{\typeinterpret{\rho}}{\interpret{s}[x:=\pi^1(
    \interpret{u})]}{\expair{\alpha}{z}}{
    \interpret{v} * \alpha \cdot z})\ \oplus \\
  \phantom{A}
    \lift_{\typeinterpret{\rho}}(2 \otimes u_f) \otimes
    (\xlet{\typeinterpret{\rho}}{\interpret{t}[y:=\pi^2(
    \interpret{u})]}{\expair{\alpha}{z}}{
    \interpret{v} * \alpha \cdot z})\ \oplus \\
  \phantom{A}
  \lift_{\typeinterpret{\rho}}(2) \otimes v_N\ \oplus \\
  \phantom{A}
  \lift_{\typeinterpret{\rho}}(2 \otimes u_f) \otimes v_N\ \oplus \\
  \phantom{A}
  \lift_{\typeinterpret{\rho}}(\flatten(\interpret{s}[x:=
    \pi^1(\interpret{u}))) \otimes v_N\ \oplus \\
  \phantom{A}
  \lift_{\typeinterpret{\rho}}(\flatten(\interpret{t}[y:=
    \pi^2(\interpret{u})])) \otimes v_N\ \oplus \\
  \phantom{A}
  \lift_{\typeinterpret{\rho}}(u_f \otimes
    \flatten(\interpret{s}[x:=\pi^1(\interpret{u})])) \otimes v_N\
    \oplus \\
  \phantom{A}
  \lift_{\typeinterpret{\rho}}(u_f \otimes\flatten(\interpret{t}[y:=
    \pi^2(\interpret{u})])) \otimes v_N \\
  \end{array}
  \]
  We conclude once more by absolute positiveness.
\end{itemize}

\end{document}